\crefname{section}{Appendix}{Appendix}
\DeclareRobustCommand{\ancillasym}{ \vcenter{\hbox{\tikz[scale=0.5,transform shape]{ \draw[line width=0.8pt] (0.6,0) -- (0.95,0); \draw[line width=0.8pt] (0.95,-0.35) -- (0.95,0.35); \draw[line width=0.8pt] (1.15,-0.25) -- (1.15,0.25); \draw[line width=0.8pt] (1.35,-0.15) -- (1.35,0.15); }}} }
\newcommand{\algmargin}{\the\ALG@thistlm}
\newlength{\whilewidth}
\algnewcommand{\parState}[1]{\State%
  \parbox[t]{\dimexpr\linewidth-\algmargin}{\strut #1\strut}}
\newtheorem{theorem}{Theorem}[]
\newtheorem*{remark}{Remark}
\newtheorem{corollary}{Corollary}[]
\newtheorem{lemma}[]{Lemma}
\newtheorem{definition}{Definition}
\theoremstyle{definition}
\newcommand{\proofsketchname}{Proof Sketch}
\renewcommand{\exp}[1]{\text{exp}\left( #1 \right)}
\newcommand{\Vexp}[1]{\left\langle #1 \right\rangle}
\newcommand{\loss}{L\left(\bm{\theta}\right)}
\newcommand{\lossM}{L^{\mathcal{C}}\left(\bm{\theta},\bm{\theta}_{\mathcal{G}}\right)}
\newcommand{\lossMP}{L^{\mathcal{C}}\left(\bm{\theta},\left(\theta_{\mathcal{G}_{11}},\theta_{\mathcal{G}_{12}},\ldots,\theta_{\mathcal{G}_{K3}},\bm{0}\right)\right)}
\newcommand{\lossMT}{L_T^{\mathcal{C}}\left(\bm{\theta},\bm{\theta}_{\mathcal{G}},\bm{\theta}_{\mathcal{G}'_T}\right)}
\newcommand{\lossMTs}{L_{\{T_1,T_2,\ldots\}}^{\mathcal{C}}\left(\bm{\theta},\bm{\theta}_{\mathcal{G}},\bm{\theta}_{\mathcal{G}'_T}\right)}
\newcommand{\lossMNoi}{\widetilde{L}^{\mathcal{C}}\left(\bm{\theta},\bm{\theta}_{\mathcal{G}}\right)}
\newcommand{\lossNoi}{\widetilde{L}\left(\bm{\theta}\right)}
\newcommand{\gadget}[1]{\mathcal{G}\left(#1\right)}
\newcommand{\gadgetT}[1]{\mathcal{G}'_T\left(#1\right)}
\newcommand{\channelM}[1]{\Phi^{\mathcal{C}}\left(#1\right)}
\newcommand{\channelMT}[1]{\Phi^{\mathcal{C}}_T\left(#1\right)}
\newcommand{\channelMTs}[1]{\Phi^{\mathcal{C}}_{\{T_1,T_2,\ldots\}}\left(#1\right)}
\newcommand{\unitaryM}[1]{\mathbf{U}^{\mathcal{C}}\left(#1\right)}
\newcommand{\unitaryMT}[1]{\mathbf{U}^{\mathcal{C}}_T\left(#1\right)}
\newcommand{\unitaryMTs}[1]{\mathbf{U}^{\mathcal{C}}_{\{T_1,T_2,\ldots\}}\left(#1\right)}
\newcommand{\unitaryMNoi}[1]{\widetilde{\mathbf{U}}^{\mathcal{C}}\left(#1\right)}
\newcommand{\AngleSet}{\left\{0, \frac{\pi}{2}, \pi, \frac{3\pi}{2} \right\}}
\newcommand{\ExpC}{\mathbb{E}_{\bm{\theta}}}
\newcommand{\ExpMC}{\mathbb{E}_{\left(\bm{\theta},\bm{\theta}_\mathcal{G}\right)}}
\newcommand{\Fnum}{f_{j,O}^{\mathcal{C}}}
\newcommand{\Fnumall}{f_{\mathcal{G},O}^{\mathcal{C}}}
\newcommand{\Fnumact}{f_{\text{act}}^{\mathcal{C}}}
\newcommand{\uniquePathM}{\vec{\mathbf{s}}^{\hspace{0.1em}\left((\bm{\theta},\bm{\theta}_\mathcal{G}),\alpha\right)}}
\newcommand{\uniquePathMpos}[1]{\mathbf{s}^{\hspace{0.1em}\left((\bm{\theta},\bm{\theta}_\mathcal{G}),\alpha\right)}_{#1}}
\newcommand{\uniquePathMb}{\vec{\mathbf{s}}^{\hspace{0.1em}\left((\bm{\theta},\bm{\theta}_\mathcal{G}),\beta\right)}}
\newcommand{\uniquePathMT}{\vec{\mathbf{s}}^{\hspace{0.1em}\left((\bm{\theta},\bm{\theta}_\mathcal{G},\bm{\theta}_{\mathcal{G}'_T}),\alpha\right)}}
\newcommand{\uniquePathMTb}{\vec{\mathbf{s}}^{\hspace{0.1em}\left((\bm{\theta},\bm{\theta}_\mathcal{G},\bm{\theta}_{\mathcal{G}'_T}),\beta\right)}}
\newcommand{\sm}{Supplementary Information}
\begin{document}
\preprint{APS/123-QED}

\title{Taming Barren Plateaus in Arbitrary Parameterized Quantum Circuits without Sacrificing Expressibility}
\author{Zhenyu Chen}
\thanks{These authors contributed equally to this work.}
\affiliation{Department of Computer Science and Technology, Tsinghua University, Beijing, China}

\author{Yuguo Shao}
\thanks{These authors contributed equally to this work.}
\affiliation{Yau Mathematical Sciences Center, Tsinghua University, Beijing 100084, China}
\affiliation{Yanqi Lake Beijing Institute of Mathematical Sciences and Applications, Beijing 100407, China}

\author{Zhengwei Liu}
\email{liuzhengwei@mail.tsinghua.edu.cn}
\affiliation{Yau Mathematical Sciences Center, Tsinghua University, Beijing 100084, China}
\affiliation{Yanqi Lake Beijing Institute of Mathematical Sciences and Applications, Beijing 100407, China}
\affiliation{Department of Mathematics, Tsinghua University, Beijing 100084, China}

\author{Zhaohui Wei}
\email{weizhaohui@gmail.com}
\affiliation{Yau Mathematical Sciences Center, Tsinghua University, Beijing 100084, China}
\affiliation{Yanqi Lake Beijing Institute of Mathematical Sciences and Applications, Beijing 100407, China}

\begin{abstract}
Quantum algorithms based on parameterized quantum circuits (PQCs) have enabled a wide range of applications on near-term quantum devices. However, existing PQC architectures face several challenges, among which the ``barren plateaus" phenomenon is particularly prominent. In such cases, the loss function concentrates exponentially with increasing system size, thereby hindering effective parameter optimization. To address this challenge, we propose a general and hardware-efficient method for eliminating barren plateaus in an arbitrary PQC. Specifically, our approach achieves this by inserting a layer of easily implementable quantum channels into the original PQC, each channel requiring only one ancilla qubit and four additional gates, yielding a modified PQC (MPQC) that is provably at least as expressive as the original PQC and, under mild assumptions, is guaranteed to be free from barren plateaus.
Furthermore, by appropriately adjusting the structure of MPQCs, we rigorously prove that any parameter in the original PQC can be made trainable.
Importantly, the absence of barren plateaus in MPQCs is robust against realistic noise, making our approach directly applicable to near-term quantum hardware. Numerical simulations demonstrate that MPQC effectively eliminates barren plateaus in PQCs for preparing thermal states of systems with up to 100 qubits and 2400 layers. Furthermore, in end-to-end simulations, MPQC significantly outperforms PQC in finding the ground-state energy of a complex Hamiltonian.


\end{abstract}

\maketitle

\section{Introduction}

Parameterized quantum circuits (PQCs) play a central role in a wide range of quantum algorithms, {including those for quantum machine learning~\cite{benedetti2019parameterized,biamonte2017quantum,beer2020training,ren2022experimental,caro2023out}, quantum optimization~\cite{farhi2014quantum,zhou2020quantum,kotil2025quantum} and quantum chemistry~\cite{lanyon2010towards,hempel2018quantum,huang2023efficient}}. A typical application of PQCs is in the framework of variational quantum algorithms (VQAs)~\cite{cerezo2021variational,tilly2022variational}: one defines a class of PQCs (also referred to as ansatz), encodes the target problem into a loss function expressed as an observable expectation value measured on the outputs of the PQCs, and then iteratively updates the circuit parameters using a classical optimization algorithm to minimize the loss function. {Parameter updates are often based on gradient information, which can be evaluated using the parameter shift rule~\cite{mitarai2018quantum,schuld2019evaluating}.}

However, the optimization of many PQCs suffers from the problem known as ``barren plateaus''~\cite{mcclean2018barren,larocca2025barren,cerezo2021cost}, where the landscape of the loss function becomes exponentially concentrated.
Mathematically, a PQC is said to exhibit a barren plateau if its loss function $L(\bm{\theta})$, with $\bm{\theta}=(\theta_1,\theta_2,\ldots)$, satisfies that {for all $\bm{\theta}$}, the variance of its partial derivative decays exponentially with the system size $n$, i.e.,
\begin{equation*}\label{eq:BP_definition}
\operatorname{Var}_{\bm{\theta}}\left[\partial_{\theta_i} L(\bm{\theta})\right] \leqslant F(n), 
  \quad \text{with} \quad F(n) \in \mathcal{O}\left(\frac{1}{b^n}\right),
\end{equation*}

\noindent where $b>0$ is some constant. 
By Chebyshev’s inequality, $P_{\bm{\theta}}\left(\left|\partial_{\theta_i} L(\bm{\theta})\right| \geqslant \epsilon\right) \leqslant \frac{\operatorname{Var}_{\bm{\theta}}\left[\partial_{\theta_i} L(\bm{\theta})\right]}{\epsilon^2} \leqslant \mathcal{O}\left(\frac{1}{\epsilon^2 b^n}\right)$. Thus, the probability of encountering a nontrivial gradient decreases exponentially with system size. As a consequnce, the {designed} PQC is not trainable. 

To overcome this problem, various strategies have been proposed, such as the use of shallow circuits~\cite{cong2019quantum,pesah2021absence,cerezo2021cost,zhao2021analyzing,liu2022presence}, correlated parameter initialization schemes~\cite{grant2019initialization,zhang2022escaping,wang2024trainability,sauvage2021flip,cao2025exploiting}, restrictions of the circuit dynamics to small Lie algebras~\cite{larocca2022diagnosing,raj2023quantum,fontana2024characterizing,jing2025quantum}, and non-unitary constructions~\cite{deshpande2024dynamic,zapusek2025scaling,yan2025variational}.
However, most of these PQCs {circumvent  barren plateaus at the cost of} expressibility—typically defined as the ability of a PQC to explore the Hilbert space~\cite{sim2019expressibility,holmes2022connecting,Yu_2024}—or by embedding symmetries into the circuit architecture. 
Consequently, such barren-plateau-free constructions often make the circuit dynamics efficiently simulable on a classical computer~\cite{cerezo2025does,angrisani2024classically}. 
This situation naturally raises a fundamental question: can we design a class of PQCs that achieves high expressibility and trainability simultaneously, while remaining classically intractable?

In this work, we {provide an affirmative answer to this question through the construction of} \textit{modified parameterized quantum circuits}~(MPQCs), which incorporate  trainable quantum channels—referred to as gadgets $\mathcal{G}(\bm{\theta})$—into the original PQC, as illustrated in \cref{fig:procedures}(a). 
Starting from {an arbitrary} PQC that may exhibit barren plateaus, an MPQC is constructed by inserting a layer of gadgets $\mathcal{G}(\bm{\theta})$ acting on each qubit.
It turns out that the resulting circuit architecture is guaranteed to be at least as expressive as the original PQC. Moreover, we prove that classically simulating the MPQC is at least as hard as simulating the original PQC, in both the worst case and the average case, implying that typical MPQCs remain classically intractable.

{Crucially, through rigorous analysis we prove that the MPQC is} free from barren plateaus when the gadget layer is properly configured. Furthermore, we show that the introduction of gadgets universally enhances the trainability of PQCs. Specifically, the gradient variance of parameters following the gadget layer is always lower bounded by $\Omega\left(1/\mathrm{poly}(n)\right)$, while for the remaining parameters, the {gradient} variance retains at least its original scaling. 
{Given that some of the latter} may remain untrainable, we introduce a practical strategy to activate them {to be trainable}, thereby enabling {the optimization of all the parameters in} the circuit (see \cref{fig:procedures}(b)). Notably, we further prove that the trainability of MPQCs is robust to noise, meaning that {they work well} even in the presence of realistic noise. 

We perform numerical simulations to demonstrate the effectiveness of our approach in eliminating barren plateaus. Using a {specific} PQC ansatz for thermal-state preparation~\cite{riera2012thermalization,sagastizabal2021variational,motta2020determining}, we estimate both the variance of the loss function and that of the gradient in {both the original PQC and the MPQC} via the Monte Carlo method~\cite{shao2025diagnosing}. The results show that {our approach successfully eliminates} barren plateaus even for circuits with up to 100 qubits and 2400 layers, in sharp contrast to the exponential gradient {decay} observed in the original PQC. Lastly, we emphasize that when applying MPQC to various variational quantum algorithms in an end-to-end fashion, we consistently observe improved performance compared with the original PQCs.

\begin{figure*}[!t]
  \centering
  \includegraphics[width =\linewidth]{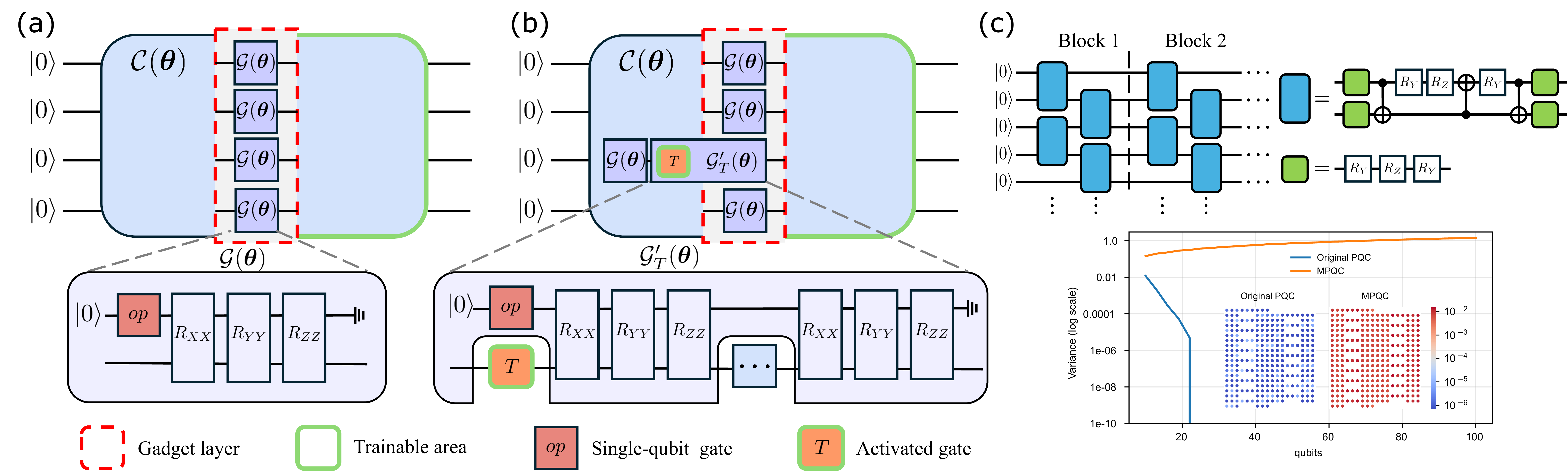}
  \caption{\justifying (a) Structure of an MPQC, where a layer of gadgets $\mathcal{G}(\bm{\theta})$ (outlined by the red dashed box) is inserted into the original PQC $\mathcal{C}(\bm{\theta})$ (indicated by the light-blue region).
Each gadget $\mathcal{G}(\bm{\theta})$ (highlighted in light purple) contains an ancilla qubit initialized in $\ket{0}$, one single-qubit unitary $op$, and three two-qubit rotation gates $R_{XX}$, $R_{YY}$, and $R_{ZZ}$. 
{The symbol $\ancillasym$ denotes the ancilla is discarded.}
(b) {Structure of a $T$-activating MPQC.
The gates denoted by “$\cdots$” represent those in the original PQC located between $T$ and the gadget layer.
In this MPQC, we specifically enlarge the gadget $\gadget{\bm{\theta}}$ acting on the same qubit as $T$, transforming it into $\gadgetT{\bm{\theta}}$.} (c) Top: Ansatz circuit for thermal state preparation, where the number of blocks equals the number of qubits $n$. 
Bottom: Variance comparison between PQCs and MPQCs for thermal state preparation, where the MPQCs are formed by inserting a gadget layer before the final block.
{Yellow and blue curves show the cost-function variances of PQCs and MPQCs, respectively, estimated via the method in Ref.~\cite{shao2025diagnosing}.}
The blue curve is omitted for $n>21$, as its values are extremely close to zero in this regime.
The inset presents the gradient variance of parameters located after the gadget layer for $n=20$.}\label{fig:procedures}
\end{figure*}

\section{Modified Parameterized Quantum Circuits (MPQCs)}

A PQC $\mathcal{C}(\bm{\theta}) = U_{m}(\theta_{m}) \cdots U_1(\theta_1)$ consists of a sequence of unitaries $U_i(\theta_i)$ parameterized by $\bm{\theta} = (\theta_1, \theta_2, \ldots, \theta_{m})$, where each $\theta_i \in \left[0, 2\pi \right)$ specifies a rotation angle and $m$ denotes the number of parameters. In this work, each unitary $U_i(\theta_i)$ is taken to be a Pauli rotation of the form $e^{-i\frac{\theta_i}{2} P}$, where $P \in \{\mathbb{I}, X, Y, Z\}^{\otimes n}$ with $n$ being the number of qubits, followed by a non-parameterized Clifford gate $C_i$. We do not impose any restriction on the form of the input state $\rho$ of $\mathcal{C}(\bm{\theta})$, meaning that it can be a noisy or mixed state. In practice, $\rho$ is typically chosen as the state $\ket{0^n}\bra{0^n}$.
The parameters $\bm{\theta}$ are optimized by minimizing a loss function of the form $\loss = \tr{O \mathcal{C}(\bm{\theta})\rho \mathcal{C}(\bm{\theta})^\dagger}$ with {$O$ being an observable}.
The variance of the loss function and that of the gradient can be expressed as:
\begin{small}
\begin{equation}
  \begin{aligned}
    \operatorname{Var}_{\bm{\theta}} \left[ \loss \right] &= \mathbb{E}_{\bm{\theta}}\left[\loss^2\right] - \left(\mathbb{E}_{\bm{\theta}}\left[\loss\right]\right)^2 \\
    \operatorname{Var}_{\bm{\theta}} \left[ \frac{\partial \loss}{\partial \theta_j} \right] &= \mathbb{E}_{\bm{\theta}}\left[\left(\frac{\partial \loss}{\partial \theta_j}\right)^2\right] - \left(\mathbb{E}_{\bm{\theta}}\left[\frac{\partial \loss}{\partial \theta_j}\right]\right)^2,
  \end{aligned}
\end{equation}
\end{small}

\noindent where each $\theta_i$ is sampled {uniformly from} $\left[0, 2\pi\right)$.

To mitigate barren plateaus, a gadget layer {is inserted at a chosen position of the PQC (the location will be specified later).} This layer consists of $n$ gadgets and can be written as $ \bigotimes_{i=1}^{n} \mathcal{G}_i(\bm{\theta}_{\mathcal{G}_i}) $, where each gadget contains one single-qubit operation $op$ and three two-qubit rotation gates, parameterized by $\bm{\theta}_{\mathcal{G}_i} = (\theta_{\mathcal{G}_{i,1}}, \theta_{\mathcal{G}_{i,2}}, \theta_{\mathcal{G}_{i,3}})$, as illustrated in \cref{fig:procedures}(a).
The single-qubit operation $op$ {can be any quantum operation} that satisfies the following condition:
there exists a constant $\tau > 0$ such that:
\begin{small}
 \begin{equation}\label{eq:gadget_input_parameter}
  \tr{{op \left(\ketbra{0}\right)}P}^2 \geq \tau,  \quad \forall P\in \{X,Y,Z\}.
\end{equation}
\end{small}
In \sm~A, we present two constructions of $op$ {using single-qubit gates.} The first is a fixed unitary gate that achieves the maximal value of $\tau$, while the second introduces {two parameterized single-qubit rotation gates}, rendering $op$ trainable.

It is straightforward to verify that the expressibility of an MPQC is at least as large as that of the original PQC. 
Let $\channelM{\bm{\theta},\bm{\theta}_{\mathcal{G}}}$ denote the channel corresponding to the MPQC obtained by augmenting $\mathcal{C}(\bm{\theta})$ with a gadget layer, where $\bm{\theta}_{\mathcal{G}} = (\bm{\theta}_{\mathcal{G}_1}, \bm{\theta}_{\mathcal{G}_2}, \ldots, \bm{\theta}_{\mathcal{G}_n})$ collects the parameters of the $n$ gadgets. 
For an arbitrary input state $\rho$, we have $\mathcal{C}(\bm{\theta}) \rho \mathcal{C}^\dagger(\bm{\theta}) = \channelM{\bm{\theta},\bm{0}}(\rho)$. 
Hence, the output state ensemble generated by $\mathcal{C}(\bm{\theta})$ is a subset of that generated by $\channelM{\bm{\theta},\bm{\theta}_{\mathcal{G}}}$.

\section{Absence of barren plateaus in MPQC}

We now establish the following theorem, which demonstrates that introducing a gadget layer can eliminate barren plateaus in arbitrary PQCs, thereby restoring their trainability.
{The detailed proof is provided in \sm~E.}

\begin{theorem}\label{thm:absence_BP}[informal]
For an arbitrary $\mathcal{C}(\bm{\theta})$, if the corresponding MPQC $\channelM{\bm{\theta},\bm{\theta}_{\mathcal{G}}}$ satisfies the following conditions:
\begin{itemize}
  \item The observable $O = \sum_{\alpha} c_\alpha P_\alpha$ is local, i.e., $O$ is the sum of Pauli words $\{P_\alpha\}_{\alpha}$ with {each nontrivially acting on at most $\order{1}$ qubits}. 
  \item For each Pauli term $P_\alpha$ in $O$, the support size of its backward light cone at the gadget layer, {i.e., the number of qubits in the layer whose perturbations can affect the measurement outcome of $P_\alpha$}, is upper bounded by $K = \order{\log n}$.
\end{itemize}
Then the variance of its loss function $\lossM \coloneq \tr{\channelM{\bm{\theta},\bm{\theta}_\mathcal{G}}(\rho)O}$ admits the lower bound:
\begin{equation}\label{eq:lower_bound_var}
  \operatorname{Var}_{(\bm{\theta},\bm{\theta}_\mathcal{G})} \left[ \lossM \right]\geq \sum_\alpha c_\alpha^2\left(\frac{\tau}{4}\right)^{K} = \Omega\left(\frac{1}{\mathrm{poly}(n)}\right).
\end{equation}
As a consequence, according to Ref.~\cite{arrasmith2022equivalence}, Eq.~\eqref{eq:lower_bound_var} ensures the absence of barren plateaus in $\channelM{\bm{\theta},\bm{\theta}_{\mathcal{G}}}$.
\end{theorem}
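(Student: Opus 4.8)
The plan is to reduce the bound \eqref{eq:lower_bound_var} to a second-moment computation over the gadget-layer angles, whose decisive input is that each individual gadget, viewed as a channel, is ``random enough'' on non-identity Paulis while still retaining a parameter-independent identity component of weight $\Omega(\tau)$. \emph{Step 1: reduction to the gadget layer.} First I factor the MPQC channel as $\channelM{\bm\theta,\bm\theta_\mathcal{G}}=\mathcal U_\mathrm{aft}\circ\big(\bigotimes_{i=1}^n\mathcal G_i\big)\circ\mathcal U_\mathrm{bef}$, where $\mathcal U_\mathrm{bef},\mathcal U_\mathrm{aft}$ are the conjugation channels of the (possibly empty) unitary parts of the original PQC lying before/after the gadget layer. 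Fixing $\bm\theta$, set $\rho':=\mathcal U_\mathrm{bef}(\rho)$ (a state, $\tr{\rho'}=1$) and $O':=\mathcal U_\mathrm{aft}^\dagger(O)=\sum_\alpha c_\alpha\,U_\mathrm{aft}^\dagger P_\alpha U_\mathrm{aft}$, so that $\lossM=\tr{O'\big(\bigotimes_i\mathcal G_i\big)(\rho')}$. By the light-cone hypothesis each $U_\mathrm{aft}^\dagger P_\alpha U_\mathrm{aft}$ is supported on a set $S_\alpha$ with $|S_\alpha|\le K$; hence, expanding $O'=\sum_Q d_Q\,Q$ over $n$-qubit Pauli strings, every $Q$ with $d_Q\neq0$ obeys $|\mathrm{supp}(Q)|\le K$, and (since $P_\alpha\neq\mathbb I$) the identity coefficient $d_{\mathbb I}$ vanishes.

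\emph{Step 2: single-gadget moments.} Next I compute the dual channel of one gadget on qubit $j$, with $\omega_j:=op(\ketbra{0})$. Conjugating $P\otimes\mathbb I$ backward through $R_{XX}R_{YY}R_{ZZ}$ and tracing the ancilla against $\omega_j$ shows that for $P\neq\mathbb I$ one has $\mathcal G_j^\dagger(P)=\sum_{R\in\{\mathbb I,X,Y,Z\}}m_{P,R}\,R$, where the four $m_{P,R}$ are, up to $\omega_j$-dependent constants, the pairwise-orthogonal trigonometric monomials $\cos\theta_a\cos\theta_b,\ \cos\theta_a\sin\theta_b,\ \sin\theta_a\cos\theta_b,\ \sin\theta_a\sin\theta_b$ in two ($P$-dependent) gadget angles $\theta_a,\theta_b$, and $m_{P,\mathbb I}=\pm\sin\theta_a\sin\theta_b\,\tr{\omega_j P}$. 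Averaging the angles uniformly over $[0,2\pi)$ then gives the two facts I need: (i) $\mathbb E_{\bm\theta_{\mathcal G_j}}[\mathcal G_j^\dagger(P)]=0$ for $P\neq\mathbb I$; and (ii) $\mathbb E_{\bm\theta_{\mathcal G_j}}[\mathcal G_j^\dagger(P)\otimes\mathcal G_j^\dagger(P')]=0$ unless $P=P'$, while for $P=P'\neq\mathbb I$ its $\mathbb I\otimes\mathbb I$ component equals $\tfrac14\tr{\omega_j P}^2\ge\tfrac\tau4$ by \eqref{eq:gadget_input_parameter}. (If $op$ is the trainable variant, condition additionally on its parameters, over which the bound $\tau$ holds uniformly.)

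\emph{Step 3: assembly.} Expand $\big(\bigotimes_i\mathcal G_i\big)^\dagger(Q)=\bigotimes_i\mathcal G_i^\dagger(Q_i)$. Since every $Q$ with $d_Q\neq0$ is non-identity, fact (i) forces at least one tensor factor to have zero $\bm\theta_\mathcal{G}$-mean, so $\mathbb E_{\bm\theta_\mathcal{G}}[\lossM\mid\bm\theta]=0$; as $\mathcal U_\mathrm{bef}$ depends only on $\bm\theta$, this yields $\mathbb E[\lossM]=0$ and $\operatorname{Var}_{(\bm\theta,\bm\theta_\mathcal{G})}[\lossM]=\mathbb E[(\lossM)^2]$. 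For the second moment at fixed $\bm\theta$, expand $(\lossM)^2$ over pairs of Pauli strings $(Q,Q')$ at the layer and their images; fact (ii) makes the $\bm\theta_\mathcal{G}$-average annihilate every off-diagonal term, leaving a sum of manifestly nonnegative contributions indexed by a single $Q$. Retaining only the ``all-identity image'' term of each $\mathcal G_i^\dagger(Q_i)$ (allowed by nonnegativity) and using (ii), $\tr{\rho'}=1$, $|\mathrm{supp}(Q)|\le K$, and $\tau/4<1$,
\begin{equation*}
\mathbb E_{\bm\theta_\mathcal{G}}\!\left[\left(\lossM\right)^2\big|\,\bm\theta\right]\ \ge\ \sum_Q d_Q^2\!\!\prod_{j\in\mathrm{supp}(Q)}\!\!\tfrac14\tr{\omega_j Q_j}^2\ \ge\ \Big(\tfrac\tau4\Big)^{\!K}\sum_Q d_Q^2 .
\end{equation*}
Finally Parseval on the Pauli basis gives $\sum_Q d_Q^2=2^{-n}\tr{(O')^2}=2^{-n}\tr{O^2}=\sum_\alpha c_\alpha^2$, and averaging over $\bm\theta$ produces \eqref{eq:lower_bound_var}; with $K=\order{\log n}$ this is $\Omega(1/\mathrm{poly}(n))$, and the absence of barren plateaus then follows from the loss-to-gradient variance equivalence of Ref.~\cite{arrasmith2022equivalence}.

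\emph{Expected main obstacle.} The technical heart is Step 2: one must verify that the gadget's dual channel produces \emph{orthogonal} trigonometric coefficients on every non-identity Pauli --- this is exactly what turns the second moment into a genuine sum of squares, so that throwing away the non-identity images can only decrease it --- while pinning the retained identity weight to $\ge\tau/4$ through condition \eqref{eq:gadget_input_parameter}. Lesser bookkeeping includes checking that the light-cone hypothesis really forces $|\mathrm{supp}(Q)|\le K$ for every $Q$ occurring in $O'$, and absorbing the extra parameters carried by a trainable $op$.
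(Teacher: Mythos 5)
Your proposal is correct, and I verified the key computation: conjugating $I\otimes P$ backward through $R_{ZZ}R_{YY}R_{XX}$ and tracing the ancilla against $\omega=op(\ketbra{0})$ does produce exactly four branches whose coefficients are the pairwise-orthogonal monomials $\cos\theta_a\cos\theta_b,\ \cos\theta_a\sin\theta_b,\ \sin\theta_a\cos\theta_b,\ \sin\theta_a\sin\theta_b$ in two $P$-dependent angles, with the system-identity branch carrying weight $\sin\theta_a\sin\theta_b\tr{\omega P}$, so your facts (i) and (ii) hold and the second moment of the identity component is $\tfrac14\tr{\omega P}^2\ge\tau/4$. Your route is, however, organized quite differently from the paper's. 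The paper reduces $\operatorname{Var}[\lossM]$ to a discrete sum over all angles in $\{0,\pi/2,\pi,3\pi/2\}$ via a quantum-rotation 2-design, which requires first establishing a global Pauli-path orthogonality condition (their Lemmas on ``splitting'' the observable's Pauli set by the gadget generators) and a non-constancy condition; the bound then comes from counting the $16/64$ ``swap'' configurations per gadget, $|M_{\mathrm{swap}}|/4^{3n}\ge(1/4)^K$, times the ancilla overlap $\tau^K$. You instead condition on $\bm\theta$, push the observable back to the gadget layer, and diagonalize the second moment \emph{locally} using the continuous-angle orthogonality of the gadget's dual-channel coefficients; the parameters of the original PQC are never averaged or discretized, the conditional mean vanishes and the conditional second moment is bounded below pointwise in $\bm\theta$, and the global orthogonality lemmas are not needed at all. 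The two computations are ultimately the same ($\mathbb{E}[\sin^2\theta_a\sin^2\theta_b]=1/4$ is the continuous avatar of the $16/64$ count), but your version is more self-contained and even slightly stronger (a pointwise-in-$\bm\theta$ bound), whereas the paper's heavier machinery is amortized across its later theorems on gradient variances, parameter activation, and noise robustness, where the per-parameter anticommutation constraints genuinely require the path-integral formulation.
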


{Later in this section, we will show that the support-size condition can be easily satisfied by appropriately placing the gadget layer.}
Although MPQCs are inherently free from barren plateaus, the trainability of individual parameters needs further investigation. Here, we address this issue by examining $\operatorname{Var}_{(\bm{\theta},\bm{\theta}_{\mathcal{G}})} \left[ \frac{\partial \lossM}{\partial{\theta_j}} \right]$.
The results are summarized in the following theorem, {with the full proof presented in \sm~F.}

\begin{theorem}\label{thm:absence_BP_MPQC_parameters}
  Consider an MPQC $\channelM{\bm{\theta},\bm{\theta}_{\mathcal{G}}}$ and a local observable $O = \sum_{\alpha} c_\alpha P_\alpha$.
If the support-size condition in \cref{thm:absence_BP} holds, and for each $P_\alpha$ the segment of its backward light cone from $P_\alpha$ to the gadget layer contains at most $\order{\log n}$ parameters in gates, then $\lossM$ satisfies the following properties:
  \begin{itemize}
    \item  For parameter $\theta_j$ located after the gadget layer, if $ \operatorname{Var}_{\bm{\theta}} \left[ \frac{\partial \loss}{\partial{\theta_j}} \right] \neq 0$, then we have:
      \begin{equation}\label{eq:gra_var_lower_bound_pars_after}
        \operatorname{Var}_{(\bm{\theta},\bm{\theta}_\mathcal{G})} \left[ \frac{\partial \lossM}{\partial{\theta_j}} \right] \geq \Omega\left(\frac{1}{\mathrm{poly}(n)}\right).
      \end{equation}
    \item For parameter $\theta_j$ located before the gadget layer, there is:
      \begin{equation}\label{eq:gra_var_lower_bound_pars_before}
        \operatorname{Var}_{(\bm{\theta},\bm{\theta}_\mathcal{G})} \left[ \frac{\partial \lossM}{\partial{\theta_j}} \right] \geq  \Omega\left(\frac{1}{\mathrm{poly}(n)}\right)\operatorname{Var}_{\bm{\theta}} \left[ \frac{\partial \loss}{\partial{\theta_j}} \right].
      \end{equation}
  \end{itemize}
\end{theorem}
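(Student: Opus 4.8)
The plan is to run the Pauli‑path analysis underlying \cref{thm:absence_BP} for the partial derivative $\partial_{\theta_j}\lossM$, with two problem‑specific choices of paths. As a preliminary, $\ExpMC\!\left[\partial_{\theta_j}\lossM\right]=0$ and $\ExpC\!\left[\partial_{\theta_j}\loss\right]=0$: each $U_i(\theta_i)=C_i e^{-i\theta_i P_i/2}$ makes the loss $2\pi$‑periodic in $\theta_i$, and integrating a derivative over a period vanishes. Hence every variance in the statement equals the corresponding second moment. Following the proof of \cref{thm:absence_BP}, expand $\partial_{\theta_j}\lossM$ over Pauli paths — a term $P_\alpha$ of $O$ Heisenberg‑propagated backward, choosing at every gate whose generator anticommutes with the running Pauli a ``stay'' ($\cos$) or ``transition'' ($\sin$) branch, with gate $j$ forced onto the anticommuting branch by the derivative — and organize the second moment as a sum over pairs of paths; the same device as there (evaluating the surviving angles on the Clifford grid $\AngleSet$ to fix all signs) reduces it to a sum of \emph{nonnegative} contributions indexed by admissible paths, so it suffices to exhibit paths with contribution $\Omega(1/\mathrm{poly}(n))$ (first item), respectively one such path ``cloning'' each admissible path of $\partial_{\theta_j}\loss$ up to a $\mathrm{poly}(n)$ factor (second item). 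The contribution of a path is $\left(\tfrac12\right)^{\#\{\text{anticommuting gates}\}}|c_\alpha|^2\big|\tr{s_{\mathrm{in}}\rho}\big|^2$ times, for each gadget in the light cone of $P_\alpha$, a factor of the form $\tr{op(\ketbra{0})P}^2\ge\tau$ when $P\neq\mathbb{I}$ and $1$ when $P=\mathbb{I}$. The one new structural ingredient is a routing fact for a single gadget: since it contains all of $R_{XX},R_{YY},R_{ZZ}$, any single‑qubit Pauli entering a gadget can be propagated backward across the three rotations either \emph{(a)} to the identity on the data line below it — truncating the path, with two anticommuting gates and one $\tr{op(\ketbra{0})P}$ factor incurred — or \emph{(b)} back to the \emph{same} Pauli on the data line below it — keeping the path, with two anticommuting gates and no $\tr{op(\cdot)P}$ factor incurred — and in either case the branch data attached to the route makes it the unique path compatible with the resulting term of the second‑moment expansion.

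For the first item, $\theta_j$ lies after the gadget layer. The hypothesis $\operatorname{Var}_{\bm{\theta}}\!\left[\partial_{\theta_j}\loss\right]\neq 0$ means $\partial_{\theta_j}\loss\not\equiv 0$, i.e.\ there is a term $P_\alpha$ of $O$ and a Pauli path from $P_\alpha$ anticommuting with $P_j$ at gate $j$; since $\mathbb{I}$ propagates only to $\mathbb{I}$, this path is automatically nontrivial down to the gadget layer, where it is supported on the $\le K=\order{\log n}$ light‑cone qubits and — by the parameter bound on the light‑cone segment — crosses only $\order{\log n}$ anticommuting gates on the way. Take the path that follows it down to the gadget layer and then applies route \emph{(a)} at each of the $\le K$ light‑cone gadgets; below the gadget layer it is $\mathbb{I}^{\otimes n}$, hence reaches $\rho$ with overlap $1$ and no further anticommuting gates. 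Its contribution is strictly positive and at least $\left(\tfrac12\right)^{\order{\log n}}\tau^{K}|c_\alpha|^2=\Omega(1/\mathrm{poly}(n))$, giving $\operatorname{Var}_{(\bm{\theta},\bm{\theta}_\mathcal{G})}\!\left[\partial_{\theta_j}\lossM\right]=\Omega(1/\mathrm{poly}(n))$, which is \eqref{eq:gra_var_lower_bound_pars_after}. The crucial point is that no matter how small $\operatorname{Var}_{\bm{\theta}}[\partial_{\theta_j}\loss]$ is, the gadget layer truncates the path and discards the (possibly deep) lower part of the circuit, so only the $\order{\log n}$ gates of the light‑cone segment and the $\le K$ gadgets enter the bound.

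For the second item, $\theta_j$ lies before the gadget layer, so gate $j$ is in the lower part $U_{\mathrm{below}}$, on which the MPQC coincides with $\mathcal{C}(\bm{\theta})$. Given an admissible path of $\partial_{\theta_j}\loss$ (from some $P_\alpha$, anticommuting at gate $j$, reaching $\rho$ with nonzero overlap), clone it in the MPQC by keeping it unchanged above and below the gadget‑insertion point and, at each light‑cone gadget it visits nontrivially, using route \emph{(b)} to carry its Pauli across unchanged (paths trivial at the gadget layer are cloned at no cost). Each visited gadget multiplies the contribution by at least $\left(\tfrac12\right)^{2}=\tfrac14$ and there are $\le K$ of them, while the bounded multiplicity of paths mapped to a common monomial class (at most $\mathrm{poly}(n)$, via the light‑cone size bound) costs another $1/\mathrm{poly}(n)$; hence the cloned contributions sum to at least $\Omega(1/\mathrm{poly}(n))\operatorname{Var}_{\bm{\theta}}\!\left[\partial_{\theta_j}\loss\right]$, which is \eqref{eq:gra_var_lower_bound_pars_before}.

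I expect the main obstacle to be the routing fact together with the sign/cancellation bookkeeping it requires: a careful case analysis of how each single‑qubit Pauli moves backward through $R_{XX},R_{YY},R_{ZZ}$, showing routes \emph{(a)} and \emph{(b)} exist with the claimed weights and that the attached branch data makes each route the \emph{unique} Pauli path compatible with its monomial, so that within each class of paths contributing to a common monomial there is no cancellation. The rest — translating the two hypotheses into the path prefixes and anticommuting‑gate counts used above, and the multiplicity bookkeeping in the second item — is routine and parallels the argument already needed for \eqref{eq:lower_bound_var} in \cref{thm:absence_BP}.
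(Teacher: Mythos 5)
Your proposal is correct and follows essentially the same route as the paper's proof: reduce the gradient variance to a nonnegative sum over Pauli paths via the rotation 2-design, then for parameters after the gadget layer use the "swap" routing (your route (a)) to dump the nontrivial Pauli onto the ancillas and truncate the path at cost $(\tau/4)^{K}$ times $(1/2)^{\order{\log n}}$ from the light-cone gates, and for parameters before the gadget layer use the "identity" routing (your route (b)) to carry the path through at cost $(1/4)^{K}$ and compare against the original variance up to a $\mathrm{poly}(n)$ factor (which the paper obtains from the Hilbert--Schmidt-norm bound on the observable rather than your "multiplicity" bookkeeping, but to the same effect). The routing case analysis and the no-cancellation/orthogonality facts you flag as the main obstacles are exactly the lemmas the paper supplies in its supplement.
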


Theorem~\ref{thm:absence_BP_MPQC_parameters} implies that modifying an original PQC into an MPQC necessarily improves its trainability. 
Specifically, Eq.~\eqref{eq:gra_var_lower_bound_pars_after} guarantees the trainability of parameters located after the gadget layer, while Eq.~\eqref{eq:gra_var_lower_bound_pars_before} ensures that the resulting circuit is not effectively restricted to a shallow architecture: The parameters before the gadget layer remain trainable whenever they are trainable in the original PQC. 
Crucially, as demonstrated by our subsequent numerical simulations, MPQC significantly outperforms shallow circuits, indicating that these parameters continue to play an essential role during training.

To satisfy the conditions of Theorem~\ref{thm:absence_BP_MPQC_parameters}, the placement of the gadget layer can be determined according to the geometric structure of the circuit. 
In Supplementary Material~G, we provide an explicit construction for a broad class of PQCs defined on (hyper)cubic lattices. 
As a specific example, for a one-dimensional brick-wall PQC, the gadget layer should be placed at a distance of order $\mathcal{O}((\log n)^{1/2})$ from the final measurement layer.



\section{Strategy to activate  untrainable parameters}\label{sec:activation}

{Note that} \cref{thm:absence_BP_MPQC_parameters} does not guarantee that parameters located before the gadget layer have nonvanishing gradients.
In the worst case, one may still encounter a single-qubit rotation gate $T = R_{P_T}(\theta_T)$ before the gadget layer whose gradient variance is nearly zero. 
To address this issue, we present a targeted procedure to ``activate'' $T$, which is a strategy that significantly increases the trainability of $T$.

{As illustrated in \cref{fig:procedures}(b), this is accomplished by inserting an additional gadget $\gadget{\bm{\theta}}$ immediately before the target gate $T$ and enlarging one gadget in the gadget layer through the following procedure: we first move the $op$ operation of this gadget layer to the same layer as $T$, and then append three two-qubit parameterized rotation gates---$R_{XX}$, $R_{YY}$, and $R_{ZZ}$---immediately after $op$ and $T$, thereby transforming it into a new type of gadget, denoted as $\gadgetT{\bm{\theta}}$. The position of the enlarged $\gadget{\bm{\theta}}$ can be selected flexibly to suit physical implementation convenience. In fact, any $\gadget{\bm{\theta}}$ located within the backward light cone of some Pauli term $P_\alpha$ in $O$ qualifies as a valid candidate, as elaborated in \sm~H.}

{We refer to the resulting circuit as the \emph{$T$-activating MPQC}.
Let the corresponding quantum channel be $\channelMT{\bm{\theta},\bm{\theta}_{\mathcal{G}},\bm{\theta}_{\mathcal{G}'_T}}$ , where $\bm{\theta}_{\mathcal{G}'_T}$ collects the parameters in the enlarged gadget $\gadgetT{\bm{\theta}}$, and $\bm{\theta}_{\mathcal{G}}$ collects the parameters in all the $\gadget{\bm{\theta}}$ gadgets, including the one inserted before $T$.}
We define its loss function as $\lossMT \coloneqq \tr{\channelMT{\bm{\theta},\bm{\theta}_{\mathcal{G}},\bm{\theta}_{\mathcal{G}'_T}}(\rho)O}$. 
The following theorem guarantees that $\theta_T$ is trainable, and the proof is given in \sm~H.

\begin{theorem}\label{thm:activate_parameters}
  Consider a $T$-activating MPQC $\channelMT{\bm{\theta},\bm{\theta}_{\mathcal{G}},\bm{\theta}_{\mathcal{G}'_T}}$, evaluated with respect to a local observable $O$. Suppose that the conditions stated in \cref{thm:absence_BP_MPQC_parameters} hold.
Let $T = R_{P_{T}}(\theta_{T})$ denote the single-qubit rotation gate to be activated. Then, we have
\begin{equation}\label{eq:gra_var_lower_bound_activate_pars}
  \operatorname{Var}_{(\bm{\theta},\bm{\theta}_{\mathcal{G}},\bm{\theta}_{\mathcal{G}'_T})} \left[\frac{\partial\lossMT}{\partial{\theta_T}}\right]\geq \Omega\left(\frac{1}{\mathrm{poly}(n)}\right).
\end{equation}
\end{theorem}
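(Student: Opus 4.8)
The plan is to reduce the statement to an application of Theorem~\ref{thm:absence_BP_MPQC_parameters}, by recognizing that the derivative $\partial_{\theta_T} \lossMT$ can itself be written as a loss function of the same structural type, evaluated on a circuit that still contains an intact gadget layer. First I would use the parameter-shift rule to write $\partial_{\theta_T}\lossMT = \tfrac12\bigl(\lossMT|_{\theta_T \mapsto \theta_T + \pi/2} - \lossMT|_{\theta_T\mapsto \theta_T-\pi/2}\bigr)$; equivalently, since $T = R_{P_T}(\theta_T)$ is a single-qubit Pauli rotation, $\partial_{\theta_T}$ acting on $U_T \rho_{\mathrm{in}} U_T^\dagger$ produces a commutator $-\tfrac{i}{2}[P_T,\cdot]$ inserted at the location of $T$. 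The key observation is that the enlarged gadget $\gadgetT{\bm{\theta}}$ was built precisely so that $T$ is \emph{sandwiched} by the $op$ operation (moved up to $T$'s layer, acting on the ancilla) and the three two-qubit rotations $R_{XX},R_{YY},R_{ZZ}$ that follow both $op$ and $T$. Thus after differentiating in $\theta_T$, what remains downstream of the inserted commutator is still a genuine MPQC with a full gadget layer of the type covered by Theorem~\ref{thm:absence_BP_MPQC_parameters}, together with one extra ``tail'' structure.

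Concretely, the second step is to peel off the $\gadget{\bm{\theta}}$ inserted immediately before $T$: conditioning on (or integrating over) its parameters $\bm{\theta}_{\mathcal{G}}$, the condition~\eqref{eq:gadget_input_parameter} on $op$ guarantees that the state fed into $T$ has, with the appropriate averaged weight, Pauli components $X,Y,Z$ each appearing with coefficient squared at least $\tau$. This is the same mechanism that powers Theorem~\ref{thm:absence_BP}. The third step is to expand everything in the Pauli path / Pauli transfer matrix picture: write the variance as a sum over Pauli paths threading the circuit, and identify the subset of paths that (i) pass through the commutator $[P_T,\cdot]$ at the location of $T$, (ii) are ``pinned'' by the two-qubit rotations of $\gadgetT{\bm{\theta}}$ in the way that makes their angular average nonzero, and (iii) survive the backward light cone of some $P_\alpha$ in $O$ with the $K = \order{\log n}$ support bound. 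Because $\gadget{\bm{\theta}}$ sits within the light cone of some $P_\alpha$ (the hypothesis we are allowed to assume from the main-text discussion), at least one such family of paths is nonempty, and each contributes a strictly positive term of order $(\tau/4)^{\order{\log n}} = \Omega(1/\mathrm{poly}(n))$, exactly as in~\eqref{eq:lower_bound_var}. Summing these nonnegative contributions and discarding the rest yields the claimed lower bound~\eqref{eq:gra_var_lower_bound_activate_pars}.

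I would organize the write-up so that the bulk of the combinatorics is inherited rather than redone: the ``$\order{\log n}$ parameters in the light cone'' assumption from Theorem~\ref{thm:absence_BP_MPQC_parameters} controls the number of angular integrations that can kill a path, and the local-observable assumption controls $K$; both carry over verbatim to the $T$-activating MPQC because enlarging a single gadget changes the light-cone count only by an additive $\order{1}$. The main obstacle, I expect, is the bookkeeping at the enlarged gadget itself: one must check that differentiating in $\theta_T$ does not destroy the ``pinning'' provided by $R_{XX},R_{YY},R_{ZZ}$ — i.e., that after inserting the commutator with $P_T$, there is still at least one Pauli configuration on the $T$-qubit and the ancilla for which the three two-qubit rotation angles average to a nonzero (indeed positive) value, and for which the $op$-induced weight on the upstream $\gadget{\bm{\theta}}$ is at least $\tau$. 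Verifying this amounts to a small, explicit case analysis on two qubits (the $T$-qubit and its ancilla), tracking which of $X,Y,Z$ on the incoming wire is mapped by $[P_T,\cdot]$ and then by the $R_{PP}$ gates to something with nonvanishing overlap with the outgoing Pauli demanded by the rest of the light cone. Once that single positive term is exhibited, the rest of the argument is a direct transcription of the proof of Theorem~\ref{thm:absence_BP}.
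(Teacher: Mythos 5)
Your proposal follows essentially the same route as the paper's proof in \cref{app:activate_single_parameter}: a Pauli-path expansion in which one exhibits an explicit family of non-vanishing paths --- the enlarged gadget $\gadgetT{\bm{\theta}}$ transports a nontrivial Pauli anticommuting with $P_T$ to the location of $T$, the extra $\gadget{\bm{\theta}}$ inserted before $T$ swaps it onto a fresh ancilla where the $op$ condition contributes a factor $\tau$, and the $K=\order{\log n}$ light-cone bounds control the counting. The two-qubit case analysis you defer is precisely the content of \cref{app:impact_gadget}, and your ``reduction to \cref{thm:absence_BP_MPQC_parameters}'' framing is only motivational (that theorem cannot be invoked as a black box here, since $\theta_T$ sits before the gadget layer and the inserted commutator is not itself a channel); the paper instead redoes the discrete-angle counting directly with the sets $M_t$, $M_{\mathrm{anti}}$, and $M_{\mathrm{swap}}$, which matches the substance of what you describe.
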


{In practice, \cref{thm:activate_parameters} provides a strategy to adaptively modify the MPQC architecture, enabling the training of all the parameters in the circuit.}
The procedure can be implemented as follows.
We first train the MPQC $\channelM{\bm{\theta},\bm{\theta}_{\mathcal{G}}}$ to minimize the loss function. 
As stated before, certain parameters before the gadget layer may remain untrainable. 
If such a parameter is identified, we can apply the activation strategy to make it trainable. Moreover, by initializing the newly introduced parameters in $\channelMT{\bm{\theta},\bm{\theta}_{\mathcal{G}},\bm{\theta}_{\mathcal{G}'_T}}$ to zero, the loss function retains the same value as that of $\channelM{\bm{\theta},\bm{\theta}_{\mathcal{G}}}$. This enables us to further minimize the loss funtion. {If multiple untrainable parameters are identified, the same activation procedure can be successively applied.}

Furthermore, this strategy can naturally extend {to} multi-qubit rotation gate and multiple parameters: by inserting several $\gadget{\bm{\theta}}$ and several gadgets of the form $\gadgetT{\bm{\theta}}$, multiple parameters can be simultaneously activated within a single MPQC. {Details of the construction for this strategy and its application can be found in \sm~I and \sm~M, respectively.}

\section{Noise robustness}

MPQCs and its variants still work well even in the presence of noise on quantum devices, making them an applicable tool for quantum machine learning in the NISQ era. 
This property is {formalized in} the following theorem, with the complete proof provided in \sm~J.

\begin{theorem}[informal]\label{thm:absence_BP_robustness}
{Suppose the MPQC is subject to Pauli noise of strength at most $\gamma < 1/2$ after each $U_i(\theta_i)$ and every gate within the gadgets.
Then, the lower bounds on the (gradient) variance of the loss function established in \cref{thm:absence_BP,thm:absence_BP_MPQC_parameters,thm:activate_parameters} deteriorate by at most a multiplicative factor of $(1-2\gamma)^{\order{\log n}} = \Omega\left(\frac{1}{\mathrm{poly}(n)}\right)$.}
\end{theorem}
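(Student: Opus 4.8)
The plan is to track how Pauli noise degrades each of the lower bounds established in \cref{thm:absence_BP,thm:absence_BP_MPQC_parameters,thm:activate_parameters} by working within the Pauli path integral framework already used in their proofs. The key observation is that a Pauli noise channel of strength $\gamma$ acts diagonally in the Pauli basis: on a Pauli word $P$ of weight $w$ it contributes a damping factor that is at least $(1-2\gamma)^{w}$ (each non-identity single-qubit tensor factor of $P$ is damped by at least $1-2\gamma$ under depolarizing-type noise, and more generally by $1-2\gamma$ under any Pauli channel whose total off-diagonal weight is at most $\gamma$). Since the lower bounds in the noiseless theorems are obtained by isolating a single distinguished Pauli path (the one supported on the gadget layer that produces the $(\tau/4)^K$ factor), the noisy bound is the noiseless bound times the product of damping factors accumulated along that one path.

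First I would fix the distinguished Pauli path used in the proof of \cref{thm:absence_BP} — the one that, backward-propagated from $P_\alpha$, lands on the gadget layer with support of size at most $K = \order{\log n}$, threads through the gadget gates, and yields the $\tr{op(\ketbra{0})P}^2 \geq \tau$ contributions. Along this path the only places noise appears are (i) after each $U_i(\theta_i)$ whose light cone the path traverses, and (ii) after each gate inside the at most $K$ active gadgets. On every such layer the path carries a Pauli word whose weight is bounded: inside the backward light cone of a local observable the weight is controlled by the same light-cone combinatorics that gave $K$ and the $\order{\log n}$ parameter count, so each layer contributes a factor at least $(1-2\gamma)^{\order{1}}$ per active qubit, and there are $\order{\log n}$ layers with $\order{\log n}$ active qubits in total. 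Multiplying, the cumulative damping is at least $(1-2\gamma)^{\order{\log n}\cdot\order{1}} = (1-2\gamma)^{\order{\log n}}$, which is $\Omega(1/\mathrm{poly}(n))$ because $(1-2\gamma)^{c\log n} = n^{c\log(1-2\gamma)}$ is an inverse polynomial whenever $\gamma < 1/2$. The same argument applies verbatim to the distinguished path in \cref{thm:absence_BP_MPQC_parameters} (for both the after-gadget and before-gadget parameters, noting that in the before-gadget case the noise on the pre-gadget portion of the circuit only multiplies the already-present $\operatorname{Var}_{\bm\theta}[\partial_{\theta_j}L]$ factor, and the light-cone-local noise after the gadget contributes the $(1-2\gamma)^{\order{\log n}}$ factor) and to the enlarged-gadget path in \cref{thm:activate_parameters}.

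A subtlety I would be careful about: the noiseless proofs lower-bound the variance by keeping one path and \emph{discarding} all others, using the fact that the retained term is manifestly nonnegative (a square times nonnegative coefficients). Under noise this positivity structure must survive — I would check that the noisy variance is still a sum over pairs of Pauli paths whose diagonal (path $=$ path) contributions are nonnegative and carry only damping factors, so that dropping the off-diagonal and other diagonal terms remains a valid lower bound; this is standard since Pauli noise preserves the diagonal-dominant, sign-definite structure of the second-moment expansion over the uniformly random angles. I would also record that for Pauli noise that is not symmetric, "strength at most $\gamma$" should be interpreted as the total weight on non-identity Pauli errors being at most $\gamma$, which guarantees the per-qubit eigenvalue of the noise channel on any single-qubit Pauli is at least $1-2\gamma$; this is the only property used.

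The main obstacle is bounding the weight of the distinguished Pauli path at \emph{every} intermediate layer, not just at the gadget layer and the observable. The theorem statements only control the support size $K$ at the gadget layer and the parameter count in that segment, so I need an auxiliary lemma stating that along a backward-light-cone path of a local observable in a geometrically local (or bounded-light-cone) circuit, the Pauli weight at each of the $\order{\log n}$ layers between observable and gadget layer is $\order{\log n}$ — equivalently, the total number of (qubit, layer) incidences where noise can act on the path is $\order{\mathrm{polylog}(n)}$. For the lattice PQCs analyzed in \cref{app:K_f_D-l_geometry} this follows from the same light-cone geometry that bounds $K$; for a fully general PQC one needs the (already implicit) assumption that the backward light cone of each $P_\alpha$ down to the gadget layer has bounded total size, which is exactly what the $K = \order{\log n}$ and $\order{\log n}$-parameter hypotheses encode. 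Once that weight bound is in hand, the product of damping factors is immediate and the $\Omega(1/\mathrm{poly}(n))$ conclusion follows. I would present this as: (1) recall Pauli-path expansion of the (gradient of the) noisy loss; (2) note per-layer damping $\geq (1-2\gamma)^{\text{weight}}$; (3) invoke the light-cone weight lemma to bound the total exponent by $\order{\log n}$; (4) conclude the noisy bound equals the noiseless bound times $(1-2\gamma)^{\order{\log n}} = \Omega(1/\mathrm{poly}(n))$, applied separately to each of the three theorems.
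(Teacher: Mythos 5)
Your overall route is the same as the paper's: expand the noisy loss in the Pauli path integral, observe that each Pauli noise channel acts diagonally on Pauli operators and multiplies the retained path contribution by a damping coefficient, restrict to the distinguished swap path from the noiseless proofs, and count the accumulated damping. The paper likewise re-verifies that the orthogonality of Pauli paths survives the noise, so that keeping only the sign-definite diagonal terms $c_\alpha^2\, g(\vec{\mathbf{s}})^2 f(\cdot)^2$ is still a valid lower bound --- exactly the positivity check you flag.

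The place where your accounting diverges, and where a genuine gap opens, is the exponent. The paper's noise model attaches one (possibly multi-qubit) Pauli channel of total strength $\gamma$ after each gate, supported on that gate's qubits, and the elementary bound $\tr{s\,\mathcal{N}^\dagger(s)} \ge 1-2\gamma$ holds for \emph{any} Pauli word $s$ regardless of its weight, since the anticommuting error terms have total probability at most $\gamma$. Hence each noise channel costs at most one factor of $(1-2\gamma)$, and the only channels that cost anything are those whose support meets the nontrivial support of the path: before the gadget layer the swap path is the identity on all system qubits, so those channels contribute a factor of $1$; after the gadget layer only the $\Fnumall = \order{\log n}$ gates inside the backward light cone contribute; the gadget layer and $op$ add $\order{1}$ more. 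This yields the exponent $\Fnumall+4$ directly, with no need for your auxiliary weight lemma. Your version instead assumes per-qubit damping $(1-2\gamma)^{w}$ on a weight-$w$ word and therefore must bound the path weight at every intermediate layer. That is workable in principle, but the lemma you actually propose --- weight $\order{\log n}$ at each of $\order{\log n}$ layers --- would only give exponent $\order{(\log n)^2}$, and $(1-2\gamma)^{(\log n)^2} = n^{(\log n)\log(1-2\gamma)}$ decays superpolynomially, so the $\Omega\left(\frac{1}{\mathrm{poly}(n)}\right)$ conclusion would be lost; the same applies to your ``$\order{\mathrm{polylog}(n)}$ incidences'' phrasing. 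What you need (and state correctly earlier in your sketch, before weakening it) is that the \emph{total} number of (qubit, layer) incidences where the path is nontrivial and noise acts is $\order{\log n}$; under the paper's hypotheses this follows because those incidences are confined to the $\order{1}$-local gates counted by $\Fnumall$ plus the gadget layer. Tighten that count --- or adopt the paper's per-gate, weight-independent damping bound --- and your argument closes.
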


{As a consequence of \cref{thm:absence_BP_robustness}, the variance and the gradient variance of the loss function of the noisy MPQCs are still lower bounded by $\Omega\left(\frac{1}{\mathrm{poly}(n)}\right)$, implying the merits hold for the MPQC even in the noisy setting.}
It is worth noting, however, that MPQCs {(and, more generally, arbitrary PQCs)} with a constant noise rate are classically simulable in an average sense~\cite{shao2024simulating}; that is, over the MPQC loss function landscape, outputs corresponding to most parameter settings are classically predictable. 
Nevertheless, some specific parameter configurations of {noisy MPQCs} may still retain quantum advantage, {indicating their potential value for deployment on near-term quantum devices.} Similar discussions can also be seen in Ref.~\cite{deshpande2024dynamic}.

\begin{figure*}[!t]
    \centering
    \includegraphics[width=\linewidth]{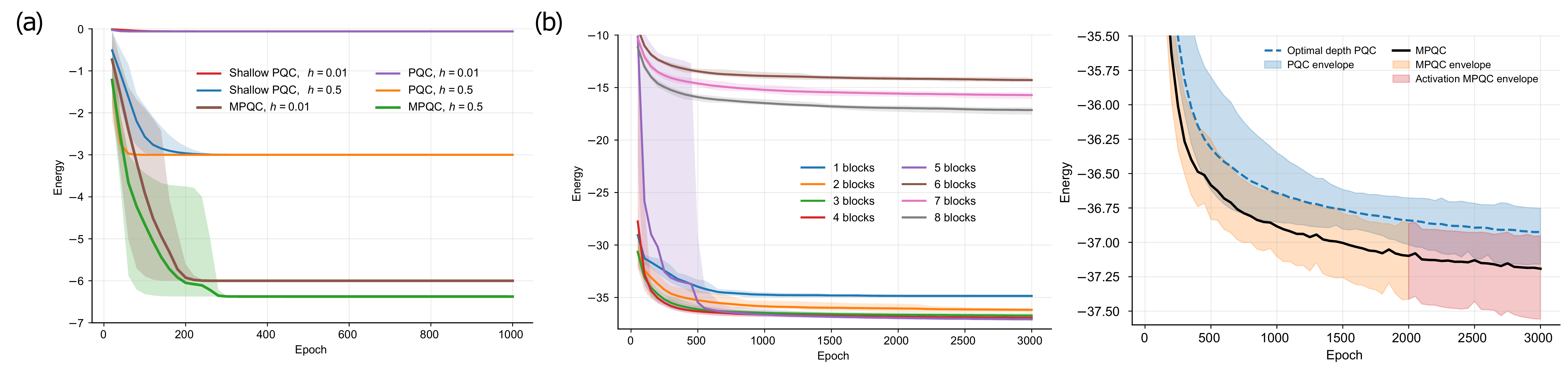}
    \caption{\justifying 
Performance of PQC and MPQC in variational quantum algorithms. 
All results are obtained from ten independent random parameter initializations. 
Shaded regions indicate the min--max envelope of the loss across different initializations, 
and lines represent the corresponding mean values.
(a) Variational training of a poorly designed PQC and the corresponding MPQC for $H_{\mathrm{TFI}}$ with $h=0.01$ and $h=0.5$.
In both cases, MPQC converges to the ground energy, whereas the original PQC fails.
(b) Left: Performance of two-dimensional PQCs with different numbers of repeated blocks in approximating the ground-state energy of $H_G$.
As the number of blocks increases, the lowest achievable energy first decreases and then increases.
Right: Comparison between the best-performing PQC in the left plot and the corresponding MPQC.
During the first $2000$ epochs, MPQC is trained without activation, followed by $1000$ epochs with activation, where newly introduced parameters are initialized to zero.
The final energy achieved by MPQC is lower than that of the best-performing PQC by $0.39$.
}\label{fig:vqa_results}
\end{figure*} 
\section{Numerical simulations}
In this section, we provide numerical evidence showing that MPQCs can effectively eliminate barren plateaus in PQCs. More importantly, we demonstrate that the MPQCs we construct for quite a few variational algorithms outperform the original PQCs significantly. These results highlight that MPQC is a promising approach for constructing trainable and expressive variational parameterized quantum circuits.

\subsection{Evidence that MPQC Eliminates Barren Plateaus}
We {first} conduct numerical experiments that compare the variances of the loss function and {those of the} parameter gradients between PQCs and MPQCs. 
The numerical study focuses on PQCs for thermal state preparation, a task known to be NP-hard~\cite{galanis2016inapproximability}. 
We consider the 2-local transverse field Ising model
\begin{equation}\label{eq:H_TFI}
  H_{\mathrm{TFI}}=-\sum_{j=1}^n X_j X_{j+1}-{h} \sum_{j=1}^n Z_j
\end{equation}
defined on a periodic 1D chain with system sizes ranging from {$n=$ 10 to 100} qubits, {where $h$ denotes the transverse-field strength controlling the relative weight of the single-qubit field term~\cite{sachdev1999quantum,pfeuty1970one}, and is fixed to be $h = 1/2$ in this case.}
The circuits architecture is shown in \cref{fig:procedures}(c).
Following Ref.~\cite{ilin2024dissipative}, which reports the small preparation errors when the number of blocks equals $n$, we set the number of blocks to be $n$. 
To eliminate the barren plateaus present in such a PQC, we construct an MPQC by inserting a gadget layer after the $(n-1)$-th block, followed by an additional block.

{We employ the numerical method of Ref.~\cite{shao2025diagnosing}, which offers efficient classical method to estimate both the variance and the gradient variance of the loss function of arbitrary PQCs, to compare the trainability of the PQC and that of the corresponding MPQC. As shown in \cref{fig:procedures}(c), we first see that the variance of the loss function for the original PQC (blue curve) decreases rapidly as the number of qubits increases. Our numerical results show that it becomes negligibly small when $n > 21$, indicating the onset of barren plateaus.
In stark contrast, the variance of the loss function for the MPQC (yellow curve) remains stable (approximately $1$) and even exhibits a slight increase with the number of qubits. This behavior demonstrates that the MPQC effectively avoids barren plateaus and preserves trainability across increasing system sizes in this case.}

Furthermore, for $n=20$, we evaluate the gradient variances of the gradients associated with the parameters located after the gadget layer. The results are shown in the inset of \cref{fig:procedures}(c), where all the red points corresponding to the MPQC have gradient variances of the order of $10^{-2}$, whereas the blue points for the original PQC fall below $10^{-4}$. Collectively, these numerical results provide strong evidence that MPQCs are highly effective in enhancing the trainability of PQCs.

\subsection{Evidence that MPQC outperforms original PQC in variational algorithms}
In this subsection, through comprehensive numerical simulations across all stages, we demonstrate that MPQC can substantially improve the performance of PQCs in variational quantum algorithms. 
Owing to the limitations of classical numerical simulation and the additional ancilla qubits required by MPQC, 
our simulations are restricted to systems of up to 24 qubits. 
At these system sizes, gradients that vanish exponentially with the number of qubits may not yet be extremely small. 
Nevertheless, even for medium-size PQCs, the cost-function gradient can still be close to zero when the ansatz is poorly designed, which leads to severe training difficulties and gives us chance to test the performance of MPQC.

For this, we first construct a deliberately unfavorable ansatz to approximate the ground-state energy 
of the Hamiltonian in \cref{eq:H_TFI}. 
For this ansatz, the corresponding PQC becomes untrainable when the transverse field strength $h$ is close to zero. 
By inserting a gadget layer into the circuit, we obtain the corresponding MPQC. 
In this example, we let $n=6$ (the number of qubits) and consider two representative values of $h$, which are $h=0.01$ and $h=0.5$. 
Details of the circuit construction and the training procedure are provided in \sm~M.

As shown in \cref{fig:vqa_results}(a), when $h=0.01$, even the shallow PQC cannot be trained properly, indicating the presence 
of vanishing gradients. Moreover, increasing the field strength to $h=0.5$ does not resolve this issue: the poorly designed PQC 
still fails to converge to the ground-state energy, as illustrated by the blue and orange curves. 
In contrast, for the both values of $h$, MPQC consistently converges to the exact ground-state 
energy up to a small error of $0.01$. 
These results demonstrate that MPQC remains effective even when the underlying ansatz is improperly designed.

To provide further evidence that MPQC can outperform the original PQC, we next consider the task of approximating the ground-state energy of a more complex Hamiltonian $H_G$ discussed in Eq.\cite{KempeKitaevRegev2006}, which is QMA-complete. Here, $G$ specifies the underlying geometry of the Hamiltonian.
To generate the ground state, we construct a family of two-dimensional ansatze on $12$ qubits, composed of repeated circuit blocks, with the number of blocks ranging from $1$ to $8$. When the block number is $8$, the corresponding MPQC is obtained by inserting a gadget layer after the fourth block of the PQC. In addition, we employ the activation strategy described in \cref{sec:activation} to further enhance the performance of MPQC, which doubles the qubit number.
For a fair comparison, all the PQCs are trained for $3000$ epochs, and the MPQC is trained for $2000$ epochs without activations, followed by $1000$ more epochs with activations, resulting in the same total number of optimization steps. Details on the Hamiltonian $H_G$, the two-dimensional PQC ansatz, the construction of MPQC, and the training process are provided in \sm~M.

As shown in the left panel of \cref{fig:vqa_results}(b), the final energy obtained by the original PQC initially decreases as the number of blocks increases, reflecting the improved expressibility of deeper circuits. However, when the number of blocks exceeds five, the performance deteriorates, indicating the onset of severe trainability issue.

As a sharp comparison, the right panel of \cref{fig:vqa_results}(b) shows that MPQC can address the trainability issue very well, which remains trainable at all the depths considered here. Moreover, its loss function decreases more rapidly and reaches significantly lower values than those achieved by the best-performing PQC (with five blocks). We further observe that the activation strategy enables additional optimization progress: without activation, the MPQC loss remains nearly constant after approximately $2000$ training epochs, whereas activating additional parameters allows the loss function to decrease further.

\section{Conclusions and Discussions}
In this work, we have introduced {a novel, easily implementable, and universal strategy} to improve the trainability of an arbitrary PQC. By inserting a layer of gadgets, we transform the PQC into an MPQC that is at least as expressive as the original PQC and, importantly, is provably free of barren plateaus.
We further analyze the trainability of parameters in the MPQC, showing that our construction consistently enhances trainability: parameters following the gadget layer are guaranteed to be trainable, {whereas the others retain the same learning behavior as in the original PQC.}
{Moreover, we further propose a targeted strategy to render these remaining parameters trainable, ensuring that all the parameters can be effectively optimized}.

The improvement in trainability brought by constructing MPQCs is supported by our numerical experiments. Focusing on a PQC for thermal state preparation, we find that barren plateaus are absent in the MPQC even for deep circuits with up to 100 qubits and 2400 layers, whereas the original PQC becomes untrainable when the system size reaches 20 qubits. 
Furthermore, by end-to-end numerical simulations we show that MPQC can substantially enhance the performance of the original PQC in variational quantum algorithms. In particular, in some cases we see that MPQC is able to converge to the optimal solution even when the corresponding PQC cannot be trained at all.

Our theoretical analysis {and numerical verifications position} MPQCs as a promising circuit architecture for PQC-based quantum algorithms.
However, several interesting questions remain. 
First, {we have shown that the set of the output state of an MPQC subsumes that of the original PQC, implying that classical simulation of the MPQC is at least as hard as that of the original PQC.
Actually, we have also theoretically demonstrated that the average-case classical simulation of the MPQC leads to that of the original PQC (see \sm~L for details).
These results may shed new light on the relationship between average-case classical simulation complexity and barren plateaus, as recently discussed in Ref.~\cite{cerezo2025does}.
}
Second, as highlighted in Ref.~\cite{deshpande2024dynamic}, the absence of barren plateaus alone does not guarantee that a quantum algorithm will converge to the optimal solution, since the loss landscape can still exhibit significant complexity.
In future work, we will investigate the internal mechanisms of MPQCs to {examine} their effects on the loss function landscape, with the aim of {understanding} the convergence behavior of training MPQCs.

\begin{acknowledgments}

We thank Weikang Li, Zhengfeng Ji, Ruiqi Zhang, Fuchuan Wei and Weixiao Sun for valuable discussions. Z.W was supported by Beijing Science and Technology Planning Project (Grant No. Z25110100810000). Z.L was supported by NKPs (Grant No. 2020YFA0713000). Z.W and Z.L were supported by Beijing Natural Science Foundation (Grant No. Z220002).
Y.S, and Z.L were supported by BMSTC and ACZSP (Grant No. Z221100002722017).  Z.C and Z.W
were supported by the National Natural Science Foundation of China (Grant Nos. 62272259 and 62332009).

\end{acknowledgments}
\bibliography{main}
\clearpage
\setcounter{theorem}{0}
\renewcommand{\thetheorem}{A.\arabic{theorem}}
\renewcommand{\thelemma}{A.\arabic{lemma}}
\renewcommand{\thecorollary}{A.\arabic{corollary}}
\renewcommand{\thedefinition}{A.\arabic{definition}}
\renewcommand{\thefigure}{A.\arabic{figure}}
\renewcommand{\thetable}{A.\arabic{table}}

\renewcommand{\thetheorem}{A.\arabic{theorem}}
\renewcommand{\thelemma}{A.\arabic{lemma}}
\renewcommand{\thecorollary}{A.\arabic{corollary}}
\renewcommand{\thedefinition}{A.\arabic{definition}}
\renewcommand{\thefigure}{A.\arabic{figure}}
\renewcommand{\thetable}{A.\arabic{table}}

\onecolumngrid 
\section*{Supplemental Material}
\appendix

\tableofcontents

\clearpage

\section{Circuit architectures}\label{app:circuit_details}
\subsection{Parameterized quantum circuit (PQC)}
A typical $n$-qubit PQC, denoted as $\mathcal{C}(\bm{\theta})$, consists of a sequence of Pauli rotation gates and non-parameterized Clifford gates.
The Pauli rotation gates are represented as $e^{-i\frac{\theta}{2} P}$, where $P\in\{\mathbb{I},X,Y,Z\}^{\otimes n}$. The Clifford gates are the unitary operators that normalize the Pauli group $Cl_n \coloneqq \{C\in U_{2^n} \mid C\mathcal{P}_nC^\dagger=\mathcal{P}_n\}$, where $\mathcal{P}_n$ is the Pauli group on $n$ qubits. Any unitary operator $U\in Cl_n$ is equivalent to a circuit generated using Hadamard, CNOT, and phase gates $S$~\cite{gottesman2016surviving}.

Without loss of generality, we assume that PQCs follow the form:
\begin{equation}\label{eq:parameterized_circuit}
  \mathcal{C}(\bm{\theta})=U_m(\theta_m)  \cdots {U}_1(\theta_1),
\end{equation}
where $\bm{\theta}=(\theta_1,\cdots,\theta_m)$ are rotation angles and $m$ is the number of the parameters. Each unitary ${U}_i(\theta_i):= R_{P_i}(\theta_i)C_i $ comprises a Clifford operator $C_i$ and a rotation $R_{P_i}(\theta_i):=\exp{-i \frac{\theta_i}{2} P_i}$ on Pauli operator $P_i\in\{\mathbb{I},X,Y,Z\}^{\otimes n}$ with angle $\theta_i$. 

In this context, the quantum circuit $\mathcal{C}(\bm{\theta})$ is applied to an initial state $\rho$, and what we are interested in is the expectation value of an observable $O$, which is given by
\begin{equation}\label{eq:expectation_value_noiseless}
  \Vexp{O} = \tr{O \mathcal{C}(\bm{\theta})\rho \mathcal{C}(\bm{\theta})^\dagger}. 
\end{equation}
Without loss of generality, we assume that the observable is traceless, i.e., $\tr{O}=0$, otherwise we can replace $O$ with $O-\frac{\tr{O}}{2^n}I$. 

Moreover, we restrict the number of Pauli words constituting the observable $O$ is $\order{\mathrm{poly}(n)}$, since measuring an exponential number of expectation values is experimentally infeasible.
This assumption is satisfied for a wide range of variational quantum algorithms (VQAs), such as the Variational Quantum Eigensolver (VQE)~\cite{peruzzo_variational_2014} and the Quantum Approximate Optimization Algorithm (QAOA)~\cite{farhi_quantum_2014}.
Consequently, for $O = \sum_{\alpha} c_\alpha P_\alpha$, we have
\begin{equation}\label{eq:upper_bound_HS}
\sum_{\alpha} c_\alpha^2 \le \max_\alpha \{c_\alpha^2\} \sum_{\alpha} 1 = \order{\mathrm{poly}(n)}.
\end{equation}

\subsection{Modified parameterized quantum circuit (MPQC)}
By introducing some gadgets to any PQC in form of~\eqref{eq:parameterized_circuit}, we obtain a corresponding modified parameterized quantum circuit (MPQC). A schematic illustration of the MPQC is shown in \cref{fig:modified_pqc}.

\begin{figure}[h]
  \centering
  \includegraphics[width = 0.8\columnwidth]{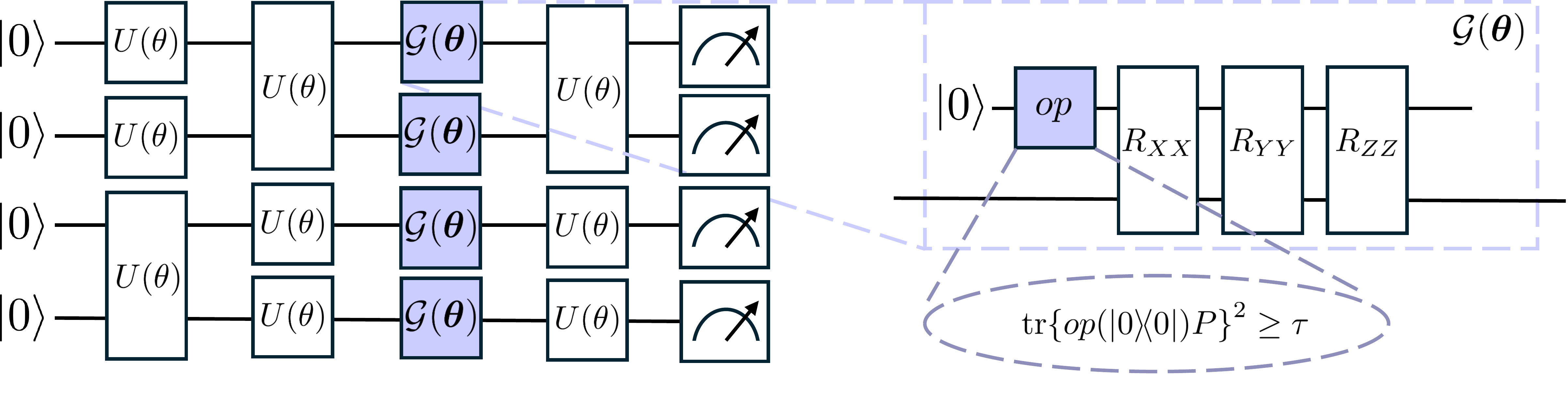}
  \caption{
    An example of an MPQC, where gadgets $\gadget{\bm{\theta}}$ drawn in blue are inserted into the original PQC.
    The gadget contains an ancilla qubit $\ket{0}$, one single qubit gate $op$ and three 2-qubit rotation gates $R_{XX}, R_{YY}, R_{ZZ}$.
  }\label{fig:modified_pqc}
\end{figure}

In \cref{fig:modified_pqc}, the single qubit gate $op$ in gadget $\gadget{\bm{\theta}}$ satisfies the following condition:
\begin{equation}\label{eq:op_condition}
  \min\left\{\tr{op(\ket{0}\bra{0})X}^2,\tr{op(\ket{0}\bra{0})Y}^2,\tr{op(\ket{0}\bra{0})Z}^2\right\} = \tau > 0.
\end{equation}
In the next subsection, we present a construction of \( op \) such that \cref{eq:op_condition} holds with maximum $\tau$ when \( op \) is a unitary. Moreover, we provide an alternative construction that keeps \( op \) trainable. It is easy to see inserting $\gadget{\bm{\theta}}$ to the original PQC will not decrease expressibility, because if the rotation angles in these three 2-qubit gates equal $0$, the PQC in \cref{fig:modified_pqc} is exactly the original PQC. 

We assume that all gadgets $\gadget{\bm{\theta}}$ are inserted after the $l$-th layer of the original circuit, as illustrated in \cref{fig:MPQC_for_proof}. Also for further simpliy the proof, we restrict that the gadget layer is placed after the $L$-th block of the PQC, i.e.:
\begin{equation}\label{eq:Modified_parameterized_circuit}
  \channelM{\bm{\theta},\bm{\theta}_\mathcal{G}}= \mathcal{U}_m(\theta_m)  \circ\mathcal{U}_{m-1}(\theta_{m-1})\cdots \circ\mathcal{U}_{L+1}(\theta_{L+1})\circ \otimes_{i=1}^n\mathcal{G}_i(\bm{\theta}_{\mathcal{G}_i})\circ\mathcal{U}_{L}(\theta_{L})\cdots\circ\mathcal{U}_{1}(\theta_{1}),
\end{equation}
where $\mathcal{U}_{i}(\theta_{i})$ is the channel representation corresponding to the unitary operation $U_{i}(\theta_{i})$, each gadget is parameterized by three angles $\bm{\theta}_{\mathcal{G}_i} = (\theta_{\mathcal{G}_{i,1}}, \theta_{\mathcal{G}_{i,2}}, \theta_{\mathcal{G}_{i,3}})$, and ``$\circ$'' denotes the composition of quantum channels.
\begin{figure}[h]
  \centering
  \includegraphics[width = 0.7\columnwidth]{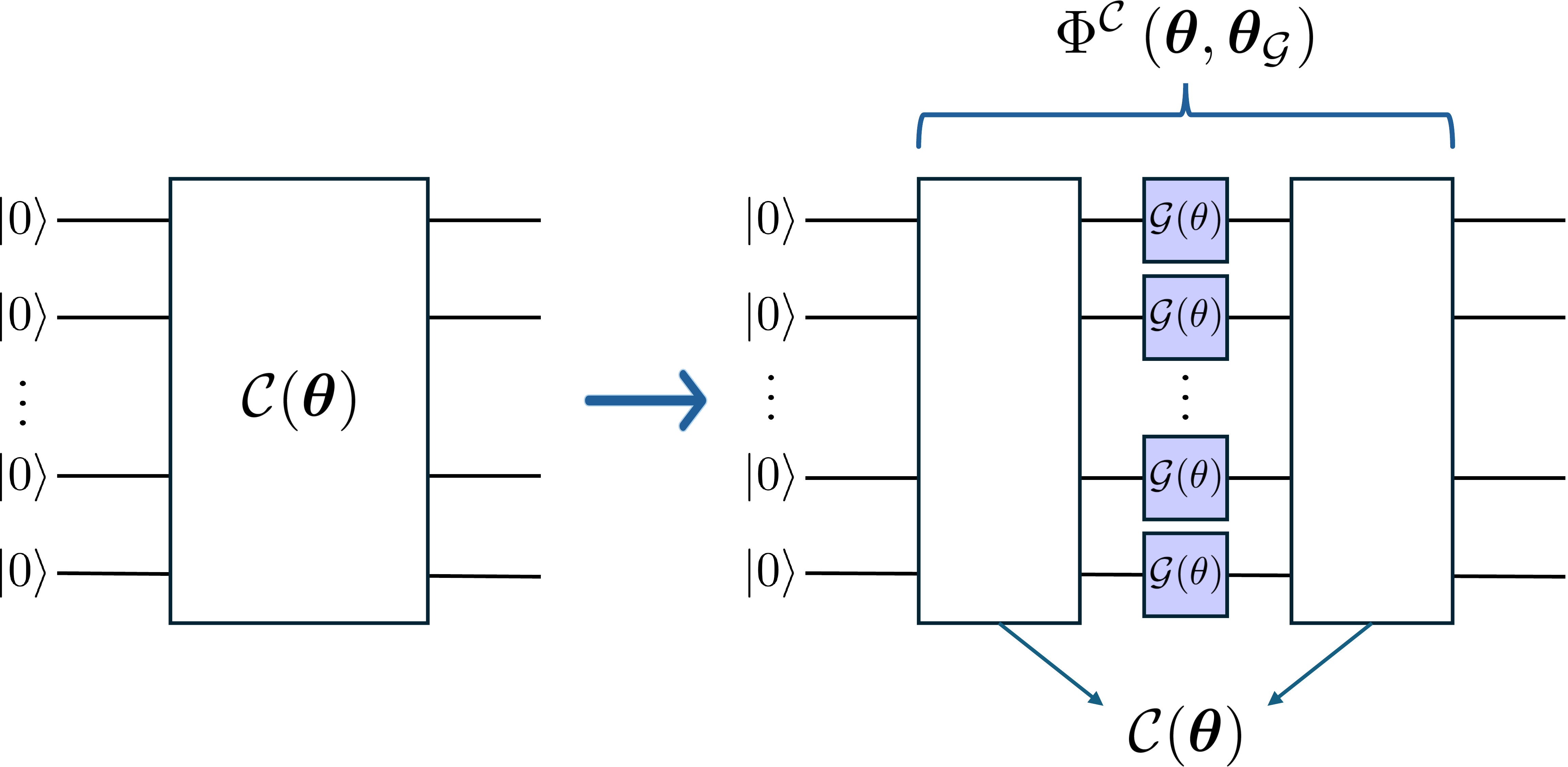}
  \caption{One construction of the MPQC: all gadgets $\gadget{\bm{\theta}}$ are inserted in parallel after the $l$-th layer of the original circuit.
}\label{fig:MPQC_for_proof}
\end{figure}

\subsection{Constructions of $op$}\label{subapp:op}
We now present two constructions of \( op \). The first achieves the maximal value of \( \tau \) using a single unitary gate. The second employs two parameterized single-qubit Pauli rotation gates, offering a hardware-efficient implementation compatible with current quantum devices.

\paragraph{Single qubit unitary} Suppose $op$ is a unitary gate $U$ such that 
\[
U\ket{0}= \cos\psi\ket{0} + \sin\psi e^{i\phi}\ket{1}.
\]
Then we have 
\begin{equation}
  \begin{aligned}
  &\tr{U(\ket{0}\bra{0})U^{\dagger}X}^2 = (\cos\psi\sin\psi)^2(e^{i\phi} + e^{-i\phi})^2 ={\sin^22\psi} (real(e^{i\phi}))^2\\
  &\tr{U(\ket{0}\bra{0})U^{\dagger}Y}^2 = (\cos\psi\sin\psi)^2(ie^{i\phi} - ie^{-i\phi})^2 = {\sin^22\psi} (Im(e^{i\phi}))^2\\
  &\tr{U(\ket{0}\bra{0})U^{\dagger}Z}^2 = (\cos^2\psi - \sin^2\psi)^2 = \cos^22\psi\\
  & \tau = \min\{{\sin^22\psi} (real(e^{i\phi}))^2, {\sin^22\psi} (Im(e^{i\phi}))^2, \cos^22\psi\}.
  \end{aligned}
\end{equation}
It is straightforward to verify that when $2\psi = \arcsin{\sqrt{2/3}}$ and $\phi = \pi/4$, the value of $\tau$ attains its maximum of $1/3$.

\paragraph{Trainable construction} We can further allow $op$ to be trainable. Here, we provide a construction that employs two additional parameterized single-qubit rotation gates, in which $op$ is defined as follows:
\begin{figure}[H]
  \centering
  \begin{quantikz}
   & \gate[1]{R_{X}(\theta_1)} & \gate[1]{R_{Y}(\theta_2)}& \qw
  \end{quantikz}
  \caption{Trainable construction of $op$, in which we allow parameters $\theta_1$ and $\theta_2$ to be trainable.}\label{fig:op_trainable}
\end{figure}
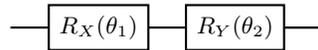

In the subsequent analysis, we demonstrate that MPQCs constructed using either method exhibit the same desirable properties. In the proofs of the main theorems, we assume that the condition in \cref{eq:op_condition} holds. In \cref{app:trainable_op}, we further show that the favorable properties of MPQCs still hold when the operator is trainable, as illustrated in the construction of \cref{fig:op_trainable}.

\section{Technical preliminaries}
In this section, we introduce the mathematical tools used to analyze the variance and gradient variance of parameterized quantum circuits.

\subsection{2-design of parameterized rotation gates}\label{app:2-design}
Let $R_P(\theta)=\exp{-i \frac{\theta}{2} P}=\cos(\frac{\theta}{2})\mathbb{I}-i\sin(\frac{\theta}{2})P$, it is not hard to see that the set
${\{ R_P(\theta) \}}_{\theta \in \left[0, 2\pi\right)}$
forms a group, which is a subgroup of the $n$-qubit unitary group
$\mathbb{U}(2^n)$.
Similar to unitary $t$-design, here we consider the $t$-design over the group $ {\{ R_P(\theta) \}}_{\theta \in \left[0, 2\pi\right)}$, which we call
the \emph{quantum rotation $t$-design}.

\begin{definition}
  A set of  unitary matrices ${\left\{A_i\right\}}_{i=1}^K$ is called a \emph{quantum rotation $t$-design} with respect to
  $R_P(\theta)$, if
  \begin{align}\label{def:two_design}
    \frac{1}{K} \sum_{i=1}^K {\left(A_i \otimes A_i^{\dagger} \right)}^{\otimes t} =
    \frac{1}{2\pi} \int_{0}^{2\pi} {\left(R_P(\theta)
    \otimes R_P(-\theta) \right)}^{\otimes t} d\theta.
  \end{align}
\end{definition}

We now prove that the following gate set forms a quantum rotation $2$-design.
\begin{theorem}\label{thm:two_design}
  ${\left\{R_P(\theta)\right\}}_{\theta=0, \pi/2, \pi, 3\pi/2}$ is a
  quantum rotation $2$-design with respect to
  ${\{ R_P(\theta) \}}_{\theta \in \left[0, 2\pi\right)}$.
\end{theorem}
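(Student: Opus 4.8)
The plan is to diagonalize the Pauli generator $P$ and thereby reduce the claimed operator identity to an elementary comparison between a finite Fourier average over four equally spaced angles and a continuous Fourier average over $[0,2\pi)$.

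First I would write $P = \Pi_+ - \Pi_-$, where $\Pi_\pm$ are the orthogonal projectors onto the $\pm 1$ eigenspaces of $P$ (and, in the degenerate case $P = \mathbb{I}^{\otimes n}$, simply $\Pi_+ = \mathbb{I}$, $\Pi_- = 0$); this is legitimate because every Pauli word is Hermitian and squares to $\mathbb{I}$. Consequently $R_P(\theta) = e^{-i\theta/2}\Pi_+ + e^{i\theta/2}\Pi_-$ and $R_P(-\theta) = R_P(\theta)^\dagger = e^{i\theta/2}\Pi_+ + e^{-i\theta/2}\Pi_-$. Expanding the fourth-order tensor by distributing each of the four factors over its two-term eigendecomposition gives
\begin{equation}
\left(R_P(\theta) \otimes R_P(-\theta)\right)^{\otimes 2} = \sum_{\bm{\epsilon} \in \{\pm 1\}^4} e^{\frac{i\theta}{2}\left(-\epsilon_1 + \epsilon_2 - \epsilon_3 + \epsilon_4\right)}\;\Pi_{\epsilon_1} \otimes \Pi_{\epsilon_2} \otimes \Pi_{\epsilon_3} \otimes \Pi_{\epsilon_4},
\end{equation}
with the convention $\Pi_{+1} = \Pi_+$, $\Pi_{-1} = \Pi_-$. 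The operator part is independent of $\theta$ and is the same on both sides of the asserted equality, so it suffices to match, term by term in $\bm{\epsilon}$, the scalar coefficient produced by the finite average on the left with the one produced by the integral on the right.

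Next I would set $d(\bm{\epsilon}) = -\epsilon_1 + \epsilon_2 - \epsilon_3 + \epsilon_4 \in \{0, \pm 2, \pm 4\}$, which reduces the theorem to the scalar identity
\begin{equation}
\frac{1}{4}\sum_{k=0}^{3} e^{ik\pi d/4} = \frac{1}{2\pi}\int_{0}^{2\pi} e^{i\theta d/2}\, d\theta \qquad \text{for all } d \in \{0, \pm 2, \pm 4\}.
\end{equation}
The right-hand side is the standard orthogonality relation $\frac{1}{2\pi}\int_0^{2\pi} e^{im\theta}\,d\theta = \delta_{m,0}$ with the integer $m = d/2$, hence equals $1$ if $d = 0$ and $0$ otherwise. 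For the left-hand side, writing $r = e^{i\pi d/4}$ one has $r^4 = e^{i\pi d} = 1$ since $d$ is even, so $\sum_{k=0}^{3} r^k$ equals $4$ when $r = 1$ (i.e.\ $d = 0$) and $(r^4 - 1)/(r - 1) = 0$ when $r \neq 1$ (i.e.\ $d = \pm 2, \pm 4$). Both sides therefore equal $1$ when $d(\bm\epsilon) = 0$ and $0$ otherwise, which establishes the identity and hence the theorem. Along the way this also makes explicit why both averages agree only on the $\bm\epsilon$ with $\epsilon_1 + \epsilon_3 = \epsilon_2 + \epsilon_4$.

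The argument is essentially routine once the eigendecomposition is in hand; the only point demanding a little care is the sign bookkeeping in the phase exponent together with the observation that $d(\bm\epsilon)$ only ever takes even values in $\{0,\pm 2,\pm 4\}$. I do not expect a genuine obstacle here. The subtle remark worth recording is precisely why four equally spaced sample points are both necessary and sufficient: the relevant nonzero frequencies are $d/2 \in \{\pm 1, \pm 2\}$, so one needs the finite set of roots of unity $\{e^{i\pi d/4}\}$ to avoid $1$ for each such $d$ while summing to zero, which four points achieve but fewer do not.
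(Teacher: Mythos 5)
Your proof is correct, but it takes a genuinely different route from the paper's. The paper expands each rotation in the operator basis $\{\mathbb{I},P\}$ via $R_P(\theta)=\cos(\tfrac{\theta}{2})\mathbb{I}-i\sin(\tfrac{\theta}{2})P$, then explicitly evaluates the trigonometric moments ($\tfrac{1}{2\pi}\int_0^{2\pi}\cos^4\tfrac{\theta}{2}\,d\theta=\tfrac38$, the mixed odd moments vanishing, $\tfrac{1}{2\pi}\int_0^{2\pi}\cos^2\tfrac{\theta}{2}\sin^2\tfrac{\theta}{2}\,d\theta=\tfrac18$) and the corresponding finite sums over $\theta\in\{0,\pi/2,\pi,3\pi/2\}$, matching the coefficients of each $\bigotimes_j P^{i_j}$ on both sides. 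You instead diagonalize $P$ into its eigenprojectors, which turns the fourth-order tensor into a sum of $\theta$-independent projector products weighted by pure phases $e^{i\theta d/2}$ with $d\in\{0,\pm2,\pm4\}$, and the whole theorem collapses to the single scalar identity $\tfrac14\sum_{k=0}^{3}r^k=\delta_{r,1}$ for fourth roots of unity $r=e^{i\pi d/4}$ versus the character orthogonality $\tfrac{1}{2\pi}\int_0^{2\pi}e^{im\theta}\,d\theta=\delta_{m,0}$. The two bases are related by a linear change of variables, so the computations are equivalent in content, but your version avoids any moment bookkeeping, and it additionally explains \emph{why} four equally spaced angles suffice (the only nonzero frequencies present are $d/2\in\{\pm1,\pm2\}$, and four points are the minimum that annihilate all of them), which the paper's direct verification does not make visible. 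Your sign bookkeeping for $d(\bm{\epsilon})$ and the handling of the degenerate case $P=\mathbb{I}^{\otimes n}$ are both correct; the only cosmetic discrepancy is that the paper's proof uses the factor ordering $R_P(\theta)^{\otimes 2}\otimes R_P(-\theta)^{\otimes 2}$ while you (like the paper's own Definition) use $\left(R_P(\theta)\otimes R_P(-\theta)\right)^{\otimes 2}$, which differ only by a permutation of tensor slots and does not affect the validity of either argument.
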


\begin{proof}
  Utilizing the relations
  \begin{align*}
    &\frac{1}{2 \pi} \int_0^{2 \pi} \cos ^4 \frac{\theta}{2} {\rm d}\theta
      = \frac{1}{2 \pi} \int_0^{2 \pi} \sin ^4 \frac{\theta}{2} {\rm d}\theta
      = \frac{3}{8},\\
    & \frac{1}{2 \pi} \int_0^{2 \pi} \cos \frac{\theta}{2}
      \sin ^3 \frac{\theta}{2} {\rm d}\theta
      = \frac{1}{2 \pi} \int_0^{2 \pi} \cos ^3 \frac{\theta}{2}
      \sin \frac{\theta}{2} {\rm d}\theta=0, \\
    & \frac{1}{2 \pi} \int_0^{2 \pi} \cos ^2 \frac{\theta}{2}
      \sin ^2 \frac{\theta}{2} {\rm d}\theta=\frac{1}{8},
  \end{align*}
  we have
  \begin{align*}
    & \frac{1}{2\pi}\int_{0}^{2\pi} {R_P(\theta)}^{\otimes 2} \otimes
      {R_P(-\theta)}^{\otimes 2} {\rm d}\theta \\
    = & \sum_{i_1,\ldots, i_4=0}^1 \frac{1}{2 \pi} \int_0^{2 \pi} i^{-i_1-i_2+i_3+i_4}
        {\Bigl(\cos \frac{\theta}{2}\Bigr)}^{4-\sum_j i_j} {\Bigl(\sin \frac{\theta}{2}\Bigr)}^{\sum_j i_j}
      \times \biggl(\bigotimes_{j=1}^{4} P^{i_{j}} \biggr)\,
      {\rm d}\theta\\
    = & \frac{3}{8} I^{\otimes 4} + \frac{3}{8} P^{\otimes 4} +
        \frac{1}{8} \sum_{\substack{i_1, \ldots, i_4=0 \\
      i_1+\cdots+i_4=2}}^1 i^{-i_1-i_2+i_3+i_4}
      \bigotimes_{j=1}^{4} P^{i_{j}}.
  \end{align*}

  Meanwhile, it can be verified that
  \begin{align*}
    & \frac{1}{4} \sum_{k=0}^3 {R_P \Bigl( \frac{k\pi}{2}
      \Bigr)}^{\otimes 2} \otimes {R_P \Bigl(\frac{-k\pi}{2}
      \Bigr)}^{\otimes 2} \\
    = & \frac{1}{4} \left(1+\frac{1}{4}+\frac{1}{4}\right)I^{\otimes 4}+
        \frac{1}{4} \left(1+\frac{1}{4}+\frac{1}{4}\right)P^{\otimes 4}+ \frac{1}{4} \left(\frac{1}{4}+\frac{1}{4}\right)
     \sum_{\substack{i_1, \ldots, i_4=0 \\
      i_1+\cdots+i_4=2}}^1 i^{-i_1-i_2+i_3+i_4}
      \bigotimes_{j=1}^{4} P^{i_{j}}\\
    = & \frac{3}{8} I^{\otimes 4} + \frac{3}{8} P^{\otimes 4} +
        \frac{1}{8} \sum_{\substack{i_1, \ldots, i_4=0 \\
      i_1+\cdots+i_4=2}}^1 i^{-i_1-i_2+i_3+i_4}
      \bigotimes_{j=1}^{4} P^{i_{j}},
  \end{align*}
  which concludes the proof.
\end{proof}

Thus for arbitrary operators $A, B, C, D$, we have the following corollary:
\begin{corollary}\label{cor:two_design}
  For any $n$-qubit operators $A, B, C, D$, the following equation holds:
  \begin{equation}\label{eq:two_design_cor}
    \mathbb{E}_{\theta} \tr{A R_P(\theta) B R_P(-\theta)}\tr{C R_P(\theta) D R_P(-\theta)}=
    \frac{1}{4}\sum_{\theta \in \AngleSet}\tr{A R_P(\theta) B R_P(-\theta)}\tr{C R_P(\theta) D R_P(-\theta)}.
  \end{equation}
\end{corollary}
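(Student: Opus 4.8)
The plan is to obtain this as an immediate consequence of \cref{thm:two_design} by a linearization argument. The key point is that, for fixed operators $A$ and $B$, the scalar $\tr{A R_P(\theta) B R_P(-\theta)}$ written in components is $\sum_{i,j,k,l} A_{ij}\,[R_P(\theta)]_{jk}\,B_{kl}\,[R_P(-\theta)]_{li}$, which is linear in the entries of $R_P(\theta)$ and linear in the entries of $R_P(-\theta)$. Hence the product of the two traces in \eqref{eq:two_design_cor} is quadratic in $R_P(\theta)$ and quadratic in $R_P(-\theta)$, and by collecting terms it can be rewritten as $\tr{M\big(R_P(\theta)^{\otimes 2}\otimes R_P(-\theta)^{\otimes 2}\big)}$ for a single fixed operator $M$ on $(\mathbb{C}^{2^n})^{\otimes 4}$ built from $A,B,C,D$ and independent of $\theta$.

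First I would carry out this index bookkeeping explicitly to identify $M$, keeping track of the order of tensor legs so that the block $R_P(\theta)\otimes R_P(\theta)\otimes R_P(-\theta)\otimes R_P(-\theta)$ obtained this way matches $\big(R_P(\theta)\otimes R_P(-\theta)\big)^{\otimes 2}$ as it appears in \cref{thm:two_design}, up to a fixed permutation of the four factors that can be absorbed into $M$. Then, averaging over $\theta$ and using linearity of the trace, the left-hand side of \eqref{eq:two_design_cor} becomes $\tr{M\cdot\tfrac{1}{2\pi}\int_0^{2\pi}\big(R_P(\theta)\otimes R_P(-\theta)\big)^{\otimes 2}\,d\theta}$ and its right-hand side becomes $\tr{M\cdot\tfrac14\sum_{k=0}^{3}\big(R_P(\tfrac{k\pi}{2})\otimes R_P(-\tfrac{k\pi}{2})\big)^{\otimes 2}}$; \cref{thm:two_design} says exactly that the two operators sandwiched between the traces are equal, so the two averages coincide, which is the claim.

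A self-contained alternative that avoids invoking \cref{thm:two_design} is to expand $R_P(\pm\theta)=\cos(\tfrac{\theta}{2})\mathbb{I}\mp i\sin(\tfrac{\theta}{2})P$, giving $R_P(\theta) B R_P(-\theta)=\tfrac{B+PBP}{2}+\tfrac{\cos\theta}{2}(B-PBP)+\tfrac{i\sin\theta}{2}(BP-PB)$, so that $\tr{A R_P(\theta) B R_P(-\theta)}\in\operatorname{span}\{1,\cos\theta,\sin\theta\}$; the product of two such expressions then lies in $\operatorname{span}\{1,\cos\theta,\sin\theta,\cos 2\theta,\sin 2\theta\}$, and one checks term by term that $\tfrac{1}{2\pi}\int_0^{2\pi} f(\theta)\,d\theta=\tfrac14\sum_{k=0}^{3} f(\tfrac{k\pi}{2})$ for each of these five functions (the standard fact that four equispaced nodes integrate trigonometric polynomials of degree at most three exactly), and the claim follows by linearity.

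I do not expect a real conceptual obstacle here: the only nontrivial work is the bookkeeping in the first route --- making the rewriting of the product of traces as $\tr{M(\cdots)}$ precise and matching the tensor-leg ordering and the sign convention $R_P(-\theta)=R_P(\theta)^\dagger$ so that \cref{thm:two_design} can be quoted verbatim. If one prefers, the elementary route trades all of this for a one-line operator expansion and a five-line case check.
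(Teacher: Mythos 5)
Your first route is essentially the paper's own proof: the paper likewise expands each trace in matrix elements, rewrites the product as a linear functional of $R_P(\theta)\otimes R_P(-\theta)\otimes R_P(\theta)\otimes R_P(-\theta)$, pulls $\mathbb{E}_\theta$ inside, and quotes \cref{thm:two_design}. Your elementary alternative (each trace lies in $\operatorname{span}\{1,\cos\theta,\sin\theta\}$, so the product is a trigonometric polynomial of degree at most $2$ integrated exactly by four equispaced nodes) is also correct and arguably slicker, but it is not the route the paper takes.
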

\begin{proof}
  The proof is straightforward by using the definition of the quantum rotation $2$-design in Eq.~\eqref{def:two_design} and the fact of Thm.~\ref{thm:two_design}, we have:
  \begin{equation}
    \mathbb{E}_{\theta} R_P(\theta) \otimes R_P(-\theta) \otimes R_P(\theta) \otimes R_P(-\theta)=\frac{1}{4}\sum_{\theta \in \AngleSet}R_P(\theta) \otimes R_P(-\theta) \otimes R_P(\theta) \otimes R_P(-\theta).
  \end{equation}
  The left-hand side of Eq.~\eqref{eq:two_design_cor} can be expressed as:
  \begin{equation}
    \begin{aligned}
      &\mathbb{E}_{\theta} \tr{A R_P(\theta) B R_P(-\theta)}\tr{C R_P(\theta) D R_P(-\theta)}\\
      =& \mathbb{E}_{\theta} \left(\sum_{i,j} \bra{i}A R_P(\theta) B \ket{j} \bra{j} R_P(-\theta)\ket{i} \right)
      \left(\sum_{k,l} \bra{k}C R_P(\theta) D\ket{l}\bra{l} R_P(-\theta)\ket{k}\right)\\
      =& \mathbb{E}_{\theta} \left(\sum_{i,j} \bra{i}\otimes\bra{j} \cdot (A R_P(\theta) B) \otimes R_P(-\theta)\cdot \ket{j}\otimes\ket{i} \right)
      \left(\sum_{k,l} \bra{k}\otimes\bra{l} \cdot(C R_P(\theta) D) \otimes R_P(-\theta) \cdot\ket{l}\otimes\ket{k}\right)\\
      =& \mathbb{E}_{\theta} \left(\sum_{i,j,k,l} \bra{i}\otimes\bra{j}\otimes\bra{k}\otimes\bra{l} \cdot (A R_P(\theta) B) \otimes R_P(-\theta) \otimes (C R_P(\theta) D) \otimes R_P(-\theta) \cdot\ket{j}\otimes\ket{i}\otimes\ket{l}\otimes\ket{k}\right)\\
      =& \mathbb{E}_{\theta} \left(\sum_{i,j,k,l} \bra{i}A\otimes \bra{j}\otimes\bra{k} C\otimes\bra{l} \cdot R_P(\theta) \otimes R_P(-\theta) \otimes R_P(\theta)  \otimes R_P(-\theta) \cdot B\ket{j}\otimes\ket{i}\otimes D\ket{l}\otimes\ket{k}\right)\\
      =& \sum_{i,j,k,l} \bra{i}A\otimes \bra{j}\otimes\bra{k}C\otimes\bra{l} \cdot \mathbb{E}_{\theta} \left(   R_P(\theta) \otimes R_P(-\theta) \otimes R_P(\theta)  \otimes R_P(-\theta)\right) \cdot B\ket{j}\otimes\ket{i}\otimes D\ket{l}\otimes\ket{k}\\
      =& \sum_{i,j,k,l} \bra{i}A\otimes \bra{i}\otimes\bra{k} \cdot \frac{1}{4}\sum_{\theta \in \AngleSet} \left(  C\otimes\bra{k} R_P(\theta) \otimes R_P(-\theta) \otimes R_P(\theta)  \otimes R_P(-\theta)\right) \cdot B\ket{j}\otimes\ket{i}\otimes D\ket{l}\otimes\ket{k}\\
      =&\frac{1}{4}\sum_{\theta \in \AngleSet}\left(\sum_{i,j} \bra{i}A R_P(\theta) B \ket{j} \bra{j} R_P(-\theta)\ket{i} \right)
      \left(\sum_{k,l} \bra{k}C R_P(\theta) D\ket{l}\bra{l} R_P(-\theta)\ket{k}\right)\\
      =&\frac{1}{4}\sum_{\theta \in \AngleSet}\tr{A R_P(\theta) B R_P(-\theta)}\tr{C R_P(\theta) D R_P(-\theta)}.
    \end{aligned}
  \end{equation}
\end{proof}

\subsection{Pauli path integral}
A Pauli path is a sequence $\vec{s}=(s_0,\cdots,s_m)\in \bm{P}^{m+1}_n$, where $\bm{P}_n=\{\sfrac{\mathbb{I}}{\sqrt{2}},\sfrac{X}{\sqrt{2}},\sfrac{Y}{\sqrt{2}},\sfrac{Z}{\sqrt{2}}\}^{\otimes n} $ represents the set of all normalized $n$-qubit Pauli words. 
Using the fact that the normalized $n$-qubit Pauli group $\bm{P}_n$ forms a basis of the $2^n$-dimensional Hilbert space, we can express any operator $A$ as a linear combination of elements in $\bm{P}_n$:
\begin{equation}
  A=\sum_{s\in \bm{P}_n} \tr{A s} s,
\end{equation}
Iteratively applying the Pauli operator decomposition, we can express the expectation value of $O$ as the sum of contributions from all Pauli paths:
\begin{equation}\label{eq:Pauli_path_integral_noiseless}
  \begin{aligned}
  \Vexp{O} &= \sum_{s_m} \tr{O s_m} \tr{s_m \mathcal{C}(\bm{\theta})\rho \mathcal{C}(\bm{\theta})^\dagger}\\
  &= \sum_{s_m} \tr{O s_m} \tr{s_m U_m(\theta_m) \cdots U_1(\theta_1) \rho U_1^\dagger(\theta_1) \cdots U_m^\dagger(\theta_m)}\\
  &= \sum_{s_m,s_{m-1}} \tr{O s_m} \tr{s_m U_m(\theta_m) s_{m-1} U_m^\dagger(\theta_m)} \tr{s_{m-1} U_{m-1}(\theta_{m-1}) \cdots U_1(\theta_1) \rho U_1^\dagger(\theta_1) \cdots U_{m-1}^\dagger(\theta_{m-1})}\\
  &\vdots\\
  &= \sum_{s_m,s_{m-1},\cdots,s_0} \tr{O s_m} \tr{s_m U_m(\theta_m) s_{m-1} U_m^\dagger(\theta_m)} \cdots \tr{s_1 U_1(\theta_1) s_0 U_1^\dagger(\theta_1)} \tr{s_0 \rho}\\
  &= \sum_{s_m,s_{m-1},\cdots,s_0} \tr{O s_m} \tr{s_0 \rho} \prod_{i=1}^{m} \tr{s_i U_i(\theta_i) s_{i-1} U_i^\dagger(\theta_i)} \\
  &=\sum_{\vec{s}} f(\vec{s},\bm{\theta},O,\rho),
  \end{aligned}
\end{equation}
where 
\begin{equation}\label{eq:contribution_function_noiseless}
  f(\vec{s},\bm{\theta},O,\rho)\coloneq\tr{O s_m} \tr{s_0 \rho} \prod_{i=1}^{m} \tr{s_i U_i(\theta_i) s_{i-1} U_i^\dagger(\theta_i)}
\end{equation}
is the contribution of a specfic Pauli path $\vec{s}=(s_0,\cdots,s_m)$ to the expectation value $\Vexp{O}$.

For the contribution of Pauli path $f(\vec{s},\bm{\theta},O,\rho)$, we have the following lemma:
\begin{lemma}\label{lem:cross_term}
  For the Pauli path $\vec{s}$ and $\vec{s}\hspace{0.1em}'$, and for arbitrary observable $O_1$ and $O_2$, the contribution $f(\vec{s},\bm{\theta},O_1,\rho)$ and $f(\vec{s}\hspace{0.1em}',\bm{\theta},O_2,\rho)$ satisfy the following equation:
  \begin{equation}
      \mathbb{E}_{\theta} f(\vec{s},\bm{\theta},O_1,\rho)f(\vec{s}\hspace{0.1em}',\bm{\theta},O_2,\rho)= \frac{1}{4^m}\sum_{\bm{\theta} \in \AngleSet^m}f(\vec{s},\bm{\theta},O_1,\rho)f(\vec{s}\hspace{0.1em}',\bm{\theta},O_2,\rho),
  \end{equation}
  where $\bm{\theta}=\{\theta_1,\ldots,\theta_m\}$ is the set of rotation angles and $m$ is number of rotation gates.
\end{lemma}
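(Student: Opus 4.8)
The plan is to reduce the joint expectation over all $m$ rotation angles to a product of single-angle expectations, and then apply Corollary~\ref{cor:two_design} coordinate-by-coordinate. First, I would write out the product $f(\vec{s},\bm{\theta},O_1,\rho)\,f(\vec{s}\hspace{0.1em}',\bm{\theta},O_2,\rho)$ using the definition in Eq.~\eqref{eq:contribution_function_noiseless}. This gives a product of $2m$ trace factors (plus the $\theta$-independent boundary terms $\tr{O_1 s_m}\tr{s_0\rho}\tr{O_2 s'_m}\tr{s'_0\rho}$), where for each index $i\in\{1,\ldots,m\}$ exactly two of the trace factors depend on $\theta_i$, namely $\tr{s_i U_i(\theta_i) s_{i-1} U_i^\dagger(\theta_i)}$ and $\tr{s'_i U_i(\theta_i) s'_{i-1} U_i^\dagger(\theta_i)}$. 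Since the angles are sampled independently and uniformly, $\mathbb{E}_{\bm{\theta}}$ factors as $\mathbb{E}_{\theta_1}\cdots\mathbb{E}_{\theta_m}$, and each $\mathbb{E}_{\theta_i}$ acts only on the two factors carrying that angle.

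The key step is then to apply Corollary~\ref{cor:two_design} to each coordinate. Writing $U_i(\theta_i)=R_{P_i}(\theta_i)C_i$, the two $\theta_i$-dependent factors are $\tr{s_i R_{P_i}(\theta_i)\,(C_i s_{i-1} C_i^\dagger)\,R_{P_i}(-\theta_i)}$ and the analogous expression with primes, i.e.\ exactly of the form $\tr{A R_{P_i}(\theta_i) B R_{P_i}(-\theta_i)}\tr{C R_{P_i}(\theta_i) D R_{P_i}(-\theta_i)}$ with $A=s_i$, $B=C_i s_{i-1}C_i^\dagger$, $C=s'_i$, $D=C_i s'_{i-1}C_i^\dagger$. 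Corollary~\ref{cor:two_design} replaces $\mathbb{E}_{\theta_i}$ of this quantity by $\tfrac14\sum_{\theta_i\in\AngleSet}$ of the same quantity. Iterating this replacement over all $m$ coordinates converts $\mathbb{E}_{\bm{\theta}}$ into $\tfrac{1}{4^m}\sum_{\bm{\theta}\in\AngleSet^m}$, which is exactly the claimed identity. To make the iteration rigorous I would argue by induction on the number of coordinates to which the 2-design replacement has been applied, noting at each step that the remaining factors (including the sum already performed over earlier coordinates) are constants with respect to the current angle $\theta_i$ and can be pulled outside $\mathbb{E}_{\theta_i}$.

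I expect the main (minor) obstacle to be purely bookkeeping: one must carefully verify that, after Clifford conjugation, the only $\theta_i$-dependence sits in the pair of $R_{P_i}$ factors and that grouping the two contribution functions' $i$-th factors together is legitimate despite the Clifford gates $C_i$ being interleaved in the circuit. This is handled by the identity $U_i(\theta_i)s_{i-1}U_i^\dagger(\theta_i)=R_{P_i}(\theta_i)\bigl(C_i s_{i-1}C_i^\dagger\bigr)R_{P_i}(-\theta_i)$, which isolates the rotation from the (angle-independent) Clifford part, so that Corollary~\ref{cor:two_design} applies verbatim with the redefined operators. No substantive new estimate is needed; the lemma is essentially a coordinatewise application of the rotation 2-design property already established.
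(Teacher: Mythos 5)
Your proposal is correct and follows essentially the same route as the paper: factorize the expectation over the independent angles, then apply Corollary~\ref{cor:two_design} coordinate-by-coordinate to the pair of $\theta_i$-dependent trace factors. Your extra care in isolating the Clifford part via $U_i(\theta_i)s_{i-1}U_i^\dagger(\theta_i)=R_{P_i}(\theta_i)\left(C_i s_{i-1}C_i^\dagger\right)R_{P_i}(-\theta_i)$ is a harmless (and arguably slightly more explicit) version of what the paper does implicitly.
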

\begin{proof}
  The proof is straightforward by using Corollary~\ref{cor:two_design}, we have:
  \begin{equation}
    \begin{aligned}
      & \ExpC f(\vec{s},\bm{\theta},O_1,\rho)f(\vec{s}\hspace{0.1em}',\bm{\theta},O_2,\rho)\\
      =& \tr{O_1 s_m}\tr{O_2 s'_m} \tr{s_0 \rho}\tr{s'_0 \rho} \prod_{i=1}^{m}\mathbb{E}_{\theta_i} \tr{s_i U_i(\theta_i) s_{i-1} U_i^\dagger(\theta_i)}\tr{s'_i U_i(\theta_i) s'_{i-1} U_i^\dagger(\theta_i)}
    \end{aligned}
  \end{equation}
  
  For terms $\mathbb{E}_{\theta_i} \tr{s_i U_i(\theta_i) s_{i-1} U_i^\dagger(\theta_i)}\tr{s'_i U_i(\theta_i) s'_{i-1} U_i^\dagger(\theta_i)}$, using Eq.~\eqref{eq:two_design_cor}, we have:
  \begin{equation}
    \mathbb{E}_{\theta_i} \tr{s_i U_i(\theta_i) s_{i-1} U_i^\dagger(\theta_i)}\tr{s'_i U_i(\theta_i) s'_{i-1} U_i^\dagger(\theta_i)}= \frac{1}{4}\sum_{\theta_i \in \AngleSet} \tr{s_i U_i(\theta_i) s_{i-1} U_i^\dagger(\theta_i)}\tr{s'_i U_i(\theta_i) s'_{i-1} U_i^\dagger(\theta_i)}.
  \end{equation}
  Therefore, we have:
  \begin{equation}
    \begin{aligned}
      &\ExpC f(\vec{s},\bm{\theta},O_1,\rho)f(\vec{s}\hspace{0.1em}',\bm{\theta},O_2,\rho)\\
      =& \tr{O_1 s_m}\tr{O_2 s'_m} \tr{s_0 \rho}\tr{s'_0 \rho} \prod_{i=1}^{m}\frac{1}{4}\sum_{\theta_i \in \AngleSet} \tr{s_i U_i(\theta_i) s_{i-1} U_i^\dagger(\theta_i)}\tr{s'_i U_i(\theta_i) s'_{i-1} U_i^\dagger(\theta_i)}\\
      =& \frac{1}{4^m}\sum_{\bm{\theta} \in \AngleSet^m}f(\vec{s},\bm{\theta},O_1,\rho)f(\vec{s}\hspace{0.1em}',\bm{\theta},O_2,\rho).
    \end{aligned}
  \end{equation}
\end{proof}

Next we study the evolution of the Pauli operator $s$ under the operator $U_i(\theta_i)=\exp{-i \frac{\theta_i}{2} P_i}C_i$, which is given by
\begin{equation}
    U_i(\theta_i) s_{i-1} U_i^\dagger(\theta_i) = \exp{-i \frac{\theta_i}{2} P_i} \underbrace{C_i s_{i-1} C_i^\dagger}_{Q_i} \exp{i \frac{\theta_i}{2} P_i},
\end{equation}
where $Q_i=C_i s_{i-1} C_i^\dagger$ is the transformed Pauli operator after applying the Clifford gate $C_i$ to $s_{i-1}$.
The above equation shows that the factor $\tr{s_i U_i(\theta_i) s_{i-1} U_i^\dagger(\theta_i)}$ in $f(\vec{s},\bm{\theta},O,\rho)$ can be expressed as:
\begin{equation}\label{eq:gate_term_in_f}
  \begin{aligned}
    \tr{s_i U_i(\theta_i) s_{i-1} U_i^\dagger(\theta_i)} &= \tr{s_i \exp{-i \frac{\theta_i}{2} P_i} Q_i \exp{i \frac{\theta_i}{2} P_i}}\\
    &= \tr{\exp{i \frac{\theta_i}{2} P_i} s_i \exp{-i \frac{\theta_i}{2} P_i} Q_i }\\
&= \begin{cases}
      \tr{s_i Q}, & [P_i, s_i] = 0, \\
      \cos(\theta_i) \tr{s_i Q_i} - i \sin(\theta_i) \tr{s_i P_i Q_i}, & \{P_i, s_i\} = 0.
      \end{cases}\\
  \end{aligned}
\end{equation}
Because of $Q_i=C_i s_{i-1} C_i^\dagger$, and $C_i$ is Clifford operator, the operator $Q_i$ is also a Pauli operator in $\bm{P}_n$. Then, if $[P_i, s_i] = 0$, we have $Q_i = s_i$, which contributes a term $\tr{s_i Q}$ to the corresponding $f(\vec{s}, \bm{\theta}, O, \rho)$. On the other hand, if $\{P_i, s_i\} = 0$, then $Q_i$ may be either $s_i$ or $s_i P_i$, leading to terms of the form $\cos(\theta_i) \tr{s_i Q_i}$ or $-i \sin(\theta_i) \tr{s_i P_i Q_i}$ in $f(\vec{s}, \bm{\theta}, O, \rho)$, respectively.

Specifically, if the rotation angle $\theta_i$ takes the value in $\AngleSet$, the Pauli rotation $\exp{-i \frac{\theta_i}{2} P_i}$ falls into the set of Clifford gates, and the factor $\tr{s_i U_i(\theta_i) s_{i-1} U_i^\dagger(\theta_i)}$ in Eq.~\eqref{eq:gate_term_in_f} can be expressed as:
\begin{equation}\label{eq:gate_term_in_f_discrete}
  \begin{aligned}
    &\tr{s_i U_i(\theta_i) s_{i-1} U_i^\dagger(\theta_i)} = \begin{cases}
      \tr{s_i Q_i}, & [P_i, s_i] = 0, \\
      \cos(\theta_i) \tr{s_i Q_i} - i \sin(\theta_i) \tr{s_i P_i Q_i}, & \{P_i, s_i\} = 0.
      \end{cases}\\
      &= \begin{cases}
        0, & [P_i, s_i] = 0, Q_i\neq s_i  \\
        1, & [P_i, s_i] = 0, Q_i= s_i\\
        \pm 1 , & \{P_i, s_i\} = 0, Q_i= s_i\\
      \end{cases}
      \mathrm{when} \quad \theta_i\in\{0,\pi\} \quad \mathrm{or} 
      = \begin{cases}
        0, & [P_i, s_i] = 0, s_i\neq Q_i  \\
        1, & [P_i, s_i] = 0, Q_i= s_i\\
        \pm 1 , & \{P_i, s_i\} = 0, Q_i=is_iP_i. \\
      \end{cases}
      \mathrm{when} \quad \theta_i\in\{\frac{\pi}{2},\frac{3\pi}{2}\},
  \end{aligned}
\end{equation}
Here, we ignore the sign $\pm$ in front of the Pauli operator $Q_i$ in the above equation. As shown in~\eqref{eq:gate_term_in_f_discrete}, if $[P_i, s_i] = 0$, then $Q_i$ must be equal to $s_i$. If instead $\{P_i, s_i\} = 0$, then $Q_i = s_i$ when $\theta_i \in \{0, \pi\}$, and $Q_i = i s_i P_i$ when $\theta_i \in \{\frac{\pi}{2}, \frac{3\pi}{2}\}$. This observation will play an important role in the subsequent analysis.

\section{Variance and gradient variance of the loss function of PQCs}\label{app:proof_two_variances_MPQC}
In this section, we express and simpliy the variance of the loss function and the gradient variance of PQCs using the formalisms of the Pauli path integral and quantum rotation 2-design, which form the foundation of our theoretical analysis.

\subsection{Simplified expression via the orthogonality condition of Pauli paths}
For an arbitrary PQC $\mathcal{C}(\bm{\theta})$ and observable $O$, let its loss function be defined as $\loss = \tr{O \mathcal{C}(\bm{\theta})\rho \mathcal{C}(\bm{\theta})^\dagger}$.
According to this definition, the variance of the loss function and that of its gradient can be expressed as follows:
\begin{small}
\begin{equation}\label{eq:twoquantities}
  \begin{aligned}
    \operatorname{Var}_{\bm{\theta}} \left[ \loss \right] &= \mathbb{E}_{\bm{\theta}}\left[\loss^2\right] - \left(\mathbb{E}_{\bm{\theta}}\left[\loss\right]\right)^2 \\
    \operatorname{Var}_{\bm{\theta}} \left[ \frac{\partial \loss}{\partial \theta_j} \right] &= \mathbb{E}_{\bm{\theta}}\left[\left(\frac{\partial \loss}{\partial \theta_j}\right)^2\right] - \left(\mathbb{E}_{\bm{\theta}}\left[\frac{\partial \loss}{\partial \theta_j}\right]\right)^2,
  \end{aligned}
\end{equation}
\end{small}

\noindent where each $\theta_i$ is sampled {uniformly from} $\left[0, 2\pi\right)$.
Writing $P_\alpha$ as the Pauli expansion of the observable $O=\sum_{\alpha} c_\alpha P_\alpha$, the loss function can be expressed in the Pauli path integral formalism according to \cref{eq:Pauli_path_integral_noiseless}:
\begin{equation}
  \begin{aligned}
    &\loss = \langle O \rangle\\
    &=\sum_{\alpha,\vec{s}} c_\alpha  \tr{P_\alpha s_m} \tr{s_0 \rho} \prod_{i=1}^{m} \tr{s_i U_i(\theta_i) s_{i-1} U_i(\theta_i)^\dagger}\\
    &= \sum_{\alpha,\vec{s}} c_\alpha f(\vec{s},\bm{\theta},P_\alpha,\rho),
  \end{aligned}
\end{equation}

\noindent where $\vec{s} = (s_0,s_1,\cdots,s_m)$ is a Pauli path, which is a sequence of normalized Pauli operators $s_i\in \{\frac{\mathbb{I}}{\sqrt{2}},\frac{X}{\sqrt{2}},\frac{Y}{\sqrt{2}},\frac{Z}{\sqrt{2}}\}^{\otimes n}$, and $f(\vec{s},\bm{\theta},P_\alpha,\rho) \coloneqq \tr{P_\alpha s_m} \tr{s_0 \rho} \prod_{i=1}^{m} \tr{s_i U_i(\theta_i) s_{i-1} U_i(\theta_i)^\dagger}$ denotes the contribution of a specific Pauli path $\vec{s}$ to the expectation value $\langle O \rangle$.

In particular, when the rotation angles satisfy $\bm{\theta} \in \AngleSet^m$, each $U_i(\theta_i)$ belongs to the Clifford group. Consequently, for any fixed $s_i$, there exists a unique $s_{i-1}$ such that $\tr{s_i U_i(\theta_i) s_{i-1} U_i(\theta_i)^\dagger} \neq 0$.
Therefore, starting from $s_m \propto P_\alpha$, there exists a unique Pauli path $\vec{s}^{\hspace{0.1em}(\bm{\theta},\alpha)}$ satisfying
$\tr{P_\alpha s_m} \prod_{i=1}^{m} \tr{s_i U_i(\theta_i) s_{i-1} U_i(\theta_i)^\dagger} \neq 0$.

Using the above expression, and assuming the PQC architecture satisfies a mild structural condition (shown in Ref.~\cite{shao2024simulating} to be met by most PQCs and also holding for arbitrary MPQCs which will be proved in \cref{app:variance_MPQC}), we can express the variance of the loss function and that of its gradient in a simplified form.

\begin{lemma}\label{lem:two_variances_MPQC}
Let $O = \sum_{\alpha} c_\alpha P_\alpha$ be an observable, and $\mathcal{C}(\bm{\theta})$ be a PQC with parameters $\bm{\theta} \in [0, 2\pi)^m$. 
Suppose the following orthogonality condition holds:
\begin{equation}\label{eq:orthogonality_condition}
\mathbb{E}_{\bm{\theta}}\left[f(\vec{s},\bm{\theta}, P_\alpha,  \rho) f(\vec{s}\hspace{0.1em}',\bm{\theta}, P_\beta,  \rho)\right] = 0, \quad \forall \alpha \neq \beta,\vec{s}, \vec{s}\hspace{0.1em}', 
\end{equation}
    and each $\langle P_\alpha \rangle$ is not a non-zero constant function of $\bm{\theta}$.
    Then the variance of the loss function and the variance of its gradient can be expressed as:
  \begin{equation}\label{eq:expression_variance}
      \operatorname{Var}_{\bm{\theta}} \left[ \loss \right] = \frac{1}{4^{m}} 
      \sum_{\bm{\theta} \in \AngleSet^m} 
      \sum_\alpha c_\alpha^2 \, f(\vec{s}^{\hspace{0.1em}(\bm{\theta},\alpha)},\bm{\theta},P_\alpha,\rho)^2
  \end{equation}
  \begin{equation}\label{eq:expression_variance_gradient}
    \operatorname{Var}_{\bm{\theta}} \left[ \frac{\partial \loss}{\partial \theta_j} \right] = \frac{1}{4^{m}} 
    \sum_{\substack{
        \bm{\theta} \in \AngleSet^m \\
        \{P_j, s_j^{(\bm{\theta},\alpha)}\} = 0
      }} \sum_\alpha c_\alpha^2 \, f(\vec{s}^{\hspace{0.1em}(\bm{\theta},\alpha)},\bm{\theta},P_\alpha,\rho)^2,
  \end{equation}

\noindent where $P_j$ denotes the Pauli operator in the elementary rotation $e^{-i\frac{\theta_j}{2} P_j}$ of the circuit, and $\vec{s}^{\hspace{0.1em}(\bm{\theta},\alpha)}$ is the unique normalized Pauli operator sequence such that $\tr{P_\alpha s_m} \prod_{i=1}^{m} \tr{s_i U_i(\theta_i) s_{i-1} U_i(\theta_i)^\dagger} \neq 0$.
\end{lemma}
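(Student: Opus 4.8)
The plan is to push everything into the Pauli-path representation of Eq.~\eqref{eq:Pauli_path_integral_noiseless}, use the orthogonality hypothesis to discard cross-observable terms, invoke the quantum rotation $2$-design (\cref{cor:two_design,lem:cross_term}) to replace the continuous averages over $[0,2\pi)^m$ by discrete averages over $\AngleSet^m$, and finally exploit that at those Clifford angles the only Pauli path emanating from $P_\alpha$ with a nonzero contribution is the deterministic orbit $\vec{s}^{\hspace{0.1em}(\bm{\theta},\alpha)}$. First I would dispose of the two mean terms so that both variances collapse to second moments. By Eq.~\eqref{eq:gate_term_in_f}, each gate factor is either $\theta_i$-independent (when $[P_i,s_i]=0$) or a pure $\cos\theta_i/\sin\theta_i$ combination (when $\{P_i,s_i\}=0$), so $\langle P_\alpha\rangle$ is a trigonometric polynomial of degree at most one in every $\theta_i$ and $\mathbb{E}_{\bm{\theta}}[\cdot]$ extracts its constant Fourier coefficient. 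Propagating $P_\alpha$ backward through the circuit in the Heisenberg picture, that coefficient can receive contributions only from the branch that remains a single Pauli at every gate --- the backward Clifford orbit of $P_\alpha$ --- and is nonzero only if this orbit commutes with every rotation generator $P_i$; but in that case $\langle P_\alpha\rangle$ is a genuine constant, and the hypothesis that it is not a non-zero constant forces this constant to be $0$. Hence $\mathbb{E}_{\bm{\theta}}[\langle P_\alpha\rangle]=0$ for every $\alpha$, so $\mathbb{E}_{\bm{\theta}}[\loss]=0$; and $\mathbb{E}_{\bm{\theta}}[\partial_{\theta_j}\loss]=0$ is immediate from $2\pi$-periodicity in $\theta_j$.

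For the variance of the loss I would expand $\loss^2=\sum_{\alpha,\beta}\sum_{\vec{s},\vec{s}\hspace{0.1em}'}c_\alpha c_\beta\, f(\vec{s},\bm{\theta},P_\alpha,\rho)\,f(\vec{s}\hspace{0.1em}',\bm{\theta},P_\beta,\rho)$, drop all $\alpha\neq\beta$ terms using the orthogonality hypothesis Eq.~\eqref{eq:orthogonality_condition}, and then apply \cref{lem:cross_term} to replace $\mathbb{E}_{\bm{\theta}}$ by $\tfrac1{4^m}\sum_{\bm{\theta}\in\AngleSet^m}$. At any $\bm{\theta}\in\AngleSet^m$ every $U_i(\theta_i)$ is Clifford, so by Eq.~\eqref{eq:gate_term_in_f_discrete} the Pauli propagation is deterministic and, for each $\alpha$, the unique path with $s_m\propto P_\alpha$ and nonzero contribution is $\vec{s}^{\hspace{0.1em}(\bm{\theta},\alpha)}$; thus $\sum_{\vec{s}}f(\vec{s},\bm{\theta},P_\alpha,\rho)=f(\vec{s}^{\hspace{0.1em}(\bm{\theta},\alpha)},\bm{\theta},P_\alpha,\rho)$, the double path sum factorizes into a square, and Eq.~\eqref{eq:expression_variance} follows.

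For the gradient I would differentiate the path sum term by term. Only the $j$-th factor depends on $\theta_j$; by Eq.~\eqref{eq:gate_term_in_f} its $\theta_j$-derivative vanishes when $[P_j,s_j]=0$ and is a $\cos\theta_j/\sin\theta_j$ combination of the same traces when $\{P_j,s_j\}=0$. A one-line evaluation of the $\theta_j$-integral, structurally identical to the one behind \cref{cor:two_design}, shows that for paths $\vec{s},\vec{s}\hspace{0.1em}'$ with $\{P_j,s_j\}=\{P_j,s'_j\}=0$ one has $\mathbb{E}_{\bm{\theta}}\big[\partial_{\theta_j}f(\vec{s},\bm{\theta},P_\alpha,\rho)\,\partial_{\theta_j}f(\vec{s}\hspace{0.1em}',\bm{\theta},P_\beta,\rho)\big]=\mathbb{E}_{\bm{\theta}}\big[f(\vec{s},\bm{\theta},P_\alpha,\rho)\,f(\vec{s}\hspace{0.1em}',\bm{\theta},P_\beta,\rho)\big]$, while every other pair of paths contributes $0$ (since then one of the two $\theta_j$-derivatives is identically $0$). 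Substituting this, applying Eq.~\eqref{eq:orthogonality_condition} and \cref{lem:cross_term} exactly as before, and again using the Clifford uniqueness of $\vec{s}^{\hspace{0.1em}(\bm{\theta},\alpha)}$, the sum over paths reduces to $\sum_{\bm{\theta}\in\AngleSet^m}\sum_{\alpha:\,\{P_j,s_j^{(\bm{\theta},\alpha)}\}=0}c_\alpha^2\,f(\vec{s}^{\hspace{0.1em}(\bm{\theta},\alpha)},\bm{\theta},P_\alpha,\rho)^2$, which is precisely Eq.~\eqref{eq:expression_variance_gradient}.

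I expect the vanishing of $\mathbb{E}_{\bm{\theta}}[\loss]$ to be the main obstacle: it rests on the dichotomy that $\langle P_\alpha\rangle$ is either identically constant or has zero mean, which is exactly what makes the somewhat unusual hypothesis ``$\langle P_\alpha\rangle$ is not a non-zero constant'' do its work, and one must argue carefully that the all-commuting branch is the only source of a constant Fourier term and that it genuinely produces a $\theta$-independent single Pauli operator. A secondary technical point is checking that the $2$-design reduction and the orthogonality hypothesis survive the differentiation in the gradient case; this works because $\partial_{\theta_j}$ modifies only the $j$-th gate factor and leaves the output leg $s_m\propto P_\alpha$ --- hence the $\alpha\neq\beta$ cancellation --- untouched.
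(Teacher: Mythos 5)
Your proposal is correct and follows essentially the same route as the paper's proof: Pauli-path expansion, vanishing means via the ``not a non-zero constant'' dichotomy, the orthogonality hypothesis to kill cross-observable terms, the rotation $2$-design to discretize, and collapse to the unique Clifford path. The only differences are cosmetic orderings---you discretize before eliminating same-$\alpha$ cross-path terms (the paper proves their continuous-angle orthogonality first), and you establish $\mathbb{E}[\partial_{\theta_j}f\,\partial_{\theta_j}f']=\mathbb{E}[f f']$ for anticommuting pairs by a direct $\theta_j$-integral rather than the paper's discrete-angle exchange of $\{0,\pi\}$ with $\{\pi/2,3\pi/2\}$---and both yield identical conclusions.
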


Notably, it can be observed that $\operatorname{Var}_{\bm{\theta}}\left[ \frac{\partial \loss}{\partial \theta_j} \right]$ corresponds to a subset of the terms in $\operatorname{Var}_{\bm{\theta}}\left[ \loss \right]$, which allows us to analyze their scaling using the same techniques. In the following two subsections, we prove \cref{eq:expression_variance} and \cref{eq:expression_variance_gradient}, respectively.

\subsection{Proof of \cref{eq:expression_variance}}\label{sec:proof_eq4}

We begin by expanding the variance of $\loss$ in the language of Pauli path integral:
\begin{equation}\label{eq:variance_PQC}
  \begin{aligned}
    \operatorname{Var}_{\bm{\theta}}[\loss] &= \ExpC [\Vexp{O}^2] - \ExpC [\Vexp{O}]^2 \\
    &= \ExpC \left[\sum_{\alpha,\beta} c_\alpha c_\beta \Vexp{P_\alpha} \Vexp{P_\beta} \right] - \ExpC \left[\sum_{\alpha} c_\alpha\Vexp{P_\alpha}\right]^2 \\
    &= \ExpC \left[\sum_{\alpha,\vec{s}} \sum_{\beta,\vec{s}\hspace{0.1em}'} c_\alpha c_\beta f(\vec{s},\bm{\theta},P_\alpha,\rho) f(\bm{\theta}, P_\beta, \vec{s}\hspace{0.1em}',\rho)\right] - \ExpC \left[\sum_{\alpha,\vec{s}} c_\alpha f(\vec{s},\bm{\theta},P_\alpha,\rho)\right]^2.
  \end{aligned}
\end{equation}
Next, we show that for any $P_\alpha$, the following holds:
\begin{equation}\label{eq:vanish_expectation_term}
 \ExpC \left[\Vexp{P_\alpha}\right] =  \ExpC \left[\sum_{\vec{s}} f(\vec{s},\bm{\theta},P_\alpha,\rho)\right] = 0.
\end{equation}

In the conditions of \cref{lem:two_variances_MPQC}, we require that $\Vexp{P_\alpha}$ is not a non-zero constant, which means that $\Vexp{P_\alpha}$ can either be zero or a non-trivial function of $\bm{\theta}$.
If $\Vexp{P_\alpha} = 0$, then \cref{eq:vanish_expectation_term} holds trivially. Now we suppose that $\Vexp{P_\alpha}$ is not a constant. 
We consider the evolution of the Pauli path in the Heisenberg picture, as described in~\cref{eq:gate_term_in_f}. Initially, starting from the observable, we have $s_m = P_\alpha/\sqrt{2^n}$. If $[P_m, s_m] = 0$, then $Q_m = C_m s_{m-1} C_m^\dagger = s_m$, which implies that the parameter $\theta_m$ has no effect on $\langle P_\alpha \rangle$. 
If this commutation relation persists throughout the circuit, i.e., $[P_i, s_i] = 0$ for all $i$, then each $Q_i$ is uniquely determined, and none of the parameters affects $\langle P_\alpha \rangle$. This contradicts our assumption that $P_\alpha$ is a nontrivial observable with respect to $\mathcal{C}(\bm{\theta})$. 

Therefore, for each non-vanishing term $f(\vec{s},\bm{\theta},P_\alpha,\rho) \neq 0$, there must exist at least one index $i \in [m]$ such that the corresponding contribution contains a term of the form
\[
\cos(\theta_i) \tr{s_i Q_i} \quad \text{or} \quad i \sin(\theta_i) \tr{s_i P_i Q_i}.
\]
Since $\mathbb{E}_{\theta_i}[\cos(\theta_i)] = \mathbb{E}_{\theta_i}[\sin(\theta_i)] = 0$, we obtain
\[
\mathbb{E}_{\theta_i}\left[\cos(\theta_i) \tr{s_i Q}\right] =  \mathbb{E}_{\theta_i}\left[- i \sin(\theta_i) \tr{s_i P_i Q}\right] = 0.
\]
This completes the proof of \cref{eq:vanish_expectation_term}.

Next, we compute $\ExpC [\Vexp{O}^2]$. We first prove that for any fixed $\alpha$, the following orthogonality condition holds:
\begin{equation}\label{eq:orthogonality_same_Pauli_O}
\ExpC \left[f(\vec{s},\bm{\theta}, P_\alpha,  \rho) f(\vec{s}\hspace{0.1em}',\bm{\theta}, P_\alpha,  \rho)\right] = 0, \quad \forall \vec{s} \neq \vec{s}\hspace{0.1em}'.
\end{equation}

Since the observable is the Pauli operator $P_\alpha$, the final Pauli path elements $s_m$ and $s_m'$ must both equal $P_\alpha/\sqrt{2^n}$; otherwise, both $f(\vec{s},\bm{\theta},P_\alpha,\rho)$ and $f(\vec{s}\hspace{0.1em}',\bm{\theta}, P_\alpha,  \rho)$ vanish.

Let $i$ be the largest index such that $s_i \neq s'_i$. According to the analysis following \cref{eq:gate_term_in_f}, we must have $\{P_{i+1}, s_{i+1}(=s'_{i+1})\} = 0$; otherwise, we would have $Q_{i+1} = Q'_{i+1} = s_{i+1}$. Since $Q_{i+1} = C_{i+1} s_i C_{i+1}^\dagger$ and $Q'_{i+1} = C_{i+1} s'_i C_{i+1}^\dagger$, this implies $s_i = s'_i$, contradicting our assumption.

Therefore, $\{P_{i+1}, s_{i+1}\} = 0$, and without loss of generality, we assume that $Q_{i+1} = s_{i+1}$ and $Q'_{i+1} = i s_{i+1} P_{i+1}$. This results in a product of terms in $f(\vec{s},\bm{\theta},P_\alpha,\rho) f(\bm{\theta}, P_\alpha, \vec{s}\hspace{0.1em}', \rho)$ that includes $\cos{\theta_{i+1}} \sin\theta_{i+1}$. However, since $\mathbb{E}_{\theta_{i+1}}[\cos{\theta_{i+1}} \sin\theta_{i+1}] = 0$, the cross term $\ExpC \left[f(\vec{s},\bm{\theta},P_\alpha,\rho) f(\bm{\theta}, P_\alpha, \vec{s}\hspace{0.1em}', \rho)\right]$ vanishes. Hence, we conclude the proof for \cref{eq:orthogonality_same_Pauli_O}.

Combining \cref{eq:orthogonality_same_Pauli_O} with the orthogonality condition:
\begin{equation}\label{eq:orthogonality_condition_app}
\ExpC \left[f(\vec{s},\bm{\theta},P_\alpha,\rho) f(\vec{s}\hspace{0.1em}',\bm{\theta}, P_\beta,  \rho)\right] = 0, \quad \forall \alpha \neq \beta,\vec{s}, \vec{s}\hspace{0.1em}', 
\end{equation}
we obtain
\begin{equation}
  \begin{aligned}
    \operatorname{Var}_{\bm{\theta}}[\loss] 
    &= \ExpC \left[\sum_{\alpha,\vec{s}} \sum_{\beta,\vec{s}\hspace{0.1em}'}c_\alpha c_\beta f(\vec{s},\bm{\theta},P_\alpha,\rho) f(\vec{s}\hspace{0.1em}',\bm{\theta}, P_\beta,\rho)\right] \\
    &= \ExpC \left[\sum_{\alpha, \vec{s}} c_\alpha^2 f(\vec{s},\bm{\theta},P_\alpha,\rho)^2\right] \\
    &= \frac{1}{4^m} \sum_{\bm{\theta} \in \AngleSet^m} \sum_\alpha c_\alpha^2 f(\vec{s}^{\hspace{0.1em}(\bm{\theta},\alpha)},\bm{\theta},P_\alpha,\rho)^2,
  \end{aligned}
\end{equation}
where the last equality uses the property of quantum rotation 2-design, as proven in \cref{lem:cross_term}. \qed

\subsection{Proof of \cref{eq:expression_variance_gradient}}

\cref{eq:expression_variance_gradient} expresses the variance of the gradient with respect to each parameter in terms of the Pauli path integral and quantum rotation 2-design. Similarly, we first express the gradient with respect to a given parameter $\theta_j$ in the form of a Pauli path integral:

\begin{equation}\
\begin{aligned}
  \frac{\partial \Vexp{O}}{\partial{\theta_j}} &= \sum_{s_m} \tr{O s_m} \frac{\partial \tr{s_m \mathcal{C}(\bm{\theta})\rho \mathcal{C}(\bm{\theta})^\dagger}}{\partial{\theta_j}}\\
  &= \sum_{\vec{s}} \frac{\partial f(\vec{s},\bm{\theta},O,\rho)}{\partial{\theta_j}}\\
  &= \sum_{s_m,s_{m-1},\cdots,s_0}  \tr{s_0 \rho} \tr{O s_m}\prod_{i\neq j}^{L} \tr{s_i U_i(\theta_i) s_{i-1} U_i^\dagger(\theta_i)} \frac{\partial }{\partial{\theta_j}}\left(\tr{s_j U_j(\theta_j) s_{j-1} U_j^\dagger(\theta_j)}\right).
\end{aligned}  
\end{equation}

According to the parameter-shift rule~\cite{schuld2019evaluating}, there is $\frac{\partial \Vexp{O}}{\partial{\theta_j}} = \frac{1}{2} (\Vexp{O}_{\theta_j + \frac{\pi}{2}} - \Vexp{O}_{\theta_j - \frac{\pi}{2}})$, where $\Vexp{O}_{\theta_j + \frac{\pi}{2}}$ and $\Vexp{O}_{\theta_j - \frac{\pi}{2}}$ are the expectation values of the observable $O$ when the parameter $\theta_j$ is shifted by $\frac{\pi}{2}$ and $-\frac{\pi}{2}$, respectively.
Therefore, we have $\ExpC \left(\frac{\partial \Vexp{O}}{\partial{\theta_j}}\right) = 0$, and apply the property of quantum rotation 2-design (as in \cref{lem:cross_term}) to $\frac{\partial f(\vec{s},\bm{\theta},O,\rho)}{\partial{\theta_j}}$, we have
\begin{equation}\label{eq:var_general}
\begin{aligned}
  \operatorname{Var}_{\bm{\theta}} \left[ \frac{\partial \loss}{\partial \theta_j} \right] = \ExpC \left[\left(\frac{\partial \Vexp{O}}{\partial{\theta_j}}\right)^2 \right]& = \ExpC  \left[\underset{\vec{s},
    \vec{s}\hspace{0.1em}'}{\sum} \frac{\partial f(\vec{s},\bm{\theta},O,\rho)}{\partial{\theta_j}}\frac{\partial f(\vec{s}\hspace{0.1em}',\bm{\theta},O,\rho)}{\partial{\theta_j}}\right]\\
  &=\frac{1}{4^m}\sum_{\bm{\theta} \in \AngleSet^m} \sum_{\vec{s},\vec{s}\hspace{0.1em}'} \frac{\partial f(\vec{s},\bm{\theta},O,\rho)}{\partial{\theta_j}}\frac{\partial f(\vec{s}\hspace{0.1em}',\bm{\theta},O,\rho)}{\partial{\theta_j}}.
\end{aligned}  
\end{equation}
A detailed proof of \cref{eq:var_general} is also provided in Appendix~G of Ref.~\cite{shao2025diagnosing}. We now evaluate $\left[\frac{\partial }{\partial{\theta}}\left(\tr{s_j U_j(\theta) s_{j-1} U_j^\dagger(\theta_j)}\right)\right]$ when $\theta_j \in \AngleSet$:

\begin{equation}\label{eq:gradient_term_in_f_discrete}
  \begin{aligned}
    &\left[\frac{\partial }{\partial{\theta}}\left(\tr{s_j U_j(\theta) s_{j-1} U_j^\dagger(\theta_j)}\right)\right] = \begin{cases}
      0, & [P_j, s_j] = 0, \\
      -\sin(\theta_j) \tr{s_j Q_j} - i \cos(\theta_j) \tr{s_j P_j Q_j}, & \{P_j, s_j\} = 0.
      \end{cases}\\
      &= \begin{cases}
        \pm1 , & \{P_j, s_j\} = 0, Q_j=is_jP_j, \\
        0, & others.  \\
      \end{cases}  
      \mathrm{when} \quad \theta_j\in\{0,\pi\} \quad \mathrm{or} 
      = \begin{cases}
        \pm1 , & \{P_j, s_j\} = 0, Q_j=s_j, \\
        0, & others.  \\
      \end{cases}    
      \mathrm{when} \quad \theta_j\in\{\frac{\pi}{2},\frac{3\pi}{2}\}.
  \end{aligned}
\end{equation}

It turns out that this term is closely related to the undifferentiated term $\tr{s_j U_j(\theta_j) s_{j-1} U_j^\dagger(\theta_j)}$ when $\theta_j \in \AngleSet$. To formalize this connection, we recall the value of such term
\begin{equation}\label{eq:gate_term_in_f_discrete_repeat}
  \begin{aligned}
    &\tr{s_j U_j(\theta_j) s_{i-1} U_j^\dagger(\theta_j)} = \begin{cases}
      \tr{s_j Q_j}, & [P_j, s_j] = 0, \\
      \cos(\theta_j) \tr{s_j Q_j} - i \sin(\theta_j) \tr{s_j P_j Q_j}, & \{P_j, s_j\} = 0.
      \end{cases}\\
      & = \begin{cases}
        \pm 1 , & \{P_j, s_j\} = 0, Q_j=is_jP_j \\
        1, & [P_j, s_j] = 0, Q_j= s_j\\
        0, & others.  \\
      \end{cases}
      \mathrm{when} \quad \theta_j\in\{\frac{\pi}{2},\frac{3\pi}{2}\}\quad \mathrm{or} 
    = \begin{cases}
        \pm 1 , & \{P_j, s_j\} = 0, Q_j= s_j\\
        1, & [P_j, s_j] = 0, Q_j= s_j\\
        0, & others  \\
      \end{cases}
      \mathrm{when} \quad \theta_j\in\{0,\pi\}.
  \end{aligned}
\end{equation}

It is easy to verify that \cref{eq:gradient_term_in_f_discrete} and \cref{eq:gate_term_in_f_discrete_repeat} become equivalent if we exchange the assignments $\theta_j \in \{0,\pi\}$ and $\theta_j \in \{\frac{\pi}{2}, \frac{3\pi}{2}\}$, while excluding the case where $[P_j, s_j] = 0$ in \cref{eq:gate_term_in_f_discrete_repeat}.
Then we have
\begin{equation}\label{eq:gradient_variance_proof}
\begin{aligned}
  \operatorname{Var}_{\bm{\theta}} \left[ \frac{\partial \loss}{\partial \theta_j} \right] & = \frac{1}{4^m}\sum_{\bm{\theta} \in \AngleSet^m} \sum_{\vec{s},\vec{s}\hspace{0.1em}'} \frac{\partial f(\vec{s},\bm{\theta},O,\rho)}{\partial{\theta_j}}\frac{\partial f(\vec{s}\hspace{0.1em}',\bm{\theta},O,\rho)}{\partial{\theta_j}}\\
  &=\frac{1}{4^m}\sum_{\bm{\theta} \in \AngleSet^m}\underset{\substack{\vec{s}:\{P_j, s_j\} = 0\\
     \vec{s}\hspace{0.1em}':\{P_j, s'_j\} = 0}}{\sum} f(\vec{s},\bm{\theta},O,\rho)f(\vec{s}\hspace{0.1em}',\bm{\theta},O,\rho)\\
  &= \ExpC  \left[\underset{\substack{\vec{s}:\{P_j, s_j\} = 0\\
     \vec{s}\hspace{0.1em}':\{P_j, s'_j\} = 0}}{\sum} f(\vec{s},\bm{\theta},O,\rho)f(\vec{s}\hspace{0.1em}',\bm{\theta},O,\rho)\right]\\\  
  &=\ExpC  \left[\underset{\substack{\vec{s}:\{P_j, s_j\} = 0\\
     \vec{s}\hspace{0.1em}':\{P_j, s'_j\} = 0}}{\sum} \sum_{\alpha,\beta}c_\alpha c_\beta f(\vec{s},\bm{\theta},P_\alpha,\rho)f(\vec{s}\hspace{0.1em}',\bm{\theta},P_\beta,\rho)\right]\\
&=\ExpC  \left[\underset{\substack{\vec{s}:\{P_j, s_j\} = 0}}{\sum}\sum_{\alpha}c_\alpha^2 f(\vec{s},\bm{\theta},P_\alpha,\rho)^2\right]\\
  &=\frac{1}{4^{m}} 
    \sum_{\substack{
        \bm{\theta} \in \AngleSet^m \\
        \{P_j, s_j^{(\bm{\theta},\alpha)}\} = 0
      }} \sum_\alpha c_\alpha^2 \, f(\bm{\theta},P_\alpha,\vec{s}^{\hspace{0.1em}(\bm{\theta},\alpha)},\rho)^2.\\  
\end{aligned}    
\end{equation}

The second-to-last inequality holds due to the orthogonality condition, and the last equality follows from the property of the quantum rotation 2-design, as proven in \cref{lem:cross_term}. \qed

Also, according to the proof of \cref{eq:expression_variance_gradient}, it is easily to derive the upper bound of the variance $\operatorname{Var}_{\bm{\theta}} \left[ \frac{\partial \loss}{\partial \theta_j} \right]$ when the orthogonality condition may not be satisfied:
\begin{corollary}\label{cor:gradient_variance_upper_bound}
For an arbitrary PQC $\mathcal{C}(\bm{\theta})$ and any parameter $\theta_j \in \bm{\theta}$, the variance of the gradient with respect to $\theta_j$ can be upper bounded as
  \begin{equation}
    \operatorname{Var}_{\bm{\theta}} \left[ \frac{\partial \loss}{\partial \theta_j} \right] \leq \left(\frac{\left\|O\right\|_{HS}}{\left\|O\right\|_{\min}} \right)^2\frac{1}{4^{m}} 
    \sum_{\substack{
        \bm{\theta} \in \AngleSet^m \\
        \{P_j, s_j^{(\bm{\theta},\alpha)}\} = 0
      }} \sum_\alpha c_\alpha^2 \, f(\bm{\theta},P_\alpha,\vec{s}^{\hspace{0.1em}(\bm{\theta},\alpha)},\rho)^2,
  \end{equation}
  where $\left\|O\right\|_{HS}:= \sqrt{\frac{\tr{O^2}}{2^n}} = \sqrt{\sum_\alpha c_\alpha^2}$ denotes as the Hilbert-Schmidt norm of $O$ and $\left\|O\right\|_{\min}:=\min\{\left|c_\alpha\right|>0\}$.
  Here, the orthogonality condition in \cref{eq:orthogonality_condition_app} is not required to hold.
\end{corollary}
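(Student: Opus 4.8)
The plan is to reuse the derivation of \cref{eq:expression_variance_gradient} almost verbatim and to observe that the orthogonality condition \cref{eq:orthogonality_condition_app} was invoked there in exactly one place: when the $\alpha\neq\beta$ cross terms were discarded. Everything preceding that step is purely a consequence of the parameter-shift rule, the correspondence between \cref{eq:gradient_term_in_f_discrete} and \cref{eq:gate_term_in_f_discrete_repeat}, and the quantum rotation $2$-design (\cref{lem:cross_term,cor:two_design}). Hence I would first record the unconditional identity
\begin{equation*}
  \operatorname{Var}_{\bm{\theta}}\!\left[\frac{\partial\loss}{\partial\theta_j}\right]
  = \frac{1}{4^m}\sum_{\bm{\theta}\in\AngleSet^m}
    \Bigg(\sum_{\vec{s}:\{P_j,s_j\}=0} f(\vec{s},\bm{\theta},O,\rho)\Bigg)^{\!2},
\end{equation*}
which follows by taking the line $\operatorname{Var}_{\bm{\theta}}[\partial_{\theta_j}\loss] = \frac{1}{4^m}\sum_{\bm{\theta}\in\AngleSet^m}\sum_{\vec{s},\vec{s}\hspace{0.1em}'} f(\vec{s},\bm{\theta},O,\rho)f(\vec{s}\hspace{0.1em}',\bm{\theta},O,\rho)$ (both sums restricted to anticommutation with $P_j$ at position $j$) from the proof of \cref{eq:expression_variance_gradient} and simply factoring the double sum. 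The point of stopping here is that the right-hand side is manifestly a sum of squares, so no cancellation has yet been exploited.

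Next I would collapse the inner sum by the same Clifford observation used in \cref{lem:two_variances_MPQC}: for $\bm{\theta}\in\AngleSet^m$ every $U_i(\theta_i)$ is Clifford, so for each term $P_\alpha$ of $O=\sum_\alpha c_\alpha P_\alpha$ there is a unique Pauli path $\vec{s}^{\hspace{0.1em}(\bm{\theta},\alpha)}$ with nonzero contribution. Setting $g_\alpha(\bm{\theta}) := f(\vec{s}^{\hspace{0.1em}(\bm{\theta},\alpha)},\bm{\theta},P_\alpha,\rho)$ when $\{P_j,s_j^{(\bm{\theta},\alpha)}\}=0$ and $g_\alpha(\bm{\theta}):=0$ otherwise, linearity in $O$ gives $\sum_{\vec{s}:\{P_j,s_j\}=0} f(\vec{s},\bm{\theta},O,\rho)=\sum_\alpha c_\alpha g_\alpha(\bm{\theta})$, hence $\operatorname{Var}_{\bm{\theta}}[\partial_{\theta_j}\loss]=\frac{1}{4^m}\sum_{\bm{\theta}\in\AngleSet^m}\big(\sum_\alpha c_\alpha g_\alpha(\bm{\theta})\big)^2$. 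Then I would apply Cauchy--Schwarz to the inner sum, $\big(\sum_\alpha c_\alpha g_\alpha\big)^2\le\big(\sum_\alpha c_\alpha^2\big)\big(\sum_\alpha g_\alpha^2\big)=\|O\|_{HS}^2\sum_\alpha g_\alpha^2$, and use that every Pauli word appearing in $O$ has $|c_\alpha|\ge\|O\|_{\min}$ to bound $\sum_\alpha g_\alpha^2\le\|O\|_{\min}^{-2}\sum_\alpha c_\alpha^2 g_\alpha^2$. Summing over $\bm{\theta}\in\AngleSet^m$ and identifying $\frac{1}{4^m}\sum_{\bm{\theta}\in\AngleSet^m}\sum_\alpha c_\alpha^2 g_\alpha(\bm{\theta})^2$ with the right-hand side of \cref{eq:expression_variance_gradient} (whose sum is exactly the restriction to $\{P_j,s_j^{(\bm{\theta},\alpha)}\}=0$) gives the claimed inequality with prefactor $(\|O\|_{HS}/\|O\|_{\min})^2$.

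I do not expect a substantive obstacle here; the one subtlety to flag is the temptation to split $\big(\sum_\alpha c_\alpha g_\alpha\big)^2$ into a diagonal part $\sum_\alpha c_\alpha^2 g_\alpha^2$ plus an $\alpha\neq\beta$ remainder and then try to bound the remainder directly -- those cross terms can be negative and are not individually controllable, so the argument must instead keep the squared sum intact and invoke Cauchy--Schwarz before expanding. This is also precisely where the factor $(\|O\|_{HS}/\|O\|_{\min})^2$ originates, and it collapses to $1$ exactly when $O$ is a single Pauli word. A minor bookkeeping point is to verify that each ingredient used in the first paragraph -- path uniqueness at Clifford angles and the $2$-design identity of \cref{lem:cross_term} -- is genuinely orthogonality-free, which it is, since both are purely Clifford/$2$-design statements.
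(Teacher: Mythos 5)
Your proposal is correct and follows essentially the same route as the paper: both start from the unconditional $2$-design identity for the gradient variance, apply Cauchy--Schwarz with weights $c_\alpha^2$ to obtain the factor $\left(\left\|O\right\|_{HS}/\left\|O\right\|_{\min}\right)^2$, and reduce to the diagonal-in-$\alpha$ sum over Clifford angle configurations. The only cosmetic difference is ordering --- you collapse to the unique Pauli path at Clifford angles before applying Cauchy--Schwarz, whereas the paper applies Cauchy--Schwarz first and then invokes the within-$\alpha$ orthogonality of \cref{eq:orthogonality_same_Pauli_O} --- but these are equivalent steps.
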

\begin{proof}
According to \cref{eq:gradient_variance_proof}, for arbitrary PQC $\mathcal{C}(\bm{\theta})$, when the orthogonality condition may not hold, we have
\begin{equation}
\operatorname{Var}_{\bm{\theta}} \left[ \frac{\partial \loss}{\partial \theta_j} \right]= 
\ExpC \left[\sum_\alpha c_\alpha\sum_{\vec{s}:\{P_j, s_j\} = 0} f(\vec{s},\bm{\theta},P_\alpha,\rho) \right] ^2.
\end{equation}
Applying the Cauchy-Schwarz inequality to the summation, the variance of the gradient can be upper bounded as follows:
\begin{equation}\label{eq:gradient_variance_lower_bound_temp}
\begin{aligned}
  &\operatorname{Var}_{\bm{\theta}} \left[ \frac{\partial \loss}{\partial \theta_j} \right]\\ 
&= 
\ExpC \left[\sum_\alpha c_\alpha\sum_{\vec{s}:\{P_j, s_j\} = 0} f(\vec{s},\bm{\theta},P_\alpha,\rho) \right] ^2\\
&\leq 
\ExpC \left[\left(\sum_\alpha c_\alpha^2\right) \sum_\alpha \left(\sum_{\vec{s}:\{P_j, s_j\} = 0} f(\vec{s},\bm{\theta},P_\alpha,\rho)\right)^2\right] \\
&\leq \left\|O\right\|_{HS}^2 
 \ExpC  \left[\sum_\alpha \frac{c_\alpha^2}{\min\{c_\alpha^2\}} \left(\sum_{\vec{s}:\{P_j, s_j\} = 0} f(\vec{s},\bm{\theta},P_\alpha,\rho)\right)^2\right] \\
&=\left(\frac{\left\|O\right\|_{HS}}{\left\|O\right\|_{\min}} \right)^2
 \ExpC  \left[\sum_\alpha c_\alpha^2 \left(\sum_{\vec{s}:\{P_j, s_j\} = 0}f(\vec{s},\bm{\theta},P_\alpha,\rho)\right)^2\right]. \\
\end{aligned}    
\end{equation}
Then according to \cref{eq:orthogonality_same_Pauli_O}, the cross terms in \cref{eq:gradient_variance_lower_bound_temp} vanishes, then we have
\begin{equation}
\begin{aligned}
  &\operatorname{Var}_{\bm{\theta}} \left[ \frac{\partial \loss}{\partial \theta_j} \right]\\ 
&\leq\left(\frac{\left\|O\right\|_{HS}}{\left\|O\right\|_{\min}} \right)^2
 \ExpC  \left[\sum_\alpha \sum_{\vec{s}:\{P_j, s_j\} = 0}c_\alpha^2f(\vec{s},\bm{\theta},P_\alpha,\rho)^2 \right]\\
&=\left(\frac{\left\|O\right\|_{HS}}{\left\|O\right\|_{\min}} \right)^2\frac{1}{4^{m}} 
    \sum_{\substack{
        \bm{\theta} \in \AngleSet^m \\
        \{P_j, s_j^{(\bm{\theta},\alpha)}\} = 0
      }} \sum_\alpha c_\alpha^2 \, f(\bm{\theta},P_\alpha,\vec{s}^{\hspace{0.1em}(\bm{\theta},\alpha)},\rho)^2.\\ 
&=\order{\mathrm{poly}(n)}\frac{1}{4^{m}} 
    \sum_{\substack{
        \bm{\theta} \in \AngleSet^m \\
        \{P_j, s_j^{(\bm{\theta},\alpha)}\} = 0
      }} \sum_\alpha c_\alpha^2 \, f(\bm{\theta},P_\alpha,\vec{s}^{\hspace{0.1em}(\bm{\theta},\alpha)},\rho)^2,\\ 
\end{aligned}    
\end{equation}
where the last equality follows from \cref{eq:upper_bound_HS}.
\end{proof}

\section{Variance and gradient variance of the loss function of MPQCs}\label{app:variance_MPQC}

In this section, we leverage \cref{lem:two_variances_MPQC} to derive analytical expressions for both the variance of the loss function of MPQCs and that of its gradient.
To apply this lemma, it is necessary to prove that the Pauli path of MPQC satisfies the orthogonality condition, and that for any $P_\alpha$, the quantity $\tr{\channelM{\bm{\theta},\bm{\theta}_\mathcal{G}}(\rho) P_\alpha}$ is not a non-zero constant function of $\left(\bm{\theta},\bm{\theta}_\mathcal{G}\right)$.

We first express the variance of the MPQC in terms of the Pauli path integral. Suppose $\unitaryM{\bm{\theta},\bm{\theta}_\mathcal{G}}$ denotes the unitary representation of $\channelM{\bm{\theta},\bm{\theta}_\mathcal{G}}$ that includes the ancilla qubits but excludes all $op$. Instead, the operations $op$ are treated explicitly as acting on the initial state of the ancilla qubits. Then, the loss function reads
\begin{equation}
  \begin{aligned}
\lossM&=\tr{\channelM{\bm{\theta},\bm{\theta}_\mathcal{G}}\left(\rho\right)O}\\
    &= \tr{ \left[\unitaryM{\bm{\theta},\bm{\theta}_\mathcal{G}}
    \left( op\left(\ket{0}\bra{0}\right)^{\otimes n}
    \otimes \rho \right) \left(\unitaryM{\bm{\theta},\bm{\theta}_\mathcal{G}}\right)^\dagger\right]
    \cdot \left[ I \otimes O \right] },
  \end{aligned}
\end{equation}
Here, the first $n$ qubits are the ancilla qubits, and the last $n$ qubits correspond to the original PQC, which we will refer to as the \emph{system qubits} in the following discussion. The observable operator acting on the ancilla qubits is fixed to be $I$, according to the definition of the quantum channel.

Next we express $\unitaryM{\bm{\theta},\bm{\theta}_\mathcal{G}}$ as the form in \cref{eq:parameterized_circuit}:
\begin{equation}
  \unitaryM{\bm{\theta},\bm{\theta}_\mathcal{G}} = \mathbf{U}_m(\theta_m)\cdots \mathbf{U}_{L+1}(\theta_{L+1}) \prod_{i=1}^n\left(R_{Z_iZ_{i+n}}(\bm{\theta}_{\mathcal{G}_{i,1}})
R_{Y_iY_{i+n}}(\bm{\theta}_{\mathcal{G}_{i,2}})
R_{X_iX_{i+n}}(\bm{\theta}_{\mathcal{G}_{i,3}})\right)\mathbf{U}_L(\theta_L) 
\cdots\mathbf{U}_1(\theta_1),
\end{equation}
where $\mathbf{U}_i(\theta_i)$ denote the unitary operator corresponding to the original circuit acting on $2n$ qubits, i.e. $\mathbf{U}_i(\theta_i) = I\otimes U_i(\theta_i)$. For convenience in the subsequent proof, we denote $R_{i,j}(\bm{\theta}_{\mathcal{G}_{i,j}})$ as the 2-qubit rotation gate $R_{P(j)_iP(j)_{i+n}}(\bm{\theta}_{\mathcal{G}_{i,j}})$, where $i\in[n], j\in[3]$, and $P(1) = Z$, $P(2) = Y$, $P(3) = X$.

Following the procedure in \cref{eq:Pauli_path_integral_noiseless},
we expand the loss function $\lossM$ using the Pauli path integral formalism:
\begin{small}
\begin{equation}\label{eq:MPQC_Pauli_path_integral_noiseless}
  \begin{aligned}
&\lossM=\tr{\unitaryM{\bm{\theta},\bm{\theta}_\mathcal{G}}op\left(\ket{0}\bra{0}\right)^{\otimes n}
    \otimes \rho\left(\unitaryM{\bm{\theta},\bm{\theta}_\mathcal{G}}\right)^\dagger I \otimes O}\\
    &= \sum_{\alpha,\mathbf{s}_m} c_\alpha\tr{I\otimes P_\alpha \mathbf{s}_m} \tr{\mathbf{s}_m \unitaryM{\bm{\theta},\bm{\theta}_\mathcal{G}}op\left(\ket{0}\bra{0}\right)^{\otimes n}
\otimes\rho\left(\unitaryM{\bm{\theta},\bm{\theta}_\mathcal{G}}\right)^\dagger}\\
  &=\sum_{\substack{\alpha,\mathbf{s}_m,\mathbf{s}_{m-1},\cdots,\mathbf{s}_0\\  \mathbf{s}_{\mathcal{G}_{1,1}},\mathbf{s}_{\mathcal{G}_{1,2}},\cdots,\mathbf{s}_{\mathcal{G}_{n,3}}
  }} \tr{I\otimes O \mathbf{s}_m} 
  \tr{\mathbf{s}_m \mathbf{U}_m(\theta_m) \mathbf{s}_{m-1} \mathbf{U}_m(\theta_m)^\dagger} \cdots \tr{\mathbf{s}_{L+1} \mathbf{U}_{L+1}(\theta_{L+1}) \mathbf{s}_{\mathcal{G}_{1,1}} \mathbf{U}_{L+1}(\theta_{L+1})^\dagger}\cdot\\ 
  &\qquad
  \cdot \tr{\mathbf{s}_{\mathcal{G}_{1,1}} R_{11}(\bm{\theta}_{\mathcal{G}_{11}}) \mathbf{s}_{\mathcal{G}_{1,2}} R_{11}(-\bm{\theta}_{\mathcal{G}_{11}})} \tr{\mathbf{s}_{\mathcal{G}_{1,2}}R_{12}(\bm{\theta}_{\mathcal{G}_{12}}) \mathbf{s}_{\mathcal{G}_{1,3}} R_{12}(-\bm{\theta}_{\mathcal{G}_{12}})}\cdots\tr{\mathbf{s}_{\mathcal{G}_{n,3}}R_{n3}(\bm{\theta}_{\mathcal{G}_{n3}}) \mathbf{s}_{L} R_{n3}(-\bm{\theta}_{\mathcal{G}_{n3}})}\cdot\\
  &\qquad \cdot \tr{\mathbf{s}_{L} \mathbf{U}_L(\theta_L) \mathbf{s}_{L-1} \mathbf{U}_L(\theta_L)^\dagger}
  \cdots\tr{\mathbf{s}_{1} \mathbf{U}_1(\theta_1) \mathbf{s}_0 \mathbf{U}_1(\theta_1)^\dagger} \tr{\mathbf{s}_0 op\left(\ket{0}\bra{0}\right)^{\otimes n}\otimes\rho}\\
  & =\sum_{\alpha,\vec{\mathbf{s}}} c_\alpha f\left(\vec{\mathbf{s}},\left(\bm{\theta},\bm{\theta}_\mathcal{G}\right),I\otimes P_\alpha,op\left(\ket{0}\bra{0}\right)^{\otimes n}
\otimes\rho\right),
  \end{aligned}
\end{equation}  
\end{small}
where we define $\vec{\mathbf{s}}=(\mathbf{s}_0,\cdots,\mathbf{s}_{m}, \mathbf{s}_{\mathcal{G}_{n,3}},\mathbf{s}_{\mathcal{G}_{n,2}},\cdots,\mathbf{s}_{\mathcal{G}_{1,1}})$ with each element a normalized 2$n$-qubit Pauli operator and $f\left(\vec{\mathbf{s}},\left(\bm{\theta},\bm{\theta}_\mathcal{G}\right),I\otimes P_\alpha,op\left(\ket{0}\bra{0}\right)^{\otimes n}
\otimes\rho\right)$ as the contribution of Pauli path $\vec{\mathbf{s}}$ to the expectation value. To prove that MPQC satisfies the conditions demanded in \cref{lem:two_variances_MPQC}, we need the following \cref{lem:orthogonality} and \cref{lem:split} proved in Ref.~\cite{shao2024simulating}.

\begin{lemma}\label{lem:orthogonality}
Consider a PQC $\mathcal{C}(\bm{\theta})=U_m(\theta_m)  \cdots {U}_1(\theta_1)$ measured with observable $ O = \sum_{\alpha} c_\alpha P_\alpha$.
Let $\overline{P_i}$ denote the Pauli operator $P_i$ after conjugation by a sequence of Clifford gates, i.e., $\overline{P_i} = C_m \cdots C_i P_i C_i^\dagger \cdots C_m^\dagger$. Then the orthogonality condition \cref{eq:orthogonality_condition_app} holds for $\mathcal{C}(\bm{\theta})$ if the set of Pauli operators $\{\overline{P_i}\}$ can split the Pauli operator set $\{P_\alpha\}$ of $O$. We say that  Pauli set $A$ can split Pauli set $B$ if there exist no two distinct elements in $B$ that exhibit identical anti-commute/commute relation with each element in $A$.
\end{lemma}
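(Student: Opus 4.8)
\begin{proofsketch}
The plan is to prove the contrapositive: if the cross term $\ExpC\!\big[f(\vec{s},\bm{\theta},P_\alpha,\rho)\,f(\vec{s}\hspace{0.1em}',\bm{\theta},P_\beta,\rho)\big]$ is nonzero for some $\alpha\neq\beta$ and some Pauli paths $\vec{s},\vec{s}\hspace{0.1em}'$, then $P_\alpha$ and $P_\beta$ share the same commute/anticommute relation with \emph{every} $\overline{P_i}$, which contradicts the assumption that $\{\overline{P_i}\}$ splits $\{P_\alpha\}$. It is convenient to first commute every Clifford gate past the rotations to one end, exhibiting $\mathcal{C}(\bm{\theta})$ as a product of elementary rotations $R_{\overline{P_i}}(\theta_i)$ followed by a fixed overall Clifford; this is precisely where the operators $\overline{P_i}$ of the statement arise, and after the reduction the branching of a Pauli path at step $i$ is controlled solely by whether the backward-propagated Pauli word at that step commutes with $\overline{P_i}$.

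I would then factorize the expectation over gates. Since $f(\vec{s},\bm{\theta},P_\alpha,\rho)$ is the product of the boundary terms $\tr{P_\alpha s_m}\tr{s_0\rho}$ with the per-gate factors $\tr{s_i R_{\overline{P_i}}(\theta_i)\,s_{i-1}\,R_{\overline{P_i}}^\dagger(\theta_i)}$, and $\theta_i$ occurs only in the $i$-th factor, $\ExpC\!\big[f(\vec{s},\cdot)f(\vec{s}\hspace{0.1em}',\cdot)\big]$ splits into the product over $i$ of $\mathbb{E}_{\theta_i}$ of the two $i$-th factors, times the ($\bm{\theta}$-independent) boundary terms. By the structure recorded in \cref{eq:gate_term_in_f}, each per-gate factor is $\theta_i$-independent and lies in $\{0,\pm1\}$ when the path element commutes with $\overline{P_i}$, and equals $\pm\cos\theta_i$ (path ``stays'') or $\pm\sin\theta_i$ (path ``flips'') when it anticommutes. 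Pairing the two factors at step $i$ and using the elementary single-angle integrals $\mathbb{E}_{\theta_i}[\cos\theta_i]=\mathbb{E}_{\theta_i}[\sin\theta_i]=\mathbb{E}_{\theta_i}[\cos\theta_i\sin\theta_i]=0$ and $\mathbb{E}_{\theta_i}[\cos^2\theta_i]=\mathbb{E}_{\theta_i}[\sin^2\theta_i]=\tfrac12$, the $i$-th expectation vanishes unless $s_i$ and $s'_i$ (i) have the same commute/anticommute type with $\overline{P_i}$ and (ii) whenever both anticommute, make the same stay/flip choice. Hence a nonzero cross term forces the two paths to anticommute with $\overline{P_i}$ at exactly the same set of steps and to flip at exactly the same subset of those steps, so that $\vec{s}$ and $\vec{s}\hspace{0.1em}'$ are ``parallel'', differing only in the terminal data $s_m\propto P_\alpha$ versus $s'_m\propto P_\beta$ at the observable end.

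Next I would read off the consequence for $P_\alpha$ and $P_\beta$. Tracing two parallel paths backward from the observable, one gets $s_i = D_i\,P_\alpha$ and $s'_i = D_i\,P_\beta$ as Pauli words up to phase, where $D_i$ is the \emph{common} defect operator, namely the product of the $\overline{P_j}$ over the shared set of flip steps $j>i$. The commute/anticommute sign of $\overline{P_i}$ against such a product factorizes into the sign from $[\overline{P_i},D_i]$ times the sign from $[\overline{P_i},P_\bullet]$; since $D_i$ is identical for the two paths, condition (i) at step $i$ reduces to $P_\alpha$ and $P_\beta$ commuting/anticommuting in the same way with $\overline{P_i}$. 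Letting $i$ range over all steps and invoking the definition of ``splitting'' for the pair $P_\alpha\neq P_\beta$ yields the contradiction, which proves the orthogonality condition \cref{eq:orthogonality_condition_app}.

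The step I expect to be the main obstacle is the passage in the third paragraph: one must check that the backward-propagated defect $D_i$ really is the same Pauli word for both parallel paths and that the sign it contributes cancels exactly in the comparison, which requires care with phases of products of Pauli words and with how the fixed overall Clifford (and hence the precise form of $\overline{P_i}$) is tracked through the commute-to-one-end reduction. A minor preliminary to dispose of is the degenerate case in which $f(\vec{s},\cdot)$ or $f(\vec{s}\hspace{0.1em}',\cdot)$ is identically zero (e.g., $s_m\not\propto P_\alpha$, or a commuting step whose Pauli-word mismatch kills the factor), where the cross term vanishes trivially and there is nothing to prove.
\end{proofsketch}
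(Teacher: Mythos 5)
The paper does not actually prove \cref{lem:orthogonality}: it imports the statement (together with \cref{lem:split}) from Ref.~\cite{shao2024simulating}, so there is no in-paper proof to compare against. Your argument is nevertheless correct and is exactly the natural extension of the technique the paper \emph{does} use for the same-observable orthogonality in the proof of \cref{eq:orthogonality_same_Pauli_O}: factorize $\ExpC\!\left[f(\vec{s},\cdot)f(\vec{s}\hspace{0.1em}',\cdot)\right]$ over the independent angles, use $\mathbb{E}[\cos\theta]=\mathbb{E}[\sin\theta]=\mathbb{E}[\cos\theta\sin\theta]=0$ to kill any step where the two backward-propagated Paulis have mismatched commute/anticommute type or mismatched stay/flip choice, conclude that a surviving cross term forces the two paths to be parallel with a common defect operator, and then read off that $P_\alpha$ and $P_\beta$ would have identical commutation pattern with every effective axis, contradicting the splitting hypothesis. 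The one point worth pinning down (which you already flag) is the bookkeeping of the Clifford conjugations: with the paper's convention $U_i(\theta_i)=R_{P_i}(\theta_i)C_i$, commuting the Cliffords toward the observable yields effective axes $C_m\cdots C_{i+1}P_iC_{i+1}^\dagger\cdots C_m^\dagger$, which differ from the stated $\overline{P_i}=C_m\cdots C_iP_iC_i^\dagger\cdots C_m^\dagger$ by one extra conjugation; this appears to be a convention mismatch inherited from the cited reference (where the Clifford presumably follows the rotation) rather than a flaw in your reasoning, but your write-up should fix one convention and carry it through consistently, since the splitting condition is stated in terms of commutation with the specific operators $\overline{P_i}$.
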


\begin{lemma}\label{lem:split}
$\{\overline{P_i}\}$ can split the entire $n$-qubit Pauli $\{\mathbb{I}, X, Y, Z\}^{\otimes n}$ is equivalent to the condition that 
\begin{equation}\label{eq:generate}
  \langle \{\overline{P_i}\}\rangle/\left(\langle \{\overline{P_i}\}\rangle\cap\langle i\mathbb{I}^{\otimes n}\rangle\right)=\{\mathbb{I},X,Y,Z\}^{\otimes n},
\end{equation}
here $\langle \{\overline{P_i}\} \rangle$ denotes to the Pauli subgroup that is generated by set $\{\overline{P_i}\}$, meaning every element in $\langle \{\overline{P_i}\}\rangle$ can be expressed as the finite product of elements in $\{\overline{P_i}\}$.
\end{lemma}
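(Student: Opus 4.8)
The plan is to translate the statement into linear algebra over $\mathbb{F}_2$ using the symplectic representation of the Pauli group, after which it becomes a one-line instance of symplectic duality.

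\textbf{Step 1 (set-up).} First I would identify $\mathcal{P}_n/\langle i\mathbb{I}^{\otimes n}\rangle$ with $\mathbb{F}_2^{2n}$ by sending a Pauli word (up to phase) to its exponent vector $v=(a\,|\,b)$, where the word is $\propto X^{a}Z^{b}$. Under this identification the commute/anticommute relation becomes the standard nondegenerate symplectic form $\omega$: for Pauli words $P,Q$ with images $v_P,v_Q$ one has $PQ=(-1)^{\omega(v_P,v_Q)}QP$, with $\omega((a|b),(a'|b'))=a\cdot b'+b\cdot a'$ over $\mathbb{F}_2$. Two facts I would record here. (i) By the second isomorphism theorem, $\langle\{\overline{P_i}\}\rangle/(\langle\{\overline{P_i}\}\rangle\cap\langle i\mathbb{I}^{\otimes n}\rangle)$ is isomorphic to the image of $\langle\{\overline{P_i}\}\rangle$ in $\mathbb{F}_2^{2n}$; and since multiplication of Paulis becomes addition of exponent vectors, that image is exactly the $\mathbb{F}_2$-linear span $W:=\operatorname{span}_{\mathbb{F}_2}\{v_i\}$ of the images $v_i$ of the $\overline{P_i}$. (ii) The set $\{\mathbb{I},X,Y,Z\}^{\otimes n}$ on the right-hand side of the claimed identity is precisely a complete set of coset representatives for $\mathcal{P}_n/\langle i\mathbb{I}^{\otimes n}\rangle$. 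Hence the right-hand condition is equivalent to $W=\mathbb{F}_2^{2n}$.

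\textbf{Step 2 (reformulate splitting).} By definition, $\{\overline{P_i}\}$ splits $\{\mathbb{I},X,Y,Z\}^{\otimes n}$ iff the signature map $\Phi:\mathbb{F}_2^{2n}\to\mathbb{F}_2^{N}$, $\Phi(v)=(\omega(v,v_i))_{i}$ (with $N$ the number of $\overline{P_i}$), is injective: two distinct Pauli words exhibit the same commute/anticommute pattern with every $\overline{P_i}$ exactly when they have the same image under $\Phi$. Since $\Phi$ is $\mathbb{F}_2$-linear, injectivity is equivalent to $\ker\Phi=\{0\}$, and $\ker\Phi=\{v:\omega(v,v_i)=0\ \forall i\}=W^{\perp}$, the symplectic orthogonal complement of $W$ with respect to $\omega$.

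\textbf{Step 3 (symplectic duality) and conclusion.} Now I would invoke nondegeneracy of $\omega$: for every subspace $W\subseteq\mathbb{F}_2^{2n}$ one has $\dim W+\dim W^{\perp}=2n$ and $(W^{\perp})^{\perp}=W$, so $W^{\perp}=\{0\}\iff W=\mathbb{F}_2^{2n}$. Chaining the three steps gives $\{\overline{P_i}\}$ splits the full $n$-qubit Pauli set $\iff\ker\Phi=\{0\}\iff W^{\perp}=\{0\}\iff W=\mathbb{F}_2^{2n}\iff\langle\{\overline{P_i}\}\rangle/(\langle\{\overline{P_i}\}\rangle\cap\langle i\mathbb{I}^{\otimes n}\rangle)=\{\mathbb{I},X,Y,Z\}^{\otimes n}$, which is the assertion. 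The argument is short once the symplectic picture is set up; the only delicate point, and the step I expect to require the most care, is the phase bookkeeping in Step 1 — checking that the commute/anticommute relation and the notion ``subgroup generated by $\{\overline{P_i}\}$'' descend cleanly to $\mathcal{P}_n/\langle i\mathbb{I}^{\otimes n}\rangle$, and that $\langle\{\overline{P_i}\}\rangle\cap\langle i\mathbb{I}^{\otimes n}\rangle$ is exactly the subgroup killed by the quotient, so that the displayed quotient genuinely equals the span $W$ rather than a smaller subgroup or one carrying residual phase ambiguity. Everything after that is the elementary fact that a nondegenerate bilinear form admits no nonzero vector orthogonal to a spanning set.
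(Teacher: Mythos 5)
Your argument is correct. Note, however, that the paper does not actually prove \cref{lem:split}: it is imported verbatim from Ref.~\cite{shao2024simulating}, so there is no in-paper proof to compare against. Your symplectic-$\mathbb{F}_2$ reduction is the standard and essentially canonical route to this statement: identifying $\mathcal{P}_n/\langle i\mathbb{I}^{\otimes n}\rangle$ with $\mathbb{F}_2^{2n}$, observing that the quotient in \cref{eq:generate} becomes the span $W$ of the exponent vectors, that splitting of the \emph{full} Pauli set is exactly injectivity of the linear signature map (which is where you correctly use that $B$ is all of $\{\mathbb{I},X,Y,Z\}^{\otimes n}$, so injectivity reduces to $\ker\Phi=W^{\perp}=\{0\}$), and then invoking nondegeneracy of the form to get $W^{\perp}=\{0\}\iff W=\mathbb{F}_2^{2n}$. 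The phase bookkeeping you flag is handled correctly by the second isomorphism theorem, and the fact $(W^{\perp})^{\perp}=W$ is not even needed — the dimension count $\dim W+\dim W^{\perp}=2n$ suffices. No gaps.
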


Next, we prove that two conditions of \cref{lem:two_variances_MPQC} are both satisfied for arbitrary MPQC, which are concluded in the following two lemmas.

\begin{lemma}\label{lem:orthogonality_MPQC}
 Consider a MPQC $\channelM{\bm{\theta},\bm{\theta}_\mathcal{G}}$ taking in parameters $\left(\bm{\theta},\bm{\theta}_\mathcal{G}\right)\in \left[0, 2\pi \right)^{m+3n}$ measured with observable $O = \sum_{\alpha} c_\alpha P_\alpha$. Then, the orthogonality condition for the Pauli paths in the expansion form of \cref{eq:MPQC_Pauli_path_integral_noiseless} always holds.
\end{lemma}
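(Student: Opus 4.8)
\begin{proofsketch}
The plan is to invoke \cref{lem:orthogonality}, which reduces the orthogonality of the Pauli-path contributions in the expansion \cref{eq:MPQC_Pauli_path_integral_noiseless} to the combinatorial statement that the Clifford-conjugated generators $\{\overline{P_i}\}$ of the MPQC \emph{split} the set $\{I\otimes P_\alpha\}$ of observable Paulis, and then to verify this splitting directly using only the three generators contributed by a single gadget. First I would write $\unitaryM{\bm{\theta},\bm{\theta}_{\mathcal{G}}}$ in the standard form of \cref{eq:parameterized_circuit}: the original layers become $I\otimes R_{P_j}(\theta_j)$ with Clifford parts $I_{\mathrm{anc}}\otimes C_j$, while the gadget on qubit $i$ contributes the Pauli rotations with generators $X_iX_{i+n}$, $Y_iY_{i+n}$, $Z_iZ_{i+n}$ (coupling ancilla qubit $i$ to system qubit $i$) with trivial Clifford parts; the single-qubit $op$ is absorbed into the initial ancilla state and contributes no generator.

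Next I would examine the conjugated gadget generators. Every Clifford lying between the gadget layer and the measurement acts only on the system register, so -- commutation being invariant under conjugation -- $\overline{X_iX_{i+n}}$ commutes with $I\otimes P_\alpha$ iff the bare product Pauli $X_iX_{i+n}$ commutes with $I\otimes P^{(L)}_\alpha$, where $P^{(L)}_\alpha$ is $P_\alpha$ pushed back through exactly those post-gadget Cliffords (a single $n$-qubit Pauli; the interleaved Pauli rotations are irrelevant, since $\overline{\,\cdot\,}$ involves Cliffords only). As $X_iX_{i+n}$ acts as $X$ on ancilla qubit $i$, as $X$ on system qubit $i$, and trivially elsewhere, this is equivalent to $X$ commuting with the single-qubit factor $(P^{(L)}_\alpha)_i$. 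Hence the commutation pattern of $I\otimes P_\alpha$ against $(\overline{X_iX_{i+n}},\overline{Y_iY_{i+n}},\overline{Z_iZ_{i+n}})$ equals that of $(P^{(L)}_\alpha)_i$ against $(X,Y,Z)$, and the map from a single-qubit Pauli to its commutation pattern with $(X,Y,Z)$ is a bijection onto $\{(0,0,0),(0,1,1),(1,0,1),(1,1,0)\}$.

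The splitting condition is then immediate: for $\alpha\neq\beta$, $P_\alpha\neq P_\beta$, and since $P_\alpha\mapsto P^{(L)}_\alpha$ is conjugation by a Clifford on the system register it is injective and support-preserving, so $P^{(L)}_\alpha$ and $P^{(L)}_\beta$ differ on some system qubit $i\in[n]$; by the bijection above, one of $\overline{X_iX_{i+n}},\overline{Y_iY_{i+n}},\overline{Z_iZ_{i+n}}$ separates $I\otimes P_\alpha$ from $I\otimes P_\beta$. Thus $\{\overline{P_i}\}$ -- already its gadget-layer subset -- splits $\{I\otimes P_\alpha\}$, and \cref{lem:orthogonality} yields the orthogonality claimed in \cref{lem:orthogonality_MPQC} for all $\alpha\neq\beta$ and all Pauli paths $\vec{\mathbf{s}},\vec{\mathbf{s}}'$.

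The argument is short once \cref{lem:orthogonality} is available; the points requiring care are the conjugation bookkeeping (the post-gadget Cliffords touch only the system register, so the ancilla factor of each conjugated gadget generator stays a bare single-qubit Pauli, and a commutation relation already fixed on that ancilla is untouched by the system-side rotations) and the choice to verify splitting of the polynomially many observable Paulis $\{I\otimes P_\alpha\}$ directly rather than of the full $2n$-qubit Pauli group -- an arbitrary MPQC need not have generators rich enough for the latter, so I do not expect to invoke \cref{lem:split} here.
\end{proofsketch}
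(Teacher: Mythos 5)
Your proof is correct and follows essentially the same route as the paper: both reduce the claim to the splitting condition of \cref{lem:orthogonality} using only the three gadget generators per qubit, after observing that the ancilla factor of each conjugated generator is inert because the observable acts as $I$ on the ancillas. The only difference is minor: the paper invokes \cref{lem:split} by noting that the conjugated system-side generators generate the full $n$-qubit Pauli group, whereas you verify the splitting of $\{P_\alpha\}$ directly via the bijection between single-qubit Paulis and their commutation patterns with $(X,Y,Z)$ — a self-contained check that makes \cref{lem:split} unnecessary.
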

\begin{proof}
  We employ \cref{lem:orthogonality} and \cref{lem:split} to prove \cref{lem:orthogonality_MPQC}. To facilitate the analysis, we first express the Pauli operators generated by the MPQC in \cref{lem:orthogonality} to act on the full $2n$-qubit system.
Specifically, these operators can be written as
\[
\overline{\mathbf{P}_i} = \mathbf{C}_m \cdots \mathbf{C}_i \mathbf{P}_i \mathbf{C}_i^\dagger \cdots \mathbf{C}_m^\dagger,
\]
where each $\mathbf{C}_i$ denotes a Clifford operator in the original PQC, extended to act on $2n$ qubits.

Recall that each gadget $\mathcal{G}(\boldsymbol{\theta})$ employs three two-qubit rotation gates: $R_{XX}$, $R_{YY}$, and $R_{ZZ}$, acting between a system qubit and an ancilla qubit. For each system qubit, at least one such gadget is applied. Then, the Pauli operator set generated in the gadget layer contains at least the following:
\[
\{X^{j_1}_{i_1}Y^{j_2}_{i_2}Z^{j_3}_{i_3} \otimes \tilde{\mathbf{C}}_{L+1}X^{j_1}_{i_1+n}Y^{j_2}_{i_2+n}Z^{j_3}_{i_3+n}\tilde{\mathbf{C}}_{L+1}^\dagger\}_{\tiny\substack{i_1,i_2,i_3 \\ j_1,j_2,j_3}} := F,
\]
where $i_1, i_2, i_3 \in [n]$, $j_1, j_2, j_3 \in \{0,1\}$ satisfying $j_1 + j_2 + j_3 = 1$,  and $\tilde{\mathbf{C}}_{L+1} \coloneq \mathbf{C}_m \cdots \mathbf{C}_{L+1}$. Since the Pauli operator set of the observable of MPQCs takes the form $\{I \otimes P_\alpha\}$, the (anti)commutation relation between any element 
\[
X^{j_1}_{i_1}Y^{j_2}_{i_2}Z^{j_3}_{i_3} \otimes \tilde{\mathbf{C}}_{L+1}X^{i_1}_{j_1+n}Y^{j_2}_{i_2+n}Z^{j_3}_{i_3+n}\tilde{\mathbf{C}}_{L+1}^\dagger \in F
\]
and $I \otimes P_\alpha$ is determined by the (anti)commutation relation between $\tilde{\mathbf{C}}_{L+1}X^{j_1}_{i_1+n}Y^{j_2}_{i_2+n}Z^{j_3}_{i_3+n}\tilde{\mathbf{C}}_{L+1}^\dagger$ and $P_\alpha$.

This implies that $F$ can split the Pauli operator set $\{I \otimes P_\alpha\}$ if and only if the set $\{\tilde{\mathbf{C}}_{L+1}X^{j_1}_{i_1+n}Y^{j_2}_{i_2+n}Z^{j_3}_{i_3+n}\tilde{\mathbf{C}}_{L+1}^\dagger\}$ can split $\{P_\alpha\}$. 

It is straightforward to verify that $\{\tilde{\mathbf{C}}_{L+1}X^{j_1}_{i_1+n}Y^{j_2}_{i_2+n}Z^{j_3}_{i_3+n}\tilde{\mathbf{C}}_{L+1}^\dagger\}$ generates the entire $n$-qubit Pauli group. By \cref{lem:split}, we conclude that the operator $F$ already suffices to split the Pauli operator set of any observable $O$. Consequently, the whole Pauli operator set $\{\overline{\mathbf{P}_i}\}$ of MPQC can split the Pauli operator set of arbitrary $O$. According to \cref{lem:orthogonality}, we thus conclude that the orthogonality condition holds for all MPQCs.
\end{proof}

\begin{lemma}\label{lem:non_constant_MPQC}
For any MPQC and any nontrivial $n$-qubit Pauli word $P$, the expectation value $\Vexp{P}$ is not a non-zero constant function of the parameters in MPQC.
\end{lemma}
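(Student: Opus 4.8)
The plan is to argue by contradiction. Suppose $\Vexp{P}$ were a non-zero constant; since it then depends on no parameter, it equals its own average, $\Vexp{P}=\mathbb{E}_{(\bm{\theta},\bm{\theta}_{\mathcal{G}})}[\Vexp{P}]$, where each angle is drawn independently and uniformly from $[0,2\pi)$. I would then show that this average is in fact $0$, which is the desired contradiction. The computation proceeds through the Pauli-path-integral expansion of $\lossM$ in \cref{eq:MPQC_Pauli_path_integral_noiseless}, specialized to the single Pauli word $P$ as the observable, using the elementary identities $\mathbb{E}_{\theta}[\cos\theta]=\mathbb{E}_{\theta}[\sin\theta]=0$.

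The first step is a factorization/uniqueness observation. By \cref{eq:gate_term_in_f} (and its obvious analogue for the two-qubit gadget rotations), each transfer factor in $f(\vec{\mathbf{s}},\ldots)$ is independent of its angle when the rotation Pauli commutes with the incoming path Pauli, and is otherwise of the form $\cos\theta\cdot(\cdot)-i\sin\theta\cdot(\cdot)$. Since in any fixed Pauli path $\vec{\mathbf{s}}$ every parameter occurs in exactly one factor, the expectation over $(\bm{\theta},\bm{\theta}_{\mathcal{G}})$ factorizes across gates, and $\mathbb{E}[f(\vec{\mathbf{s}},\ldots)]=0$ unless the rotation Pauli commutes with the path Pauli at \emph{every} gate. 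Such an ``all-commuting'' path is unique whenever it exists: starting from $\mathbf{s}_m\propto I\otimes P$ and running backwards, the commuting branch forces $\mathbf{s}_{i-1}=\mathbf{C}_i^{\dagger}\mathbf{s}_i\mathbf{C}_i$ at each ordinary gate and leaves $\mathbf{s}$ unchanged across each gadget rotation. Hence $\mathbb{E}_{(\bm{\theta},\bm{\theta}_{\mathcal{G}})}[\Vexp{P}]$ equals the contribution of this single candidate path if it survives (all transfer factors nonzero) down to $\mathbf{s}_0$, and is $0$ otherwise; so it suffices to prove the candidate cannot survive.

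The core of the argument is that the candidate path necessarily dies at the gadget layer. Back-evolving $I\otimes P$ (identity $\mathbb{I}^{\otimes n}$ on the ancilla register, $P$ on the system register) through the post-gadget unitaries in the commuting branch is merely conjugation by Clifford gates acting on the system register, so just above the gadget layer the candidate Pauli equals $\mathbb{I}^{\otimes n}$ on the ancillas and some Pauli word $Q$ on the system register, with $Q\neq\pm\mathbb{I}^{\otimes n}$ because $P$ is nontrivial and Clifford conjugation maps nonidentity Pauli words to nonidentity Pauli words (if the candidate has already died among the post-gadget gates, we are likewise done). Restricted to the pair formed by ancilla qubit $i$ and system qubit $i+n$, this Pauli is $\mathbb{I}_i\otimes Q_{i+n}$, and for it to survive the three gadget rotations it must commute simultaneously with $X_iX_{i+n}$, $Y_iY_{i+n}$ and $Z_iZ_{i+n}$; since $\mathbb{I}_i\otimes Q_{i+n}$ commutes with $X_iX_{i+n}$ iff $Q_{i+n}\in\{\mathbb{I},X\}$ and analogously for $Y$ and $Z$, this forces $Q_{i+n}=\mathbb{I}$ for every $i$, i.e.\ $Q=\mathbb{I}^{\otimes n}$, contradicting $Q\neq\pm\mathbb{I}^{\otimes n}$. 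Therefore no all-commuting path survives, $\mathbb{E}_{(\bm{\theta},\bm{\theta}_{\mathcal{G}})}[\Vexp{P}]=0$, and the assumed constant value of $\Vexp{P}$ must be $0$. As in \cref{lem:orthogonality_MPQC}, the only feature of the gadget used here is that every system qubit is coupled to an ancilla via all three of $R_{XX}$, $R_{YY}$, $R_{ZZ}$; the choice of $op$ plays no role, so the same conclusion covers the trainable construction of \cref{fig:op_trainable}.

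I expect the most delicate part to be the bookkeeping in the second step: rigorously justifying that the gadget-layer two-qubit factors are on the same footing as the ordinary Pauli-rotation factors, that each parameter appears in a single factor so the expectation genuinely factorizes, and that the backward commuting branch determines a unique candidate path up to irrelevant signs and normalizations. Once this setup is in place, the geometric step—the simultaneous $X$/$Y$/$Z$ coupling pinning $Q_{i+n}$ to the identity—is immediate, and it mirrors the splitting argument already used for the orthogonality condition in \cref{lem:orthogonality_MPQC}.
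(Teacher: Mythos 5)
Your proposal is correct and follows essentially the same route as the paper: assume $\Vexp{P}$ is a nonzero constant, equate it to its parameter average, use the Pauli-path expansion with $\mathbb{E}[\cos\theta]=\mathbb{E}[\sin\theta]=0$ to reduce to the unique all-commuting path, and then show the gadget layer's simultaneous $XX$/$YY$/$ZZ$ couplings force that path's system-register Pauli to be the identity, contradicting $P\neq\mathbb{I}$. The only cosmetic difference is that you perform the commutation check locally at the gadget layer on the back-propagated operator $\mathbb{I}^{\otimes n}\otimes Q$, whereas the paper conjugates the gadget generators forward to the observable and invokes its splitting lemma (\cref{lem:split}); the two are the same computation.
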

\begin{proof}
Suppose there exists an MPQC $\channelM{\bm{\theta},\bm{\theta}_\mathcal{G}}$ and a nontrivial Pauli operator $P \neq I$ such that $\Vexp{P} = c \neq 0$. Then we have
\begin{equation}
\ExpMC \left[\Vexp{P}\right] = \ExpMC \left[\sum_{\vec{\mathbf{s}}} f\left(\vec{\mathbf{s}},\left(\bm{\theta},\bm{\theta}_\mathcal{G}\right),I\otimes P,op\left(\ket{0}\bra{0}\right)^{\otimes n}
\otimes\rho\right)\right] = c \neq 0.
\end{equation}
For arbitrary $\vec{\mathbf{s}}$, if there exists some $\{\mathbf{s}_i,\mathbf{P}_i\}=0$ or some $\{\mathbf{s}_{\mathcal{G}_{i,j}}, P(j)_iP(j)_{i+n}\}= 0$, then the corresponding term $\ExpC f\left(\vec{\mathbf{s}},\left(\bm{\theta},\bm{\theta}_\mathcal{G}\right),I\otimes P,op\left(\ket{0}\bra{0}\right)^{\otimes n}
\otimes\rho\right)$ vanishes.
This is because, according to~\eqref{eq:gate_term_in_f}, this term must contain one of the following components:
\begin{equation}
  \begin{aligned}
&\mathbb{E}_{\theta_i} \left[ \cos(\theta_i) \tr{ \mathbf{s}_i \mathbf{C}_i \mathbf{s}_{i-1} \mathbf{C}_i^\dagger }\right] \quad \text{or} \quad 
\mathbb{E}_{\theta_i}\left[\sin(\theta_i) \tr{ \mathbf{s}_i P_i \mathbf{C}_i \mathbf{s}_{i-1} \mathbf{C}_i^\dagger }, \right], \quad i\neq L+1\\      
\text{or} \quad& \mathbb{E}_{\theta_{L+1}} \left[ \cos(\theta_{L+1}) \tr{ \mathbf{s}_{L+1} \mathbf{C}_{L+1} \mathbf{s}_{\mathcal{G}_{1,1}} \mathbf{C}_{L+1}^\dagger }\right] \quad \text{or} \quad 
\mathbb{E}_{\theta_{L+1}}\left[\sin(\theta_{L+1}) \tr{ \mathbf{s}_{L+1} P_{L+1} \mathbf{C}_{L+1} \mathbf{s}_{\mathcal{G}_{1,1}} \mathbf{C}_{L+1}^\dagger }, \right]\\
\text{or} \quad& \mathbb{E}_{\theta_{\mathcal{G}_{i,j}}} \left[ \cos(\theta_{\mathcal{G}_{i,j}}) \tr{ \mathbf{s}_{\mathcal{G}_{i,j}} \mathbf{s}_{\mathcal{G}_{i,j+1}} }\right] \quad \text{or} \quad 
\mathbb{E}_{\theta_{\mathcal{G}_{i,j}}}\left[\sin(\theta_{\mathcal{G}_{i,j}}) \tr{ \mathbf{s}_{\mathcal{G}_{i,j}}
P(j)_iP(j)_{i+n}
\mathbf{s}_{\mathcal{G}_{i,j+1}}}\right], \quad i\in[n],j\leq 2\\
\text{or} \quad& 
\mathbb{E}_{\theta_{\mathcal{G}_{i,3}}}\left[\cos(\theta_{\mathcal{G}_{i,3}}) \tr{ \mathbf{s}_{\mathcal{G}_{i,3}}
\mathbf{s}_{\mathcal{G}_{i+1,1}} }\right]\quad 
\text{or} \quad 
\mathbb{E}_{\theta_{\mathcal{G}_{i,3}}}\left[\sin(\theta_{\mathcal{G}_{i,3}}) \tr{ \mathbf{s}_{\mathcal{G}_{i,1}}
P(3)_iP(3)_{i+n}
\mathbf{s}_{\mathcal{G}_{i+1,1}} }\right], \quad i\leq n-1\\
\text{or} \quad& \mathbb{E}_{\mathcal{G}_{n,3}} \left[ \cos(\theta_{\mathcal{G}_{n,3}}) \tr{ \mathbf{s}_{\mathcal{G}_{n,3}} \mathbf{s}_L }\right] \quad \text{or} \quad 
\mathbb{E}_{\theta_{\mathcal{G}_{n,3}}}\left[\sin(\theta_{\mathcal{G}_{n,3}}) \tr{ \mathbf{s}_{\mathcal{G}_{n,3}} P(3)_nP(3)_{2n} \mathbf{C}_{L+1} \mathbf{s}_L }, \right],
  \end{aligned}
\end{equation}
while all of them equal 0.

Based on this observation, if 
\[
\ExpMC \left[\sum_{\vec{\mathbf{s}}} f\left(\vec{\mathbf{s}},\left(\bm{\theta},\bm{\theta}_\mathcal{G}\right),I\otimes P,op\left(\ket{0}\bra{0}\right)^{\otimes n}
\otimes\rho\right)\right] = c \neq 0,
\]
Then, there must exist a Pauli path $\vec{\mathbf{s}}$ such that each element commutes with the corresponding generator of its associated rotation gate. According to~\eqref{eq:gate_term_in_f}, the following equation must hold:
\begin{equation}\label{eq:commute_relation}
\mathbf{C}_i \mathbf{s}_{i-1} \mathbf{C}_i^\dagger = \mathbf{s}_i, \quad i>L+2. 
\end{equation}
Here we again we ignore the sign $\pm$ in front of the Pauli operator as we only concern the commutation relation between Pauli operators. By recursively applying~\eqref{eq:commute_relation} and $\mathbf{C}_{L+1} \mathbf{s}_{\mathcal{G}_{1,1}} \mathbf{C}_{L+1}^\dagger = \mathbf{s}_{L+1}$, we obtain
\begin{equation}
\mathbf{s}_{\mathcal{G}_{1,1}}  = \mathbf{C}_{L+1}^\dagger \cdots \mathbf{C}_m^\dagger \, \mathbf{s}_m \, \mathbf{C}_m \cdots \mathbf{C}_{L+1}.
\end{equation}
By applying the same procedure to the Pauli operators of Pauli path that pass through the gadget layers, we obtain that
$\mathbf{s}_{\mathcal{G}_{i,j}} = \mathbf{s}_{\mathcal{G}_{1,1}}$. Due to the commutation condition, this implies that $[\mathbf{s}_{\mathcal{G}_{1,1}}, P(j)_iP(j)_{i+n}] = 0$ for all $i,j$.

Since $\mathbf{s}_{\mathcal{G}_{1,1}}$ commutes with all $P(j)_iP(j)_{i+n}$, it follows that
\[
\mathbf{C}_{L+1}^\dagger \cdots \mathbf{C}_m^\dagger \, \mathbf{s}_m \, \mathbf{C}_m \cdots \mathbf{C}_{L+1} \, P(j)_iP(j)_{i+n}
= P(j)_iP(j)_{i+n} \, \mathbf{C}_{L+1}^\dagger \cdots \mathbf{C}_m^\dagger \, \mathbf{s}_m \, \mathbf{C}_m \cdots \mathbf{C}_{L+1}.
\]
This implies
\[
\mathbf{s}_m \, \mathbf{C}_m \cdots \mathbf{C}_{L+1} \, P(j)_iP(j)_{i+n} \, \mathbf{C}_{L+1}^\dagger \cdots \mathbf{C}_m^\dagger 
= \mathbf{C}_m \cdots \mathbf{C}_{L+1} \, P(j)_iP(j)_{i+n} \, \mathbf{C}_{L+1}^\dagger \cdots \mathbf{C}_m^\dagger \mathbf{s}_m,
\]
and hence $\mathbf{s}_m$ commutes with each $\overline{\mathbf{P}}_{i,j} := \mathbf{C}_m \cdots \mathbf{C}_{L+1} \, P(j)_iP(j)_{i+n} \, \mathbf{C}_{L+1}^\dagger \cdots \mathbf{C}_m^\dagger$. Through the same analysis in \cref{lem:orthogonality_MPQC}, $[\mathbf{s}_m,\overline{\mathbf{P}}_{i,j}] = 0$ if and only if $[P,{C}_m \cdots {C}_{L+1}P(j)_{i+n}{C}_{L+1}^\dagger \cdots{C}_m^\dagger ] = 0$. Also since $\{{C}_m \cdots {C}_{L+1}P(j)_{i+n}{C}_{L+1}^\dagger \cdots{C}_m^\dagger\}$ can generate the entire $n$-qubit Pauli group, by \cref{lem:split}, the Pauli word can commute with all ${C}_m \cdots {C}_{L+1}P(j)_{i+n}{C}_{L+1}^\dagger \cdots{C}_m^\dagger$ must be $I$. This leads to a contradiction with the assumption that $P \neq I$.
\end{proof}

\cref{lem:orthogonality_MPQC} and \cref{lem:non_constant_MPQC} guarantee that the circuit architecture of MPQC always satisfies the conditions required in \cref{lem:two_variances_MPQC}. Therefore, according to \cref{lem:two_variances_MPQC} both the variance and the gradient variance of the loss function can be explicitly expressed using the Pauli path integral formulation:

\begin{lemma}\label{lem:two_variance_MPQC}
  Consider a MPQC $\channelM{\bm{\theta},\bm{\theta}_\mathcal{G}}$ taking in parameters $\left(\bm{\theta},\bm{\theta}_\mathcal{G}\right)\in \left[0, 2\pi \right)^{m+3n}$ measured with observable $O = \sum_{\alpha} c_\alpha P_\alpha$. The variance and the gradient variance of the loss function $\lossM$ under the Pauli path integral formulation can be expressed as
  \begin{equation}\label{eq:variance_MPQC}
      \operatorname{Var}_{(\bm{\theta},\bm{\theta}_\mathcal{G})}\left[\lossM\right] = \frac{1}{4^{m+3n}} 
      \sum_{\substack{\bm{\theta} \in \AngleSet^m \\
      \bm{\theta}_\mathcal{G} \in \AngleSet^{3n}}} 
      \sum_{\alpha} c_\alpha^2 f\left(\uniquePathM,\left(\bm{\theta},\bm{\theta}_\mathcal{G}\right),I\otimes P_\alpha,op\left(\ket{0}\bra{0}\right)^{\otimes n}
\otimes\rho\right)^2.      
  \end{equation}
\begin{equation}\label{eq:gradient_variance_MPQC}
    \operatorname{Var}_{(\bm{\theta},\bm{\theta}_\mathcal{G})}\left[ \frac{\partial \lossM}{\partial \theta_j}\right] = \frac{1}{4^{m+3n}} 
    \sum_{\substack{\bm{\theta} \in \AngleSet^m \\
    \bm{\theta}_\mathcal{G} \in \AngleSet^{3n}\\
        \{\mathbf{P}_j, \mathbf{s}_j^{(\bm{\theta},\alpha)}\} = 0}} 
    \sum_{\alpha} c_\alpha^2 f\left(\uniquePathM,\left(\bm{\theta},\bm{\theta}_\mathcal{G}\right),I\otimes P_\alpha,op\left(\ket{0}\bra{0}\right)^{\otimes n}
\otimes\rho\right)^2,
\end{equation}
where $\uniquePathM$ is the unique Pauli path such that $f\left(\uniquePathM,\left(\bm{\theta},\bm{\theta}_\mathcal{G}\right),I\otimes P_\alpha,op\left(\ket{0}\bra{0}\right)^{\otimes n}
\otimes\rho\right) \neq 0$, if $\tr{\mathbf{s}_0op\left(\ket{0}\bra{0}\right)^{\otimes n}
\otimes\rho}\neq 0$.
\end{lemma}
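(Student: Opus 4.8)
The plan is to recognize the MPQC as an ordinary $(m+3n)$-parameter PQC on $2n$ qubits and then invoke \cref{lem:two_variances_MPQC}. The setup preceding the statement already packages the MPQC channel into the unitary $\unitaryM{\bm{\theta},\bm{\theta}_\mathcal{G}}$ on $2n$ qubits, with the ancilla-discarding absorbed into the observable $I\otimes O$ and the $op$ operations absorbed into the input state $op\!\left(\ketbra{0}\right)^{\otimes n}\otimes\rho$. One then checks that $\unitaryM{\bm{\theta},\bm{\theta}_\mathcal{G}}$ is indeed a circuit of the form \eqref{eq:parameterized_circuit}: its $m+3n$ elementary gates are the $\mathbf{U}_i(\theta_i)=I\otimes U_i(\theta_i)$ inherited from the original ansatz---each a Pauli rotation followed by a Clifford, trivially padded by $I$ on the ancilla register---together with the $3n$ two-qubit gadget rotations $R_{i,j}(\theta_{\mathcal{G}_{i,j}})=e^{-i\frac{\theta_{\mathcal{G}_{i,j}}}{2}P(j)_iP(j)_{i+n}}$, each again a Pauli rotation with trivial Clifford part. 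The observable $I\otimes O$ is traceless since $\tr{I\otimes O}=2^n\tr{O}=0$, and $op\!\left(\ketbra{0}\right)^{\otimes n}\otimes\rho$ is an admissible (possibly mixed) input state, so $\lossM$ is exactly the type of loss function for which \cref{lem:two_variances_MPQC} is stated.

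Next I would verify the two hypotheses of \cref{lem:two_variances_MPQC} for this circuit. The orthogonality condition \eqref{eq:orthogonality_condition} for the Pauli paths appearing in the expansion \eqref{eq:MPQC_Pauli_path_integral_noiseless} is exactly the content of \cref{lem:orthogonality_MPQC}. The requirement that each $\Vexp{I\otimes P_\alpha}$ not be a non-zero constant function of $(\bm{\theta},\bm{\theta}_\mathcal{G})$ follows from \cref{lem:non_constant_MPQC}: because $O$ is traceless its Pauli expansion $O=\sum_\alpha c_\alpha P_\alpha$ involves only nontrivial Pauli words $P_\alpha\neq I$, and \cref{lem:non_constant_MPQC} asserts precisely that $\Vexp{I\otimes P_\alpha}$ cannot be a non-zero constant for any such $P_\alpha$.

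With both hypotheses in hand, \cref{lem:two_variances_MPQC} applied to the $(m+3n)$-parameter circuit gives \eqref{eq:expression_variance} and \eqref{eq:expression_variance_gradient} transcribed into the present notation: the discrete sums run over $\bm{\theta}\in\AngleSet^{m}$ and $\bm{\theta}_\mathcal{G}\in\AngleSet^{3n}$, the normalization is $1/4^{m+3n}$, and when all rotation angles lie in $\AngleSet$ each backward propagation factor $\tr{\mathbf{s}_i\mathbf{U}_i(\theta_i)\mathbf{s}_{i-1}\mathbf{U}_i(\theta_i)^\dagger}$, and likewise each gadget factor, is nonzero for exactly one choice of the preceding Pauli operator; hence starting from $\mathbf{s}_m\propto I\otimes P_\alpha$ there is a unique non-vanishing path $\uniquePathM$, with its contribution nonzero iff $\tr{\mathbf{s}_0\,op\!\left(\ketbra{0}\right)^{\otimes n}\otimes\rho}\neq 0$. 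This yields \eqref{eq:variance_MPQC}. For \eqref{eq:gradient_variance_MPQC}, \cref{lem:two_variances_MPQC} adds only the restriction that the Pauli generator $\mathbf{P}_j$ of the $j$-th rotation gate anticommute with $\uniquePathMpos{j}$; reading this off completes the proof.

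No step here is genuinely hard---the substantive content was already absorbed into \cref{lem:orthogonality_MPQC,lem:non_constant_MPQC}. The only points requiring care are bookkeeping: correctly counting the $m+3n$ rotation angles and the matching power $4^{m+3n}$, confirming that both the identity-padded original gates and the two-qubit gadget rotations conform to the ``Pauli rotation times Clifford'' template of \eqref{eq:parameterized_circuit}, and checking that $I\otimes O$ and $op\!\left(\ketbra{0}\right)^{\otimes n}\otimes\rho$ meet the traceless-observable and arbitrary-input-state conventions under which \cref{lem:two_variances_MPQC} was proved.
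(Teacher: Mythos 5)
Your proposal is correct and matches the paper's own argument: the paper likewise treats the MPQC as a $(2n)$-qubit, $(m+3n)$-parameter PQC with input $op(\ketbra{0})^{\otimes n}\otimes\rho$ and observable $I\otimes O$, verifies the two hypotheses of \cref{lem:two_variances_MPQC} via \cref{lem:orthogonality_MPQC} and \cref{lem:non_constant_MPQC}, and then reads off the two displayed formulas. Your additional bookkeeping checks (tracelessness of $I\otimes O$, the gadget rotations fitting the Pauli-rotation-times-Clifford template) are consistent with, and slightly more explicit than, what the paper records.
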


\section{Proof of \cref{thm:absence_BP}}\label{app:proof_absence_BP}
\subsection{Impact of the gadget $\mathcal{G}(\boldsymbol{\theta})$ on pauli paths}\label{app:impact_gadget}
Here, we discuss the impact of $\mathcal{G}(\boldsymbol{\theta})$ on the Pauli path in the Heisenberg picture. Suppose that the Pauli operator at the output of the gadget $\mathcal{G}(\boldsymbol{\theta})$ in \cref{fig:gadget_analysis} is $I \otimes P$. Then, for certain subsets of angle choices $\theta_1, \theta_2, \theta_3 \in \AngleSet$, we can determine the corresponding Pauli operators $P_1$, $P_2$, and $P_3$, where $\theta_1$, $\theta_2$, and $\theta_3$ are the rotation angles for the $R_{XX}$, $R_{YY}$, and $R_{ZZ}$ gates, respectively.

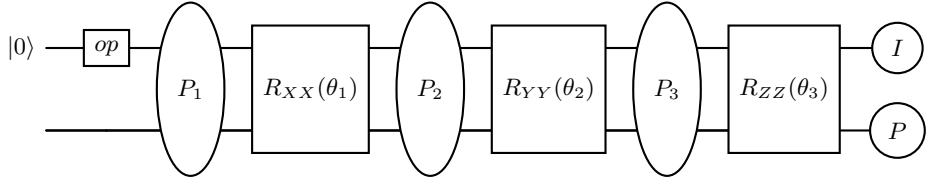
\begin{figure}[h]
  \centering
  \begin{quantikz}
    \lstick{$\ket{0}$}  & \gate[1]{op} &\gate[2, style={draw, shape=ellipse, minimum width=0.4cm, minimum height=0.6cm}]{\raisebox{0.2ex}{$P_1$}}& \gate[2]{R_{XX}(\theta_1)} & \gate[2, style={draw, shape=ellipse, minimum width=0.4cm, minimum height=0.6cm}]{\raisebox{0.2ex}{$P_2$}}& \gate[2]{R_{YY}(\theta_2)} & \gate[2, style={draw, shape=ellipse, minimum width=0.4cm, minimum height=0.6cm}]{\raisebox{0.2ex}{$P_3$}}& \gate[2]{R_{ZZ}(\theta_3)}& \gate[1, style={draw, shape=circle}]{I} \\
   & \qw &\qw &\qw &\qw & \qw & \qw & \qw& \gate[1, style={draw, shape=circle}]{P}\\
  \end{quantikz}
  \caption{Effect of the gadget $\mathcal{G}(\boldsymbol{\theta})$ on Pauli paths in the Heisenberg picture.}\label{fig:gadget_analysis}
\end{figure}
Analyzing the backward propagation of Pauli path, direct calculation based on~\eqref{eq:gate_term_in_f_discrete} yields that
\begin{itemize}
  \item $P = I$, $\theta_1, \theta_2, \theta_3 \in \AngleSet$, $P_1 = P_2 = P_3 = II$.
  \item $P = X$, $\{IX, ZZ\} = 0$, $\theta_3 \in \{\frac{\pi}{2},\frac{3\pi}{2}\}\rightarrow P_3 = ZY$; $\{ZY, YY\} = 0$, $\theta_2 \in \{\frac{\pi}{2},\frac{3\pi}{2}\} \rightarrow P_2 = XI$; $[XI, XX] = 0$, $\theta_1 \in \AngleSet \rightarrow P_1 = XI$.
  \item $P = Y$, $\{IY, ZZ\} = 0$, $\theta_3 \in \{\frac{\pi}{2},\frac{3\pi}{2}\}\rightarrow P_3 = ZX$; $[ZX, YY] = 0$, $\theta_2 \in \AngleSet \rightarrow P_2 = ZX$; $\{ZX, XX\} = 0$, $\theta_1 \in \{\frac{\pi}{2},\frac{3\pi}{2}\} \rightarrow P_1 = YI$.
  \item $P = Z$, $[IZ, ZZ] = 0$, $\theta_3 \in \AngleSet\rightarrow P_3 = IZ$; $\{IZ, YY\} = 0$, $\theta_2 \in \{\frac{\pi}{2},\frac{3\pi}{2}\} \rightarrow P_2 = YX$; $\{YX, XX\} = 0$, $\theta_1 \in \{\frac{\pi}{2},\frac{3\pi}{2}\} \rightarrow P_1 = ZI$.
\end{itemize}
Here we also we ignore the sign $\pm$ in front of the Pauli operator. The above result indicates that, from the perspective of the Heisenberg picture, among the $64$ possible combinations of $\theta_1, \theta_2, \theta_3 \in \AngleSet$, there exist at least $2 \times 2 \times 4 = 16$ configurations that lead to $P_1 = P \otimes I$.

Moreover, for any given single-qubit Pauli operator $P$, there exist $\theta_1, \theta_2, \theta_3 \in \AngleSet$ such that $P_1 = I \otimes P$. Specifically, we have:

\begin{itemize}
  \item $P = I$, $\theta_1, \theta_2, \theta_3 \in \AngleSet$, $P_1 = P_2 = P_3 = II$.
  \item $P = X$, $\{IX, ZZ\} = 0$, $\theta_3 \in \{0,\pi\}\rightarrow P_3 = IX$; $\{IX, YY\} = 0$, $\theta_2 \in \{0,\pi\} \rightarrow P_2 = IX$; $[IX, XX] = 0$, $\theta_1 \in \AngleSet \rightarrow P_1 = IX$.
  \item $P = Y$, $\{IY, ZZ\} = 0$, $\theta_3 \in \{0,\pi\}\rightarrow P_3 = IY$; $[IY, YY] = 0$, $\theta_2 \in \AngleSet \rightarrow P_2 = IY$; $\{IY, XX\} = 0$, $\theta_1 \in \{0,\pi\} \rightarrow P_1 = IY$.
  \item $P = Z$, $[IZ, ZZ] = 0$, $\theta_3 \in \AngleSet\rightarrow P_3 = IZ$; $\{IZ, YY\} = 0$, $\theta_2 \in \{0,\pi\} \rightarrow P_2 = IZ$; $\{IZ, XX\} = 0$, $\theta_1 \in \{0,\pi\} \rightarrow P_1 = IZ$.
\end{itemize}
Therefore, there also exist 16 choices of $\theta_1, \theta_2, \theta_3 \in \AngleSet$ that leave the Pauli path unchanged; that is, the resulting Pauli operator $P_1$ remains $I \otimes P$.

Next, we analyze the remaining 32 configurations of  $\theta_1, \theta_2, \theta_3 \in \AngleSet$ when the Pauli operator $P$ is non-trivial. Consider the case $P = X$ as an example. The analysis proceeds as follows:
\begin{itemize}
  \item $P = X$, $\{IX, ZZ\} = 0$, $\theta_3 \in \{0,\pi\}\rightarrow P_3 = IX$; $\{IX, YY\} = 0$, $\theta_2 \in \{\frac{\pi}{2},\frac{3\pi}{2}\} \rightarrow P_2 = YZ$; $[YZ, XX] = 0$, $\theta_1 \in \AngleSet \rightarrow P_1 = YZ$.
  \item $P = X$, $\{IX, ZZ\} = 0$, $\theta_3 \in \{\frac{\pi}{2},\frac{3\pi}{2}\}\rightarrow P_3 = ZY$; $\{ZY, YY\} = 0$, $\theta_2 \in \{0,\pi\} \rightarrow P_2 = ZY$; $[ZY, XX] = 0$, $\theta_1 \in \AngleSet \rightarrow P_1 = ZY$.
\end{itemize}
Thus, among these 32 configurations, 16 of them transform $IX$ to $YZ$, while the other 16 transform $IX$ to $ZY$. Following similar calculations, we find that: 
\begin{itemize}
  \item When $P = Y$, 16 configurations of $\theta_1, \theta_2, \theta_3 \in \AngleSet$  map $IY$ to $XZ$, and 16 to $ZX$.
  \item When $P = Z$ 16 configurations of $\theta_1, \theta_2, \theta_3 \in \AngleSet$ map $IZ$ to $XY$, and 16 to $YX$.
\end{itemize}

To summarize, among all $64$ possible angle combinations with $\theta_1, \theta_2, \theta_3 \in \AngleSet$, the operator $P_1$ has the following possibilities:
\begin{itemize}
  \item  If $P = I$, then $P_1 = II$ for all 64 configurations of $\theta_1, \theta_2, \theta_3 \in \AngleSet$.
  \item $P \neq I$, $P_1 = \begin{cases}
      PI, & \text{for 16 configurations of } \theta_1, \theta_2, \theta_3 \in \AngleSet; \\
      IP, & \text{for 16 configurations of } \theta_1, \theta_2, \theta_3 \in \AngleSet;\\
      Q_1Q_2, & \text{for 16 configurations of } \theta_1, \theta_2, \theta_3 \in \AngleSet;\\
      Q_2Q_1, & \text{for 16 configurations of } \theta_1, \theta_2, \theta_3 \in \AngleSet,
      \end{cases}$\\
  where $\{Q_1,Q_2,P\} = \{X,Y,Z\}$.
\end{itemize}
\begin{remark}
We analyze the effect of the gadget $\mathcal{G}(\boldsymbol{\theta})$ on the backward propagation of Pauli paths from an operational perspective. When the three parameters of the gadget are chosen from the discrete set $\AngleSet$, and $P \neq I$, we find that in $16$ out of the $64$ possible angle combinations—that is, in a proportion of $1/4$—the backward-propagated operator $IP$ is transformed via a ``swap'' operation. In another $1/4$ of the combinations, the Pauli operator remains unchanged during the backward propagation.

Furthermore, on the system qubit, for any given Pauli operator $P'$, there exists a proportion of $1/4$ among the total angle combinations for which that $P'$ appears after backward propagation when $P\neq I$. This reflects the uniformity of Pauli operator appearances under the action of the gadget when the angles are sampled from the discrete set.
\end{remark}
\subsection{Lower bound of the variance of the loss function of MPQC}\label{app:absence_BP_MPQC}
In this subsection, we derive a lower bound on the variance of the loss function for well-constructed MPQCs. According to Ref.~\cite{arrasmith2022equivalence}, a non-vanishing variance implies the absence of barren plateaus. Hence, our result confirms that MPQCs do not suffer from barren plateau.

The lower bound of the variance can be described by the following theorem:
\begin{theorem}\label{thm:absence_BP_MPQC}[\cref{thm:absence_BP}, formal version]
Consider a $k$-local observable $O = \sum_\alpha c_\alpha P_\alpha$~(i.e., each Pauli word $P_\alpha$ acts non-trivially on at most $k$ qubits) and an MPQC $\channelM{\bm{\theta},\bm{\theta}_\mathcal{G}}$ which is achieved by inserting a layer of the gadgets after the $l$-th layer (also, $U_L(\theta_L)$) of the PQC. Suppose for each Pauli word $P_\alpha$, the support size of its backward light cone at the gadget layer is upper bounded by $K=\order{\log n}$. Then the variance of the loss function $\lossM = \tr{\channelM{\bm{\theta},\bm{\theta}_\mathcal{G}}\left(\rho\right)O}$ is lower bounded by
  \[
\operatorname{Var}_{(\bm{\theta},\bm{\theta}_\mathcal{G})}\left[\lossM\right] \geq \left(\frac{\tau}{4}\right)^{K}\left\|O\right\|_{HS}^2 =\Omega\left(\frac{1}{\mathrm{poly}(n)}\right),
  \]
  where $\left\|O\right\|_{HS}= \sqrt{\frac{\tr{O^2}}{2^n}} = \sqrt{\sum_\alpha c_\alpha^2}$.
\end{theorem}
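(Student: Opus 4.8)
The plan is to start from the exact Pauli-path expression for the variance of an MPQC proved in \cref{lem:two_variance_MPQC}, i.e.\ Eq.~\eqref{eq:variance_MPQC}, which writes $\operatorname{Var}_{(\bm{\theta},\bm{\theta}_\mathcal{G})}\left[\lossM\right]$ as $4^{-(m+3n)}$ times a sum, over all Clifford parameter settings $(\bm{\theta},\bm{\theta}_\mathcal{G})\in\AngleSet^{m+3n}$ and over all Pauli terms $P_\alpha$ of $O$, of the nonnegative quantities $c_\alpha^2\, f(\uniquePathM,\dots)^2$. Since every summand is nonnegative, it suffices, for each $P_\alpha$, to exhibit a large family of Clifford settings on which the unique surviving Pauli path has $f^2\ge\tau^{K}$, and then to count that family.

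For a fixed $\alpha$ and an arbitrary choice of the Clifford angles of the gates located \emph{after} the gadget layer, I would Heisenberg-propagate $I\otimes P_\alpha$ backwards to the output of the gadget layer, obtaining $I\otimes\widetilde P_\alpha$ whose system-register support $S_\alpha$ satisfies $|S_\alpha|\le K$ by the backward-light-cone hypothesis. Using the single-gadget table derived in \cref{app:impact_gadget}---for $P\neq I$ the $64$ discrete angle triples split $16/16/16/16$ among the outcomes $P\otimes I$, $I\otimes P$, $Q_1Q_2$, $Q_2Q_1$---I would assign, to each of the $\le K$ \emph{active} gadgets (those on a qubit in $S_\alpha$), one of the $16$ ``swap-to-ancilla'' configurations producing $P^{(i)}\otimes I$, while leaving each of the $n-|S_\alpha|$ inactive gadgets (whose system Pauli is already $I$) with any of its $64$ configurations. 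The point of this choice is that the system part of $\mathbf{s}_0$ is then forced to be $I^{\otimes n}$ for \emph{every} choice of the Clifford angles before the gadget layer, so the tail overlap factorizes as $\tr{\mathbf{s}_0^{\mathrm{anc}}\,op(\ketbra{0})^{\otimes n}}\cdot\tr{I^{\otimes n}\rho}$ and is nonzero; hence this path is precisely the unique path $\uniquePathM$ of \cref{lem:two_variance_MPQC}.

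It then remains to evaluate $f$ and count. All intermediate trace factors equal $\pm1$ because every gate is Clifford and the path is the unique nonzero one, the head factor is $\tr{(I\otimes P_\alpha)\mathbf{s}_m}=2^{n}$, and the tail factor has modulus $2^{-n}\prod_{i\in S_\alpha}|\tr{P^{(i)}\,op(\ketbra{0})}|\ge 2^{-n}\tau^{|S_\alpha|/2}$ by \eqref{eq:op_condition}, so $f^2\ge\tau^{|S_\alpha|}\ge\tau^{K}$. The number of favorable settings for a given $\alpha$ and a given choice of post-gadget angles is $4^{m_{\mathrm{pre}}}\cdot 16^{|S_\alpha|}\cdot 64^{\,n-|S_\alpha|}$ (all angles of the original circuit are free, $16$ per active gadget, $64$ per inactive one); summing over post-gadget branches and dividing by $4^{m+3n}=4^{m}\cdot 64^{n}$ leaves, for each branch, a contribution of at least $\tau^{K}\cdot 4^{-|S_\alpha|}\ge(\tau/4)^{K}$, using $|S_\alpha|\le K$. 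Summing over $\alpha$ with weights $c_\alpha^2$ gives $\operatorname{Var}_{(\bm{\theta},\bm{\theta}_\mathcal{G})}\left[\lossM\right]\ge(\tau/4)^{K}\sum_\alpha c_\alpha^2=(\tau/4)^{K}\left\|O\right\|_{HS}^{2}$, which is $\Omega(1/\mathrm{poly}(n))$ since $K=\order{\log n}$ and $\tau$ is a fixed positive constant; locality of $O$ is what makes $K=\order{\log n}$ attainable by placing the gadget layer appropriately.

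The step I expect to cost the most effort is the second one: verifying rigorously that the ``swap-to-ancilla'' assignments genuinely realize the unique nonzero Pauli path of \cref{lem:two_variance_MPQC}---i.e.\ that collapsing the system register to the identity is consistent with the deterministic Clifford back-propagation on \emph{both} sides of the gadget layer and does not annihilate the overlap with $op(\ketbra{0})^{\otimes n}\otimes\rho$---together with the bookkeeping of the $2^{n}$ normalization factors and the precise use of the $16/16/16/16$ table from \cref{app:impact_gadget}. The light-cone assumption enters exactly here: it is what guarantees that $\prod_{i\in S_\alpha}|\tr{P^{(i)}\,op(\ketbra{0})}|$, and hence $\tau^{|S_\alpha|/2}$, is only polynomially (not exponentially) small in $n$.
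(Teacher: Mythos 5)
Your proposal is correct and follows essentially the same route as the paper's own proof: it invokes the Pauli-path variance formula of \cref{lem:two_variance_MPQC}, restricts the gadget angles to the ``swap-to-ancilla'' configurations (the paper's set $M_{\mathrm{swap}}(\bm{\theta})$, counted as $16^{K}\cdot 64^{\,n-K}$ out of $64^{n}$), and lower-bounds the surviving path weight by $\tau^{K}$ via \cref{eq:op_condition}, yielding $(\tau/4)^{K}\left\|O\right\|_{HS}^{2}$. The normalization bookkeeping you flag as the delicate step is handled in the paper exactly as you anticipate, with the $2^{\pm n}$ factors cancelling between the head and tail traces.
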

\begin{proof}
According to \cref{eq:variance_MPQC}, the variance of the loss function for the MPQC $\channelM{\bm{\theta},\bm{\theta}_\mathcal{G}}$ can be written and lower bounded as follows:
\begin{equation}
  \begin{aligned}
    \operatorname{Var}_{(\bm{\theta},\bm{\theta}_\mathcal{G})}\left[\lossM\right]&= \frac{1}{4^{m+3n}} 
      \sum_{\substack{\bm{\theta} \in \AngleSet^m \\
      \bm{\theta}_\mathcal{G} \in \AngleSet^{3n}}} 
      \sum_{\alpha} c_\alpha^2 f\left(\uniquePathM,\left(\bm{\theta},\bm{\theta}_\mathcal{G}\right),I\otimes P_\alpha,op\left(\ket{0}\bra{0}\right)^{\otimes n}
\otimes\rho\right)^2\\
    &\geq \frac{1}{4^{m+3n}} 
      \sum_{\substack{\bm{\theta} \in \AngleSet^m \\
      \bm{\theta}_\mathcal{G} \in M_{\mathrm{swap}}(\bm{\theta})}} 
      \sum_{\alpha} c_\alpha^2 f\left(\uniquePathM,\left(\bm{\theta},\bm{\theta}_\mathcal{G}\right),I\otimes P_\alpha,op\left(\ket{0}\bra{0}\right)^{\otimes n}
\otimes\rho\right)^2.\\
  \end{aligned}
\end{equation}  
Here, we consider a specific subset $M_{\mathrm{swap}}(\bm{\theta}) \subseteq \AngleSet^{3n}$, defined as the collection of discrete angle configurations such that, for each $\bm{\theta}_\mathcal{G} \in M_{\mathrm{swap}}(\bm{\theta})$, all the gadgets transform the backward-propagated operator $IP$ into $PI$. Here, the input $\bm{\theta}\in \AngleSet^m$ determines the Pauli operators that are backward propagated to the gadget layer.
When the backward-propagated operator is nontrivial (i.e., $P \neq I$) on the $i$-th qubit, which occurs on at most $K$ qubits, we choose the angle combination of $\bm{\theta}_{\mathcal{G}_i}$ according to the first case in \cref{app:impact_gadget}, which provides a construction of 16 configurations of $\bm{\theta}_{\mathcal{G}_i}\in \AngleSet^3$. On the otherhand, when the backward-propagated operator is $I$, which holds for at least $n - K$ qubits, any angle combination $\bm{\theta}_{\mathcal{G}} \in \AngleSet^3$ satisfies the required condition. It implies that for arbitrary $\bm{\theta} \in \AngleSet^m$,
\begin{equation}
\left|M_{\mathrm{swap}}(\bm{\theta})\right|\geq 4^{3(n-K)}16^{K} = 4^{3n}\left(\frac{1}{4}\right)^K.  
\end{equation}
The effect of choosing $\bm{\theta}_\mathcal{G} \in M_{\mathrm{swap}}(\bm{\theta})$ on the Pauli path is illustrated in \cref{fig:Pauli_path_MPQC}.
\begin{figure}[H]
    \centering
      \includegraphics[width=0.5\textwidth]{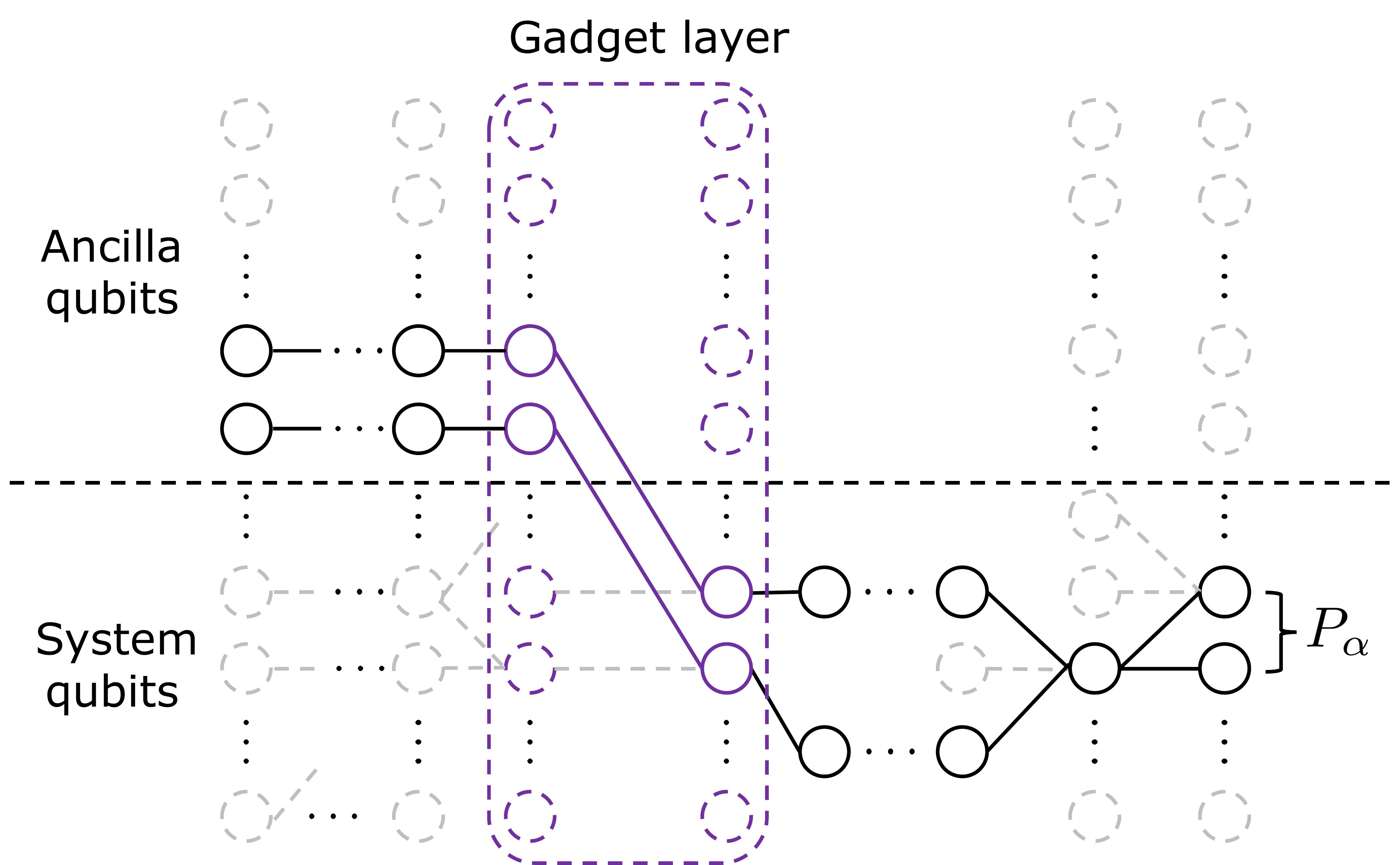}
\caption{Pauli path of MPQC propagated from the observable $P_\alpha$. Each column corresponds to a Pauli operator in the Pauli path, and each circle in the column represents a Pauli operator acting on one specfic qubit. Solid circles denote nontrivial Pauli operators (i.e., not equal to $I$), while dashed circles indicate identity operators. Lines between Pauli operators at adjacent layers represent quantum gates acting on the corresponding qubits. All gates within the gadget layer are grouped into a single layer, as indicated by the purple dashed box. Purple circles represent Pauli operators immediately before and after the gadget layer in the backward propagation. In this example, we choose $\bm{\theta} \in \AngleSet^m$ and $\bm{\theta}_\mathcal{G} \in M_{\mathrm{swap}}(\bm{\theta})$, so that the backward-propagated Pauli path is uniquely determined. The configuration $\bm{\theta}_\mathcal{G} \in M_{\mathrm{swap}}(\bm{\theta})$ ensures that nontrivial Pauli operators originally acting on system qubits are swapped to the corresponding ancilla qubits.}\label{fig:Pauli_path_MPQC}
\end{figure} 
Since the Pauli operators remaining on the system qubits before the gadget layer are all identities when we choose $\bm{\theta}_\mathcal{G} \in M_{\mathrm{swap}}(\bm{\theta})$, the variance of the loss function is lower bounded by
\begin{equation}
  \begin{aligned}
    \operatorname{Var}_{(\bm{\theta},\bm{\theta}_\mathcal{G})}\left[\lossM\right] & \geq \frac{1}{4^{m+3n}} 
      \sum_{\substack{\bm{\theta} \in \AngleSet^m \\
      \bm{\theta}_\mathcal{G} \in M_{\mathrm{swap}}(\bm{\theta})}} 
      \sum_{\alpha} c_\alpha^2 f\left(\uniquePathM,\left(\bm{\theta},\bm{\theta}_\mathcal{G}\right),I\otimes P_\alpha,op\left(\ket{0}\bra{0}\right)^{\otimes n}
\otimes\rho\right)^2\\
    & \geq \frac{1}{4^{m+3n}} \underset{\substack{\bm{\theta} \in \AngleSet^m\\
    \bm{\theta}_\mathcal{G} \in M_{\mathrm{swap}}(\bm{\theta})}}{\sum}\sum_{\alpha} c_\alpha^2 \tr{ \left(I\otimes P_\alpha\right)^2/2^n}^2 \tr{\mathbf{s}_L^{\hspace{0.1em}\left((\bm{\theta},\bm{\theta}_\mathcal{G}),\alpha\right)} op \left(\ket{0}\bra{0}\right)^{\otimes n}\otimes \rho}^2\\
& = \frac{1}{4^{m+3n}} \underset{\substack{\bm{\theta} \in \AngleSet^m\\
    \bm{\theta}_\mathcal{G} \in M_{\mathrm{swap}}(\bm{\theta})}}{\sum}\sum_{\alpha} c_\alpha^2 \tr{\mathbf{s}_L^{\hspace{0.1em}\left((\bm{\theta},\bm{\theta}_\mathcal{G}),\alpha\right)}|_{\leq n} op \left(\ket{0}\bra{0}\right)^{\otimes n}}^2\tr{I\rho}^2\\
   & \geq \frac{1}{4^{m+3n}} \underset{\substack{\bm{\theta} \in \AngleSet^m}}{\sum} \left|M_{\mathrm{swap}}(\bm{\theta})\right|\sum_{\alpha} c_\alpha^2 \tau^{K}\\
    &\geq \frac{1}{4^{m+3n}} 4^{m} 4^{3n}\left(\frac{1}{4}\right)^K\sum_{\alpha} c_\alpha^2 \tau^{K}\\
    & =\left(\frac{\tau}{4}\right)^{K}\sum_{\alpha} c_\alpha^2 = \left(\frac{\tau}{4}\right)^{K}\left\|O\right\|_{HS}^2=\Omega\left(\frac{1}{\mathrm{poly}(n)}\right).
  \end{aligned}
\end{equation} 
Here, the first equality holds because $\mathbf{U}_i(\theta_i)$ for $i \leq L$ has no effect on the Pauli path, as all Pauli operators acting on the system qubits are identities. The notation $\mathbf{s}_L^{\hspace{0.1em}\left((\bm{\theta},\bm{\theta}_\mathcal{G}),\alpha\right)}|_{\leq n}$ denotes the Pauli operator supported on the first $n$ qubits of $\mathbf{s}_L^{\hspace{0.1em}\left((\bm{\theta},\bm{\theta}_\mathcal{G}),\alpha\right)}$.
The third inequality holds since $\mathbf{s}_L^{((\bm{\theta},\bm{\theta}_\mathcal{G}),\alpha)}|_{\le n}$ contains at most $K$ nontrivial single-qubit Pauli operators, each contributing at least a factor $\tau$.


\end{proof}

\section{Proof of \cref{thm:absence_BP_MPQC_parameters}}\label{app:proof_variance_scaling_parameters}

In this section, we prove that introducing a gadget layer consistently improves the trainability of a PQC, by establishing a lower bound on the gradient of the variance $\operatorname{Var}_{\bm{\theta}} \left[ \frac{\partial \loss}{\partial \theta_j} \right]$ for every $\theta_j \in \bm{\theta}$. 

\subsection{Feedforward parameters number of PQCs}
Recall that for each $\theta_j$, we prove that for PQC satisfying the conditions of \cref{lem:two_variances_MPQC}, $\operatorname{Var}_{\bm{\theta}} \left[ \frac{\partial \loss}{\partial \theta_j}\right]$ can be express as 
\begin{equation}
    \operatorname{Var}_{\bm{\theta}} \left[ \frac{\partial \loss}{\partial \theta_j}\right]
    =\frac{1}{4^{m}} 
    \sum_{\substack{
        \bm{\theta} \in \AngleSet^m \\
        \{P_j, s_j^{(\bm{\theta},\alpha)}\} = 0
      }} \sum_\alpha c_\alpha^2 \, f(\bm{\theta},P_\alpha,\vec{s}^{\hspace{0.1em}(\bm{\theta},\alpha)},\rho)^2.
\end{equation}
This indicates that, in order to analyze the scaling of this quantity, we need to characterize the number of angle combinations $\bm{\theta} \in \AngleSet^m$ satisfying $
\{P_j, s_j^{(\bm{\theta},\alpha)}\} = 0$. To this end, we introduce the concept of the \emph{feedforward parameter number} $\Fnum $ associated with the parameter $\theta_j$ and the observable $O$:
\begin{definition}[feedforward parameters number]
For a PQC $\mathcal{C}(\bm{\theta})$ and an observable $O = \sum_{\alpha} c_\alpha P_\alpha$, we consider its parameter $\theta_j$ appearing in a rotation gate $R_{P_j}(\theta_j)$.
Denote by $\{J_\alpha\}_\alpha$ the collection of backward light cones of $\{P_\alpha\}$ that include $R_{P_j}(\theta_j)$, and let $\{\bar{J}_\alpha\}$ represent the portions of these cones that appear after the layer containing $R_{P_j}(\theta_j)$, as illustrated in \cref{fig:feedforward_parameters_number}. For each region $\bar{J}_\alpha$, count the number of rotation gates that contain parameters, resulting in a set $\{\#_R\bar{J}_\alpha\}$. The quantity $\Fnum $ is defined as the maximum value in this set. More precisely,

\begin{equation}
\Fnum  =
\begin{cases*}
0, \text{if } \{\bar{J}_\alpha\} = \emptyset \\
\max_{\alpha}\{\#_R\bar{J}_\alpha\},  \text{otherwise}
\end{cases*}
\end{equation}
\end{definition}

\begin{figure}[H]
    \centering
      \includegraphics[width=0.5\textwidth]{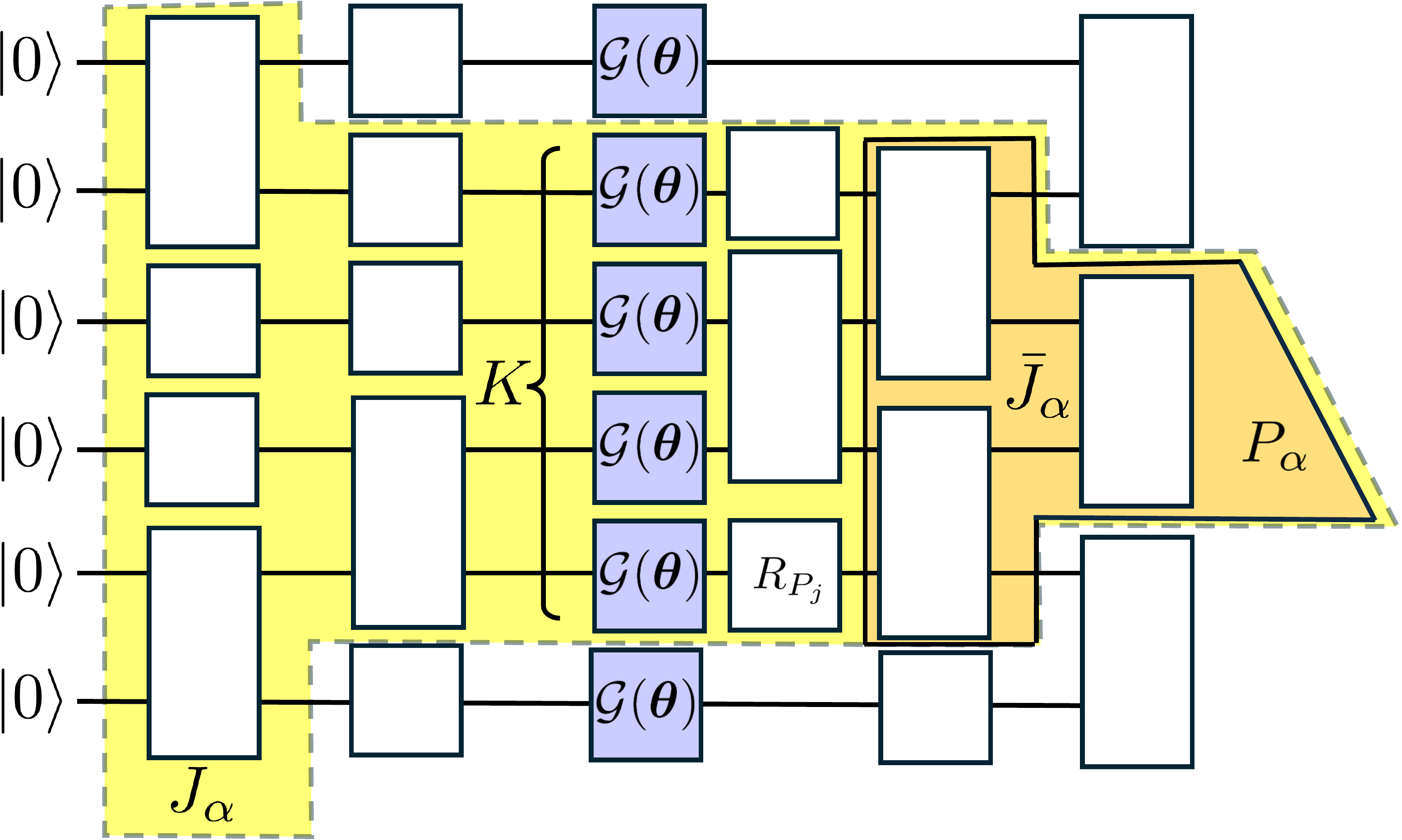}
\caption{Illustration of the feedforward parameters number $\Fnum $. The yellow region represents the backward light cone $J_\alpha$ of a specific Pauli word $P_\alpha$ in the observable $O$. The orange region $\bar{J}_\alpha$ denotes the portion of $J_\alpha$ that appears after the layer containing the rotation gate $R_{P_j}(\theta_j)$. The quantity $\#_R\bar{J}_\alpha$ counts the number of rotation gates with parameters within the region $\bar{J}_\alpha$.
The support size of the backward light cone at layer $l$ (i.e., the layer where the gadget is inserted in MPQCs) is upper bounded by $K$.
}\label{fig:feedforward_parameters_number}
\end{figure}

It is straightforward to observe that in the Heisenberg picture, examining the rotation angles in $\{\bar{J}_\alpha\}$ suffices to determine whether $\{P_j, s_j^{(\bm{\theta},\alpha)}\} = 0$. This implies that, at most $\Fnum$ parameters in $\bm{\theta}$ need to be considered. Then, to characterize the total number of parameters that need to be considered in each Pauli path after the gadget layer, we introduce the following definition.

\begin{definition}[Total number of feedforward parameters after the gadget layer]
For a PQC $\mathcal{C}(\bm{\theta})$ and an observable $O = \sum_{\alpha} c_\alpha P_\alpha$, 
consider its corresponding MPQC $\channelM{\bm{\theta},\bm{\theta}_\mathcal{G}}$ obtained by inserting a gadget layer.
We define the \emph{total number of feedforward parameters after the gadget layer}, denoted by $\Fnumall$, 
as the maximum number of parameters contained in the backward light cones of all Pauli terms $P_\alpha$ that located after the gadget layer.
\end{definition}

It is straightforward to verify that for any parameter $\theta_j$ lie after the gadget layer, we have $\Fnum \le \Fnumall$.

\subsection{Lower bound of gradient variance of the loss function of MPQCs}\label{subapp:lower_bound_gradient}
We are now ready to present the following theorem, which provides the formal version of \cref{thm:absence_BP_MPQC_parameters}:
\begin{theorem}[\cref{thm:absence_BP_MPQC_parameters}, formal version]\label{thm:absence_BP_MPQC_parameters_app}
  Consider an MPQC $\channelM{\bm{\theta},\bm{\theta}_\mathcal{G}}$ and a $k$-local observable $O = \sum_{\alpha} c_\alpha P_\alpha$. Suppose the support size of the backward light cone of each $P_\alpha$ at the gadget layer is upper bounded by $K = \order{\log n}$ and $\Fnumall =\order{\log n}$.
Then, the variance of the gradient with respect to the parameters $\bm{\theta} \in [0, 2\pi)^m$ in the original PQC satisfies the following properties:
\begin{itemize}
\item  For parameter $\theta_j$ located after the gadget layer, if $ \operatorname{Var}_{\bm{\theta}}\left[ \frac{\partial \loss}{\partial{\theta_j}}\right]\neq 0$, then $\operatorname{Var}_{(\bm{\theta},\bm{\theta}_\mathcal{G})}\left[ \frac{\partial \lossM}{\partial{\theta_j}}\right]$ is lower bounded by
\begin{equation}\label{eq:lower_bound_gradient_variance_MPQC}
  \operatorname{Var}_{(\bm{\theta},\bm{\theta}_\mathcal{G})}\left[ \frac{\partial \lossM}{\partial{\theta_j}}\right]\geq \left(\frac{1}{2}\right)^{\Fnum }\left(\frac{\tau}{4}\right)^{K}\left\|O\right\|_{min}^2= \Omega\left(\frac{1}{\mathrm{poly}(n)}\right),
\end{equation}
where $\left\|O\right\|_{\min}:=\min\{\left|c_\alpha\right|>0\}$.

\item For parameter $\theta_j$ located before the gadget layer, $\theta_j$ remains trainable if it is already trainable in the original PQC, which is ensured by the following lower bound on the gradient variance $\operatorname{Var}_{(\bm{\theta},\bm{\theta}_\mathcal{G})}\left[ \frac{\partial \lossM}{\partial{\theta_j}}\right]$:
\begin{equation}\label{eq:lower_bound_gradient_variance_MPQC_same_order}
  \operatorname{Var}_{(\bm{\theta},\bm{\theta}_\mathcal{G})}\left[ \frac{\partial \lossM}{\partial{\theta_j}}\right]\geq\left(\frac{1}{4}\right)^{K} \left(\frac{\left\|O\right\|_{\min}}{\left\|O\right\|_{HS}} \right)^2\operatorname{Var}_{\bm{\theta}}\left[ \frac{\partial \loss}{\partial{\theta_j}}\right] = \Omega\left(\frac{1}{\mathrm{poly}(n)}\right) \operatorname{Var}_{\bm{\theta}}\left[ \frac{\partial \loss}{\partial{\theta_j}}\right].
\end{equation}
  
\end{itemize}
\end{theorem}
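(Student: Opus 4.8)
The plan is to read off both bounds from the exact Pauli-path identity for the MPQC gradient variance in \cref{lem:two_variance_MPQC}, by exhibiting in each case an explicit family of Clifford angle configurations $(\bm{\theta},\bm{\theta}_\mathcal{G})\in\AngleSet^{m+3n}$ whose contributions are bounded below and then counting them. In both cases the gadget layer is handled exactly as in the proof of \cref{thm:absence_BP_MPQC}: because the backward light cone of every $P_\alpha$ has support at most $K$ at the gadget layer, the Pauli that reaches the gadget layer after backward propagation through the post-gadget blocks (all of the form $I\otimes U_i$) is $I^{\otimes n}\otimes P'$ with $|\operatorname{supp}(P')|\le K$; by the per-gadget enumeration in \cref{app:impact_gadget} there are then at least $4^{3(n-K)}16^{K}=4^{3n}(1/4)^{K}$ gadget configurations of a prescribed local type, either the ``swap onto the ancillas'' type (the set $M_{\mathrm{swap}}$ from the proof of \cref{thm:absence_BP_MPQC}) or the ``leave unchanged, ancillas trivial'' type, call it $M_{\mathrm{keep}}$.

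For a parameter $\theta_j$ located \emph{after} the gadget layer, I would first observe that $\operatorname{Var}_{\bm{\theta}}\!\left[\partial_{\theta_j}\loss\right]\neq 0$ forces, via \cref{cor:gradient_variance_upper_bound} (which needs no orthogonality hypothesis on $\mathcal{C}(\bm{\theta})$ --- the reason we invoke it rather than \cref{lem:two_variances_MPQC} here), the existence of $\bm{\theta}^{\ast}\in\AngleSet^{m}$ and an index $\alpha^{\ast}$ with $c_{\alpha^{\ast}}\neq 0$ and $\{P_j,s_j^{(\bm{\theta}^{\ast},\alpha^{\ast})}\}=0$. Let $r\coloneqq\#_R\bar{J}_{\alpha^{\ast}}\le\Fnum$ be the number of feedforward parameters of $\theta_j$ for $\alpha^{\ast}$. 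Since the post-$\theta_j$ blocks of the MPQC are $I\otimes U_i$, fixing those $r$ parameters so as to reproduce the witness transitions gives $\mathbf{s}_j=I^{\otimes n}\otimes s_j^{(\bm{\theta}^{\ast},\alpha^{\ast})}$, hence $\{\mathbf{P}_j,\mathbf{s}_j\}=0$; at each feedforward gate at least two of the four Clifford angles reproduce the required transition (four at a commuting gate, two at an anticommuting one, by \eqref{eq:gate_term_in_f_discrete}), so at least $2^{r}$ feedforward configurations qualify while the remaining $4^{m-r}$ components of $\bm{\theta}$ are free. For every such $\bm{\theta}$ the Pauli reaching the gadget layer is still supported on $\le K$ system qubits, so we may take any $\bm{\theta}_\mathcal{G}\in M_{\mathrm{swap}}$; then the pre-gadget blocks act trivially and $f(\uniquePathM,\ldots)^{2}\ge\tau^{K}$ exactly as in the proof of \cref{thm:absence_BP_MPQC}. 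Substituting these counts into \cref{lem:two_variance_MPQC} and keeping only $\alpha=\alpha^{\ast}$,
\begin{equation*}
  \operatorname{Var}_{(\bm{\theta},\bm{\theta}_\mathcal{G})}\!\left[\frac{\partial \lossM}{\partial\theta_j}\right]\ \ge\ \frac{1}{4^{m+3n}}\cdot 2^{r}4^{m-r}\cdot 4^{3n}\left(\tfrac14\right)^{K}\cdot c_{\alpha^{\ast}}^{2}\tau^{K}\ =\ \left(\tfrac12\right)^{r}\left(\tfrac{\tau}{4}\right)^{K}c_{\alpha^{\ast}}^{2}\ \ge\ \left(\tfrac12\right)^{\Fnum}\left(\tfrac{\tau}{4}\right)^{K}\left\|O\right\|_{\min}^{2},
\end{equation*}
which is $\Omega(1/\mathrm{poly}(n))$ since $K=\order{\log n}$ and $\Fnum\le\Fnumall=\order{\log n}$.

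For a parameter $\theta_j$ located \emph{before} the gadget layer I would compare term by term with $\mathcal{C}(\bm{\theta})$: fix any $\bm{\theta}^{\ast}\in\AngleSet^{m}$ and any $\alpha$ occurring in the constrained sum of \cref{cor:gradient_variance_upper_bound} for $\mathcal{C}(\bm{\theta})$, i.e.\ with $\{P_j,s_j^{(\bm{\theta}^{\ast},\alpha)}\}=0$. Propagating $I\otimes P_\alpha$ backward through the post-gadget blocks yields $I^{\otimes n}\otimes P'$ with $|\operatorname{supp}(P')|\le K$; choosing $\bm{\theta}_\mathcal{G}\in M_{\mathrm{keep}}$ leaves this operator unchanged through the gadget layer with the ancillas trivial, so through every pre-gadget block (all acting on system qubits only) the MPQC path coincides with the $\mathcal{C}(\bm{\theta})$ path on the system register, the ancilla factor is $1$, $\{\mathbf{P}_j,\mathbf{s}_j^{(\bm{\theta}^{\ast},\alpha)}\}=0$ persists, and $f(\uniquePathM,\ldots)^{2}=f(\vec{s}^{(\bm{\theta}^{\ast},\alpha)},\bm{\theta}^{\ast},P_\alpha,\rho)^{2}$. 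Distinct pairs $(\bm{\theta}^{\ast},\alpha)$ yield disjoint families of such MPQC terms, each of size at least $4^{3n}(1/4)^{K}$, so \cref{lem:two_variance_MPQC} gives
\begin{equation*}
  \operatorname{Var}_{(\bm{\theta},\bm{\theta}_\mathcal{G})}\!\left[\frac{\partial \lossM}{\partial\theta_j}\right]\ \ge\ \left(\tfrac14\right)^{K}\cdot\frac{1}{4^{m}}\sum_{\substack{\bm{\theta}^{\ast}\in\AngleSet^{m}\\ \{P_j,s_j^{(\bm{\theta}^{\ast},\alpha)}\}=0}}\sum_{\alpha}c_\alpha^{2}\,f(\vec{s}^{(\bm{\theta}^{\ast},\alpha)},\bm{\theta}^{\ast},P_\alpha,\rho)^{2}\ \ge\ \left(\tfrac14\right)^{K}\left(\frac{\left\|O\right\|_{\min}}{\left\|O\right\|_{HS}}\right)^{2}\operatorname{Var}_{\bm{\theta}}\!\left[\frac{\partial \loss}{\partial\theta_j}\right],
\end{equation*}
the last step being \cref{cor:gradient_variance_upper_bound} rearranged; with $K=\order{\log n}$ this is the claimed bound, and it holds unconditionally (trivially so when $\operatorname{Var}_{\bm{\theta}}[\partial_{\theta_j}\loss]=0$).

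The delicate points are all in the first case: one must extract a Clifford witness for $\mathcal{C}(\bm{\theta})$ \emph{without} assuming its orthogonality condition (hence the use of \cref{cor:gradient_variance_upper_bound}); one must count the feedforward configurations as $2^{r}$ rather than $1$ in order to land the advertised $(1/2)^{\Fnum}$ instead of a weaker $(1/4)^{\Fnum}$; and one must verify that letting the non-feedforward components of $\bm{\theta}$ vary never enlarges the support reaching the gadget layer beyond $K$, so that the bound $|M_{\mathrm{swap}}|\ge 4^{3n}(1/4)^{K}$ is uniform over all admissible $\bm{\theta}$. The second case is then essentially bookkeeping, using the ``unchanged'' branch of the gadget analysis in \cref{app:impact_gadget}.
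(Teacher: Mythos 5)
Your proposal is correct and follows essentially the same route as the paper's proof: the same Pauli-path expression from \cref{lem:two_variance_MPQC}, the same $2^{r}4^{m-r}$ count of feedforward configurations combined with the swap-type gadget set $M_{\mathrm{swap}}$ for parameters after the gadget layer, and the same identity-preserving gadget set (the paper's $M_{\mathrm{same}}$, your $M_{\mathrm{keep}}$) together with \cref{cor:gradient_variance_upper_bound} for parameters before it. Your explicit use of \cref{cor:gradient_variance_upper_bound} to extract the Clifford witness $(\bm{\theta}^{\ast},\alpha^{\ast})$ in the first case is a minor refinement of a step the paper leaves implicit, not a different argument.
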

\begin{proof}
We first suppose that the gate $R_{P_j}(\theta_j)$ contains parameter $\theta_j$ is located after the gadget layer. According to \cref{eq:gradient_variance_MPQC}, the variance of its gradient
$\operatorname{Var}_{(\bm{\theta},\bm{\theta}_\mathcal{G})}\left[ \frac{\partial \lossM}{\partial \theta_j}\right]$ can be expressed as
\begin{equation}\label{eq:variance_paras_for_proof}
\begin{aligned}
\operatorname{Var}_{(\bm{\theta},\bm{\theta}_\mathcal{G})}\left[ \frac{\partial \lossM}{\partial \theta_j}\right] = \frac{1}{4^{m+3n}} 
    \sum_{\substack{\bm{\theta} \in \AngleSet^m \\
    \bm{\theta}_\mathcal{G} \in \AngleSet^{3n}\\
        \{\mathbf{P}_j, \mathbf{s}_j^{((\bm{\theta},\bm{\theta}_\mathcal{G}),\alpha)}\} = 0}} 
    \sum_{\alpha} c_\alpha^2 f\left(\uniquePathM,\left(\bm{\theta},\bm{\theta}_\mathcal{G}\right),I\otimes P_\alpha,op\left(\ket{0}\bra{0}\right)^{\otimes n}
\otimes\rho\right)^2.\\
\end{aligned}  
\end{equation}
If $\operatorname{Var}_{\bm{\theta}}\left[ \frac{\partial \loss}{\partial{\theta_j}}\right]$ is nonzero, 
then there at least exist one Pauli word $P_\beta$ in $O$ and $\bm{\theta} \in \AngleSet^m$ such that $\{P_j, \vec{s}^{\hspace{0.1em}(\bm{\theta},\beta)}_j\} = 0$. Since the gadget layer does not affect the Pauli path after $l$-th layer (in the Heisenberg picture), we have that in the MPQC setting, $\mathbf{P}_j = I \otimes P_j$ and $\mathbf{s}_j^{((\bm{\theta},\bm{\theta}_\mathcal{G}),\beta)} = I \otimes \vec{s}^{\hspace{0.1em}(\bm{\theta},\beta)}_j$ for arbitrary $\bm{\theta}_\mathcal{G}$. It also implies that
\[
\{\mathbf{P}_j, \mathbf{s}_j^{((\bm{\theta},\bm{\theta}_\mathcal{G}),\beta)}\} = \{I \otimes P_j, I \otimes \vec{s}^{\hspace{0.1em}(\bm{\theta},\beta)}_j\} = \{P_j, \vec{s}^{\hspace{0.1em}(\bm{\theta},\beta)}_j\} = 0.
\]
This implies that if there exists a parameter configuration $\bm{\theta} \in \AngleSet^m$ and a Pauli word $P_\beta$ in $O$ such that $\{P_j, \vec{s}^{\hspace{0.1em}(\bm{\theta},\beta)}_j\} = 0$ holds in the original PQC, then one can construct a group of angle combinations $(\bm{\theta},\bm{\theta}_\mathcal{G})$ such that $\{\mathbf{P}_j, \mathbf{s}_j^{((\bm{\theta},\bm{\theta}_\mathcal{G}),\beta)}\}=0$ holds in the corresponding MPQC.

Employing this property, we now count the number of discrete angle configurations in \cref{eq:variance_paras_for_proof} for which the corresponding term does not vanish. Let $M_j$ denote the set of angle configurations of $\bm{\theta}\in \AngleSet^m$ that maximize the number of angle configurations satisfying $\{P_j, \vec{s}^{\hspace{0.1em}(\bm{\theta},\beta)}_j\} = 0$. 
Since $\operatorname{Var}_{\bm{\theta}}\left[ \frac{\partial \loss}{\partial \theta_j} \right] \neq 0$, there exists at least one Pauli word $P_\beta$ in $O$ and one angle configuration $\bm{\theta} \in \AngleSet^m$ such that $\{P_j, \vec{s}^{\hspace{0.1em}(\bm{\theta},\beta)}_j\} = 0$. 
On the other hand, the angle values $\{0, \pi\}$ and $\{\pi/2, 3\pi/2\}$ yield the same effect on the backward propagation of the Pauli path, up to an overall sign.
Therefore, for parameters located in the region $\bar{J}_\beta$, they can be replaced by their corresponding pairs without affecting the commutation relation between $\vec{s}^{\hspace{0.1em}(\bm{\theta},\beta)}_j$ and $P_j$. Consequently, there exist at least $2^{\#_R\bar{J}_\beta}$ angle configurations such that $\{P_j, \vec{s}^{\hspace{0.1em}(\bm{\theta},\beta)}_j\} = 0$ holds. 

While for parameters outside $\bar{J}_\beta$, their values do not affect the commutation relation between $\vec{s}^{\hspace{0.1em}(\bm{\theta},\beta)}_j$ and $P_j$, and hence can be chosen arbitrarily, yielding $4^{m - \#_R\bar{J}_\beta}$ possible angle configurations.

Therefore, we obtain:
\begin{equation}
  |M_j| \geq 2^{\#_R\bar{J}_\beta}4^{m-\#_R\bar{J}_\beta} = 4^m \left(\frac{1}{2}\right)^{\#_R\bar{J}_\beta}.
\end{equation}

Next, we fix the choice of $\bm{\theta}_\mathcal{G}$. We pick $\bm{\theta}_\mathcal{G} \in \AngleSet^{3n}$ the same as the construction in the proof of \cref{thm:absence_BP_MPQC} (corresponding to the first case in \cref{app:impact_gadget}). It swaps the non-trivial Pauli operator in the system qubits to the ancillas. We also denote the set of such configurations of $\bm{\theta}_\mathcal{G}$ as $M_{\mathrm{swap}}(\bm{\theta})$. Since the support size of the backward-propagated Pauli operator at the gadget layer is upper bounded by $K$, following the same counting argument as in the proof of \cref{thm:absence_BP_MPQC}, we obtain that for any $\bm{\theta}\in \AngleSet^m$,
\begin{equation}
\left|M_{\mathrm{swap}}(\bm{\theta})\right| \geq 4^{3n}\left(\frac{1}{4}\right)^K.
\end{equation}

Based on the above constructions of $\bm{\theta}$ and $\bm{\theta}_\mathcal{G}$, we obtain the following lower bound:
\begin{equation}
\begin{aligned}
\operatorname{Var}_{(\bm{\theta},\bm{\theta}_\mathcal{G})}\left[ \frac{\partial \lossM}{\partial \theta_j}\right] =& \frac{1}{4^{m+3n}} 
    \sum_{\substack{\bm{\theta} \in \AngleSet^m \\
    \bm{\theta}_\mathcal{G} \in \AngleSet^{3n}\\
        \{\mathbf{P}_j, \mathbf{s}_j^{((\bm{\theta},\bm{\theta}_\mathcal{G}),\alpha)}\} = 0}} 
    \sum_{\alpha} c_\alpha^2 f\left(\uniquePathM,\left(\bm{\theta},\bm{\theta}_\mathcal{G}\right),I\otimes P_\alpha,op\left(\ket{0}\bra{0}\right)^{\otimes n}
\otimes\rho\right)^2 \\
\geq& \frac{1}{4^{m+3n}} 
    \sum_{\substack{\bm{\theta} \in \AngleSet^m \\
    \bm{\theta}_\mathcal{G} \in \AngleSet^{3n}\\
        \{\mathbf{P}_j, \mathbf{s}_j^{((\bm{\theta},\bm{\theta}_\mathcal{G}),\beta)}\} = 0}} 
     c_\beta^2 f\left(\uniquePathMb,\left(\bm{\theta},\bm{\theta}_\mathcal{G}\right),I\otimes P_\beta,op\left(\ket{0}\bra{0}\right)^{\otimes n}
\otimes\rho\right)^2 \\
\geq& \frac{1}{4^{m+3n}} 
    \sum_{\substack{\bm{\theta} \in M_j \\
    \bm{\theta}_\mathcal{G} \in M_{\mathrm{swap}}(\bm{\theta})} }
     c_\beta^2 f\left(\uniquePathMb,\left(\bm{\theta},\bm{\theta}_\mathcal{G}\right),I\otimes P_\beta,op\left(\ket{0}\bra{0}\right)^{\otimes n}
\otimes\rho\right)^2\\
\geq& \frac{1}{4^{m+3n}} 
    \sum_{\substack{\bm{\theta} \in M_j} } \left|M_{\mathrm{swap}}(\bm{\theta})\right|
     c_\beta^2 
\tr{\mathbf{s}_L|_{\leq n} op \left(\ket{0}\bra{0}\right)^{\otimes n}}^2\tr{I\rho}^2\\
\geq& \frac{|M_j|}{4^{m+3n}} 4^{3n}\left(\frac{1}{4}\right)^K
     c_\beta^2 \tau^K\\
\geq&  \frac{c^2_\beta}{4^{m+3n}}4^m 4^{3n}\left(\frac{1}{2}\right)^{\#_R\bar{J}_\beta}\left(\frac{1}{4}\right)^K \tau^{K} \\
=&  \left(\frac{1}{2}\right)^{\#_R\bar{J}_\beta}\left(\frac{\tau}{4}\right)^{K}c^2_\beta\\
\geq& \left(\frac{1}{2}\right)^{\Fnum}\left(\frac{\tau}{4}\right)^{K}\left\|O\right\|_{min}^2= \Omega\left(
    \frac{1}{\mathrm{poly}(n)}\right),
\end{aligned}  
\end{equation}
where the last equation holds because $\Fnum \leq \Fnumall = \order{\log n}$.
This completes the proof of \cref{eq:lower_bound_gradient_variance_MPQC}.

If the parameter $\theta_j$ is located before the gadget layer, we again consider a specific construction of $\bm{\theta}_\mathcal{G} \in \AngleSet^{3n}$. In particular, we choose $\bm{\theta}_\mathcal{G}$ such that it does not affect the backward propagation of the Pauli path; we denote this the angle configuration as $M_{\mathrm{same}}(\bm{\theta})$.

Note that for arbitrary $\bm{\theta} \in \AngleSet^m$ and Pauli word $P_\alpha$ in $O$, at most $K$ nontrivial Pauli operators are propagated backward to the gadget layer. We then select $\bm{\theta}_\mathcal{G}$ corresponding to the second case described in \cref{app:impact_gadget} that does not change the Pauli operators on these qubits.
There is at least $4^{3(n - K)} 16^K$ distinct angle configurations of $\bm{\theta}_\mathcal{G}$ that satisfy this requirement. This implies that for any $\bm{\theta} \in \AngleSet^m$ and $P_\alpha$, we have
\[
 \left|M_{\mathrm{same}}(\bm{\theta})\right| \geq 4^{3(n - K)} 16^K.
\]

By restricting our attention to these configurations, we obtain the following lower bound on the gradient variance with respect to $\theta_j$:
\begin{equation}
\begin{aligned}
&\operatorname{Var}_{(\bm{\theta},\bm{\theta}_\mathcal{G})}\left[ \frac{\partial \lossM}{\partial \theta_j}\right] \\
\geq& \frac{1}{4^{m+3n}} 
    \sum_{\substack{\bm{\theta} \in \AngleSet^m \\
    \bm{\theta}_\mathcal{G} \in M_{\mathrm{same}}(\bm{\theta})\\
        \{\mathbf{P}_j, \mathbf{s}_j^{(\bm{\theta},\alpha)}\} = 0}} 
    \sum_{\alpha} c_\alpha^2 f\left(\uniquePathM,\left(\bm{\theta},\bm{\theta}_\mathcal{G}\right),I\otimes P_\alpha,op\left(\ket{0}\bra{0}\right)^{\otimes n}
\otimes\rho\right)^2\\
= &  \frac{1}{4^{m+3n}} \sum_{\substack{\bm{\theta} \in \AngleSet^m}} \left|M_{\mathrm{same}}(\bm{\theta})\right|
    \sum_{\alpha} c_\alpha^2 \sum_{\substack{\mathbf{s}_m,\mathbf{s}_{m-1},\cdots,\mathbf{s}_0\\
        \{\mathbf{P}_j, \mathbf{s}_j\} = 0
  }} \tr{I\otimes P_\alpha \mathbf{s}_m}^2 
  \prod_{i=i}^{m}\tr{\mathbf{s}_{i} \mathbf{U}_{i}(\theta_i) \mathbf{s}_{i-1} \mathbf{U}_i(\theta_i)^\dagger}^2 \tr{\mathbf{s}_{0} op\left(\ket{0}\bra{0}\right)^{\otimes n}
\otimes\rho}^2\\ 
\geq &  \frac{4^{3(n-K)}16^K}{4^{m+3n}} \sum_{\alpha} \underset{\substack{\bm{\theta} \in \AngleSet^m}}{\sum} \sum_{\substack{\mathbf{s}_m,\mathbf{s}_{m-1},\cdots,\mathbf{s}_0\\
        \{\mathbf{P}_j, \mathbf{s}_j\} = 0
  }} c_\alpha^2 \tr{I\otimes P_\alpha \mathbf{s}_m}^2 
  \prod_{i=i}^{m}\tr{\mathbf{s}_{i} \mathbf{U}_{i}(\theta_i) \mathbf{s}_{i-1} \mathbf{U}_i(\theta_i)^\dagger}^2 \tr{\mathbf{s}_{0} op\left(\ket{0}\bra{0}\right)^{\otimes n}
\otimes\rho}^2\\
= &\left(\frac{1}{4}\right)^{K} \left(\frac{1}{4}\right)^{m} \underset{\substack{\bm{\theta} \in \AngleSet^m}}{\sum} \sum_{\vec{{s}}:\{{P}_j, {s}_j\} = 0} \sum_{\alpha} c_\alpha^2 \tr{P_\alpha {s}_m}^2 \prod_{i=i}^{m}\tr{{s}_{i} {U}_{i}(\theta_i) {s}_{i-1} {U}_i(\theta_i)^\dagger}^2\tr{{s}_0 \ket{0^n}\bra{0^n}}^2\\
\geq& \left(\frac{1}{4}\right)^{K} \left(\frac{\left\|O\right\|_{\min}}{\left\|O\right\|_{HS}} \right)^2\operatorname{Var}_{\bm{\theta}}\left[ \frac{\partial \loss}{\partial{\theta_j}}\right] = \Omega\left(\frac{1}{\mathrm{poly}(n)}\right) \operatorname{Var}_{\bm{\theta}}\left[ \frac{\partial \loss}{\partial{\theta_j}}\right].
\end{aligned}  
\end{equation}
Here the first equality holds due to the choice $\bm{\theta}_\mathcal{G} \in M_{\mathrm{same}}(\bm{\theta})$, under which all the Pauli paths in the gadget layer remain unchanged and equal to $\mathbf{s}_L$, i.e., $\mathbf{s}_{L+1}=\mathbf{s}_{\mathcal{G}_{1,1}}= \mathbf{s}_{\mathcal{G}_{1,2}}= \cdots =\mathbf{s}_{\mathcal{G}_{n,3}} = \mathbf{s}_L$. The last inequality employs the conclusion in \cref{cor:gradient_variance_upper_bound}.
\end{proof}

\begin{remark}
As we can see, the proof of this theorem is rather loose, as three cases in \cref{app:impact_gadget} were entirely omitted. We believe that introducing the gadget layer enriches the diversity of Pauli paths contributing to the gradient, which can substantially increase the overall gradient variance.
\end{remark}
\section{Locating the Gadget Layer via Circuit Geometry}\label{app:K_f_D-l_geometry}

In this section, we demonstrate how to determine the placement of the gadget layer based on the geometric structure of the circuit.
Our goal is to determine the appropriate position of the gadget layer—specifically, the value of $D - l$ (where $D$ denotes the depth of the original PQC)—such that both $K$ and $\Fnumall$ are of order $\order{\log n}$, thereby fulfilling the assumptions required by the theorem.

As an example, we consider a class of PQCs defined on (hyper)cubic lattices. These circuits are composed of two-qubit gates, or blocks of gates that effectively act on two qubits, applied along the edges of a lattice such that each qubit participates in exactly one two-qubit gate (or gate block) per layer. We consider circuits embedded in a $d$-dimensional (hyper)cubic lattice with $d \geq 1$. For simplicity, in the following discussion we assume that each two-qubit gate block consists of a single two-qubit gate. This simplification only affects constant prefactors in the scaling of gate-related quantities and does not alter the asymptotic analysis.

In such uniform architectures, it is natural to characterize the size of backward light cones using the concept of \emph{operator spreading velocity} $v \in [0,1]$~\cite{PhysRevX.8.021014}. According to the analysis in Ref.~\cite{aharonov2025importance}, for a 1-local observable, the number of qubits involved after $D-l$ layers in the Heisenberg picture is given by
\[
n_{D-l} = \left( \frac{2v}{d} \left(D-l\right) \right)^d.
\]
Therefore, for arbitrary $k$-local Pauli word $P_\alpha$ in $O$, the number of qubits influenced after $D-l$ layers  is at most 
\[
k n_{D-l} = k \left( \frac{2v}{d} \left(D-l\right) \right)^d,
\]
which is a upper bound of $K$.

For this type of circuit, the feedforward parameter number can also be tightly upper bounded. Based on the calculation in Ref.~\cite{aharonov2025importance}, the total number of gates involved in the backward light cone of $k$-local observable after $D-l$ layers is upper bounded by
\[
\frac{k n_{D-l} \left(D-l\right)}{2(d+1)}=\frac{k2^{d-1}v^d}{(d+1)d^d}\left(D-l\right)^{d+1},
\]
which serves as an upper bound on $\Fnumall$.

To conclude, for PQCs defined on a $d$-dimensional cubic lattice and measured with a $k$-local observable, if the corresponding MPQC is constructed by inserting a gadget layer after the $l$-th layer of the original circuit, then the following result holds:
\begin{equation}\label{eq:upper_bound_K}
  K \leq k n_{D-l}=k \left( \frac{2v}{d} \left(D-l\right) \right)^d
\end{equation}
\begin{equation}\label{eq:upper_bound_Fnum}
  \Fnumall \leq \frac{k n_{D-l} \left(D-l\right)}{2(d+1)}=\frac{k2^{d-1}v^d}{(d+1)d^d}\left(D-l\right)^{d+1}.
\end{equation}
Then, by restricting $D - l = \order{\left(\log n\right)^{\frac{1}{d+1}}}$ and treating $v$ and $d$ as constants, we can apply \cref{eq:upper_bound_K} and \cref{eq:upper_bound_Fnum} to obtain the following results:
\begin{equation}
  K = \order{\left(\log n\right)^{\frac{d}{d+1}}}, \Fnumall = \order{\log n}.
\end{equation}
This implies that the conditions in \cref{thm:absence_BP_MPQC_parameters_app} are naturally satisfied when $D - l = \order{\left(\log n\right)^{\frac{1}{d+1}}}$. We thus obtain the following corollary:
\begin{corollary}
Let $\mathcal{C}(\bm{\theta})$ be a PQC defined on a $d$-dimensional (hyper)cubic lattice, and let its circuit depth be denoted by $D$. Suppose the corresponding MPQC $\channelM{\bm{\theta}, \bm{\theta}_\mathcal{G}}$ is constructed by inserting a layer of gadgets after the $l$-th layer of $\mathcal{C}(\bm{\theta})$. Then, the lower bounds on the gradient variance $\operatorname{Var}_{(\bm{\theta},\bm{\theta}_\mathcal{G})}\left[ \frac{\partial \lossM}{\partial \theta_j} \right]$ established in \cref{thm:absence_BP_MPQC_parameters_app} hold, provided that $D - l = \order{\left(\log n\right)^{\frac{1}{d+1}}}$.
\end{corollary}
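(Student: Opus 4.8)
The plan is to treat the corollary as a direct specialization of the geometric bounds \eqref{eq:upper_bound_K} and \eqref{eq:upper_bound_Fnum} established just above: both $K$ and $\Fnumall$ are polynomials in $D-l$ whose degree depends only on the lattice dimension $d$, so forcing $D-l$ to grow only polylogarithmically in $n$ collapses both quantities to $\order{\log n}$, which is exactly what \cref{thm:absence_BP_MPQC_parameters_app} requires. Once that is in place the conclusion is immediate.

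First I would recall the operator-spreading estimate underlying \eqref{eq:upper_bound_K}. On a $d$-dimensional (hyper)cubic lattice in which every qubit is touched by exactly one two-qubit gate (or constant-size gate block) per layer, the backward light cone of a $1$-local Pauli, propagated through $D-l$ layers in the Heisenberg picture, is contained in a ball of radius $\order{v(D-l)}$ and hence touches at most $n_{D-l}=\left(\tfrac{2v}{d}(D-l)\right)^{d}$ qubits, with $v\in[0,1]$ the operator-spreading velocity~\cite{PhysRevX.8.021014,aharonov2025importance}. For a general $k$-local term $P_\alpha$ in $O$ the cone is a union of $k$ such balls, so its support at the gadget layer has size at most $k\,n_{D-l}$; maximizing over $\alpha$ gives $K\le k\,n_{D-l}$. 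Counting gates rather than qubits inside the same cone — summing the per-layer slice sizes over the $D-l$ layers — yields at most $\tfrac{k\,n_{D-l}(D-l)}{2(d+1)}$ parameterized gates, which upper-bounds $\Fnumall$ and gives \eqref{eq:upper_bound_Fnum}.

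Second, I would substitute $D-l=\order{(\log n)^{1/(d+1)}}$, holding $v,d,k$ fixed. Then $n_{D-l}=\order{(\log n)^{d/(d+1)}}$, so $K\le k\,n_{D-l}=\order{(\log n)^{d/(d+1)}}\subseteq\order{\log n}$, and $\Fnumall\le\tfrac{k\,n_{D-l}(D-l)}{2(d+1)}=\order{(\log n)^{d/(d+1)}\cdot(\log n)^{1/(d+1)}}=\order{\log n}$. Both hypotheses of \cref{thm:absence_BP_MPQC_parameters_app}, namely $K=\order{\log n}$ and $\Fnumall=\order{\log n}$, are therefore satisfied, so its conclusions — the lower bounds on $\operatorname{Var}_{(\bm{\theta},\bm{\theta}_\mathcal{G})}\!\left[\partial\lossM/\partial\theta_j\right]$ for parameters located after and before the gadget layer — hold verbatim, which is what the corollary asserts.

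The main obstacle is not in the corollary itself — which reduces to the bookkeeping above — but in justifying \eqref{eq:upper_bound_K}–\eqref{eq:upper_bound_Fnum}: one must be careful that $v$ is a genuine worst-case (not merely typical-case) bound on light-cone growth for the lattice architecture in question, that the per-layer slice-count correctly incorporates the $k$-fold blow-up coming from a $k$-local observable, and that promoting each "edge gate" from a single two-qubit gate to a constant-size block only changes constant prefactors and thus leaves the asymptotics of $K$ and $\Fnumall$ unchanged. These are standard facts for brick-wall–type lattice circuits, and with them in hand the asymptotic substitution above goes through without further work.
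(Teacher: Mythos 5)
Your proposal is correct and follows essentially the same route as the paper: it invokes the operator-spreading bounds $K \le k\,n_{D-l}$ and $\Fnumall \le \tfrac{k\,n_{D-l}(D-l)}{2(d+1)}$ from \cref{eq:upper_bound_K,eq:upper_bound_Fnum}, substitutes $D-l=\order{(\log n)^{1/(d+1)}}$ to obtain $K=\order{(\log n)^{d/(d+1)}}$ and $\Fnumall=\order{\log n}$, and then applies \cref{thm:absence_BP_MPQC_parameters_app} directly. The caveats you flag (worst-case validity of $v$, the $k$-fold blow-up, and the constant-prefactor effect of gate blocks) match the assumptions the paper also makes in its derivation.
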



\section{Strategy for activating single parameter}\label{app:activate_single_parameter}
In this section, we provide additional details on the activation of a single parameter, including the construction of the enlarged gadget and the proof of \cref{thm:activate_parameters}.

\subsection{Selection of the enlarged gadget}
Suppose we aim to activate a single-qubit rotation gate $T = R_{P_T}(\theta_T)$,
which acts nontrivially on the $t$-th system qubit and is located before the gadget layer. To achieve this, we insert one extra gadget immediately before $T$ and enlarge one gadget $\mathcal{G}(\bm{\theta})$ in the gadget layer to obtain an new type of gadget $\gadgetT{\bm{\theta}}$, in which three additional two-qubit rotation gates are inserted. The only restriction we impose on the enlarged gadget is that if we choose the $i$-th gadget $\mathcal{G}_i(\bm{\theta})$ in the gadget layer, there must exist some $P_\beta$ in $O$ and $\bm{\theta}\in \AngleSet^m$ such that $s^{\left(\bm{\theta}, \beta\right)}_{L}|_i \neq I$. In other words, we require that the $i$-th Pauli word in the operator arriving at the gadget layer, backward propagated from $P_\beta$ for some angle configuration $\bm{\theta}\in \AngleSet^m$, be nontrivial.

Next, we show that such a gadget $\mathcal{G}(\bm{\theta})$ satisfying the above condition can be efficiently identified. We first randomly select a Pauli word $P_\beta$ from $O$, and to determine the Pauli operator that is backward propagated to the gadget layer, it suffices to scan over the angles within the backward light cone of $P_\beta$, i.e., at most $\Fnumall = \order{\log n}$ parameters. We assign these angles random values from $\{0,\pi/2,\pi,3\pi/2\}$ and then compute $s^{\left(\bm{\theta}, \beta\right)}_{L}$. Since the resulting circuit is Clifford, evaluating $s^{\left(\bm{\theta}, \beta\right)}_{L}$ can be done efficiently. We then arbitrarily choose one position where $s^{\left(\bm{\theta}, \beta\right)}_{L}$ acts nontrivially to construct $\gadgetT{\bm{\theta}}$.
Moreover, since each angle can take four possible values, a large number of distinct $s^{\left(\bm{\theta}, \beta\right)}_{L}$ can be generated, implying that almost any gadget within the support of the backward light cone of $P_\beta$ has a high probability of satisfying the required condition.

Without loss of generality and for the convenience of proof, we make the following reasonable assumption: there exists a Pauli word $P_\beta$ in observable $O$ and some $\bm{\theta} \in \AngleSet^m$ such that the backward-propagated Pauli operator $s^{\left(\bm{\theta}, \beta\right)}_{L}$ reaches the gadget layer and satisfies $s^{\left(\bm{\theta}, \beta\right)}_{L}|_t \neq I$, where the subscript $t$ denotes the $t$-th qubit with the target gate $T$ applied.
If this condition is not satisfied, one can instead modify another gadget $\mathcal{G}_i(\bm{\theta})$ into $\gadgetT{\bm{\theta}}$ with a nontrivial input, thereby activating the gate $T$—the only difference being that $\mathcal{G}_i(\bm{\theta})$ and $T$ act on different system qubits, which is depicted in \cref{fig:G_T'_different_sys}. This modification does not affect the validity of the subsequent analysis. 

\begin{figure}[htbp!]
    \centering
      \includegraphics[width=0.5\textwidth]{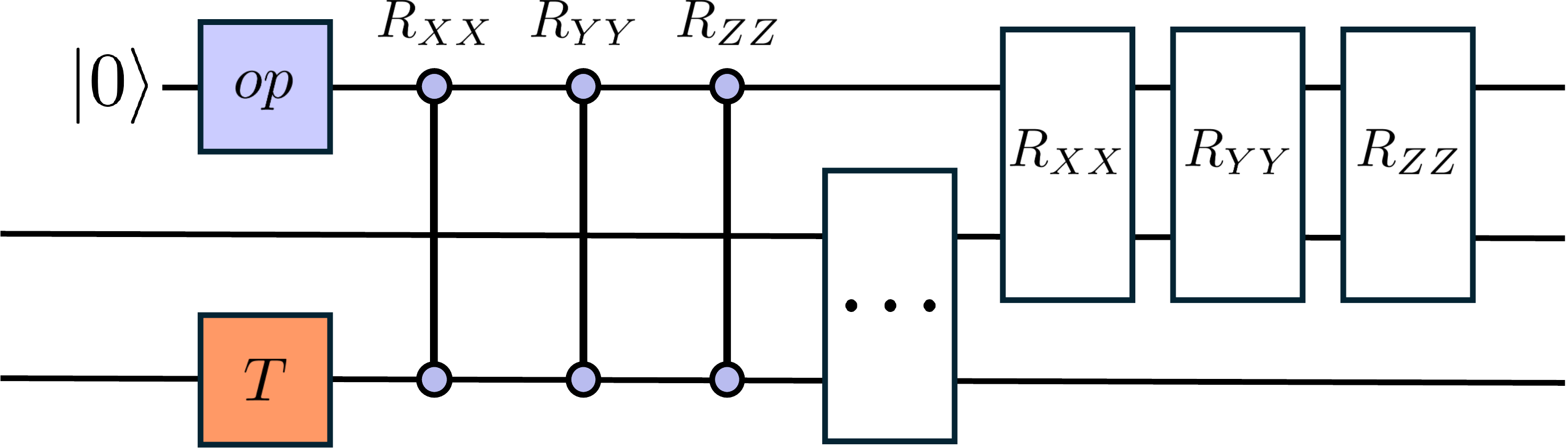}
\caption{Construction of $\gadgetT{\bm{\theta}}$ when $\mathcal{G}_i(\bm{\theta})$ and $T$ act on different system qubits. The first three two-qubit parameterized gates act on the ancilla qubit and the system qubit on which the target gate $T$ is applied.}\label{fig:G_T'_different_sys}
\end{figure} 



\subsection{Proof of \cref{thm:activate_parameters}}
Now we are ready to prove \cref{thm:activate_parameters}. For clarity, we provide a detailed lower bound on the variance of the partial derivative of the loss function with respect to $\theta_T$, following the notation introduced in the manuscript:
\begin{theorem}\label{thm:activate_parameters_app}
  Consider a $T$-activating MPQC $\channelMT{\bm{\theta}, \bm{\theta}_{\mathcal{G}}, \bm{\theta}_{\mathcal{G}'_T}}$ and a $k$-local observable $O = \sum_{\alpha} c_\alpha P_\alpha$. Suppose the conditions in \cref{thm:absence_BP_MPQC_parameters_app} still hold. Then, we have
\begin{equation}\label{eq:activate_parameter_lower_bound}
  \operatorname{Var}_{(\bm{\theta},\bm{\theta}_{\mathcal{G}},\bm{\theta}_{\mathcal{G}'_T})}\left[\frac{\partial \lossMT}{\partial{\theta_T}}\right]\geq \left(\frac{1}{2}\right)^{\Fnumall+8}\left(\frac{\tau}{4}\right)^{K+1}\left\|O\right\|_{min}^2= \Omega\left(\frac{1}{\mathrm{poly}(n)}\right),
\end{equation}
for the loss function of the $T$-activating MPQC, defined as
$\lossMT \coloneq \tr{\channelMT{\bm{\theta},\bm{\theta}_{\mathcal{G}},\bm{\theta}_{\mathcal{G}'_T}}(\rho)O}$.
\end{theorem}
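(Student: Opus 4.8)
The plan is to follow the template of the proof of \cref{thm:absence_BP_MPQC_parameters_app}, the new ingredient being that the enlarged gadget $\gadgetT{\bm{\theta}}$ manufactures Pauli paths whose operator at the layer of $T$ anticommutes with $P_T$, so that $\theta_T$ genuinely enters the Pauli-path expression for the gradient variance even when it was inert in the original circuit. First I would check that the $T$-activating MPQC still satisfies the two hypotheses of \cref{lem:two_variances_MPQC}: the orthogonality condition and the non-constancy of $\Vexp{P}$. This goes through verbatim as in \cref{lem:orthogonality_MPQC} and \cref{lem:non_constant_MPQC}: relocating the $op$ and the three two-qubit rotations of one gadget and inserting one extra gadget before $T$ does not change the fact that the (enlarged) gadget layer, after conjugation by the trailing Cliffords, generates a Pauli set that splits the full $n$-qubit Pauli group; hence \cref{lem:orthogonality,lem:split} apply and \cref{lem:two_variances_MPQC} expresses $\operatorname{Var}_{(\bm{\theta},\bm{\theta}_{\mathcal{G}},\bm{\theta}_{\mathcal{G}'_T})}[\partial_{\theta_T}\lossMT]$ as $4^{-M}$ times a sum over discrete angle configurations of $\sum_\alpha c_\alpha^2 f(\cdot)^2$, restricted to configurations whose unique Pauli path obeys $\{\mathbf{P}_T,\mathbf{s}_T\}=0$, where $\mathbf{P}_T=I^{\otimes\text{anc}}\otimes P_T$ and $M$ counts all parameters including those in $\bm{\theta}_{\mathcal{G}'_T}$ and the inserted $\gadget{\bm{\theta}}$. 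Every term in this sum is nonnegative, so it suffices to produce a large enough family of configurations with strictly positive contribution.

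Next I would construct that family. By the assumption recalled in \cref{app:activate_single_parameter} (and the freedom to choose which gadget to enlarge, including the wiring of \cref{fig:G_T'_different_sys} when $\mathcal{G}_i$ acts on a qubit $i\neq t$), fix a Pauli word $P_\beta$ in $O$ and a base configuration $\bm{\theta}_0\in\AngleSet^m$ for which the Pauli path from $P_\beta$ is nonzero and reaches the layer of $T$ with a nontrivial single-qubit Pauli $P$ on qubit $t$. Keeping the $n-1$ un-enlarged gadgets of the main layer in the ``identity'' mode of \cref{app:impact_gadget} (second list: Pauli path unchanged on system qubits, identity left on those ancillas), the operator arriving at the enlarged gadget's two-qubit block on $(a_2,\text{qubit }t)$ is $I\otimes P$. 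Using the case analysis of \cref{app:impact_gadget}, I then pick the three angles of $\gadgetT{\bm{\theta}}$ from a block of $16$ configurations that send $I\otimes P$ to an operator whose qubit-$t$ component $R$ anticommutes with $P_T$ (the ``keep'' branch when $\{P,P_T\}=0$, otherwise a ``$Q_1Q_2$''-branch), depositing the partner Pauli onto ancilla $a_2$; this forces $\{\mathbf{P}_T,\mathbf{s}_T\}=0$, so at least two of the four values of $\theta_T\in\AngleSet$ make $T$'s factor a nonzero derivative term, and below $T$ the qubit-$t$ operator is a nontrivial $R'$. Finally I use \cref{app:impact_gadget} once more to choose the three angles of the inserted gadget $\gadget{\bm{\theta}}$ before $T$ from a block of $16$ configurations that route the qubit-$t$ operator back onto a value for which the remaining sub-path down to $\rho$ is a nonzero path of the underlying circuit, while parking the remainder on ancilla $a_1$. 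For such a configuration, $f(\cdot)^2$ factorizes as a product over the $\order{\log n}$ ancillas that carry nontrivial single-qubit Paulis — each contributing a factor $\geq\tau$ by \cref{eq:op_condition} — times the squared amplitude $\tr{\cdot\,\rho}^2>0$ of the surviving system sub-path, times $\pm1$ factors from the remaining Clifford and discrete-rotation terms; hence it is strictly positive and bounded below by $\tau^{K+O(1)}$ times a fixed positive constant.

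The last step is the count, again mirroring \cref{thm:absence_BP_MPQC_parameters_app}: varying the $\leq\Fnumall=\order{\log n}$ feedforward angles of the original PQC within the $\{0,\pi\}\leftrightarrow\{\pi/2,3\pi/2\}$ equivalence preserves the required property and yields $\geq 4^{m-\#}2^{\#}$ base configurations with $\#\leq\Fnumall$; the $16/64=1/4$ freedom in each of the two added gadgets, the $2/4$ freedom in $\theta_T$, the ``identity''-mode count for the $n-1$ un-enlarged gadgets (together with the $M_{\mathrm{swap}}$-type count at the main layer, if used to absorb residual Paulis), and the unconstrained remaining angles supply the rest. Combining these counts with the $4^{-M}$ prefactor and $c_\beta^2\geq\|O\|_{\min}^2$ collapses everything to
\[
\operatorname{Var}_{(\bm{\theta},\bm{\theta}_{\mathcal{G}},\bm{\theta}_{\mathcal{G}'_T})}\!\left[\partial_{\theta_T}\lossMT\right]\geq\left(\tfrac12\right)^{\Fnumall+8}\left(\tfrac{\tau}{4}\right)^{K+1}\|O\|_{\min}^2=\Omega\!\left(\tfrac{1}{\mathrm{poly}(n)}\right).
\]
I expect the main obstacle to be the middle step: keeping precise track, through both added gadgets, of the Heisenberg-picture operator on qubit $t$ and its ancilla partners so as to simultaneously enforce $\{\mathbf{P}_T,\mathbf{s}_T\}=0$ \emph{and} avoid zeroing out the contribution — i.e., absorbing the deviated part of the operator onto ancillas via the ``swap'' branches while routing the system-qubit component onto a sub-path that still reaches $\rho$ with nonzero amplitude, despite possible operator spreading in the gates between $T$ and the gadget layer. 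Pinning down the exact constants (the $8$ and the $K+1$) from the bookkeeping of gadget angle-freedoms and $\tau$-factors, and checking the degenerate wiring of \cref{fig:G_T'_different_sys}, are the remaining routine-but-delicate points.
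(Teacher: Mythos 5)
Your overall skeleton --- reduce to the discrete Pauli-path sum via \cref{lem:two_variances_MPQC}, exhibit an explicit family of angle configurations with nonvanishing contribution, and count --- is the same as the paper's, and your observation that the orthogonality/non-constancy hypotheses survive the modification is fine. The gap is in the middle step, which you yourself flagged as the main obstacle. You route the nontrivial Pauli to qubit $t$ at the layer of $T$ \emph{through the system wires}, assuming that ``the Pauli path \ldots reaches the layer of $T$ with a nontrivial single-qubit Pauli $P$ on qubit $t$.'' The hypothesis actually available (the selection rule in \cref{app:activate_single_parameter}) only guarantees nontriviality at the \emph{gadget layer}, i.e.\ $s_L^{(\bm{\theta},\beta)}|_t\neq I$. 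In the generic situation where $\theta_T$ needs activating, the operator on qubit $t$ at $T$'s layer is the identity (or commutes with $P_T$) for essentially all configurations, and then a three-rotation block with input $II$ can only output $II$ (first bullet of \cref{app:impact_gadget}), so no anticommuting operator is ever produced at $T$. This is precisely why $\gadgetT{\bm{\theta}}$ carries \emph{six} angles rather than three: in the paper's proof the upper block $(\theta_4,\theta_5,\theta_6)$ sits at the gadget layer and swaps the nontrivial Pauli from qubit $t$ onto the enlarged gadget's ancilla, which --- being untouched by the intervening circuit --- transports it down to the lower block $(\theta_1,\theta_2,\theta_3)$ just above $T$, where it is converted into a $Q_2$ anticommuting with $P_T$. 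Your construction never invokes this ancilla bypass, and without it the argument fails exactly in the case the theorem is designed to handle.

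Second, you place the un-enlarged gadgets in ``identity'' mode, so nontrivial Paulis remain on the system qubits all the way down to $\rho$, and you assert that the resulting factor $\tr{\cdot\,\rho}^2$ is positive. That is not justified: for $\rho=\ket{0^n}\bra{0^n}$ any surviving $X$ or $Y$ on a system qubit annihilates the path. The paper instead uses the $M_{\mathrm{swap}}$ configuration throughout (including for the gadget inserted before $T$, which absorbs the post-$T$ operator), so that every system qubit carries the identity at the input and the only nontrivial input overlaps are the at most $K+1$ ancilla factors $\tr{op(\ketbra{0})P}^2\geq\tau$, yielding the clean $\tau^{K+1}$. (A minor point: once $\{\mathbf{P}_T,\mathbf{s}_T\}=0$ holds, all four values of $\theta_T\in\AngleSet$ contribute $\pm1$ to the derivative factor, so your ``$2/4$ freedom'' undercounts, though harmlessly.) With the ancilla-bypass routing and the uniform swap configuration in place, your counting scheme does collapse to the stated $\left(\frac{1}{2}\right)^{\Fnumall+8}\left(\frac{\tau}{4}\right)^{K+1}\left\|O\right\|_{min}^2$.
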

\begin{proof}
We begin by expressing the unitary representation of $\channelMT{\bm{\theta}, \bm{\theta}_{\mathcal{G}}, \bm{\theta}_{\mathcal{G}'_T}}$ when ancilla qubits are included. 
We denote it as $\unitaryMT{\bm{\theta}, \bm{\theta}_{\mathcal{G}}, \bm{\theta}_{\mathcal{G}'_T}}$.
Note that an additional $\gadget{\bm{\theta}}$ is inserted before the gate $T$, so the unitary $\unitaryMT{\bm{\theta}, \bm{\theta}_{\mathcal{G}}, \bm{\theta}_{\mathcal{G}'_T}}$ acts on a $(2n + 1)$-qubit Hilbert space. The loss function of this MPQC can thus be written as
\begin{equation}
  \begin{aligned}
\lossMT &= \tr{\channelMT{\bm{\theta},\bm{\theta}_{\mathcal{G}},\bm{\theta}_{\mathcal{G}'_T}}(\rho)O}\\    
&= \sum_\alpha c_\alpha\tr{\unitaryMT{\bm{\theta},\bm{\theta}_{\mathcal{G}},\bm{\theta}_{\mathcal{G}'_T}}(op(\ket{0}\bra{0})^{\otimes (n+1)}\otimes\rho)\unitaryMT{\bm{\theta},\bm{\theta}_{\mathcal{G}},\bm{\theta}_{\mathcal{G}'_T}}^\dagger I\otimes P_\alpha}.\\    
  \end{aligned}
\end{equation}
 
Then we rewrite $\operatorname{Var}_{(\bm{\theta},\bm{\theta}_{\mathcal{G}},\bm{\theta}_{\mathcal{G}'_T})}\left[\frac{\partial \lossMT}{\partial{\theta_T}}\right]$ in the language of Pauli path integral and quantum rotation 2-design according to \cref{eq:gradient_variance_MPQC}:
\begin{equation}\label{eq:gradient_variance_expressionMT}
\begin{aligned}
&\operatorname{Var}_{(\bm{\theta},\bm{\theta}_{\mathcal{G}},\bm{\theta}_{\mathcal{G}'_T})}\left[\frac{\partial \lossMT}{\partial{\theta_T}}\right] \\
=& \frac{1}{4^{m+3n+6}} 
    \sum_{\substack{\bm{\theta} \in \AngleSet^m \\
    \bm{\theta}_\mathcal{G} \in \AngleSet^{3n}\\
    \bm{\theta}_{\mathcal{G}'_T} \in \AngleSet^{6}\\
        \{\mathbf{P}_T, \mathbf{s}_T^{\left((\bm{\theta},\bm{\theta}_\mathcal{G},\bm{\theta}_{\mathcal{G}'_T}),\alpha\right)}\} = 0}} 
    \sum_{\alpha} c_\alpha^2 f\left(\uniquePathMT,\left(\bm{\theta},\bm{\theta}_\mathcal{G},\bm{\theta}_{\mathcal{G}'_T}\right),I\otimes P_\alpha,op\left(\ket{0}\bra{0}\right)^{\otimes (n+1)}
\otimes\rho\right)^2\\
\geq& \frac{1}{4^{m+3n+6}} 
    \sum_{\substack{\bm{\theta} \in \AngleSet^m \\
    \bm{\theta}_\mathcal{G} \in \AngleSet^{3n}\\
    \bm{\theta}_{\mathcal{G}'_T} \in \AngleSet^{6}\\
        \{\mathbf{P}_T, \mathbf{s}_T^{\left((\bm{\theta},\bm{\theta}_\mathcal{G},\bm{\theta}_{\mathcal{G}'_T}),\beta\right)}\} = 0}} 
    c_\beta^2 f\left(\uniquePathMTb,\left(\bm{\theta},\bm{\theta}_\mathcal{G},\bm{\theta}_{\mathcal{G}'_T}\right),I\otimes P_\beta,op\left(\ket{0}\bra{0}\right)^{\otimes (n+1)}
\otimes\rho\right)^2,
\end{aligned}  
\end{equation}
where we fix a specific Pauli word $P_\beta$ in $O$ such that its backward-propagated Pauli operator $s^{\left(\bm{\theta}, \beta\right)}_{L}$ satisfies $s^{\left(\bm{\theta}, \beta \right)}_{L}|_t \neq I$ for some $\bm{\theta} \in \AngleSet^m$.

We then again derive a lower bound for \cref{eq:gradient_variance_expressionMT} by constructing explicit angle configurations of $\bm{\theta}$, $\bm{\theta}_\mathcal{G}$, and $\bm{\theta}_{\mathcal{G}'_T}$, where all angles take values in $\AngleSet$.
For $\bm{\theta}$, we select configurations such that the Pauli operator backward propagated from $P_\beta$ acts nontrivially on the $t$-th system qubit when reaching the gadget layer. Let $M_t \subseteq \AngleSet^m$ denote the set of such configurations with the maximal cardinality.
From the perspective of backward Pauli propagation, only the gates in the backward light cone of $P_\beta$ following the gadget layer affect the Pauli path $s^{\left(\bm{\theta}, \beta\right)}_{L}$.
Therefore, it suffices to fix at most $\Fnumall$ angles in $\bm{\theta}$ to ensure $s^{\left(\bm{\theta}, \beta\right)}_{L}|_t \neq I$.
This implies that
\begin{equation}
\left|M_t\right|\geq 4^{m - \Fnumall}2^{\Fnumall}=4^m \left(\frac{1}{2}\right)^{\Fnumall},
\end{equation}
where the factor $2^{\Fnumall}$ arises from the observation discussed in \cref{subapp:lower_bound_gradient}, namely that the angle values $\{0, \pi\}$ and $\{\pi/2, 3\pi/2\}$ produce identical effects on the backward propagation of the Pauli path.

Below, we illustrate the choice of $\bm{\theta}_{\mathcal{G}'_T} \in \AngleSet^{6}$ based on $\bm{\theta}$ with the aid of the following figure.
\begin{figure}[H]
    \centering
  \begin{quantikz}
    \lstick{$\ket{0}$}& \gate[1]{op} &\gate[1, style={draw, shape=circle}]{Q_1} & \gate[2]{R_{XX}(\theta_1)}& \gate[2]{R_{YY}(\theta_2)} & \gate[2]{R_{ZZ}(\theta_3)}& \gate[1, style={draw, shape=circle}]{P_1}&\qw &\gate[2]{R_{XX}(\theta_4)}& \gate[2]{R_{YY}(\theta_5)} & \gate[2]{R_{ZZ}(\theta_6)}&\gate[1, style={draw, shape=circle}]{I} \\
    \lstick{$\ket{\psi}$} & \gate[1]{T} &\gate[1, style={draw, shape=circle}]{Q_2} &\qw &\qw & \qw & \gate[1]{\cdots}& \gate[1, style={draw, shape=circle}]{P_2} &\qw & \qw & \qw& \gate[1, style={draw, shape=circle}]{P}\\
  \end{quantikz}
\caption{Expansion of $\gadgetT{\bm{\theta}}$ in terms of Pauli operators for analyzing its effect on Pauli paths. Both $Q_i$ and $P_i$ represent Pauli operators.}\label{fig:for_proof_activation}
  \end{figure}
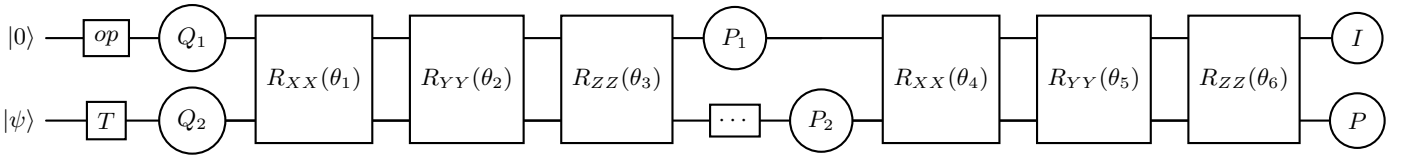
The choice of $\bm{\theta} \in M_t$ ensures that the backward-propagated Pauli operator $P$ is nontrivial, i.e., $P \neq I$. Then, we set the parameters of $\gadgetT{\bm{\theta}}$
to satisfy the condition $\{\mathbf{P}_T, \mathbf{s}_T^{\left((\bm{\theta},\bm{\theta}_\mathcal{G},\bm{\theta}_{\mathcal{G}'_T}),\beta\right)}\} = 0$. This can be achieved according to the following rules:

\begin{itemize}
  \item Choose $\theta_4,\theta_5,\theta_6 \in \AngleSet ^3$ such that $P_1 = P$, $P_2 = I$. 
  \item Choose $\theta_1,\theta_2,\theta_3 \in \AngleSet ^3$ such that $\{P_T, Q_2\} = 0$.
\end{itemize}

The above requirements can always be fulfilled as follows: we choose $\theta_4, \theta_5, \theta_6$ according to the first case in \cref{app:impact_gadget}, which swaps the operator onto the ancilla qubit and yields 16 possible angle configurations, corresponding to $P_1 = P$ and $P_2 = I$. Then, we pick $\theta_1, \theta_2, \theta_3$ according to the first, third, and fourth cases in \cref{app:impact_gadget}, which allow the resulting operator $Q_2$ to be any nontrivial Pauli operator. We then select one such configuration to ensure $\{P_T, Q_2\} = 0$, which also yields at least 16 angle combinations. Denote by $M_{\mathrm{anti}}(\bm{\theta})$ the set of parameter configurations $\theta_1,\ldots,\theta_6$ satisfying these two conditions. Then, for any given $\bm{\theta}\AngleSet^m$ (which determines the Pauli operator $P$), we have
\begin{equation}\label{eq:anti_count}
  \left|M_{\mathrm{anti}}(\bm{\theta})\right|\geq 16^2 = 4^4.
\end{equation}

For $\bm{\theta}_\mathcal{G}$, we adopt the same configuration as in the proof of \cref{thm:absence_BP_MPQC}, which transforms the operator $IP$ into $PI$. We denote this set of configurations as $M{_\mathrm{swap}}(\bm{\theta},\bm{\theta}_{\mathcal{G}'_T})$. 
Here, $\bm{\theta}_{\mathcal{G}'_T}$ is treated as an input, since it determines the angle configuration of the gadget $\gadget{\bm{\theta}}$ placed before $T$. Following a similar argument to that in the proof of \cref{thm:absence_BP_MPQC}, we obtain that for any $\bm{\theta}\AngleSet^m$ and $\bm{\theta}_{\mathcal{G}'_T}\AngleSet^6$,
\begin{equation}
  \left|M_{\mathrm{swap}}(\bm{\theta},\bm{\theta}_{\mathcal{G}'_T})\right|\geq 4^{3(n-K-1)}16^{K} = 4^{3n}\left(\frac{1}{4}\right)^{K+3}.
\end{equation}

From a geometric perspective, when choosing $\bm{\theta}\in M_t$, $\bm{\theta}_{\mathcal{G}'_T} \in M_{\mathrm{anti}}(\bm{\theta})$ and $\bm{\theta}_\mathcal{G} \in M_{\mathrm{swap}}(\bm{\theta},\bm{\theta}_{\mathcal{G}'_T})$, the corresponding Pauli path takes the form illustrated in \cref{fig:Pauli_path_MPQCT}. The configuration $\bm{\theta}_{\mathcal{G}'_T} \in M_{\mathrm{anti}}(\bm{\theta})$ acts as a ``bridge'' that transports the Pauli operator $Q_2$ to the location of gate $T$, while simultaneously ensuring that $\{P_T, Q_2\} = 0$.
\begin{figure}[H]
    \centering
      \includegraphics[width=0.5\textwidth]{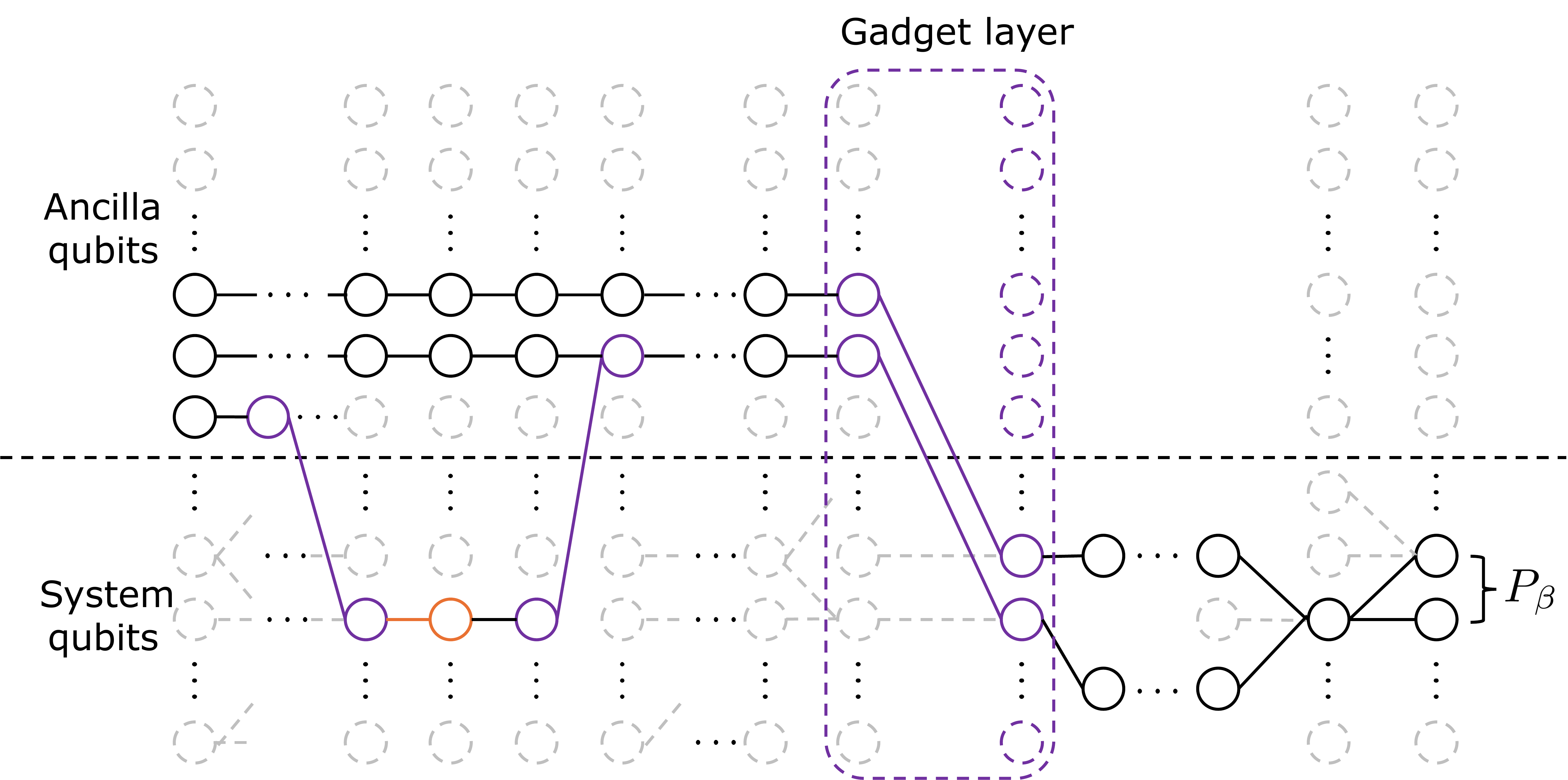}
\caption{Pauli path of the ${T}$-activating MPQC propagated from the observable $P_\beta$. The orange line marks the target single-qubit rotation gate $T$. The choice of parameters $\bm{\theta} \in M_t$ and $\bm{\theta}_{\mathcal{G}'_T} \in M_{\mathrm{anti}}(\bm{\theta})$ ensures that a nontrivial Pauli operator is transported along the backward-propagated path to the location of $T$. The additional $\gadget{\bm{\theta}}$ inserted before $T$ then swaps the Pauli operator onto the corresponding ancilla qubit, thereby preserving a non-vanishing Pauli path.}\label{fig:Pauli_path_MPQCT}
\end{figure}

Therefore $\operatorname{Var}_{(\bm{\theta},\bm{\theta}_{\mathcal{G}},\bm{\theta}_{\mathcal{G}'_T})}\left[\frac{\partial \lossMT}{\partial{\theta_T}}\right]$ can be lower bounded as
\begin{equation}\label{eq:gradient_variance_lower_bound_MT}
\begin{aligned}
&\operatorname{Var}_{(\bm{\theta},\bm{\theta}_{\mathcal{G}},\bm{\theta}_{\mathcal{G}'_T})}\left[\frac{\partial \lossMT}{\partial{\theta_T}}\right] \\
\geq& \frac{1}{4^{m+3n+6}} 
    \sum_{\substack{\bm{\theta} \in M_t \\
    \bm{\theta}_\mathcal{G} \in M_{\mathrm{swap}}(\bm{\theta},\bm{\theta}_{\mathcal{G}'_T})\\
    \bm{\theta}_{\mathcal{G}'_T} \in M_{\mathrm{anti}}(\bm{\theta})}} 
     c_\beta^2 f\left(\uniquePathMTb,\left(\bm{\theta},\bm{\theta}_\mathcal{G},\bm{\theta}_{\mathcal{G}'_T}\right),I\otimes P_\beta,op\left(\ket{0}\bra{0}\right)^{\otimes n}
\otimes\rho\right)^2\\
\geq& \frac{1}{4^{m+3n+6}} 
    \sum_{\substack{\bm{\theta} \in M_t \\
    \bm{\theta}_\mathcal{G} \in M_{\mathrm{swap}}(\bm{\theta},\bm{\theta}_{\mathcal{G}'_T})\\
    \bm{\theta}_{\mathcal{G}'_T} \in M_{\mathrm{anti}}(\bm{\theta})}} 
     c_\beta^2 \tau^{K+1}\\    
\geq& \frac{c_\beta^2\tau^{K+1}}{4^{m+3n+6}} 4^m \left(\frac{1}{2}\right)^{\Fnumall} 4^4 4^{3n}\left(\frac{1}{4}\right)^{K+3}\\
=& c_\beta^2\left(\frac{1}{2}\right)^{\Fnumall+8}\left(\frac{\tau}{4}\right)^{K+1}\\
\geq& \left\|O\right\|_{min}^2\left(\frac{1}{2}\right)^{\order{\log n}}\left(\frac{\tau}{4}\right)^{\order{\log n}}= \Omega\left(\frac{1}{\mathrm{poly}(n)}\right).
\end{aligned}  
\end{equation}
Here, the second inequality holds because the Pauli operator ${\mathbf{s}}^{\hspace{0.1em}\left((\bm{\theta},\bm{\theta}_\mathcal{G},\bm{\theta}_{\mathcal{G}'_T}),\beta\right)}_0$ acts trivially (i.e., as the identity $I$) on all system qubits, while its support on the ancilla qubits has weight at most $K+1$.



\end{proof}

\section{Strategy for activating multiple parameters}\label{app:activate_multiple_paras}
In this section, we present a strategy to activate multiple parameters in PQCs.
Suppose we aim to activate a set of parameters contained in the gate set $\{T_1, T_2, \ldots\}$.
A straightforward approach is to directly extend the method in \cref{app:activate_single_parameter}: specifically, we modify multiple gadgets $\gadget{\bm{\theta}}$ into $\mathcal{G}'_{T_i}(\bm{\theta})$ and insert an additional $\gadget{\bm{\theta}}$ before each $T_i$.
According to the proof technique in \cref{thm:activate_parameters_app},
$\order{\log n}$ parameters can be activated simultaneously.

We next propose a nontrivial approach to activate parameters that are located in close proximity to each other.
Specifically, we first identify the parameters placed nearest to the measurement layer and record the qubits they act on as $t_1, \ldots, t_S$.
{We then consider a backward light cone of these $S$ qubits in the original circuit, which defines a region that contains all parameters to be activated.}
We refer to this region as the \emph{activation zone}, highlighted by the red dashed line in \cref{fig:activate_multi_paras}. 
To activate the parameters within the activation zone, we modify $S$ $\gadget{\bm{\theta}}$ in the gadget layer into 
$\mathcal{G}'_{T_i}(\bm{\theta})$, each acting on qubits $t_1, \ldots, t_S$, respectively. 
Finally, we insert a layer of $\gadget{\bm{\theta}}$ gates within the support of the activation zone.
The resulting circuit is referred to as the \emph{$\{T_1, T_2, \ldots\}$-activating MPQC}, and the entire construction procedure is illustrated in \cref{fig:activate_multi_paras}.

Next, we prove that, under certain conditions, the parameters within the activation zone are trainable.
To establish this result, we introduce the following notations.
Let the unitary blocks in the activation zone be denoted by ${U_i(\theta_i)}$ for $i \in act$ and denote the support size of the activation zone by $K_{act}$.
We represent the corresponding quantum channel as $\channelMTs{\bm{\theta}, \bm{\theta}_{\mathcal{G}}, \bm{\theta}_{\mathcal{G}'_T}}$, and its unitary representation (including the ancilla qubits) as $\unitaryMTs{\bm{\theta}, \bm{\theta}_{\mathcal{G}}, \bm{\theta}_{\mathcal{G}'_T}}$, where, as before, $\bm{\theta}_{\mathcal{G}'_T}$ denotes the parameters in all enlarged gadgets $\mathcal{G}'_{T_i}(\bm{\theta})$, and $\bm{\theta}_{\mathcal{G}}$ collects the parameters in all gadgets $\gadget{\bm{\theta}}$.

\begin{figure}[H]
    \centering
      \includegraphics[width=0.45\textwidth]{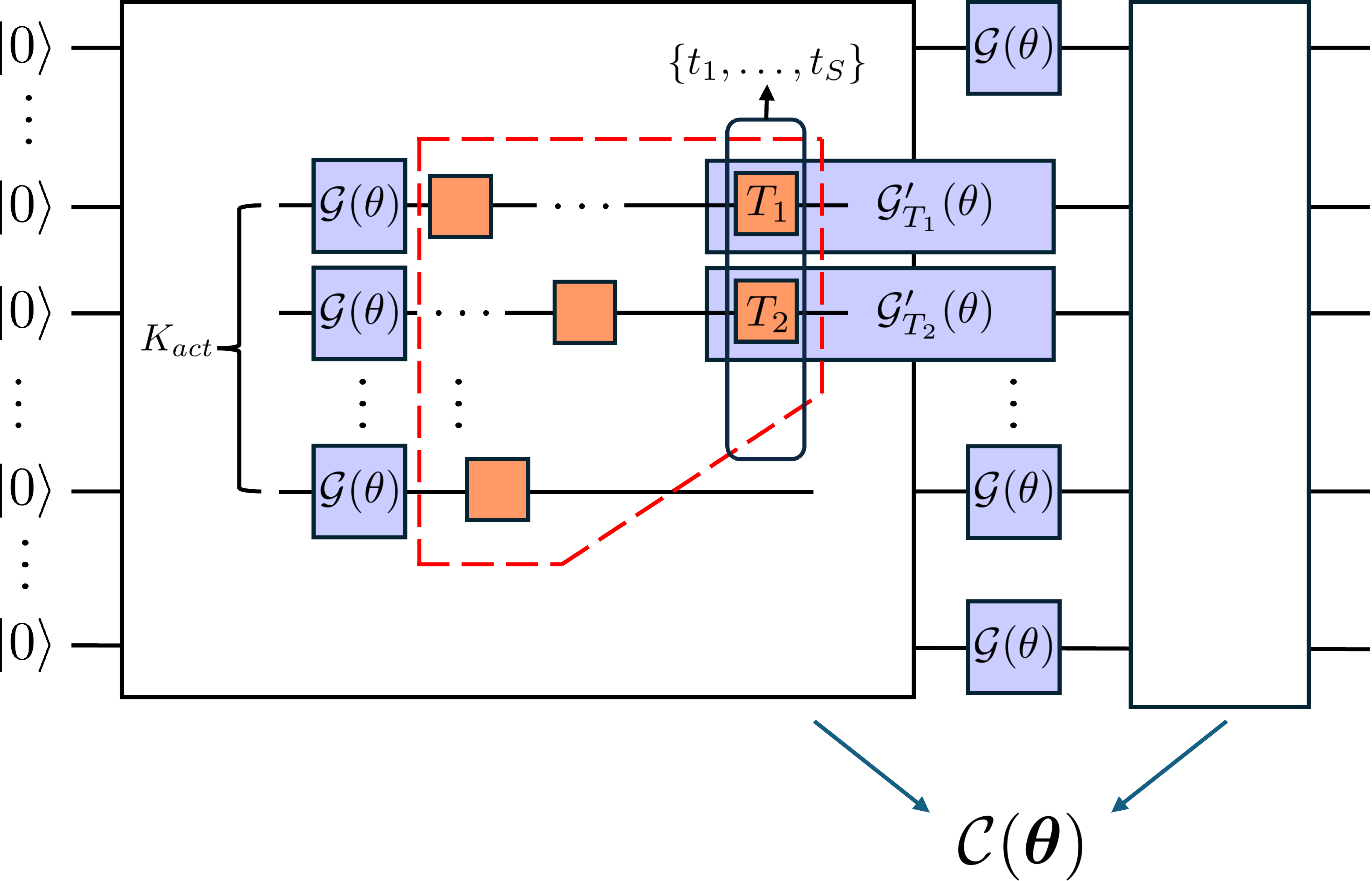}
\caption{
Modified MPQC to activate multiple parameters. The region enclosed by the red dashed line is referred to as the {activation zone}, where the orange boxes indicate parameterized rotation gates. $\{t_1, \ldots, t_S\}$ denotes the set of qubit indices on which the gates in the last layer of the activation zone act. $K_{act}$ denotes the support size of this region. All parameters within this zone can be simultaneously activated by this circuit.
}\label{fig:activate_multi_paras}
\end{figure} 

We are now ready to prove the following theorem, which guarantees that parameters in $\{\theta_i\}_{i \in act}$ are trainable:

\begin{theorem}\label{thm:activate_multi_parameters}
  Consider a $\{T_1,T_2,...\}$-activating MPQC $\channelMTs{\bm{\theta}, \bm{\theta}_{\mathcal{G}}, \bm{\theta}_{\mathcal{G}'_T}}$ measured a $k$-local observable $O = \sum_{\alpha} c_\alpha P_\alpha$ and a parameter $\theta_j$ in the activation zone. Suppose that the following conditions are satisfied:
\begin{itemize}
  \item There exists a Pauli word $P_\beta$ in the observable $O$ and a configuration $\bm{\theta} \in \AngleSet^m$ such that the backward-propagated Pauli operator $s_L^{\left(\bm{\theta}, \beta\right)}$ reaches the gadget layer and satisfies $s_L^{\left(\bm{\theta}, \beta\right)}|_{\{t_1, \ldots, t_S\}} \neq I$.
  \item For arbitrary Pauli word $P$ whose support lies in $\{t_1,...t_S\}$, we backward propagate the unitaries in the activation zone, i.e, $\{\mathbf{U}_{i}(\theta_i)\}$, $i\in act$ 
from arbitrary Pauli word $\mathbf{P}$ whose support lies in $\{t_1,...t_S\}$, achieve another 2$n$-qubit Pauli path $\vec{\mathbf{s}}_{act}^{\hspace{0.1em}\left(\{\theta_i\}_{i\in act},\mathbf{P}\right)}$. Suppose there exist some $\mathbf{P}_{act}$ and $\{\theta_i\}_{i\in act}$ for all  $\theta_i \in \AngleSet$ such that
\begin{equation}
\{\mathbf{P}_j,\mathbf{s}^{\hspace{0.1em}\left(\{\theta_i\}_{i\in act},\mathbf{P}_{act}\right)}_{act|\mathbf{P}_j}\} = 0,
\end{equation}
where $\mathbf{s}^{\hspace{0.1em}\left(\{\theta_i\}_{i\in act},\mathbf{P}_{act}\right)}_{act|\mathbf{P}_j}$ denotes the Pauli operator associated with the segment following $\mathbf{U}_j(\theta_j)$ in $\vec{\mathbf{s}}_{act}^{\hspace{0.1em}\left(\{\theta_i\}_{i\in act},P\right)}$.
\item $K$, $K_{act}$, $\Fnumall$, and $\Fnumact$ (defined as the number of parameters within the activation zone) are all of order $\order{\log n}$.
\end{itemize}
Then, we have that for the loss function of the $\{T_1, T_2, \ldots\}$-activating MPQC:
\begin{equation}
  \begin{aligned}
  &\lossMTs = \tr{\channelMTs{\bm{\theta}, \bm{\theta}_{\mathcal{G}}, \bm{\theta}_{\mathcal{G}'_T}}(\rho)O}\\
&= \tr{\unitaryMTs{\bm{\theta}, \bm{\theta}_{\mathcal{G}}, \bm{\theta}_{\mathcal{G}'_T}}\left(op\left(\ket{0}\bra{0}\right)^{\otimes (n+K_{act})}\otimes\rho\right)\unitaryMTs{\bm{\theta}, \bm{\theta}_{\mathcal{G}}, \bm{\theta}_{\mathcal{G}'_T}}^\dagger I\otimes O},
  \end{aligned}
\end{equation}
the gradient variance with respect to a parameter $\theta_j$ for $j \in \mathrm{act}$ can be lower bounded as
\begin{equation}\label{eq:activate_multi_parameter_lower_bound}
  \operatorname{Var}_{(\bm{\theta},\bm{\theta}_{\mathcal{G}},\bm{\theta}_{\mathcal{G}'_T})}\left[\frac{\partial \lossMTs}{\partial{\theta_j}}\right]\geq \left\|O\right\|_{min}^2\left(\frac{1}{2}\right)^{\Fnumall+\Fnumact+8S}\left(\frac{\tau}{4}\right)^{K+K_{act}}= \Omega\left(\frac{1}{\mathrm{poly}(n)}\right).
\end{equation}
\end{theorem}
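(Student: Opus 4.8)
The plan is to generalize the proof of \cref{thm:activate_parameters_app} from activating a single gate $T$ to activating all parameters in the activation zone simultaneously, essentially by running the same backward-propagation argument through $S$ parallel ``bridges'' instead of one. First I would express $\operatorname{Var}_{(\bm{\theta},\bm{\theta}_{\mathcal{G}},\bm{\theta}_{\mathcal{G}'_T})}\left[\frac{\partial \lossMTs}{\partial{\theta_j}}\right]$ in the Pauli-path-integral / quantum-rotation-2-design language via \cref{lem:two_variance_MPQC} (which applies since the same orthogonality and non-constancy arguments of \cref{lem:orthogonality_MPQC,lem:non_constant_MPQC} carry over verbatim to the $\{T_1,T_2,\ldots\}$-activating MPQC — the gadget structure guaranteeing the split property is unchanged), obtaining an expression of the form
\begin{equation*}
  \frac{1}{4^{m+3n+\cdots}}\sum_{\substack{\bm{\theta}\in\AngleSet^m,\ \bm{\theta}_\mathcal{G}\in\AngleSet^{3n}\\ \bm{\theta}_{\mathcal{G}'_T}\in\AngleSet^{6S},\ \{\mathbf{P}_j,\mathbf{s}_j^{(\cdots,\alpha)}\}=0}}\sum_\alpha c_\alpha^2\, f\!\left(\cdots\right)^2,
\end{equation*}
and then lower-bound it by restricting to the single Pauli word $P_\beta$ guaranteed by the first hypothesis and to a carefully chosen family of discrete angle configurations.

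The key steps, in order: (1) fix $P_\beta$ and choose $\bm{\theta}\in M_{\{t_1,\ldots,t_S\}}\subseteq\AngleSet^m$, the set of configurations for which the Pauli operator backward-propagated from $P_\beta$ is nontrivial on the qubit set $\{t_1,\ldots,t_S\}$ at the gadget layer; since only the $\order{\Fnumall}$ parameters in the post-gadget-layer light cone matter, and the $\{0,\pi\}$ vs $\{\pi/2,3\pi/2\}$ redundancy gives a factor $2^{\Fnumall}$, we get $|M_{\{t_1,\ldots,t_S\}}|\geq 4^m(1/2)^{\Fnumall}$. (2) Choose the parameters $\{\theta_i\}_{i\in act}$ of the activation zone to route a nontrivial Pauli operator — via the second hypothesis, which asserts the existence of $\mathbf{P}_{act}$ and angles making $\{\mathbf{P}_j,\mathbf{s}^{(\cdots)}_{act|\mathbf{P}_j}\}=0$ — so that the gate $U_j(\theta_j)$ is ``hit'' anti-commutingly; again the binary redundancy gives a factor $2^{\Fnumact}$, so the count is $\geq 4^{\#act}(1/2)^{\Fnumact}$. (3) Choose the $S$ enlarged gadgets $\mathcal{G}'_{T_i}(\bm{\theta})$ (parameters $\bm{\theta}_{\mathcal{G}'_T}$) exactly as in \cref{thm:activate_parameters_app}: the last three two-qubit rotations swap the Pauli operator onto the ancilla, and the first three ``prepend'' a controllable nontrivial operator $Q$ at the location of $T_i$ arranged to anti-commute appropriately; by \cref{app:impact_gadget} each such gadget admits at least $16^2=4^4$ valid configurations, so across $S$ gadgets a factor $\geq 4^{4S}$, contributing also $(1/2)^{8S}$ relative to the full $4^{6S}$. (4) Choose the bottom gadget layer $\bm{\theta}_\mathcal{G}\in M_{\mathrm{swap}}$ exactly as in the proof of \cref{thm:absence_BP_MPQC}, swapping the $\leq K+K_{act}$ nontrivial system-qubit Pauli operators onto ancillas, giving $\geq 4^{3n}(1/4)^{K+\cdots}$. (5) Assemble: for all these configurations the final Pauli operator $\mathbf{s}_0$ is supported only on ancillas, with weight at most $K+K_{act}=\order{\log n}$, so each factor $\tr{op(\ketbra{0})P}^2\geq\tau$ contributes, yielding $f(\cdots)^2\geq\tau^{K+K_{act}}$; multiplying the counting factors against the $1/4^{m+3n+6S+\cdots}$ normalization collapses to the stated bound $\|O\|_{\min}^2(1/2)^{\Fnumall+\Fnumact+8S}(\tau/4)^{K+K_{act}}=\Omega(1/\mathrm{poly}(n))$ because $S,\Fnumall,\Fnumact,K,K_{act}$ are all $\order{\log n}$.

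The main obstacle I expect is step (2)–(3) bookkeeping: I need to verify that the Pauli operator routed through the activation zone by the chosen $\{\theta_i\}_{i\in act}$ is consistent with what the enlarged gadgets $\mathcal{G}'_{T_i}$ can produce and subsequently swap away — i.e., the ``bridge'' Pauli operators $Q_2$ at each $T_i$ location must simultaneously (a) anti-commute with the appropriate generator as demanded by the second hypothesis, and (b) be swappable onto ancillas by the trailing two-qubit rotations and then made trivial on system qubits before entering $U_j$'s light cone. Because $S$ such bridges operate in parallel and all feed into a common backward cone, I must check they do not interfere — that the supports of the routed operators on the system qubits remain inside $\{t_1,\ldots,t_S\}$ until the activation zone and then get independently cleared. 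The cleanest way is to treat each of the $S$ bridge-constructions as acting on disjoint ancilla registers (which is why the ancilla count is $n+K_{act}$) and to invoke the per-gadget freedom of \cref{app:impact_gadget} independently; a picture analogous to \cref{fig:Pauli_path_MPQCT} with $S$ parallel orange lines makes the argument transparent. The remaining inequalities are then routine counting identical in spirit to \cref{eq:gradient_variance_lower_bound_MT}.
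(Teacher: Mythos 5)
Your proposal is correct and follows essentially the same route as the paper's proof: the same Pauli-path/rotation-2-design expression restricted to a single $P_\beta$, the same partition of the discrete angle choices into post-gadget-layer, activation-zone, enlarged-gadget ($M_{\mathrm{anti}}$), and swap-gadget ($M_{\mathrm{swap}}$) sets, and the same counting that yields $\left(\frac{1}{2}\right)^{\Fnumall+\Fnumact+8S}\left(\frac{\tau}{4}\right)^{K+K_{act}}\left\|O\right\|_{min}^2$. The only differences are cosmetic bookkeeping (the paper splits $\bm{\theta}$ explicitly into $\bm{\theta}_{af}$, $\bm{\theta}_{act}$, and $\bar{\bm{\theta}}$ so that each factor of $4$ is counted exactly once, and it orders the choices along the backward propagation so that $\bm{\theta}_{\mathcal{G}'_T}$ fixes the operator $\mathbf{P}_{act}$ entering the activation zone before $\bm{\theta}_{act}$ is selected), which resolves exactly the consistency concern you flag at the end.
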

\begin{proof}
The proof technique is similar to that of \cref{thm:activate_parameters_app}.
The main difference lies in the need to handle the backward propagation of the Pauli path throughout the entire activation zone.
We again express $\operatorname{Var}_{(\bm{\theta},\bm{\theta}_{\mathcal{G}},\bm{\theta}_{\mathcal{G}'_T})}\left[\frac{\partial \lossMTs}{\partial{\theta_j}}\right]$ in the form of Pauli path integral combined with the quantum rotation 2-design, and derive a lower bound by focusing on a specific $P_\beta$ in $O$:
\begin{equation}\label{eq:gradient_variance_expressionMT_for multi_pars}
\begin{aligned}
&\operatorname{Var}_{(\bm{\theta},\bm{\theta}_{\mathcal{G}},\bm{\theta}_{\mathcal{G}'_T})}\left[\frac{\partial \lossMTs}{\partial{\theta_j}}\right] \\
=& \frac{1}{4^{m+3(n-S+K_{act})+6S}} 
    \sum_{\substack{\bm{\theta} \in \AngleSet^m \\
    \bm{\theta}_\mathcal{G} \in \AngleSet^{3(n-S+K_{act})}\\
    \bm{\theta}_{\mathcal{G}'_T} \in \AngleSet^{6S}\\
        \{\mathbf{P}_j, \mathbf{s}_j^{\left((\bm{\theta},\bm{\theta}_\mathcal{G},\bm{\theta}_{\mathcal{G}'_T}),\beta\right)}\} = 0}} 
    \sum_{\alpha} c_\alpha^2 f\left(\uniquePathMT,\left(\bm{\theta},\bm{\theta}_\mathcal{G},\bm{\theta}_{\mathcal{G}'_T}\right),I\otimes P_\alpha,op\left(\ket{0}\bra{0}\right)^{\otimes (n+K_{act})}
\otimes\rho\right)^2\\
\geq& \frac{1}{4^{m+3(n-S+K_{act})+6S}} 
    \sum_{\substack{\bm{\theta} \in \AngleSet^m \\
    \bm{\theta}_\mathcal{G} \in \AngleSet^{3(n-S+K_{act})}\\
    \bm{\theta}_{\mathcal{G}'_T} \in \AngleSet^{6S}\\
        \{\mathbf{P}_j, \mathbf{s}_j^{\left((\bm{\theta},\bm{\theta}_\mathcal{G},\bm{\theta}_{\mathcal{G}'_T}),\beta\right)}\} = 0}} 
    c_\beta^2 f\left(\uniquePathMTb,\left(\bm{\theta},\bm{\theta}_\mathcal{G},\bm{\theta}_{\mathcal{G}'_T}\right),I\otimes P_\beta,op\left(\ket{0}\bra{0}\right)^{\otimes (n+K_{act})}
\otimes\rho\right)^2.
\end{aligned}  
\end{equation}

We again consider a specific set of angle configurations in the circuit, where all angles belong to $\AngleSet$. To formalize our construction, we partition the angles in $\bm{\theta}$ into three parts:
\begin{itemize}
  \item $\bm{\theta}_{af} \in [0, 2\pi)^{\#af}$ denotes the set of angles after the gadget layer, where $\#af$ is the number of such parameterized gates;
  \item $\bm{\theta}_{act} \in [0, 2\pi)^{\#act}$ denotes the set of angles within the activation zone, where $\#act$ is the number of such parameterized gates;
  \item $\bar{\bm{\theta}} \in [0, 2\pi)^{m - \#af - \#act}$ denotes the remaining angles in $\bm{\theta}$, excluding $\bm{\theta}_{af}$ and $\bm{\theta}_{act}$.
\end{itemize}
In the following, we demonstrate how to choose $\bm{\theta}_{af}$, $\bm{\theta}_{\mathcal{G}'_T}$, $\bm{\theta}_{act}$, $\bm{\theta}_\mathcal{G}$ and $\bar{\bm{\theta}}$ in the discrete angle set to derive a lower bound of order $\Omega\left(\frac{1}{\mathrm{poly}(n)}\right)$ for \cref{eq:gradient_variance_expressionMT_for multi_pars}. 

We first select configurations of $\bm{\theta}_{af}$ such that the backward-propagated Pauli operator $s^{\left(\bm{\theta}, \beta\right)}_{L}$ (i.e., the operator reaching the gadget layer) satisfies $s^{\left(\bm{\theta}, \beta\right)}_{L}|_{\{t_1,...t_S\}} \neq I$. Since the backward light cone of $P_\beta$ before the gadget layer contains at most $\Fnumall$ gates, we only need to fix at most $\Fnumall$ angles in $\bm{\theta}_{af}$ to make this requirement hold. Let $M_{af}$ denote the maximal set of such angle configurations of $\bm{\theta}_{af}$, we have
\begin{equation}
\left|M_{af}\right|\geq 4^{\#af - \Fnumall}2^{\Fnumall}=4^{\#af}\left(
  \frac{1}{2}\right)^{\Fnumall}.
\end{equation}

We then illustrate the choice of $\bm{\theta}_{\mathcal{G}'_T}$. Specifically, we select $\bm{\theta}_{\mathcal{G}'_T}$ such that the Pauli operator propagated to the activation zone becomes $\mathbf{P}_{act}$, ensuring that $\{\mathbf{P}_j, \mathbf{s}^{\hspace{0.1em}\left(\{\theta_i\}_{i\in act}, \mathbf{P}_{act}\right)}_{act|\mathbf{P}_j}\} = 0$ for some $\bm{\theta}_{act}$. 
Here, we employ the same discrete angle construction of $\gadgetT{\bm{\theta}}$ as used in the proof of \cref{thm:activate_parameters_app}, which first swaps the nontrivial Pauli operator to the ancilla and then uses another three angles to generate the desired Pauli operator $\mathbf{P}_{act}$. 
Again, we denote by $M_{\mathrm{anti}}(\bm{\theta}_{af})$ the set of parameter configurations $\bm{\theta}_{\mathcal{G}'_T} \in \AngleSet^{6S}$ that satisfy the required condition. 
Similar to the counting argument in \cref{eq:anti_count}, we obtain
\begin{equation}
  \left|M_{\mathrm{anti}}(\bm{\theta}_{af})\right| \geq 4^{4S},
\end{equation}
as the weight of $\mathbf{P}_{act}$ is at most $S$.

We now move on to the choice of $\bm{\theta}_{act}$. 
We choose $\bm{\theta}_{act}$ such that 
$\{\mathbf{s}^{\hspace{0.1em}\left(\{\theta_i\}_{i \in Act}, \mathbf{P}_{act}\right)}_{act|\mathbf{P}_j}, \mathbf{P}_j\} = 0$. 
Since the number of parameterized gates in the activation zone 
is at most $\Fnumact$, we only need to fix $\Fnumact$ angles.   
Let $M_{act}$ denote the maximal set of such angle configurations of $\bm{\theta}_{act}$. 
Then, we have
\begin{equation}
\left|M_{act}\right|\geq 4^{\#act-\Fnumact}2^{\Fnumact}=4^{\#act}\left(
  \frac{1}{2}\right)^{\Fnumact}.
\end{equation}

For $\bm{\theta}_\mathcal{G}$, we adopt the same configuration as in the proof of \cref{thm:absence_BP_MPQC}, 
which transforms the operator $IP$ into $PI$. 
Here, $\bm{\theta}_\mathcal{G}$ consists of two parts: one located in the gadget layer and the other positioned 
before the activation zone (as shown on the leftmost side of \cref{fig:activate_multi_paras}). 
For the $\bm{\theta}_\mathcal{G}$ in the gadget layer, since the support size of the Pauli operator propagated from $P_\beta$ 
is at most $K$, and we replace $S$ of them with $\gadgetT{\bm{\theta}}$, we only need to fix the angle configurations of $K-S$ gadgets, 
while the remaining $n-K-S$ gadgets can take arbitrary angle configurations in $\AngleSet^{3}$, 
as their inputs are $I\otimes I$. 
For the $\gadget{\bm{\theta}}$ gates located at the left boundary of the activation zone, 
since there are at most $K_{act}$ such gadgets, we need to fix at most $K_{act}$ of them. 
We denote this set of configurations as $M_{\mathrm{swap}}(\bm{\theta}_{af},\bm{\theta}_{act})$. 
Following a similar argument to that in the proof of \cref{thm:absence_BP_MPQC}, we obtain that for any 
$\bm{\theta}_{af}$ and $\bm{\theta}_{act}$,
\begin{equation}
  \left|M_{\mathrm{swap}}(\bm{\theta}_{af},\bm{\theta}_{act})\right|
  \geq 4^{3(n-K-S)}16^{K-S}16^{K_{act}} 
  = 4^{3(n-S+K_{act})}\left(\frac{1}{4}\right)^{K_{act}+K+2S}.
\end{equation}

For $\bar{\bm{\theta}}$, it can take arbitrary angle configurations within $\AngleSet^{m - \#af - \#act}$.

From a geometric perspective, when choosing $\bm{\theta}_{af}\in M_{af}$, $\bm{\theta}_{\mathcal{G}'_T} \in M_{\mathrm{anti}}(\bm{\theta}_{af})$,
$\bm{\theta}_{act}\in M_{act}$
and $\bm{\theta}_\mathcal{G} \in M_{\mathrm{swap}}(\bm{\theta}_{af},\bm{\theta}_{act})$, the corresponding Pauli path takes the form illustrated in \cref{fig:Pauli_path_MPQCT}, which extend the case in \cref{fig:Pauli_path_MPQCT} to multiple parameters. 

\begin{figure}[H]
    \centering
\includegraphics[width=0.6\textwidth]{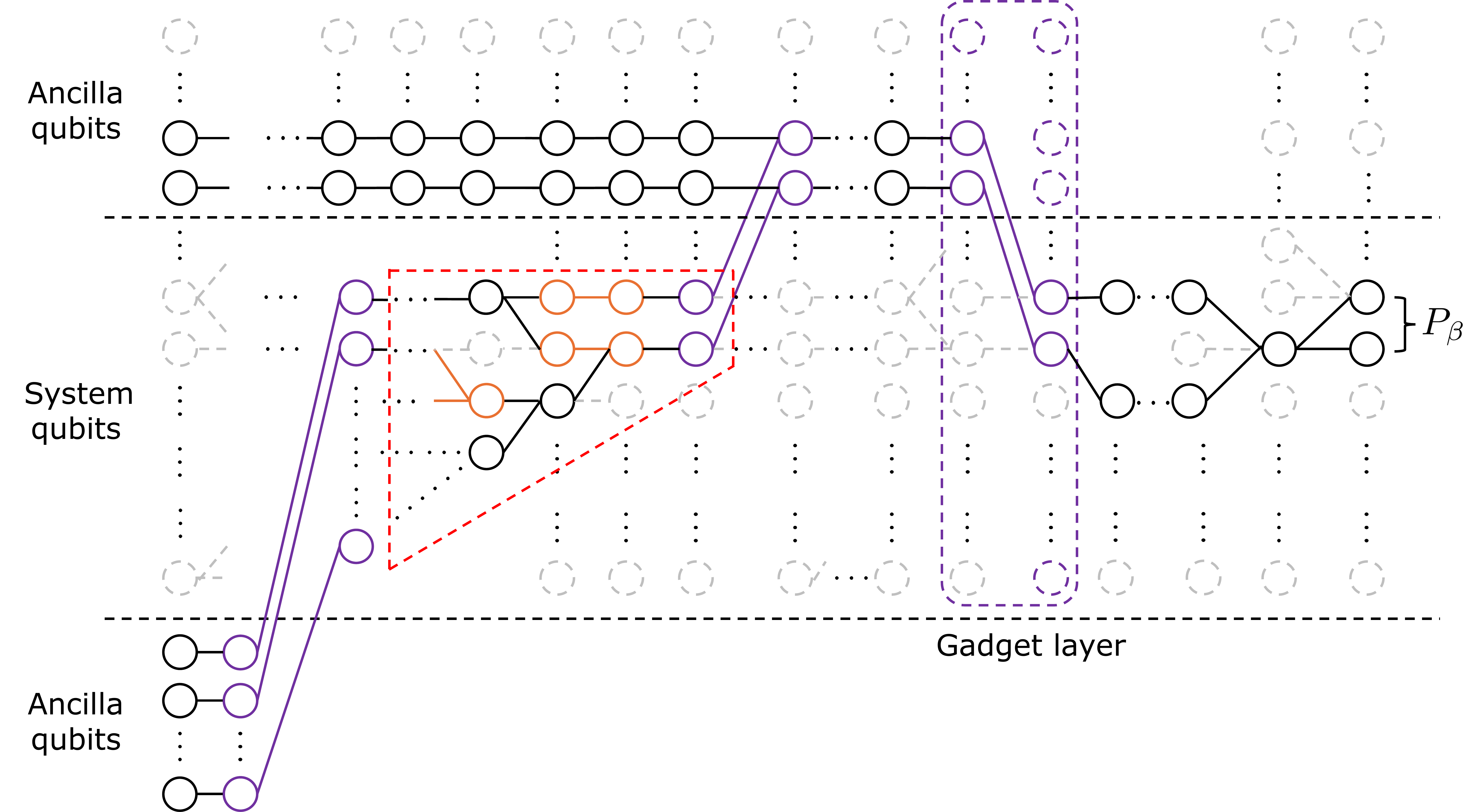}
\caption{Pauli path propagated from the observable $P_\beta$ of the $\{T_1, T_2, \ldots\}$-activating MPQC. 
The region enclosed by the red dashed line denotes the activation zone, while the ancilla qubits introduced by the insertion 
of $\gadget{\bm{\theta}}$ gates before the activation zone are located below the black dashed line. 
The choices of $\bm{\theta}_{af} \in M_{\mathrm{af}}$ and $\bm{\theta}_{\mathcal{G}'_T} \in M_{\mathrm{anti}}(\bm{\theta}_{af})$ 
ensure that $\mathbf{P}_{act}$ is backward propagated into the activation zone. 
Then, by choosing $\bm{\theta}_{act} \in M_{act}$, we guarantee that 
$\{\mathbf{P}_j, \mathbf{s}^{\hspace{0.1em}\left(\{\theta_i\}_{i\in act},\mathbf{P}_{act}\right)}_{act|\mathbf{P}_j}\} = 0$. 
Finally, the configuration $\bm{\theta}_{\mathcal{G}} \in M_{\mathrm{swap}}(\bm{\theta}_{af},\bm{\theta}_{act})$ 
swaps these operators onto the corresponding ancilla qubits, thereby ensuring nonvanishing Pauli paths 
and maintaining finite gradient variance for the parameters $\theta_j$ within the activation zone.
}\label{fig:Pauli_path_MPQC_activating_pras}
\end{figure}

Then $\operatorname{Var}_{(\bm{\theta},\bm{\theta}_{\mathcal{G}},\bm{\theta}_{\mathcal{G}'_T})}\left[\frac{\partial \lossMTs}{\partial{\theta_j}}\right]$ can be lower bounded as
\begin{equation}\label{eq:gradient_variance_lower_bound_MTs}
\begin{aligned}
&\operatorname{Var}_{(\bm{\theta},\bm{\theta}_{\mathcal{G}},\bm{\theta}_{\mathcal{G}'_T})}\left[\frac{\partial \lossMTs}{\partial{\theta_j}}\right] \\
\geq& \frac{1}{4^{m+3(n-S+K_{act})+6S}} 
    \sum_{\substack{\bm{\theta}_{af} \in M_{af},\bm{\theta}_{act} \in M_{act},\bar{\bm{\theta}} \\
    \bm{\theta}_{\mathcal{G}'_T} \in M_{\mathrm{anti}}(\bm{\theta}_{af})\\
    \bm{\theta}_\mathcal{G} \in M_{\mathrm{swap}}(\bm{\theta}_{af},\bm{\theta}_{act})}} 
     c_\beta^2 f\left(\uniquePathMTb,\left(\bm{\theta},\bm{\theta}_\mathcal{G},\bm{\theta}_{\mathcal{G}'_T}\right),I\otimes P_\beta,op\left(\ket{0}\bra{0}\right)^{\otimes (n+K_{act})}
\otimes\rho\right)^2\\
\geq& \frac{1}{4^{m+3(n-S+K_{act})+6S}} 
    \sum_{\substack{\bm{\theta}_{af} \in M_{af},\bm{\theta}_{act} \in M_{act},\bar{\bm{\theta}} \\
    \bm{\theta}_{\mathcal{G}'_T} \in M_{\mathrm{anti}}(\bm{\theta}_{af})\\
    \bm{\theta}_\mathcal{G} \in M_{\mathrm{swap}}(\bm{\theta}_{af},\bm{\theta}_{act})}} 
     c_\beta^2 \tau^{K+K_{act}}\\
\geq& \frac{1}{4^{m+3(n-S+K_{act})+6S}} 
    \sum_{\substack{\bm{\theta}_{af} \in M_{af},\bm{\theta}_{act} \in M_{act},\bar{\bm{\theta}}}} 
    \left|M_{\mathrm{anti}}(\bm{\theta}_{af})\right| \left|M_{\mathrm{swap}}(\bm{\theta}_{af},\bm{\theta}_{act})\right| c_\beta^2 \tau^{K+K_{act}}\\
\geq& \frac{c_\beta^2\tau^{K+K_{act}}}{4^{m+3(n-S+K_{act})+6S}} \underbrace{4^{\#af}\left(
  \frac{1}{2}\right)^{\Fnumall}}_{\leq |M_{af}|}\underbrace{4^{\#act}\left(
  \frac{1}{2}\right)^{\Fnumact}}_{\leq |M_{act}|} \underbrace{4^{m - \#af - \#act}}_{ \text{all possible }\bar{\bm{\theta}}} \underbrace{4^{4S}}_{\leq \left|M_{\mathrm{anti}}(\bm{\theta}_{af})\right|}\underbrace{4^{3(n-S+K_{act})}\left(\frac{1}{4}\right)^{K_{act}+K+2S}}_{\leq \left|M_{\mathrm{swap}}(\bm{\theta}_{af},\bm{\theta}_{act})\right|} \\
=& c_\beta^2\left(\frac{1}{2}\right)^{\Fnumall+\Fnumact+8S} \left(\frac{\tau}{4}\right)^{K_{act}+K}\\
\geq& \left\|O\right\|_{min}^2\left(\frac{1}{2}\right)^{\order{\log n}}\left(\frac{\tau}{4}\right)^{\order{\log n}}= \Omega\left(\frac{1}{\mathrm{poly}(n)}\right).
\end{aligned}  
\end{equation}
Similarly, the second inequality holds because the Pauli operator ${\mathbf{s}}^{\hspace{0.1em}\left((\bm{\theta},\bm{\theta}_\mathcal{G},\bm{\theta}_{\mathcal{G}'_T}),\beta\right)}_0$ acts trivially (i.e., as the identity $I$) on all system qubits, while its support on the ancilla qubits has weight at most $K+K_{act}$.
\end{proof}
\begin{remark}
In \cref{thm:activate_multi_parameters}, we assumed the existence of a Pauli word $P_\beta$ in the observable $O$ and a configuration $\bm{\theta} \in \AngleSet^m$ such that the Pauli path $s_L^{\left(\bm{\theta}, \beta\right)}|_{\{t_1, \ldots, t_S\}} \neq I$. In fact, this assumption can be weakened to only require the existence of a Pauli word $P_\beta$ and a configuration $\bm{\theta} \in \AngleSet^m$ such that at least one Pauli path $s_L^{(\bm{\theta}, \beta)}$ has weight at least $S$. This relaxed condition can be handled using a construction similar to that in \cref{fig:G_T'_different_sys}. If no such path exists, we note that $S\leq K_{act} = \order{\log n}$ according to \cref{thm:activate_multi_parameters}, and we can always shift the gadget layer earlier in the circuit to increase the weight of the Pauli operator reached the gadget layer, thereby ensuring that activation is still possible.
\end{remark}

\section{Proof of \cref{thm:absence_BP_robustness}}\label{app:noise_robustness}
In this section, we prove that BP can also be eliminated in MPQCs even in the presence of noise. We begin by introducing the noise model and explaining how it affects the Pauli path. Finally, we present the noisy counterparts of \cref{thm:absence_BP_MPQC_parameters_app}, \cref{thm:absence_BP_MPQC_parameters_app}, and \cref{thm:activate_parameters_app}, thereby completing the proof of \cref{thm:absence_BP_robustness}.

\subsection{Noise model and Pauli path integral with noise}
We consider the case of Pauli type noises, which is a common type of noise in quantum circuits and can be described by the following quantum channel $\mathcal{N}$:
\begin{equation}
  \mathcal{N}(\rho)= (1-\sum_i p_i) \rho + \sum_{i} p_i \sigma_i \rho \sigma_i^\dagger,
\end{equation}
where $\sigma_i$ denotes a non-identity Pauli operator, $p_i$ is the corresponding probability, and the total probability $\sum_i p_i < 1$ characterizes the noise strength, which we denote by $\gamma_\mathcal{N}$. In our discussion, we assume that the Pauli noises appear in the quantum circuit. The gates are followed by Pauli noise channels, as shown in Fig~\ref{fig:noisy_gate}.

\begin{figure}[htbp]
  \begin{quantikz}
    \lstick{} & \gate[2]{U} & \gate[wires=2,style={starburst,draw=red,line width=1pt,inner xsep=-4pt,inner ysep=-5pt}]{\mathcal{N}} & \qw \\
    \lstick{} & \qw & \qw & \qw \\
  \end{quantikz}
  \caption{The noisy channel $\widetilde{U}$: ideal gate $U$ followed by Pauli noise channel $\mathcal{N}$ acting on the output.}\label{fig:noisy_gate}
\end{figure}
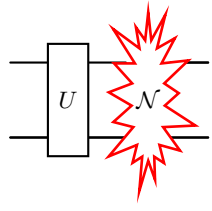

Because of the anti-commuting property of the Pauli operator, the Pauli noise channel $\mathcal{N}$ ($\mathcal{N}^\dagger$) acting on the normalized Pauli operator $s$ can be expressed as:
\begin{equation}\label{eq:Pauli_noise_channel}
  \mathcal{N}(s) = \mathcal{N}^\dagger(s) = (1-\sum_i p_i)s + \sum_{i} p_i \sigma_i s \sigma_i^\dagger=\left(1-2\sum_{i} \mathbf{1}_{ac}(s,\sigma_i) p_i\right) s,
\end{equation}
where $\mathbf{1}_{ac}(s,\sigma_i)$ is the indicator function that equals to $1$ if $s$ and $\sigma_i$ anti-commute, otherwise it equals to $0$.
Thus there is $\mathcal{N}(s)= cs$ for some constant $c$, and because of $\sum_i p_i=\gamma_\mathcal{N}$, we have 
\begin{equation}\label{eq:Pauli_noise_coefficients}
  c =\tr{s\mathcal{N}(s)}= 1-2\sum_{i} \mathbf{1}_{ac}(s,\sigma_i) p_i \geq 1-2\gamma_\mathcal{N}.
\end{equation}

We assume that there is a Pauli noise channel $\mathcal{N}_i$ is following the $i$-block $\mathbf{U}_i(\theta_i)$ in the MPQC. Or in other words, the ideal gate $\mathbf{U}_i(\theta_i)$ is replaced by the noisy channel $\widetilde{\mathbf{U}}_j(\theta_j)(\cdot)=\mathcal{N}_i\circ \mathbf{U}_i(\theta_i)(\cdot)\mathbf{U}_i(\theta_i)^\dagger$ in the noisy MPQC. Moreover, we assume that each $\mathcal{N}_i$ takes the form
\begin{equation}
\mathcal{N}_i = \mathcal{I} \otimes \mathcal{N}_i',
\end{equation}
where $\mathcal{N}_i'$ is a Pauli noise channel acting on the same qubits as $\mathbf{U}_i(\theta_i)$, which is a reasonable assumption for current quantum devices.

Similarly, for the two-qubit gates in the gadget layer, we assume that a Pauli noise channel is applied after each layer. Specifically, the ideal sequence of ideal gates 
\[
\prod_{i=1}^n\left(R_{Z_iZ_{i+n}}(\bm{\theta}_{\mathcal{G}_{i,1}})
R_{Y_iY_{i+n}}(\bm{\theta}_{\mathcal{G}_{i,2}})
R_{X_iX_{i+n}}(\bm{\theta}_{\mathcal{G}_{i,3}})\right)
\] 
is transformed into 
\begin{equation}
  \mathcal{N}_{\mathcal{G}_{1}}\circ \prod_{i=1}^n\mathcal{R}_{Z_iZ_{i+n}}(\bm{\theta}_{\mathcal{G}_{i,1}})\circ  \mathcal{N}_{\mathcal{G}_{2}}\circ \prod_{i=1}^n\mathcal{R}_{Y_iY_{i+n}}(\bm{\theta}_{\mathcal{G}_{i,2}})  \circ  \mathcal{N}_{\mathcal{G}_{3}}\circ \prod_{i=1}^n\mathcal{R}_{X_iX_{i+n}}(\bm{\theta}_{\mathcal{G}_{i,3}}),
\end{equation}
where we write $\mathcal{R}_P(\theta)$ as the channel representation for rotation gate $R_P(\theta)$. This assumption is reasonable since the gates $\prod_{i=1}^nR_{Z_iZ_{i+n}}(\bm{\theta}_{\mathcal{G}_{i,1}})$ ($\prod_{i=1}^n R_{Y_iY_{i+n}}(\bm{\theta}_{\mathcal{G}_{i,2}})$ or $\prod_{i=1}^n R_{X_iX_{i+n}}(\bm{\theta}_{\mathcal{G}_{i,3}})$) can be applied in parallel within a single layer. Finally, we define a Pauli noise channel $\mathcal{N}_{op}$ that follows the application of $n$ copies of $op$, i.e., $\widetilde{op}^{\otimes n}(\cdot)=\mathcal{N}_{op}\circ op^{\otimes n}(\cdot)$.

As a result, the noisy circuit $\unitaryMNoi{\bm{\theta}, \bm{\theta}_\mathcal{G}}$ can be expressed as:
\begin{equation}\label{eq:noisy_MPQC}
  \begin{aligned}
\unitaryMNoi{\bm{\theta},\bm{\theta}_\mathcal{G}}
  =& \mathcal{N}_m \circ \mathbf{U}_m(\theta_m)  \cdots \mathcal{N}_{L+1}\circ \mathbf{U}_{L+1}(\theta_{L+1}) \circ \mathcal{N}_{\mathcal{G}_{1}}\circ \prod_{i=1}^n\mathcal{R}_{Z_iZ_{i+n}}(\bm{\theta}_{\mathcal{G}_{i,1}})\circ  \\
  &\mathcal{N}_{\mathcal{G}_{2}}\circ \prod_{i=1}^n\mathcal{R}_{Y_iY_{i+n}}(\bm{\theta}_{\mathcal{G}_{i,2}})  \circ  \mathcal{N}_{\mathcal{G}_{3}}\circ \prod_{i=1}^n\mathcal{R}_{X_iX_{i+n}}(\bm{\theta}_{\mathcal{G}_{i,3}}) \circ \mathcal{N}_L \circ \mathbf{U}_L(\theta_L) \cdots \mathcal{N}_1 \circ \mathbf{U}_1(\theta_1).
  \end{aligned}
\end{equation}  

Under this condition, the noisy loss function $\lossMNoi$, corresponding to \cref{eq:MPQC_Pauli_path_integral_noiseless}, can be expressed as:
\begin{small}
\begin{equation}\label{eq:MPQC_Pauli_path_integral_noisy}
  \begin{aligned}
&\lossMNoi =\tr{\unitaryMNoi{\bm{\theta},\bm{\theta}_\mathcal{G}}
\left(\widetilde{op}\left(\ket{0}\bra{0}\right)^{\otimes n}
    \otimes \rho \right)  I \otimes O}\\
    &= \sum_{\alpha,\mathbf{s}_m} c_\alpha\tr{I\otimes P_\alpha \mathbf{s}_m} \tr{\unitaryMNoi{\bm{\theta},\bm{\theta}_\mathcal{G}} \left( \widetilde{op}\left(\ket{0}\bra{0}\right)^{\otimes n}
    \otimes \rho \right) \mathbf{s}_m}\\
&=\sum_{\substack{\alpha,\mathbf{s}_m,\mathbf{s}_{m-1},\cdots,\mathbf{s}_0\\  \mathbf{s}_{\mathcal{G}_{1,1}},\mathbf{s}_{\mathcal{G}_{1,2}},\cdots,\mathbf{s}_{\mathcal{G}_{n,3}}
  }} c_\alpha\tr{I\otimes P_\alpha \mathbf{s}_m} 
  \tr{\mathcal{N}_m^\dagger(\mathbf{s}_m){\mathbf{U}}_m(\theta_m) \mathbf{s}_{m-1} {\mathbf{U}}_m(\theta_m)^\dagger} \cdots \tr{\mathcal{N}_{L+1}^\dagger(\mathbf{s}_{L+1}) {\mathbf{U}}_{L+1}(\theta_{L+1}) \mathbf{s}_{\mathcal{G}_{1,1}} {\mathbf{U}}_{L+1}(\theta_{L+1})^\dagger}\cdot\\ 
  &\qquad
  \cdot \tr{\mathcal{N}_{\mathcal{G}_{1}}^\dagger(\mathbf{s}_{\mathcal{G}_{1,1}}) {R}_{11}(\bm{\theta}_{\mathcal{G}_{11}}) \mathbf{s}_{\mathcal{G}_{1,2}} {R}_{11}(-\bm{\theta}_{\mathcal{G}_{11}})} \tr{\mathbf{s}_{\mathcal{G}_{1,2}}{R}_{12}(\bm{\theta}_{\mathcal{G}_{12}}) \mathbf{s}_{\mathcal{G}_{1,3}} {R}_{12}(-\bm{\theta}_{\mathcal{G}_{12}})}\cdots\tr{\mathbf{s}_{\mathcal{G}_{n,3}}{R}_{n3}(\bm{\theta}_{\mathcal{G}_{n3}}) \mathbf{s}_{L} {R}_{n3}(-\bm{\theta}_{\mathcal{G}_{n3}})}\cdot\\
  &\qquad \cdot \tr{\mathcal{N}_{L}^\dagger(\mathbf{s}_{L}) {\mathbf{U}}_L(\theta_L) \mathbf{s}_{L-1} {\mathbf{U}}_L(\theta_L)^\dagger}
  \cdots\tr{\mathcal{N}_{1}^\dagger(\mathbf{s}_{1}) {\mathbf{U}}_1(\theta_1) \mathbf{s}_0 {\mathbf{U}}_1(\theta_1)^\dagger} \tr{\mathcal{N}_{op}^\dagger(\mathbf{s}_{0}){op}\left(\ket{0}\bra{0}\right)^{\otimes n}\otimes\rho}\\
  & =\sum_{\alpha,\vec{\mathbf{s}}} c_\alpha g(\vec{\mathbf{s}}) f\left(\vec{\mathbf{s}},\left(\bm{\theta},\bm{\theta}_\mathcal{G}\right),I\otimes P_\alpha,op\left(\ket{0}\bra{0}\right)^{\otimes n}
\otimes\rho\right),
  \end{aligned}
\end{equation}  
\end{small}
where $g(\vec{\mathbf{s}})$ is the noise effect factor on the Pauli path $\vec{\mathbf{s}}$, defined as the product of the coefficients computed in \cref{eq:Pauli_noise_channel}.
\begin{equation}
g(\vec{\mathbf{s}}):=\tr{\mathbf{s}_{0}\mathcal{N}_{op}^\dagger\otimes\mathcal{I}(\mathbf{s}_{0})}\prod_{i=1}^m \tr{\mathbf{s}_{i}\mathcal{N}_{i}^\dagger(\mathbf{s}_{i})}\prod_{i=1}^3 \tr{\mathbf{s}_{\mathcal{G}_{i,1}}\mathcal{N}_{\mathcal{G}_{i}}^\dagger(\mathbf{s}_{\mathcal{G}_{i,1}})}.
\end{equation}

\subsection{Lower bounds of variance and gradient variance of the loss function of noisy MPQCs}
With the descrptions in the previous subsection, we can prove the following theorem
\begin{theorem}\label{thm:absence_BP_noisy_MPQC}
For an MPQC measured with a $k$-local observable $O = \sum_\alpha c_\alpha P_\alpha$, suppose the conditions stated in \cref{subapp:lower_bound_gradient} hold, then under Pauli noise with strength at most $\gamma<1/2$ applied after each block, the variance of the loss function is lower bounded by
  \[
  \operatorname{Var}_{(\bm{\theta},\bm{\theta}_\mathcal{G})}\left[\lossMNoi\right] \geq (1-2\gamma)^{2(\Fnumall+4)} \left(\frac{\tau}{4}\right)^{K}\left\|O\right\|_{HS}^2 = \Omega\left(\frac{1}{\mathrm{poly}(n)}\right).
  \]
\end{theorem}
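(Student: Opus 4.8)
The plan is to replay the proof of \cref{thm:absence_BP_MPQC}, carrying along the extra multiplicative noise factor $g(\vec{\mathbf{s}})$ that appears in the Pauli-path expansion \eqref{eq:MPQC_Pauli_path_integral_noisy} of $\lossMNoi$, and to lower-bound it on exactly the family of Pauli paths that already drives the noiseless lower bound.

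First I would establish the noisy analogue of \cref{lem:two_variance_MPQC}. The crucial observation is that $g(\vec{\mathbf{s}})$ depends only on the Pauli path $\vec{\mathbf{s}}$ and on the (fixed) noise channels, not on the rotation angles $(\bm{\theta},\bm{\theta}_\mathcal{G})$, so it factors out of every angle-average. Hence the quantum-rotation-$2$-design identities (\cref{cor:two_design}, \cref{lem:cross_term}), the orthogonality condition (\cref{lem:orthogonality_MPQC}), and the non-constancy property (\cref{lem:non_constant_MPQC}) all transfer verbatim with $f$ replaced by $g\cdot f$; moreover $g(\vec{\mathbf{s}})>0$ always, since by \eqref{eq:Pauli_noise_coefficients} each of its factors is at least $1-2\gamma>0$. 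This yields
\[
  \operatorname{Var}_{(\bm{\theta},\bm{\theta}_\mathcal{G})}\!\left[\lossMNoi\right]
  =\frac{1}{4^{m+3n}}\sum_{\substack{\bm{\theta}\in\AngleSet^m\\ \bm{\theta}_\mathcal{G}\in\AngleSet^{3n}}}\sum_{\alpha}c_\alpha^2\,g\!\left(\uniquePathM\right)^2 f\!\left(\uniquePathM,\left(\bm{\theta},\bm{\theta}_\mathcal{G}\right),I\otimes P_\alpha,op\!\left(\ketbra{0}\right)^{\otimes n}\otimes\rho\right)^2 .
\]

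Next I would restrict this sum to the angle family used in the proof of \cref{thm:absence_BP_MPQC}: an arbitrary $\bm{\theta}\in\AngleSet^m$ together with $\bm{\theta}_\mathcal{G}\in M_{\mathrm{swap}}(\bm{\theta})$, so that every nontrivial single-qubit Pauli arriving at the gadget layer (at most $K$ of them) is swapped onto its ancilla and the path is the identity on all system qubits below the gadget layer. For the $f$-factor this is exactly the noiseless computation, which on each such configuration gives $\sum_\alpha c_\alpha^2 f(\cdots)^2\ge \tau^{K}\left\|O\right\|_{HS}^2$ with $|M_{\mathrm{swap}}(\bm{\theta})|\ge 4^{3n}(1/4)^{K}$, the $2^{2n}$ normalization factors cancelling exactly as in \cref{thm:absence_BP_MPQC}. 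It then remains to lower-bound $g(\uniquePathM)$ on this family. Writing each noise channel after a system block as $\mathcal{N}_i=\mathcal{I}\otimes\mathcal{N}_i'$, every channel placed \emph{before} the gadget layer acts on the (now trivial) system part of the path and contributes the factor $1$; \emph{after} the gadget layer only the at most $\Fnumall$ blocks lying in the backward light cone of $P_\alpha$ contribute a nontrivial factor; and $\mathcal{N}_{op}$ together with the three intra-gadget channels $\mathcal{N}_{\mathcal{G}_1},\mathcal{N}_{\mathcal{G}_2},\mathcal{N}_{\mathcal{G}_3}$ contribute one factor each. By \eqref{eq:Pauli_noise_coefficients} each of these factors is at least $1-2\gamma$, so $g(\uniquePathM)\ge(1-2\gamma)^{\Fnumall+4}$ for every $\alpha$. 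Substituting and counting configurations exactly as in \cref{thm:absence_BP_MPQC} gives
\[
  \operatorname{Var}_{(\bm{\theta},\bm{\theta}_\mathcal{G})}\!\left[\lossMNoi\right]\ \ge\ (1-2\gamma)^{2(\Fnumall+4)}\left(\frac{\tau}{4}\right)^{K}\left\|O\right\|_{HS}^2,
\]
which is $\Omega\!\left(1/\mathrm{poly}(n)\right)$ since $K,\Fnumall=\order{\log n}$ and $1-2\gamma$ is a fixed constant in $(0,1)$.

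The hard part will be the bookkeeping for the noise factor $g$: one must check that no noise channel is double-counted and, in particular, that the channels acting on the system qubits before the gadget layer really do contribute the factor $1$ — which relies both on the product form $\mathcal{N}_i=\mathcal{I}\otimes\mathcal{N}_i'$ and on the fact that $\bm{\theta}_\mathcal{G}\in M_{\mathrm{swap}}(\bm{\theta})$ makes $\uniquePathM$ trivial on every system qubit in that region — and that $\mathcal{N}_{op}$, although it acts simultaneously on all $n$ ancillas, still contributes only a single factor $\ge 1-2\gamma$ irrespective of the (up to $K$) ancilla weight of the initial path operator $\mathbf{s}_0$. Everything else is a transcription of the proof of \cref{thm:absence_BP_MPQC} with $f$ replaced by $g\cdot f$.
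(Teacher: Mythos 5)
Your proposal is correct and follows essentially the same route as the paper's proof: express the noisy variance with the path-dependent factor $g(\vec{\mathbf{s}})^2$ multiplying the noiseless contribution, restrict to $\bm{\theta}_\mathcal{G}\in M_{\mathrm{swap}}(\bm{\theta})$, and bound $g(\uniquePathM)\ge(1-2\gamma)^{\Fnumall+4}$ by noting that channels before the gadget layer see a trivial system-qubit path and channels outside the backward light cone have disjoint support, leaving only the $\Fnumall$ post-gadget blocks plus $\mathcal{N}_{op}$ and the three gadget-layer channels. The bookkeeping points you flag as "the hard part" are exactly the two observations the paper makes, so nothing further is needed.
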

\begin{proof}
  We first express the variance of the loss function of noisy MPQC:
  \begin{equation}\label{eq:variance_noisy_MPQC}
    \begin{aligned}
    &\operatorname{Var}_{(\bm{\theta},\bm{\theta}_\mathcal{G})}\left[\lossMNoi\right]\\ 
& = 
\mathbb{E}_{\left(\bm{\theta} , \bm{\theta}_\mathcal{G}\right)} 
\sum_{\alpha,\beta,\vec{\mathbf{s}}, \vec{\mathbf{s}}\hspace{0.1em}'} c_\alpha c_\beta g(\vec{\mathbf{s}})g(\vec{\mathbf{s'}})
f\left(\vec{\mathbf{s}},\left(\bm{\theta},\bm{\theta}_\mathcal{G}\right),I\otimes P_\alpha,op\left(\ket{0}\bra{0}\right)^{\otimes n}
\otimes\rho\right)  
f\left(\vec{\mathbf{s}}\hspace{0.1em}',\left(\bm{\theta},\bm{\theta}_\mathcal{G}\right),I\otimes P_\beta,op\left(\ket{0}\bra{0}\right)^{\otimes n}
\otimes\rho\right)\\
&\qquad- \left( 
\mathbb{E}_{\left(\bm{\theta} , \bm{\theta}_\mathcal{G}\right)} 
\sum_{\alpha,\vec{\mathbf{s}}} c_\alpha g(\vec{\mathbf{s}})
f\left(\vec{\mathbf{s}},\left(\bm{\theta},\bm{\theta}_\mathcal{G}\right),I\otimes P_\alpha,op\left(\ket{0}\bra{0}\right)^{\otimes n}
\otimes\rho\right) 
\right)^2.
    \end{aligned}
\end{equation}
Following the same proof of \cref{eq:variance_MPQC}, we can prove the orthogonality of different Pauli path and the second term in the above equation equals 0. More precisely, we have 
\begin{equation}
\operatorname{Var}_{(\bm{\theta},\bm{\theta}_\mathcal{G})}\left[\lossMNoi\right] = 
\mathbb{E}_{\left(\bm{\theta} , \bm{\theta}_\mathcal{G}\right)} 
\sum_{\alpha,\vec{\mathbf{s}}} c_\alpha^2 g(\vec{\mathbf{s}})^2
f\left(\vec{\mathbf{s}},\left(\bm{\theta},\bm{\theta}_\mathcal{G}\right),I\otimes P_\alpha,op\left(\ket{0}\bra{0}\right)^{\otimes n}
\otimes\rho\right)^2.
\end{equation}
Again similar with the proof of \cref{thm:absence_BP_MPQC}, we lower bound \cref{eq:variance_noisy_MPQC} by considering some specfic angle configurations of the gadget layer:
\begin{equation}\label{eq:lower_bound_variance_noisy}
  \begin{aligned}
    &\operatorname{Var}_{(\bm{\theta},\bm{\theta}_\mathcal{G})}\left[\lossMNoi\right] = 
      \mathbb{E}_{\left(\bm{\theta} , \bm{\theta}_\mathcal{G}\right)} 
      \sum_{\alpha}c_\alpha^2 g(\vec{\mathbf{s}})^2
f\left(\vec{\mathbf{s}},\left(\bm{\theta},\bm{\theta}_\mathcal{G}\right),I\otimes P_\alpha,op\left(\ket{0}\bra{0}\right)^{\otimes n}
\otimes\rho\right)^2\\
    &= \frac{1}{4^{m+3n}}\underset{\substack{\bm{\theta} \in \AngleSet^m\\
    \bm{\theta}_\mathcal{G} \in \AngleSet^{3n}}}{\sum}\sum_{\alpha}c_\alpha^2 g(\uniquePathM)^2
f\left(\uniquePathM,\left(\bm{\theta},\bm{\theta}_\mathcal{G}\right),I\otimes P_\alpha,op\left(\ket{0}\bra{0}\right)^{\otimes n}
\otimes\rho\right)^2\\
  & \geq \frac{1}{4^{m+3n}}\underset{\substack{\bm{\theta} \in \AngleSet^m\\
    \bm{\theta}_\mathcal{G} \in M_{\mathrm{swap}}(\bm{\theta})}}{\sum} \sum_{\alpha}c_\alpha^2 g(\uniquePathM)^2f\left(\uniquePathM,\left(\bm{\theta},\bm{\theta}_\mathcal{G}\right),I\otimes P_\alpha,op\left(\ket{0}\bra{0}\right)^{\otimes n}
\otimes\rho\right)^2\\
  & \geq \frac{(1-2\gamma)^{2(\Fnumall+4)}}{4^{m+3n}} \underset{\substack{\bm{\theta} \in \AngleSet^m\\
    \bm{\theta}_\mathcal{G} \in M_{\mathrm{swap}}(\bm{\theta})}}{\sum} \sum_{\alpha}c_\alpha^2 f\left(\uniquePathM,\left(\bm{\theta},\bm{\theta}_\mathcal{G}\right),I\otimes P_\alpha,op\left(\ket{0}\bra{0}\right)^{\otimes n}
\otimes\rho\right)^2\\
  & \geq (1-2\gamma)^{2(\Fnumall+4)} \left(\frac{\tau}{4}\right)^{K}\left\|O\right\|_{HS}^2 = \Omega\left(\frac{1}{\mathrm{poly}(n)}\right).
  \end{aligned}
\end{equation}  
Here, in the second-to-last inequality, we use the following result:
\begin{equation}
  g(\uniquePathM)\geq (1-2\gamma)^{2(\Fnumall+4)}.
\end{equation}
This inequality holds for the following reasons.
First, when we choose $\bm{\theta}_\mathcal{G} \in M_{\mathrm{swap}}(\bm{\theta})$, each gadget transforms the backward-propagated operator $IP$ into $PI$. This implies that for all $i \leq L$ (recall that the gadget layer is located right after $U_L(\theta_L)$), we have $\uniquePathMpos{i}|_{\geq n} = I$, i.e., the part of $\uniquePathMpos{i}$ on the system qubits is the identity operator.
Then we have that for all $i\leq L$,
\begin{equation}
\tr{\uniquePathMpos{i}\mathcal{N}_{i}^\dagger(\uniquePathMpos{i})} = 1.
\end{equation}

Second, when $U_{i'}(\theta_{i'})$ does not belong to the backward light cone of $P_\alpha$, the supports of $\uniquePathMpos{i'}$ and $\mathcal{N}_{i'}$ do not overlap, since $\mathcal{N}_{i'}$ acts on the same qubits as $U_{i'}(\theta_{i'})$. Hence, it follows that
\begin{equation}
\tr{\uniquePathMpos{i'}\mathcal{N}_{i'}^\dagger(\uniquePathMpos{i'})} = 1.
\end{equation}

By combining the two observations above, we obtain
\begin{equation}
  \begin{aligned}
g(\uniquePathM)&=\tr{\uniquePathMpos{0}\mathcal{N}_{op}^\dagger(\uniquePathMpos{0})}\prod_{i=1}^m \tr{\uniquePathMpos{i}\mathcal{N}_{i}^\dagger(\uniquePathMpos{i})}\prod_{i=1}^3 \tr{\uniquePathMpos{\mathcal{G}_{i,1}}\mathcal{N}_{\mathcal{G}_{i}}^\dagger(\uniquePathMpos{\mathcal{G}_{i,1}})}\\
&=\tr{\uniquePathMpos{0}\mathcal{N}_{op}^\dagger(\uniquePathMpos{0})}
\underset{\substack{i > L\\
    U_{i}(\theta_{i}) \in \mathrm{BLig}_{\alpha}^{\mathcal{C}}}}{\prod}
 \tr{\uniquePathMpos{i}\mathcal{N}_{i}^\dagger(\uniquePathMpos{i})}\prod_{i=1}^3 \tr{\uniquePathMpos{\mathcal{G}_{i,1}}\mathcal{N}_{\mathcal{G}_{i}}^\dagger(\uniquePathMpos{\mathcal{G}_{i,1}})}\\
& \geq (1-2\gamma)^{\Fnumall+4},
  \end{aligned}
\end{equation}
where we define $\mathrm{BLig}_{\alpha}^{\mathcal{C}}$ as the backward lightcone of $P_\alpha$ in $\mathcal{C}(\bm{\theta})$, and
the last inequality follows from the fact that at most $\Fnumall$ parameters in $\mathrm{BLig}_{\alpha}^{\mathcal{C}}$ lie after the gadget layer.

\end{proof}
Following similar techniques, we can prove that BP can be guaranteed to be avoided for some particular parameters of the noisy MPQC corresponding to \cref{thm:absence_BP_MPQC_parameters} and \cref{thm:activate_parameters}.
\begin{theorem}\label{app:noise_robustness_others}
  For an MPQC with a $k-$local observable $O = \sum_{\alpha} c_\alpha P_\alpha$, suppose that the conditions stated in \cref{thm:absence_BP_noisy_MPQC} hold, then under Pauli noise with strength at most $\gamma<1/2$ applied after each block, the variance of the gradient of the parameters $\bm{\theta}\in \left[0, 2\pi \right)^{m}$ in the circuit follows the following rules:
\begin{itemize}
\item  For parameter $\theta_j$ located after the gadget layer, if $ \operatorname{Var}_{\bm{\theta}} \left[\frac{\partial \Vexp{O}}{\partial{\theta_j}}\right]\neq 0$, then $\operatorname{Var}_{(\bm{\theta},\bm{\theta}_\mathcal{G})}\left[ \frac{\partial \lossM}{\partial{\theta_j}}\right]$ is lower bounded by
\begin{equation}
  \operatorname{Var}_{(\bm{\theta},\bm{\theta}_{\mathcal{G}})}\left[ \frac{\partial \lossMNoi}{\partial{\theta_j}}\right]\geq (1-2\gamma)^{\Fnumall+4}\left(\frac{1}{2}\right)^{\Fnum }\left(\frac{\tau}{4}\right)^{K}\left\|O\right\|_{min}^2= \Omega\left(\frac{1}{\mathrm{poly}(n)}\right).
\end{equation}

\item For parameters located before the gadget layer, the gradient variance of the MPQC is at least of the same scaling as that of the original PQC without the gadget layer, i.e.
\begin{equation}
  \operatorname{Var}_{(\bm{\theta},\bm{\theta}_{\mathcal{G}})}\left[ \frac{\partial \lossMNoi}{\partial{\theta_j}}\right]\geq  \Omega\left(\frac{1}{\mathrm{poly}(n)}\right) \operatorname{Var}_{\bm{\theta}}\left[ \frac{\partial \lossNoi}{\partial{\theta_j}}\right],
\end{equation}
where $\lossNoi$ is the loss function of the noisy original PQC.
\item Also for the noisy $T$-activating MPQC, we have 
\begin{equation}\label{eq:activate_parameter_lower_bound_noisy}
  \operatorname{Var}_{(\bm{\theta},\bm{\theta}_{\mathcal{G}},\bm{\theta}_{\mathcal{G}'_T})} \left[\frac{\partial\widetilde{L}_T^{\mathcal{C}}\left(\bm{\theta},\bm{\theta}_{\mathcal{G}},\bm{\theta}_{\mathcal{G}'_T}\right)}{\partial{\theta_T}}\right]\geq (1-2\gamma)^{\Fnumall+12}\left(\frac{1}{2}\right)^{\Fnumall+2}\left(\frac{\tau}{4}\right)^{K+1}\left\|O\right\|_{min}^2= \Omega\left(\frac{1}{\mathrm{poly}(n)}\right)
\end{equation}
where $\widetilde{L}_T^{\mathcal{C}}\left(\bm{\theta}, \bm{\theta}_{\mathcal{G}}\right)$ is the loss function of the noisy $T$-activating MPQC.
\end{itemize}
\end{theorem}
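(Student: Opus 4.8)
The plan is to rerun the three noiseless arguments---\cref{thm:absence_BP_MPQC_parameters_app} for the two cases of $\lossM$ and \cref{thm:activate_parameters_app} for the $T$-activating circuit---essentially verbatim at the level of selecting favourable discrete angle configurations, and then to additionally control the noise-effect factor $g(\vec{\mathbf{s}})$ along exactly the Pauli paths chosen there, in the same spirit as the loss-variance bound of \cref{thm:absence_BP_noisy_MPQC}. The first step is to note that the orthogonality of distinct Pauli paths and the non-constancy of $\Vexp{P_\alpha}$ survive the presence of noise, since \cref{lem:orthogonality_MPQC} and \cref{lem:non_constant_MPQC} depend only on the Clifford/(anti)commutation structure and the scalar prefactors $\tr{s\,\mathcal{N}^\dagger(s)}$ do not alter it; consequently each noisy gradient variance admits the Pauli-path-integral expansion obtained from the noiseless formula by inserting the factor $g(\uniquePathM)^2$ into the sum, exactly as in \cref{eq:MPQC_Pauli_path_integral_noisy}.

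For a parameter $\theta_j$ lying after the gadget layer, I would restrict this sum to the angle sets $\bm{\theta}\in M_j$ and $\bm{\theta}_\mathcal{G}\in M_{\mathrm{swap}}(\bm{\theta})$ and to the single Pauli word $P_\beta$ with $\operatorname{Var}_{\bm{\theta}}[\partial_{\theta_j}\loss]\neq 0$, precisely as in the proof of \cref{thm:absence_BP_MPQC_parameters_app}. The key observation is that $M_{\mathrm{swap}}(\bm{\theta})$ forces the system-qubit part of $\uniquePathMpos{i}$ to be the identity for all $i\le L$, so that $\tr{\uniquePathMpos{i}\mathcal{N}_i^\dagger(\uniquePathMpos{i})}=1$ there because $\mathcal{N}_i=\mathcal{I}\otimes\mathcal{N}_i'$, and moreover every noise channel outside the backward light cone of $P_\beta$ acts trivially as well. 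Hence at most $\Fnumall+O(1)$ noise factors differ from $1$, each bounded below by $1-2\gamma$ via \cref{eq:Pauli_noise_coefficients}, giving $g\ge(1-2\gamma)^{\Fnumall+4}$; multiplying the noiseless lower bound of \cref{thm:absence_BP_MPQC_parameters_app} by this factor yields the first claimed inequality. For $\theta_j$ before the gadget layer I would instead fix $\bm{\theta}_\mathcal{G}\in M_{\mathrm{same}}(\bm{\theta})$, which freezes the gadget layer so that every surviving MPQC Pauli path is an embedding of a Pauli path of the original noisy PQC; along such a path $\mathcal{N}_{op}$ acts trivially (the ancillas carry $I$) while each of $\mathcal{N}_{\mathcal{G}_1},\mathcal{N}_{\mathcal{G}_2},\mathcal{N}_{\mathcal{G}_3}$ touches at most $K=\order{\log n}$ nontrivial system qubits, so the extra noise factor is at least $(1-2\gamma)^{\order{\log n}}$, and the remainder of the estimate is the noisy analogue of the second half of \cref{thm:absence_BP_MPQC_parameters_app}, with \cref{cor:gradient_variance_upper_bound} replaced by its noisy version, producing $\operatorname{Var}_{(\bm{\theta},\bm{\theta}_\mathcal{G})}[\partial_{\theta_j}\lossMNoi]\ge\Omega(1/\mathrm{poly}(n))\,\operatorname{Var}_{\bm{\theta}}[\partial_{\theta_j}\lossNoi]$.

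For the noisy $T$-activating MPQC I would follow the proof of \cref{thm:activate_parameters_app}: choose $\bm{\theta}\in M_t$, $\bm{\theta}_{\mathcal{G}'_T}\in M_{\mathrm{anti}}(\bm{\theta})$ and $\bm{\theta}_\mathcal{G}\in M_{\mathrm{swap}}(\bm{\theta},\bm{\theta}_{\mathcal{G}'_T})$, so that along the resulting path a nontrivial Pauli operator is carried to the location of $T$, made to anticommute with $P_T$, and finally swapped onto the ancillas by the inserted $\gadget{\bm{\theta}}$. Below the gadget layer all system-qubit noise channels are again trivial, and the only nontrivial noise factors arise from the $\le\Fnumall$ gates in the pre-gadget backward light cone of $P_\beta$ together with the $O(1)$ noise channels attached to $\widetilde{op}$, the enlarged gadget $\gadgetT{\bm{\theta}}$ and the inserted $\gadget{\bm{\theta}}$; each is at least $1-2\gamma$, so $g\ge(1-2\gamma)^{\order{\log n}}$, and multiplying the noiseless bound of \cref{thm:activate_parameters_app} by this factor gives \cref{eq:activate_parameter_lower_bound_noisy}.

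The step I expect to be the main obstacle is the careful bookkeeping of which noise channels are actually hit along each selected Pauli path---one has to verify that the ``swap'' and ``same'' configurations really do drive the system-qubit Pauli operators to the identity wherever the light cone alone does not already do so, so that each $\mathcal{N}_i=\mathcal{I}\otimes\mathcal{N}_i'$ contributes a factor exactly $1$ there, and that the residual number of nontrivial noise factors is $\order{\log n}$ rather than $\order{\mathrm{poly}(n)}$; this is exactly what makes $(1-2\gamma)^{\order{\log n}}=\Omega(1/\mathrm{poly}(n))$ and keeps all three bounds only polynomially small. A secondary technical point is furnishing the noisy refinement of \cref{cor:gradient_variance_upper_bound} used in the before-the-gadget-layer case, so that the comparison is made against the noisy original PQC loss $\lossNoi$ and not its noiseless counterpart.
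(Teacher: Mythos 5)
Your proposal is correct and follows essentially the same route the paper intends: the paper gives no separate proof of this theorem, stating only that it follows by the ``same techniques'' as \cref{thm:absence_BP_noisy_MPQC}, and your plan---rerun the noiseless angle-configuration arguments of \cref{thm:absence_BP_MPQC_parameters_app} and \cref{thm:activate_parameters_app}, check that orthogonality survives the scalar noise prefactors, and lower-bound the noise factor $g$ along the selected paths by counting the $\order{\log n}$ noise channels not forced to act trivially---is precisely that argument, including the correct identification that the before-gadget case must be compared against the noisy original PQC via a noisy analogue of \cref{cor:gradient_variance_upper_bound}. The only nitpick is that the variance carries $g(\uniquePathM)^2$ rather than $g(\uniquePathM)$, so the provable exponent is $2(\Fnumall+4)$ as in \cref{eq:lower_bound_variance_noisy}; your phrasing ``multiplying by this factor'' inherits the same imprecision already present in the theorem statement itself, and the $\Omega(1/\mathrm{poly}(n))$ conclusion is unaffected.
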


\section{Analysis of trainable $op$}\label{app:trainable_op}
Previously, the proofs of our results relied on a deterministic construction of $op$. In this section, we show that the alternative construction illustrated in \cref{fig:op_trainable_proof} preserves all the desirable properties of the corresponding MPQC.

\begin{figure}[H]
  \centering
  \begin{quantikz}
   & \gate[1, style={draw, shape=circle}]{Q} \qw  & \gate[1]{R_{X}(\theta_1)} & \gate[1]{R_{Y}(\theta_2)} &\gate[1, style={draw, shape=circle}]{P}
  \end{quantikz}
  \caption{Trainable construction of $op$. $Q$ and $P$ are Pauli operators.}\label{fig:op_trainable_proof}
\end{figure}
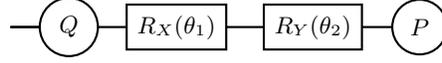
It is easy to verify from \cref{eq:gate_term_in_f_discrete} that under Heisenberg evolution, for any Pauli operator $P\neq I$, among the $4* 4= 16 $ possible combinations of $\theta_1, \theta_2 \in \AngleSet^2 $, at least 4 lead to the resulting operator $Q$ being equal to $Z$. 

In all the proofs, the only parts involving $op$ are as follows:
\begin{equation}
  \tr{s \cdot \left[ op \left(\ket{0}\bra{0}\right)^{\otimes n} \right]}^2,
\end{equation}
for some $n$-qubit Pauli word $s$ with weight at most $K$ (or $K + \order{\log n}$, which we denote simply as $K$ for clarity). As there are parameters in all $op$, we also need to take the average over these angles, namely,
\begin{equation}\label{eq:expectation_op}
 \mathbb{E}_{\bm{\theta}_{op}} \tr{s \bigotimes_{i=1}^n R_{Y_i}(\bm{\theta}_{op_{i,2}})R_{X_i}(\bm{\theta}_{op_{i,1}})\ket{0}\bra{0}R_{X_i}(-\bm{\theta}_{op_{i,1}})R_{Y_i}(-\bm{\theta}_{op_{i,2}})}^2,
\end{equation}
where we define $\bm{\theta}_{op} \in [0,2\pi)^{2n}$ for the parameters in $op$. Without loss of generality, we assume that the first $K' \leq K$ qubits of the Pauli word $s$ are nontrivial. By employing the property of the rotation 2-design stated in \cref{cor:two_design}, the expression in \cref{eq:expectation_op} can be reformulated and lower bounded as
\begin{equation}\label{eq:expectation_op_lower_bound}
  \begin{aligned}
& \mathbb{E}_{\bm{\theta}_{op}} \tr{s \bigotimes_{i=1}^n R_{Y_i}(\bm{\theta}_{op_{i,2}})R_{X_i}(\bm{\theta}_{op_{i,1}})\ket{0}\bra{0}R_{X_i}(-\bm{\theta}_{op_{i,1}})R_{Y_i}(-\bm{\theta}_{op_{i,2}})}^2\\    
& =\mathbb{E}_{\bm{\theta}_{op}\in\AngleSet^{3n}} \tr{ \left[\bigotimes_{i=1}^nR_{X_i}(-\bm{\theta}_{op_{i,1}})R_{Y_i}(-\bm{\theta}_{op_{i,2}})\right] s \left[\bigotimes_{i=1}^n R_{Y_i}(\bm{\theta}_{op_{i,2}})R_{X_i}(\bm{\theta}_{op_{i,1}})\right]\ket{0}\bra{0} }^2\\
& \geq \frac{4^{2(n-K')}4^{K'}}{4^{2n}}\tr{Z^{\otimes K'}\otimes I \ket{0^n}\bra{0^n}}^2 = (\frac{1}{4})^{K'}\geq(\frac{1}{4})^{K}.
  \end{aligned}
\end{equation}  
By substituting \cref{eq:expectation_op_lower_bound} into all Theorems, we obtain the lower bounds on the variance and gradient variance of the MPQC loss function when employing a trainable $op$, simply by replacing $\tau$ with $1/4$.

\section{Hardness of classical simulation of MPQC}\label{app:hardness_MPQC}
In this section, we demonstrate that the classical simulation hardness of MPQCs is not easier than that of the original PQCs. We evaluate two widely used metrics—the worst-case error (WCE) and the mean squared error (MSE)—and prove that, even in the average case, adding the gadget layer does not compromise the classical intractability of the circuit. 
\subsection{Worst case error}
Here we prove that if we can design an efficient classical algorithm which can compute the loss function $\lossM$ with low error for all $(\bm{\theta},\bm{\theta}_{\mathcal{G}}) \in \left[0, 2\pi\right)^{m+3n}$, then we can simulate the loss function of the original PQC efficiently.
\begin{theorem}\label{thm:same_wce}
  For an arbitrary MPQC $\channelM{\bm{\theta},\bm{\theta}_\mathcal{G}}$ and an arbitrary obserable $O$, suppose there exists a classical algorithm that outputs $\mathcal{D}_{{\Phi^{\mathcal{C}}}}\left(\left(\bm{\theta},\bm{\theta}_\mathcal{G}\right),O\right)$ in $\order{\mathrm{poly}(n, \frac{1}{\epsilon})}$ time such that for any $(\bm{\theta},\bm{\theta}_{\mathcal{G}}) \in \left[0, 2\pi\right)^{m+3n}$
\begin{equation}
  \left|\lossM - \mathcal{D}_{{\Phi^{\mathcal{C}}}}\left(\left(\bm{\theta},\bm{\theta}_\mathcal{G}\right),O\right)\right| \leq \epsilon.
\end{equation}
Then, there exists a classical algorithm that outputs an estimate of $\loss = \tr{\mathcal{C}(\bm{\theta}) \rho \mathcal{C}^\dagger(\bm{\theta})O}$ for any $\bm{\theta}\in \left[0, 2\pi\right)^{m}$ with error at most $\epsilon$ in $\order{\mathrm{poly}(n, \frac{1}{\epsilon})}$ time.
\end{theorem}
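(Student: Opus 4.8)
The plan is to exploit the elementary structural fact, already noted in the excerpt, that the original PQC is the $\bm{\theta}_{\mathcal{G}}=\bm{0}$ slice of the MPQC: for every input $\rho$ one has $\mathcal{C}(\bm{\theta})\rho\,\mathcal{C}(\bm{\theta})^\dagger=\channelM{\bm{\theta},\bm{0}}(\rho)$, hence $\loss=L^{\mathcal{C}}(\bm{\theta},\bm{0})$ for all $\bm{\theta}\in[0,2\pi)^m$. The single point that deserves a line of justification is why the gadget layer collapses to the identity channel at zero parameters. When $\theta_{\mathcal{G}_{i,1}}=\theta_{\mathcal{G}_{i,2}}=\theta_{\mathcal{G}_{i,3}}=0$, the three two-qubit rotations $R_{XX},R_{YY},R_{ZZ}$ are each the identity, so each gadget reduces to ``prepare an ancilla in $\ket{0}$, apply the fixed single-qubit quantum operation $op$, then discard the ancilla,'' which acts on the system register as $\rho_{\mathrm{sys}}\mapsto \tr{op(\ketbra{0})}\,\rho_{\mathrm{sys}}=\rho_{\mathrm{sys}}$, since $op$ is trace-preserving (as it is in both constructions in \cref{subapp:op}). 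Thus $\bigotimes_{i=1}^{n}\mathcal{G}_i(\bm{0})$ is the identity channel on the system and $\channelM{\bm{\theta},\bm{0}}$ is literally the original PQC channel.

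Given this, the classical algorithm for $\loss$ is immediate, and I would present it as follows: on input $\bm{\theta}\in[0,2\pi)^m$ and observable $O$, form the point $(\bm{\theta},\bm{0})\in[0,2\pi)^{m+3n}$ — a legitimate element of the domain, since $0\in[0,2\pi)$ — invoke the hypothesized estimator on $\big((\bm{\theta},\bm{0}),O\big)$, and output $\mathcal{D}_{\Phi^{\mathcal{C}}}\big((\bm{\theta},\bm{0}),O\big)$. The worst-case guarantee applies in particular at this point, so the error is $\big|\loss-\mathcal{D}_{\Phi^{\mathcal{C}}}\big((\bm{\theta},\bm{0}),O\big)\big|=\big|L^{\mathcal{C}}(\bm{\theta},\bm{0})-\mathcal{D}_{\Phi^{\mathcal{C}}}\big((\bm{\theta},\bm{0}),O\big)\big|\le\epsilon$, and the running time is the same $\order{\mathrm{poly}(n,1/\epsilon)}$, up to the negligible overhead of writing down the $3n$ extra zeros.

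There is essentially no obstacle in the worst-case setting: the reduction is a one-line specialization of the MPQC oracle, and the entire content is the containment of the original-PQC channel family inside the MPQC channel family, witnessed by $\bm{\theta}_{\mathcal{G}}=\bm{0}$. The genuinely delicate statement is the average-case (mean-squared-error) analogue treated next: there one cannot simply query the single point $\bm{\theta}_{\mathcal{G}}=\bm{0}$, because a small \emph{average} error over the full $(\bm{\theta},\bm{\theta}_{\mathcal{G}})$ landscape need not imply a small error on the measure-zero slice $\bm{\theta}_{\mathcal{G}}=\bm{0}$. That reduction instead has to use that $L^{\mathcal{C}}(\bm{\theta},\bm{\theta}_{\mathcal{G}})$, at fixed $\bm{\theta}$, is a low-degree trigonometric polynomial in $\bm{\theta}_{\mathcal{G}}$ (each $\theta_{\mathcal{G}_{i,j}}$ entering only through $\cos,\sin$), so that $L^{\mathcal{C}}(\bm{\theta},\bm{0})$ can be recovered via an interpolation/averaging identity from evaluations at randomly sampled $\bm{\theta}_{\mathcal{G}}$, together with a variance bound on the resulting estimator — but that is outside the scope of the present theorem.
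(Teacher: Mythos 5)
Your proposal is correct and takes essentially the same approach as the paper: both query the hypothesized MPQC estimator at the point $(\bm{\theta},\bm{0})$ and invoke the identity $\mathcal{C}(\bm{\theta})\rho\,\mathcal{C}^\dagger(\bm{\theta})=\channelM{\bm{\theta},\bm{0}}(\rho)$. Your extra justification that the zero-parameter gadget layer acts as the identity on the system (because $op$ is trace-preserving and the ancilla is discarded) is a correct elaboration of a fact the paper simply asserts.
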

\begin{proof}
  To arpproximate $\loss$, we directly output $\mathcal{D}_{{\Phi^{\mathcal{C}}}}\left(\left(\bm{\theta},\bm{0}\right),O\right)$. Since $\mathcal{C}(\bm{\theta}) \rho \mathcal{C}^\dagger(\bm{\theta}) = \channelM{\bm{\theta},\bm{0}}(\rho)$, we have
\begin{equation} \left|\loss - \mathcal{D}_{{\Phi^{\mathcal{C}}}}\left(\left(\bm{\theta},\bm{0}\right),O\right)\right| = \left|L^{\mathcal{C}}\left(\bm{\theta},\bm{0}\right)-\mathcal{D}_{{\Phi^{\mathcal{C}}}}\left(\left(\bm{\theta},\bm{0}\right),O\right)\right| \leq \epsilon, \end{equation}
The running time of the above algorithm is also $\order{\mathrm{poly}(n, \frac{1}{\epsilon})}$.
\end{proof}


\subsection{Average case error}
In this subsection, we demonstrate that simulating an MPQC in the average case is no easier than simulating the original PQC.
The the proof idea is as follows: starting from an efficient classical algorithm that approximates $\lossM$ with small average error, we estimate the value of $\loss = L^{\mathcal{C}}\left(\bm{\theta}, \bm{0}\right)$ by randomly sampling $\bm{\theta}_\mathcal{G}$ within a small hypercube centered at $\bm{0}$. Since $\lossM$ is a continuous function of $\bm{\theta}_\mathcal{G}$, the obtained value will be close to $L^{\mathcal{C}}\left(\bm{\theta}, \bm{0}\right)$, with high probability.

We first establish the continuity of $\lossM$, as summarized in the following lemma.
\begin{lemma}\label{lem:Lipschitz_constant}
For an MPQC measured with a local Pauli word $P$, regard its loss funtion $\lossM = \tr{\channelM{\bm{\theta},\bm{\theta}_\mathcal{G}}(\rho)P}$ as a function of $\bm{\theta}_\mathcal{G}$. Then, for any fixed $\bm{\theta}$, the function $\lossM$ is Lipschitz continuous with respect to $\bm{\theta}_\mathcal{G}$, with Lipschitz constant $l_{\bm{\theta}}$ upper bounded by $\sqrt{3K}$, {where $K$ is the support size of $P$'s backward light cone at the gadget layer.}
\end{lemma}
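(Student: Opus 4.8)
The plan is to prove Lipschitz continuity of $\lossM$ in $\bm{\theta}_\mathcal{G}$ by bounding the operator norm of the gradient $\nabla_{\bm{\theta}_\mathcal{G}} \lossM$. First I would observe that each component $\partial_{\theta_{\mathcal{G}_{i,j}}} \lossM$ can be computed via the parameter-shift rule: since every gadget parameter $\theta_{\mathcal{G}_{i,j}}$ appears in a single two-qubit Pauli rotation $R_{P(j)_i P(j)_{i+n}}(\theta_{\mathcal{G}_{i,j}})$, we have $\partial_{\theta_{\mathcal{G}_{i,j}}} \lossM = \tfrac12\left(\lossM\big|_{\theta_{\mathcal{G}_{i,j}}+\pi/2} - \lossM\big|_{\theta_{\mathcal{G}_{i,j}}-\pi/2}\right)$. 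Because $\lossM$ is the expectation value of a Pauli word $P$ with $\|P\| = 1$ acting on a valid quantum state (a density operator), each shifted loss value lies in $[-1,1]$, so $|\partial_{\theta_{\mathcal{G}_{i,j}}} \lossM| \le 1$ for every $(i,j)$. This already gives a crude bound, but the key refinement is that most of these partial derivatives vanish identically.

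The main step is a light-cone argument: the parameter $\theta_{\mathcal{G}_{i,j}}$ can affect $\tr{\channelM{\bm{\theta},\bm{\theta}_\mathcal{G}}(\rho)P}$ only if the gate $R_{P(j)_i P(j)_{i+n}}(\theta_{\mathcal{G}_{i,j}})$ lies inside the backward light cone of $P$. By hypothesis, at the gadget layer this light cone touches at most $K$ of the system qubits (equivalently, at most $K$ gadgets $\mathcal{G}_i$). Each such gadget contains exactly three parameterized gates ($R_{XX}, R_{YY}, R_{ZZ}$), so at most $3K$ of the components of $\nabla_{\bm{\theta}_\mathcal{G}} \lossM$ are nonzero, and each has magnitude at most $1$. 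Hence
\begin{equation}
\|\nabla_{\bm{\theta}_\mathcal{G}} \lossM\|_2 \;=\; \Bigl(\sum_{i,j} \bigl(\partial_{\theta_{\mathcal{G}_{i,j}}} \lossM\bigr)^2\Bigr)^{1/2} \;\le\; \sqrt{3K}.
\end{equation}
By the mean value theorem along the segment joining any two parameter vectors $\bm{\theta}_\mathcal{G}, \bm{\theta}_\mathcal{G}'$, together with Cauchy--Schwarz, $|\lossM[\bm{\theta},\bm{\theta}_\mathcal{G}] - \lossM[\bm{\theta},\bm{\theta}_\mathcal{G}']| \le \sqrt{3K}\,\|\bm{\theta}_\mathcal{G} - \bm{\theta}_\mathcal{G}'\|_2$, which is exactly the claimed Lipschitz bound $l_{\bm{\theta}} \le \sqrt{3K}$. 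I would also remark that the bound is uniform in $\bm{\theta}$, since the light-cone structure and the $\|P\|=1$ normalization do not depend on the non-gadget parameters.

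The main obstacle I anticipate is being careful about which gates count as being "in the backward light cone," since the gadget gates act jointly on a system qubit and its ancilla, and one must confirm that a gadget gate influences $\langle P\rangle$ only when the corresponding system qubit is in the light cone at the gadget layer (the ancilla alone, initialized in $\ket{0}$ and ultimately traced out, cannot carry influence back to $P$ without going through the system qubit). This follows from the causal structure of the circuit channel $\channelM{\bm{\theta},\bm{\theta}_\mathcal{G}}$ and the fact that in the Heisenberg picture, propagating $P$ backward through the post-gadget portion of the circuit yields an operator supported on exactly the light-cone qubits; a gadget gate $R_{P(j)_i P(j)_{i+n}}$ commutes with this operator (hence contributes zero derivative) whenever qubit $i$ is outside that support. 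Once this is pinned down, the counting $3K$ and the norm bound are immediate, and no genuinely hard estimate remains.
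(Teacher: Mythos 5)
Your proposal is correct and follows essentially the same route as the paper: restrict attention to the at most $3K$ gadget parameters inside the backward light cone of $P$, bound each partial derivative by $1$ via the parameter-shift rule (since $\langle P\rangle \in [-1,1]$), and conclude $\|\nabla_{\bm{\theta}_\mathcal{G}}\lossM\|_2 \le \sqrt{3K}$, which bounds the Lipschitz constant. Your added remark about why a gadget gate outside the light cone contributes zero derivative (the ancilla cannot carry influence back to $P$ without passing through its system qubit) is a point the paper leaves implicit, but it does not change the argument.
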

\begin{proof}
Without loss of generalization, we assume that the first $K$ gadgets lie in the backward light cone of $P$, i.e., the remaining $n-K$ gadgets do not affect the value of $\lossM$. Therefore, we set their parameters to zero and denote $\lossM=\lossMP$. 

For the function $\lossMP$, it is Lipschitz continuous since it can be expressed as a finite linear combination of products of $\sin\theta_{\mathcal{G}_{ij}}$ and $\cos\theta_{\mathcal{G}_{ij}}$, each of which is a smooth function.
Consequently, its Lipschitz constant $l_{\bm{\theta}}$ can be upper bounded by the supremum of the $\ell_2$-norm of its gradient with respect to $\bm{\theta}_{\mathcal{G}}$, namely,
\begin{equation}
  \begin{aligned}
  l_{\bm{\theta}} &= \sup_{\theta_{\mathcal{G}}} \|\nabla \lossM\| = \sup_{\theta_{\mathcal{G}}} \|\nabla \lossMP\|_2\\
 &= \sup_{\theta_{\mathcal{G}}} \left\|\left(\frac{\partial}{\partial \theta_{\mathcal{G}_{11}}}\lossMP\right),\ldots,\left(\frac{\partial}{\partial \theta_{\mathcal{G}_{k3}}}\lossMP\right),0,\ldots\right\|_2. 
  \end{aligned}
\end{equation}
For each element, the parameter-shift rule gives 
\begin{equation}
  \begin{aligned}
  \frac{\partial}{\partial \theta_{\mathcal{G}_{ij}}}\lossMP &= \frac{L^{\mathcal{C}}\left(\bm{\theta},\left(\theta_{\mathcal{G}_{11}},\ldots, \theta_{\mathcal{G}_{ij}}+\pi/2,\ldots,\bm{0}\right)\right) - L^{\mathcal{C}}\left(\bm{\theta},\left(\theta_{\mathcal{G}_{11}},\ldots, \theta_{\mathcal{G}_{ij}}-\pi/2,\ldots,\bm{0}\right)\right)}{2}\\
  & \leq 1.
  \end{aligned}
\end{equation}
Consequently, $\lossMP$ is a Lipschitz continuous function of $\bm{\theta}_\mathcal{G}$ with Lipschitz constant
\begin{equation}
  l_{\bm{\theta}} \leq \sqrt{3K}.  
\end{equation}
\end{proof}
The above theorem implies that, for arbitrary $\bm{\theta}$, if $\|\bm{\theta}_{\mathcal{G}} - \bm{\theta_}{\mathcal{G}}'\|_2\leq \epsilon$, then $|\lossM - L^{\mathcal{C}}\left(\bm{\theta},\bm{\theta}_{\mathcal{G}}'\right)| \leq \sqrt{3K}\epsilon$. With this result, we are able to prove the following theorem:

\begin{lemma}\label{lem:classical_hardness_PQC_pauli_word}
  For an arbitrary MPQC $\channelM{\bm{\theta},\bm{\theta}_\mathcal{G}}$ measured with a local Pauli word $P$, suppose there exists a classical algorithm that outputs $\mathcal{A}_{{\Phi^{\mathcal{C}}}}\left(\left(\bm{\theta},\bm{\theta}_\mathcal{G}\right),P\right)$ in $\order{\mathrm{poly}(n, \frac{1}{\epsilon})}$ time such that
\begin{equation}\label{eq:mse_whole}
  \ExpMC\left[\lossM - \mathcal{A}_{{\Phi^{\mathcal{C}}}}\left(\left(\bm{\theta},\bm{\theta}_\mathcal{G}\right),P\right)\right]^2 \leq \epsilon.
\end{equation}
Then, there exists a randomized classical algorithm $A_{\mathrm{rand}}(\bm{\theta})$ that outputs $\mathcal{A}_{\mathcal{C}}(\bm{\theta})$ such that
\begin{equation}
\Pr_{\substack{\bm{\theta}\in\left[0,2\pi\right)^{m}}}\left\{\Pr\left\{\left|\loss - \mathcal{A}_{\mathcal{C}}\left(\bm{\theta}\right)\right|\geq \epsilon_{error}\right\}\geq \delta\right\}\leq 1-\epsilon_{rate}.
\end{equation}
The runtime of $A_{\mathrm{rand}}(\bm{\theta})$ scales as $K^{3K}\order{\mathrm{poly}(n, \frac{1}{\delta}, \frac{1}{\epsilon_{error}}, \frac{1}{\epsilon_{rate}})}$, when {the support size of $P$'s backward light cone at the gadget layer $K$ satisfies} $K = \order{\log n}$.
\end{lemma}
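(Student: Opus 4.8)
The plan is to build the randomized algorithm $A_{\mathrm{rand}}(\bm{\theta})$ as follows: given $\bm{\theta}$, sample $\bm{\theta}_{\mathcal G}$ uniformly at random from a small hypercube $B_r(\bm 0) = [-r,r)^{3n}$ (or, since only the $K$ gadgets in the backward light cone of $P$ matter, from $[-r,r)^{3K}\times\{\bm 0\}^{3(n-K)}$), for a radius $r$ to be chosen, and output $\mathcal{A}_{\mathcal C}(\bm\theta) := \mathcal A_{\Phi^{\mathcal C}}\bigl((\bm\theta,\bm\theta_{\mathcal G}),P\bigr)$. The triangle inequality splits the error into two pieces:
\begin{equation}
\bigl|\loss - \mathcal A_{\mathcal C}(\bm\theta)\bigr|
\;\le\;
\underbrace{\bigl|L^{\mathcal C}(\bm\theta,\bm 0) - \lossM\bigr|}_{\text{(I): continuity}}
\;+\;
\underbrace{\bigl|\lossM - \mathcal A_{\Phi^{\mathcal C}}((\bm\theta,\bm\theta_{\mathcal G}),P)\bigr|}_{\text{(II): algorithm error}},
\end{equation}
using $\loss = L^{\mathcal C}(\bm\theta,\bm 0)$. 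Term (I) is controlled deterministically by \cref{lem:Lipschitz_constant}: for any sampled point in the hypercube, $\|\bm\theta_{\mathcal G} - \bm 0\|_2 \le r\sqrt{3K}$, hence (I) $\le \sqrt{3K}\cdot r\sqrt{3K} = 3Kr$. Choosing $r = \Theta(\epsilon_{error}/K)$ makes this contribution at most $\epsilon_{error}/2$.

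The core of the argument is a Markov-type / probability-amplification step for term (II). The hypothesis \eqref{eq:mse_whole} gives a small mean squared error when $(\bm\theta,\bm\theta_{\mathcal G})$ is drawn uniformly from the full cube $[0,2\pi)^{m+3n}$; but our algorithm restricts $\bm\theta_{\mathcal G}$ to the tiny box $B_r(\bm 0)$, whose volume fraction is $(r/\pi)^{3K}$ (after projecting away the irrelevant coordinates). So the conditional expectation of the squared error on the box is at most $(\pi/r)^{3K}\,\epsilon$. Plugging in $r = \Theta(\epsilon_{error}/K)$ yields a conditional MSE of order $(K/\epsilon_{error})^{3K}\,\epsilon = K^{3K}\,\mathrm{poly}(1/\epsilon_{error})\,\epsilon$, which is where the $K^{3K}$ blow-up in the runtime/accuracy trade-off comes from: to drive the conditional MSE below a target $\eta$ one must take $\epsilon \le \eta\,(r/\pi)^{3K}$, i.e. invoke the assumed algorithm with precision parameter $1/\epsilon$ polynomially larger by a factor $K^{3K}$, and since the assumed runtime is $\mathrm{poly}(n,1/\epsilon)$ the overall runtime scales as $K^{3K}\,\mathrm{poly}(n,\dots)$. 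Having a small conditional MSE, I then apply Markov's inequality twice (a standard "good-on-average implies good-with-high-probability-on-most-inputs" argument): first over $\bm\theta_{\mathcal G}$ inside the box to bound the inner probability $\Pr\{(\text{II}) \ge \epsilon_{error}/2\} \ge \delta$ by $\mathrm{MSE}_{\mathrm{cond}}/(\delta\,(\epsilon_{error}/2)^2)$ pointwise in $\bm\theta$, and then over $\bm\theta$ to bound the fraction of "bad" $\bm\theta$ by $\epsilon_{rate}$ — choosing the target conditional MSE $\eta = \Theta(\delta\,\epsilon_{error}^2\,\epsilon_{rate})$ makes both layers work out. Finally the $K = \order{\log n}$ assumption turns $K^{3K}$ into $(\log n)^{\order{\log n}} = n^{\order{\log\log n}}$, which, although super-polynomial, keeps the runtime quasi-polynomial; but in fact the statement only claims runtime $K^{3K}\,\mathrm{poly}(n,1/\delta,1/\epsilon_{error},1/\epsilon_{rate})$, so no further bookkeeping is needed — one just records the explicit dependence.

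I expect the main obstacle to be the precise accounting of how the two nested Markov inequalities interact with the choice of box radius $r$: one must simultaneously make $r$ small enough that the deterministic continuity error (I) is below $\epsilon_{error}/2$, yet not so small that the volume-fraction penalty $(\pi/r)^{3K}$ inflates the required precision $1/\epsilon$ beyond the claimed $K^{3K}\,\mathrm{poly}(\cdots)$ budget. Balancing $r = \Theta(\epsilon_{error}/K)$ threads this needle, but it requires care to confirm that the induced precision $1/\epsilon = K^{3K}\,\mathrm{poly}(1/\epsilon_{error},1/\delta,1/\epsilon_{rate})$ feeds through the assumed $\mathrm{poly}(n,1/\epsilon)$ runtime to give exactly the stated scaling. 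A secondary subtlety is justifying the reduction from the full $\bm\theta_{\mathcal G}\in[0,2\pi)^{3n}$ to the $3K$-dimensional subspace: one must argue that setting the out-of-light-cone gadget parameters to $\bm 0$ leaves $\lossM$ unchanged (immediate from the backward-light-cone definition, since those gadgets act as identity channels on the relevant Pauli path) so that the volume fraction is $(r/\pi)^{3K}$ rather than the hopeless $(r/\pi)^{3n}$ — this is exactly the point where the $K = \order{\log n}$ hypothesis is essential.
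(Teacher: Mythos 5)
Your proposal follows essentially the same route as the paper's proof: the same randomized algorithm (sample $\bm{\theta}_{\mathcal{G}}$ from a hypercube of radius $\Theta(\epsilon_{error}/K)$ around $\bm{0}$ in the $3K$ light-cone coordinates), the same volume-fraction argument converting the global MSE bound into a conditional MSE inflated by $(2\pi/\epsilon_1)^{3K}$, the same two nested Markov inequalities over $\bm{\theta}_{\mathcal{G}}$ and then $\bm{\theta}$, and the same use of the Lipschitz bound from \cref{lem:Lipschitz_constant} to control the drift from $\bm{\theta}_{\mathcal{G}}=\bm{0}$. The only differences are cosmetic (you apply Markov to the squared error where the paper first takes a square root via Jensen), so the proposal is correct and matches the paper's argument.
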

\begin{proof}
Similarly to the proof of \cref{lem:Lipschitz_constant}, without loss of generality, we assume that the first $K$ gadgets lie within the backward light cone of $P$. For simplicity, we denote by $\bm{\theta}_\mathcal{G_P}=(\theta_{\mathcal{G}_{11}},\ldots,\theta_{\mathcal{G}_{K3}})$ the set of rotation angles that directly affect the computation of $\lossM$, and by $\bar{\bm{\theta}}_{\mathcal{G_P}} =(\theta_{\mathcal{G}_{{(K+1)}1}},\ldots,\theta_{\mathcal{G}_{n3}})$ the remaining gadget parameters that do not influence it.

We then expand \cref{eq:mse_whole} over the entire parameter space and while fixing $\bar{\bm{\theta}}_{\mathcal{G_P}} = \bm{0}$ and restricting $\bm{\theta}_\mathcal{G_P}$ to a hypercube $\left[0, \epsilon_1\right)^{3K}$ for some $\epsilon_1 > 0$, which will be determined later. Note that the output of the classical algorithm might depend on $\bar{\bm{\theta}}_{\mathcal{G_P}}$. However, without loss of generality, we assume that when $\bar{\bm{\theta}}_{\mathcal{G_P}} = \bm{0}$, the error with respect to $L^{\mathcal{C}}\left(\bm{\theta},\bm{\theta}_\mathcal{G_P},\bm{0}\right)$ is minimized. This assumption implies that the MSE must be smaller when $\bar{\bm{\theta}}_{\mathcal{G_P}} = \bm{0}$.
As a consequnce, we have 
\begin{equation}
  \begin{aligned}
    \epsilon &\geq \left(\frac{1}{2\pi}\right)^{m+3n}\int_{\bm{\theta}\in\left[0,2\pi\right)^{m}}\int_{\bm{\theta}_\mathcal{G_P}\in\left[0,2\pi\right)^{3K}}\int_{\bar{\bm{\theta}}_\mathcal{G_P}\in\left[0,2\pi\right)^{3(n-K)}}
\left[\lossM - \mathcal{A}_{{\Phi^{\mathcal{C}}}}\left(\left(\bm{\theta},\bm{\theta}_\mathcal{G_P},\bar{\bm{\theta}}_\mathcal{G_P}\right),P\right)\right]^2 d\bm{\theta} d\bm{\theta}_\mathcal{G_P}d\bar{\bm{\theta}}_\mathcal{G_P}\\
&\geq \left(\frac{1}{2\pi}\right)^{m+3n}\int_{\bm{\theta}\in\left[0,2\pi\right)^{m}}\int_{\bm{\theta}_\mathcal{G_P}\in\left[0,2\pi\right)^{3K}}\int_{\bar{\bm{\theta}}_\mathcal{G_P}\in\left[0,2\pi\right)^{3(n-K)}}
\left[L^{\mathcal{C}}\left(\bm{\theta},\bm{\theta}_\mathcal{G_P},\bm{0}\right) - \mathcal{A}_{{\Phi^{\mathcal{C}}}}\left(\left(\bm{\theta},\bm{\theta}_\mathcal{G_P},\bm{0}\right),P\right)\right]^2 d\bm{\theta} d\bm{\theta}_\mathcal{G_P}d\bar{\bm{\theta}}_\mathcal{G_P}\\
&= \left(\frac{1}{2\pi}\right)^{m+3K}\int_{\bm{\theta}\in\left[0,2\pi\right)^{m}}\int_{\bm{\theta}_\mathcal{G_P}\in\left[0,2\pi\right)^{3K}}
\left[L^{\mathcal{C}}\left(\bm{\theta},\bm{\theta}_\mathcal{G_P},\bm{0}\right) - \mathcal{A}_{{\Phi^{\mathcal{C}}}}\left(\left(\bm{\theta},\bm{\theta}_\mathcal{G_P},\bm{0}\right),P\right)\right]^2 d\bm{\theta} d\bm{\theta}_\mathcal{G_P}\\
&\geq \left(\frac{1}{2\pi}\right)^{m+3K}\int_{\bm{\theta}\in\left[0,2\pi\right)^{m}}\int_{\bm{\theta}_\mathcal{G_P}\in\left[0,\epsilon_1\right)^{3K}}
\left[L^{\mathcal{C}}\left(\bm{\theta},\bm{\theta}_\mathcal{G_P},\bm{0}\right) - \mathcal{A}_{{\Phi^{\mathcal{C}}}}\left(\left(\bm{\theta},\bm{\theta}_\mathcal{G_P},\bm{0}\right),P\right)\right]^2 d\bm{\theta} d\bm{\theta}_\mathcal{G_P}\\
&=\left(\frac{\epsilon_1}{2\pi}\right)^{3K} \left(\frac{1}{2\pi}\right)^{m}\left(\frac{1}{\epsilon_1}\right)^{3K}\int_{\bm{\theta}\in\left[0,2\pi\right)^{m}}\int_{\bm{\theta}_\mathcal{G_P}\in\left[0,\epsilon_1\right)^{3K}}
\left[L^{\mathcal{C}}\left(\bm{\theta},\bm{\theta}_\mathcal{G_P},\bm{0}\right) - \mathcal{A}_{{\Phi^{\mathcal{C}}}}\left(\left(\bm{\theta},\bm{\theta}_\mathcal{G_P},\bm{0}\right),P\right)\right]^2 d\bm{\theta} d\bm{\theta}_\mathcal{G_P}\\
&=\left(\frac{\epsilon_1}{2\pi}\right)^{3K} 
\underset{\substack{\bm{\theta}\in\left[0,2\pi\right)^{m} \\
\bm{\theta}_\mathcal{G_P}\in\left[0,\epsilon_1\right)^{3K}}}{\mathbb{E}}
\left[L^{\mathcal{C}}\left(\bm{\theta},\bm{\theta}_\mathcal{G_P},\bm{0}\right) - \mathcal{A}_{{\Phi^{\mathcal{C}}}}\left(\left(\bm{\theta},\bm{\theta}_\mathcal{G_P},\bm{0}\right),P\right)\right]^2. \\
  \end{aligned}
\end{equation}
It implies that
\begin{equation}\label{eq:upper_E_small_cube}
\underset{\substack{\bm{\theta}\in\left[0,2\pi\right)^{m} \\
\bm{\theta}_\mathcal{G_P}\in\left[0,\epsilon_1\right)^{3K}}}{\mathbb{E}}
\left[L^{\mathcal{C}}\left(\bm{\theta},\bm{\theta}_\mathcal{G_P},\bm{0}\right) - \mathcal{A}_{{\Phi^{\mathcal{C}}}}\left(\left(\bm{\theta},\bm{\theta}_\mathcal{G_P},\bm{0}\right),P\right)\right]^2 \leq \left(\frac{2\pi}{\epsilon_1}\right)^{3K}\epsilon.
\end{equation}
\cref{eq:upper_E_small_cube} provides an upper bound on the MSE of the given classical algorithm over the hypercube $\bm{\theta}\in\left[0,2\pi\right)^{m}, \bm{\theta}_\mathcal{G_P}\in\left[0,\epsilon_1\right)^{3K}$ and $\bar{\bm{\theta}}_{\mathcal{G_P}} = \bm{0}$. Then, by applying Markov's inequality, we obtain
\begin{equation}\label{eq:error_probability_small_cube}
\begin{aligned}
\Pr_{\substack{\bm{\theta}\in\left[0,2\pi\right)^{m} \\
\bm{\theta}_\mathcal{G_P}\in\left[0,\epsilon_1\right)^{3K}}}\left\{\left|L^{\mathcal{C}}\left(\bm{\theta},\bm{\theta}_\mathcal{G_P},\bm{0}\right) - \mathcal{A}_{{\Phi^{\mathcal{C}}}}\left(\left(\bm{\theta},\bm{\theta}_\mathcal{G_P},\bm{0}\right),P\right)\right|\geq \epsilon_2\right\}&\leq \underset{\substack{\bm{\theta}\in\left[0,2\pi\right)^{m} \\
\bm{\theta}_\mathcal{G_P}\in\left[0,\epsilon_1\right)^{3K}}}{\mathbb{E}}
\left|L^{\mathcal{C}}\left(\bm{\theta},\bm{\theta}_\mathcal{G_P},\bm{0}\right) - \mathcal{A}_{{\Phi^{\mathcal{C}}}}\left(\left(\bm{\theta},\bm{\theta}_\mathcal{G_P},\bm{0}\right),P\right)\right|/\epsilon_2\\
&\leq \left(\frac{2\pi}{\epsilon_1}\right)^{3K/2}\frac{\sqrt{\epsilon}}{\epsilon_2},
\end{aligned}
\end{equation}
where $\epsilon_2>0$ is a constant to be determined later. The last inequality follows from the fact that for any random variable $X$, we have $\left(\mathbb{E}[X]\right)^2 \leq \mathbb{E}[X^2]$. 

\cref{eq:error_probability_small_cube} implies that the output of the classical algorithm, when evaluated in the hypercube $\bm{\theta} \in \left[0, 2\pi \right)^m, \bm{\theta}_\mathcal{G_P} \in \left[0, \epsilon_1 \right)^{3K}$ and $\bar{\bm{\theta}}_{\mathcal{G_P}} = \bm{0}$, will have very low error with high probability. Hence, by choosing $\epsilon_1$ to be small and leveraging the fact that $L^{\mathcal{C}}\left(\bm{\theta}, \bm{\theta}_\mathcal{G_P}, \bm{0}\right)$ is a Lipschitz continuous function, the output will also be close to $\loss = L^{\mathcal{C}}\left(\bm{\theta}, \bm{0}\right)$ with high probability.

Based on this observation, we construct a randomized classical algorithm $A_{\text{rand}}(\bm{\theta})$ to compute the value of $L^{\mathcal{C}}\left(\bm{\theta}, \bm{0}\right)$. For an arbitrary $\bm{\theta}$, this algorithm randomly selects a $\bm{\theta}_\mathcal{G_P} \in \left[0, \epsilon_1 \right)^{3K}$ and $\bar{\bm{\theta}}_{\mathcal{G_P}} = \bm{0}$, runs the classical algorithm under the assumptions of the theorem, and outputs $\mathcal{A}_{\mathcal{C}}\left(\bm{\theta}\right)\coloneq\mathcal{A}_{{\Phi^{\mathcal{C}}}}\left(\left(\bm{\theta}, \bm{\theta}_\mathcal{G_P}, \bm{0}\right), P\right)$ with some given MSE $\epsilon$. We now analyze this randomized algorithm $A_{\text{rand}}(\bm{\theta})$ and show that it can achieve the performance stated in the theorem for a suitably chosen $\epsilon$.

We first calculate the probability of $\bm{\theta}$ for which our algorithm can achieve low error with high probability. To this end, we define a function $p(\bm{\theta})$, which represents the probability that $A_{\text{rand}}(\bm{\theta})$ incurs high error for a given $\bm{\theta}$.
\begin{equation}\label{eq:condition_theta_G}
  p(\bm{\theta})\coloneq \Pr_{\substack{
\bm{\theta}_\mathcal{G_P}\in\left[0,\epsilon_1\right)^{3K}}}\left\{\left|L^{\mathcal{C}}\left(\bm{\theta},\bm{\theta}_\mathcal{G_P},\bm{0}\right) - \mathcal{A}_{{\Phi^{\mathcal{C}}}}\left(\left(\bm{\theta},\bm{\theta}_\mathcal{G_P},\bm{0}\right),P\right)\right|\geq \epsilon_2\right\}.
\end{equation}
It is easy to verify that 
\begin{equation}
\underset{\substack{\bm{\theta}\in\left[0,2\pi\right)^{m}}}{\mathbb{E}}\left[p(\bm{\theta})\right] = \Pr_{\substack{\bm{\theta}\in\left[0,2\pi\right)^{m} \\
\bm{\theta}_\mathcal{G_P}\in\left[0,\epsilon_1\right)^{3K}}}\left\{\left|L^{\mathcal{C}}\left(\bm{\theta},\bm{\theta}_\mathcal{G_P},\bm{0}\right) - \mathcal{A}_{{\Phi^{\mathcal{C}}}}\left(\left(\bm{\theta},\bm{\theta}_\mathcal{G_P},\bm{0}\right),P\right)\right|\geq \epsilon_2\right\} \leq \left(\frac{2\pi}{\epsilon_1}\right)^{3K/2}\frac{\sqrt{\epsilon}}{\epsilon_2}.
\end{equation}
Again, applying Markov's inequality, we obtain
\begin{equation}\label{eq:lower_bound_frac_Arand_good}
\Pr_{\substack{\bm{\theta}\in\left[0,2\pi\right)^{m}}}\left\{p(\bm{\theta})\geq \delta\right\}\leq \underset{\substack{\bm{\theta}\in\left[0,2\pi\right)^{m}}}{\mathbb{E}}\left[p(\bm{\theta})\right]/\delta\leq \left(\frac{2\pi}{\epsilon_1}\right)^{3K/2}\frac{\sqrt{\epsilon}}{\epsilon_2\delta}.
\end{equation}
\cref{eq:lower_bound_frac_Arand_good} implies that, with probability at least
$\displaystyle 1 - \left(\frac{2\pi}{\epsilon_1}\right)^{3K/2} \frac{\sqrt{\epsilon}}{\epsilon_2 \delta}$
over $\bm{\theta}\in[0,2\pi)^m$, the output of $A_{\mathrm{rand}}(\bm{\theta})$ has an error smaller than $\epsilon_2$ with probability at least $1-\delta$.
Focusing on these values of $\bm{\theta}$ and on those $\bm{\theta}_\mathcal{G_P}$ that violates the condition in \cref{eq:condition_theta_G},
$A_{\mathrm{rand}}(\bm{\theta})$ produces an output $\mathcal{A}_{{\Phi^{\mathcal{C}}}}\left(\left(\bm{\theta}, \bm{\theta}_\mathcal{G_P}, \bm{0}\right), P\right)$ that satisfies
\begin{equation}
\left|L^{\mathcal{C}}\left(\bm{\theta},\bm{\theta}_\mathcal{G_P},\bm{0}\right) - \mathcal{A}_{{\Phi^{\mathcal{C}}}}\left(\left(\bm{\theta},\bm{\theta}_\mathcal{G_P},\bm{0}\right),P\right)\right|\leq \epsilon_2.
\end{equation}
Then, by employing the triangle inequality and the Lipschitz continuity of $\lossM$ as a function of $\bm{\theta}_\mathcal{G}$, we have
\begin{equation}
  \begin{aligned}   \left|\mathcal{A}_{{\Phi^{\mathcal{C}}}}\left(\left(\bm{\theta},\bm{\theta}_\mathcal{G_P},\bm{0}\right),P\right) -\loss\right| &=\left|\mathcal{A}_{{\Phi^{\mathcal{C}}}}\left(\left(\bm{\theta},\bm{\theta}_\mathcal{G_P},\bm{0}\right),P\right) - L^{\mathcal{C}}\left(\bm{\theta},\bm{0},\bm{0}\right)\right|\\
& \leq \left|\mathcal{A}_{{\Phi^{\mathcal{C}}}}\left(\left(\bm{\theta},\bm{\theta}_\mathcal{G_P},\bm{0}\right),P\right)-L^{\mathcal{C}}\left(\bm{\theta},\bm{\theta}_\mathcal{G_P},\bm{0}\right) \right| + \left|L^{\mathcal{C}}\left(\bm{\theta},\bm{\theta}_\mathcal{G_P},\bm{0}\right) -L^{\mathcal{C}}\left(\bm{\theta},\bm{0},\bm{0}\right)\right|\\
&\leq \epsilon_2 + \epsilon_1 \sqrt{3K}l_{\bm{\theta}} \leq \epsilon_2 + \epsilon_13K.
  \end{aligned}
\end{equation}
In the end, we determine the unfixed parameters introduced earlier. To ensure that $A_{\text{rand}}(\bm{\theta})$ achieves an error of at most $\epsilon_{\text{error}}$, we need to set $\epsilon_2 + 3K\epsilon_1 \leq \epsilon_{\text{error}}$. Hence, we choose $\epsilon_2 = \frac{\epsilon_{\text{error}}}{2}$ and $\epsilon_1 = \frac{\epsilon_{\text{error}}}{6K}$ to satisfy the error condition. Next, to ensure that $A_{\text{rand}}(\bm{\theta})$ works with probability at least $1 - \epsilon_{\text{rate}}$ over $\bm{\theta}$, we apply \cref{eq:lower_bound_frac_Arand_good} and set $\left(\frac{2\pi}{\epsilon_1}\right)^{3K/2} \frac{\sqrt{\epsilon}}{\epsilon_2 \delta} \leq \epsilon_{\text{rate}}$. Substituting the values of $\epsilon_1$ and $\epsilon_2$ into the inequality, we obtain

\begin{equation}
  \epsilon\leq \frac{{\left(12\pi K\right)}^{-3K}}{4}\epsilon_{rate}^2\epsilon_{error}^{3K+2}\delta^2 = K^{-3K} \frac{\epsilon_{rate}^2\epsilon_{error}^{\order{\log n}}\delta^2}{\order{\mathrm{poly}(n)}}.
\end{equation}
The above calculation implies that, to achieve a randomized classical algorithm that satisfies the conditions in the theorem, we first randomly pick each $\theta_{\mathcal{G}_{ij}} \in \bm{\theta}_\mathcal{G_P}$ from the interval $\left[0, \frac{\epsilon_{\text{error}}}{6K}\right)$. We then run the classical algorithm to compute $\lossM$ with MSE at most $K^{-3K}\left(\frac{3}{\pi}\right)^{3K} \epsilon_{\text{rate}}^2 \epsilon_{\text{error}}^{3K} \delta^2$. The running time of $A_{\text{rand}}(\bm{\theta})$ is $\mathcal{O}\left(\mathrm{poly}(n, K^{3K}\frac{1}{\epsilon_{\text{rate}}^2 \epsilon_{\text{error}}^{\log n} \delta^2})\right) = K^{3K} \mathcal{O}\left(\mathrm{poly}(n, \frac{1}{\delta}, \frac{1}{\epsilon_{\text{error}}}, \frac{1}{\epsilon_{\text{rate}}})\right)$, which satisfies the condition in the theorem.
\end{proof}

Based on the above lemma, we can directly extend the observable from a single local Pauli word to an arbitrary $k$-local observable $O$:
\begin{theorem}\label{thm:same_average_simulate_formall}
Suppose there exists a classical algorithm running in $\order{\mathrm{poly}\left(n, \tfrac{1}{\zeta}\right)}$ time that can estimate the expectation value of any local observable for arbitrary MPQCs with MSE no larger than $\zeta$.
Then, for any PQC $\mathcal{C}(\bm{\theta})$ and any local observable $O = \sum_\alpha c_\alpha P_\alpha$ consisting of a polynomial number of Pauli terms, there exists a randomized classical algorithm that, with probability at least $1 - 1/\mathrm{poly}(n)$ over $\bm{\theta}$, outputs an estimate of $\loss = \tr{\mathcal{C}(\bm{\theta}) \rho \mathcal{C}^\dagger(\bm{\theta})O}$ with error at most $\epsilon$ with success probability at least $1-\delta$.
The runtime of this algorithm scales as $\order{\mathrm{poly}\left(n, \frac{1}{\delta}, \frac{1}{\epsilon}\right)}$.
\end{theorem}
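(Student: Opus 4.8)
The plan is to reduce the general-observable statement to the single-Pauli-word case already settled in \cref{lem:classical_hardness_PQC_pauli_word} and to reassemble the pieces by linearity. Write $\loss=\sum_\alpha c_\alpha\,\ell_\alpha(\bm{\theta})$ with $\ell_\alpha(\bm{\theta}):=\tr{\mathcal{C}(\bm{\theta})\rho\,\mathcal{C}^\dagger(\bm{\theta})P_\alpha}$, and note that $\ell_\alpha(\bm{\theta})=L^{\mathcal{C}}(\bm{\theta},\bm{0})$ when the MPQC is measured with the local observable $P_\alpha$, since a gadget layer with vanishing parameters acts trivially. Each $P_\alpha$ is itself a local observable, so the hypothesized classical algorithm for MPQCs estimates $L^{\mathcal{C}}(\bm{\theta},\bm{\theta}_{\mathcal{G}})$ with observable $P_\alpha$ to any target mean-squared error $\zeta$ in $\order{\mathrm{poly}(n,1/\zeta)}$ time --- exactly the hypothesis required by \cref{lem:classical_hardness_PQC_pauli_word}. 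Applying that lemma to each $P_\alpha$ (under the same $K=\order{\log n}$ condition, which one guarantees by placing the gadget layer close enough to the measurement layer that every backward light cone has support $K_\alpha=\order{\log n}$ there) produces a randomized classical algorithm $A_\alpha$ that, for all but an exceptional $\bm{\theta}$-set of controllable measure, estimates $\ell_\alpha(\bm{\theta})$ to within a prescribed accuracy with prescribed confidence, in $K_\alpha^{3K_\alpha}\cdot\order{\mathrm{poly}(n,\cdot)}=\order{\mathrm{poly}(n,\cdot)}$ time.

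Next I would allocate the accuracy and confidence budgets across the $N=\order{\mathrm{poly}(n)}$ Pauli terms. Invoke $A_\alpha$ with per-term accuracy $\epsilon'_\alpha:=\epsilon/\big(\sum_\beta|c_\beta|\big)$, per-term failure probability $\delta':=\delta/N$, and per-term unreliable-$\bm{\theta}$ fraction at most $\eta':=1/\big(N\,\mathrm{poly}(n)\big)$. Since $O$ has polynomially many terms, \cref{eq:upper_bound_HS} together with Cauchy--Schwarz gives $\sum_\beta|c_\beta|\le\sqrt{N}\,\big(\sum_\beta c_\beta^2\big)^{1/2}=\order{\mathrm{poly}(n)}$, so $1/\epsilon'_\alpha=\order{\mathrm{poly}(n)/\epsilon}$, while $1/\delta'$ and $1/\eta'$ stay inverse-polynomial. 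Hence each $A_\alpha$ still runs in $\order{\mathrm{poly}(n,1/\delta,1/\epsilon)}$ time, and so does the overall procedure, which runs all $N$ of the $A_\alpha$ and outputs $\mathcal{A}_{\mathcal{C}}(\bm{\theta}):=\sum_\alpha c_\alpha\cdot[\text{output of }A_\alpha]$.

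Finally I would combine the guarantees via two union bounds over the $N$ terms. The $\bm{\theta}$-set on which at least one $A_\alpha$ is unreliable has measure at most $N\eta'=1/\mathrm{poly}(n)$, which delivers the required ``with probability at least $1-1/\mathrm{poly}(n)$ over $\bm{\theta}$''; for $\bm{\theta}$ outside this set, all $A_\alpha$ simultaneously meet their accuracy bounds with probability at least $1-N\delta'=1-\delta$; and on that event the triangle inequality yields $\big|\loss-\mathcal{A}_{\mathcal{C}}(\bm{\theta})\big|\le\sum_\alpha|c_\alpha|\,\epsilon'_\alpha\le\epsilon$. This is precisely the claimed statement, with the asserted $\order{\mathrm{poly}(n,1/\delta,1/\epsilon)}$ runtime.

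The main obstacle --- and the reason the argument must proceed term by term rather than applying \cref{lem:Lipschitz_constant,lem:classical_hardness_PQC_pauli_word} to $O$ in one shot --- is that the set of gadgets influencing $L^{\mathcal{C}}(\bm{\theta},\bm{\theta}_{\mathcal{G}})$ for the composite observable $O$ is the \emph{union} of the backward light cones of all the $P_\alpha$, whose size is $\order{N\log n}=\order{\mathrm{poly}(n)}$ rather than $\order{\log n}$; the $K^{3K}$ factor incurred by the small-hypercube/Lipschitz argument of \cref{lem:classical_hardness_PQC_pauli_word} would then be super-polynomial. Decomposing $O$ into its Pauli terms keeps every individual light cone of size $\order{\log n}$, and what remains is the routine but slightly fiddly bookkeeping of verifying that the per-term choices $\epsilon'_\alpha,\delta',\eta'$ stay inverse-polynomial after the union bounds and that the accumulated runtime stays polynomial in $n$, $1/\delta$, and $1/\epsilon$.
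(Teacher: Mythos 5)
Your decomposition of $O$ into its Pauli terms, the per-term invocation of \cref{lem:classical_hardness_PQC_pauli_word}, the allocation of accuracy/confidence budgets, and the two union bounds all match the paper's proof; the only cosmetic difference is that you use a uniform per-term accuracy $\epsilon/\sum_\beta|c_\beta|$ where the paper uses the non-uniform $\epsilon/(|c_\alpha|\#O)$, and both give the same total error by the triangle inequality.

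There is, however, one concrete error in your runtime accounting: you assert that $K_\alpha^{3K_\alpha}\cdot\order{\mathrm{poly}(n,\cdot)}=\order{\mathrm{poly}(n,\cdot)}$ under the condition $K_\alpha=\order{\log n}$. This is false: for $K=\Theta(\log n)$ one has $K^{3K}=(\log n)^{3\log n}=n^{3\log\log n}$, which is quasi-polynomial and exceeds every polynomial. The $K=\order{\log n}$ hypothesis in \cref{lem:classical_hardness_PQC_pauli_word} is not enough to make that prefactor polynomial. The paper closes this gap by exploiting the freedom in the reduction: since the hypothesized classical algorithm works for \emph{arbitrary} MPQCs, one may construct the MPQC with the gadget layer placed so that each $k$-local $P_\alpha$ has backward light-cone support $K=\order{1}$ at the gadget layer (e.g., immediately before the measurement layer), whence $K^{3K}$ is a constant and the overall runtime is genuinely $\order{\mathrm{poly}(n,1/\delta,1/\epsilon)}$. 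Your argument needs this additional choice of $K=\order{1}$ (or an equivalent one) to deliver the stated polynomial runtime; with $K=\Theta(\log n)$ as written, the bound you would actually obtain is only quasi-polynomial.
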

\begin{proof}
To compute $\loss = \tr{\mathcal{C}(\bm{\theta}) \rho \mathcal{C}^\dagger(\bm{\theta})O} = \Vexp{O}$, we first construct an MPQC $\channelM{\bm{\theta},\bm{\theta}_\mathcal{G}}$ from $\mathcal{C}(\bm{\theta})$ such that the backward-propagated support of all Pauli components $P_\alpha$ in $O$ through the gadget layer is at most $K$.

Using the classical algorithm established in the theorem, which efficiently computes $\tr{\channelM{\bm{\theta},\bm{\theta}_\mathcal{G}}(\rho)P_\alpha}$,
we estimate the expectation value $\Vexp{P_\alpha}$ for each Pauli term and then reconstruct $\Vexp{O}$ via the weighted sum $\sum_\alpha c_\alpha \Vexp{P_\alpha}$.
Let $\#O = \order{\mathrm{poly}(n)}$ denote the number of Pauli terms in $O$.
For each $\Vexp{P_\alpha}$, according to \cref{lem:classical_hardness_PQC_pauli_word}, we can design a randomized classical algorithm $A_{\text{rand}}^{P_\alpha}(\bm{\theta})$ that estimates $\Vexp{P_\alpha}$ within error $\frac{\epsilon}{|c_\alpha| \#O}$ and with success probability at least $1-\frac{\delta}{\#O}$, for probability at least $1-\frac{1}{\#O\mathrm{poly}(n)}$ over $\bm{\theta}$.
The runtime of $A_{\text{rand}}^{P_\alpha}(\bm{\theta})$ is
\begin{equation}
  K^{3K} \mathcal{O}\left(\mathrm{poly}(n, \frac{\#O}{\delta}, \frac{|c_\alpha| \#O}{\epsilon},\#O\mathrm{poly}(n))\right) = K^{3K}\#O \cdot \mathrm{poly}(n)\order{\mathrm{poly}(n, \frac{1}{\delta}, \frac{1}{\epsilon})}.
\end{equation}
Summing the outputs of all $A_{\text{rand}}^{P_\alpha}(\bm{\theta})$ yields the final algorithm $A_{\text{rand}}^{O}(\bm{\theta})$ for $\loss$, whose runtime is upper bounded by $K^{3K}\left(\#O\right)^2\mathrm{poly}(n)\order{\mathrm{poly}(n, \frac{1}{\delta}, \frac{1}{\epsilon})} = K^{3K}\order{\mathrm{poly}(n, \frac{1}{\delta}, \frac{1}{\epsilon})}$.

By the union bound, the probability (over $\bm{\theta}$) that any $A_{\mathrm{rand}}^{P_\alpha}(\bm{\theta})$ fails to satisfy the required condition is at most
$\#O \cdot \frac{1}{\#O\mathrm{poly}(n)} = \frac{1}{\mathrm{poly}(n)}$.
Hence, focusing on those $\bm{\theta}$—which occur with probability at least $1-\frac{1}{\mathrm{poly}(n)}$—for which each $A_{\mathrm{rand}}^{P_\alpha}(\bm{\theta})$ outputs an estimate with error at most $\tfrac{\epsilon}{|c_\alpha|\#O}$ and success probability at least $1-\tfrac{\delta}{\#O}$,
we obtain that, again by the union bound, the probability that any single $A_{\mathrm{rand}}^{P_\alpha}(\bm{\theta})$ exceeds its error threshold $\frac{\epsilon}{|c_\alpha|\#O}$ is at most
$\#O \cdot \tfrac{\delta}{\#O} = \delta$.
Conditioning on the successful instances, 
the total error is bounded by
\begin{equation}
  \sum_\alpha |c_\alpha| \frac{\epsilon}{|c_\alpha| \#O} =\epsilon,
\end{equation} 
which satisfies the theorem's conditions. 
Because the assumption classical algorithm works for arbitrary MPQCs, and we can always construct an MPQC with $K = \order{1}$ for any PQC,
the runtime of the final algorithm scales as $\order{\mathrm{poly}\left(n, \frac{1}{\delta}, \frac{1}{\epsilon}\right)}$.
\end{proof}
\begin{remark}
  \Cref{thm:same_average_simulate_formall} implies that if MPQCs are classically simulable on average, then arbitrary PQCs would also be efficiently simulable by a $\mathrm{BPP}$ Turing machine on average.
In other words, the average-case classical simulation of PQCs would belong to the heuristic complexity class $\mathrm{HeurBPP}$~\cite{bogdanov2006average,perez2024classical}.
Notably, existing works on classical simulation of quantum circuits typically rely on specific assumptions about the distribution of circuit gates~\cite{angrisani2024classically}, and whether general PQCs are classically simulable on average remains an open question. 
\end{remark}
\section{Numerical experiments}\label{app:numerical}
In this section, we provide details of numerical results in the manuscript. 

First, we construct a deliberately designed example in which the original PQC becomes untrainable even at small system sizes, while the corresponding MPQC remains trainable and is able to recover the optimal solution. 
Second, we consider the task of approximating the ground state of a complex Hamiltonian, where we also show that, by employing the activation strategy introduced in \cref{app:activate_multiple_paras}, MPQC can further reduce the loss and achieve a better ground-state approximation.

\subsection{Effectiveness of MPQC under a poorly designed PQC ansatz}

Owing to the limitations of current classical simulation methods for variational quantum algorithms, the system sizes that can be explored numerically are relatively small. 
In particular, for MPQC, the additional ancilla qubits introduced by the gadget layers further constrain the maximum system size accessible to simulation, typically to at most a few tens of qubits. 
In this regime, although gradients may scale exponentially with system size in principle, their magnitudes are not necessarily extremely small, and PQCs can still be trainable.

Nevertheless, extremely small gradients can still occur even at these moderate system sizes, depending on the specific circuit architecture. To clearly illustrate the advantage brought by MPQC, we construct an artificial yet representative example in which a PQC becomes untrainable due to an unfavorable circuit design. 
Specifically, we construct the PQC as follows, which is obtained by replacing all rotation gates in the circuit in the manuscript into $R_x$ in Section VI. 
\begin{figure}[H]
    \centering
    \includegraphics[width=0.6\textwidth]{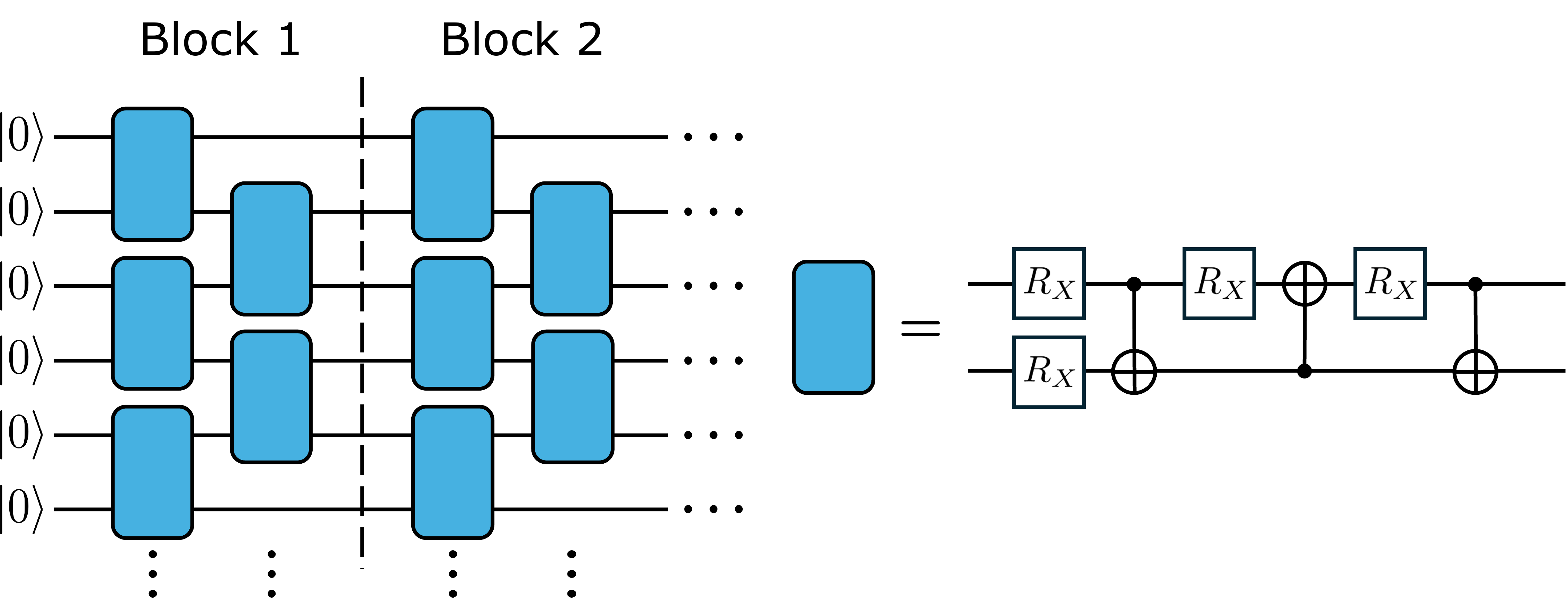}
    \caption{An example of a poorly designed PQC obtained by restricting all rotation gates with $R_x$ gates.}
    \label{fig:bad_pqc}
\end{figure} 
We consider the task of finding the ground state of the following two-local transverse-field Ising Hamiltonian:
\begin{equation}\label{eq:cheating_H}
  H_{\mathrm{TFI}}=-\sum_{j=1}^n X_j X_{j+1}-h\sum_{j=1}^n Z_j
\end{equation}
defined on a periodic one-dimensional chain, where $h>0$ is treated as a tunable parameter.
With the above choice of rotation gates, the Pauli-operator evolution governed by \cref{eq:gate_term_in_f_discrete} shows that the circuit parameters fail to influence the $XX$ terms in the Hamiltonian, since the rotation gate generators commute with the Pauli operators backward-propagated from the $XX$ terms. Consequently, when $h$ is chosen sufficiently small, the gradient variance with respect to all circuit parameters becomes uniformly small. This allows us to artificially construct barren-plateau-like behavior even for circuits of small depth and modest system size. In contrast, after inserting the gadget layer, the diversity of Pauli paths is significantly enhanced, thereby restoring nontrivial couplings between the circuit parameters and the $XX$ terms in the Hamiltonian.

We then perform numerical experiments to demonstrate that MPQC remains capable of finding the ground state even when the original PQC suffers from such an unfavorable design. Concretely, we set $n=6$ and consider $h=0.01$ and $h=0.5$ in \cref{eq:cheating_H}. The original PQC consists of six blocks, each corresponding to the structure shown in \cref{fig:bad_pqc}. The associated MPQC is obtained by inserting a gadget layer after the fourth block, i.e., two blocks before the final measurement. In addition, we construct a ``shallow'' PQC containing only a single block, illustrating that even very shallow circuits with this unfavorable design remain untrainable.

For all three circuit architectures, parameters are initialized randomly from the uniform distribution $[0,2\pi)$, and we perform ten independent training runs with different random seeds to mitigate the effect of unlucky initializations. Optimization is carried out using the Adam optimizer~\cite{diederik2014adam} with a learning rate of $0.01$ for $1000$ training epochs. All simulations are performed using \emph{PennyLane}~\cite{bergholm2018pennylane}. Detailed numerical results and further discussion are presented in the main text.



\subsection{Application of parameter activation strategy}
In this subsection, we demonstrate the employment of the activation strategy and present additional numerical evidence showing that MPQC achieves substantially better performance than the original PQC.
We consider a random-sign
2-local XYZ Hamiltonian of the form
\begin{equation}\label{eq:random_sign_H}
  H_G=\sum_{\{i,j\}\in E}\big(J^{(x)}_{ij}X_iX_j+J^{(y)}_{ij}Y_iY_j+J^{(z)}_{ij}Z_iZ_j\big),
\end{equation}
where $G=(V,E)$ is an undirected graph with vertex set $V=\{1,2,\dots,n\}$ and $X_i,Y_i,Z_i$ denote Pauli operators acting on qubit $i$ and identity on all other qubits.
The couplings in $H_G$ are i.i.d.\ random signs, e.g.\ $J^{(\alpha)}_{ij}\in\{-1,+1\}$ with equal probability for each $\alpha\in\{x,y,z\}$ and each edge $\{i,j\}\in E$. Here to ensure the hardness of the optimization problem, we choose $G$ to be the complete graph on $12$ vertices.

Random-sign spin Hamiltonians are canonical models of disorder and frustration, widely used to study spin-glass physics and as challenging benchmark instances for quantum optimization and variational ground-state preparation~\cite{SherringtonKirkpatrick1975,Venturelli2015}. From the computational-complexity viewpoint, the task of estimating (or deciding) the ground-state energy of generic 2-local quantum Hamiltonians is QMA-complete~\cite{KempeKitaevRegev2006}, and hardness persists under physically motivated restrictions such as geometrically local interactions~\cite{OliveiraTerhal2008}.

To address this task, we extend the PQC architecture from a one-dimensional chain to a two-dimensional lattice, reflecting the structure of the target Hamiltonian $H_G$. The resulting ansatz is composed of repeated blocks, each of which is shown in \cref{fig:2d_ansatz}. Starting from a PQC consisting of eight such blocks, we construct the corresponding MPQC by inserting a gadget layer in the middle of the circuit, i.e., after the fourth block.

To further improve the optimization performance, we activate the parameters in the first block, as illustrated in \cref{fig:2d_mpqc_actmpqc}. Here, our goal is to activate the entire block. According to the strategy described in \cref{app:activate_multiple_paras}, this would in principle require introducing an additional gadget layer before the first block. To reduce the complexity of the numerical simulations, we reuse the ancilla qubits introduced by the gadget layer in \cref{fig:2d_mpqc_actmpqc}(a).

\begin{figure}[H]
    \centering
    \includegraphics[width=0.6\textwidth]{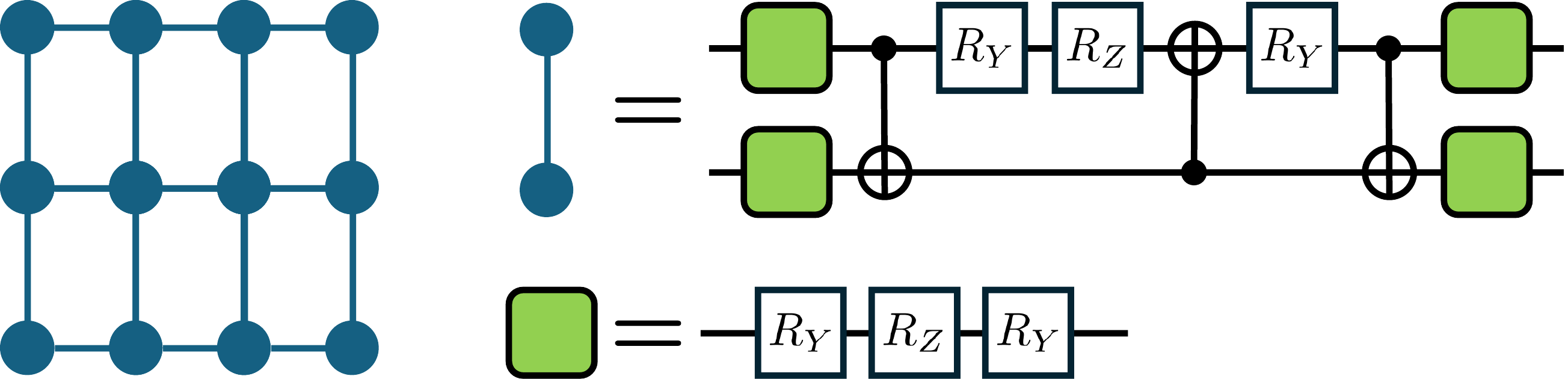}
    \caption{One block of the 2D lattice ansatz used to approximate the ground state of $H_G$. The complete circuit is constructed by repeating this block multiple times.}
    \label{fig:2d_ansatz}
\end{figure}

\begin{figure}[H]
    \centering
    \includegraphics[width=1\textwidth]{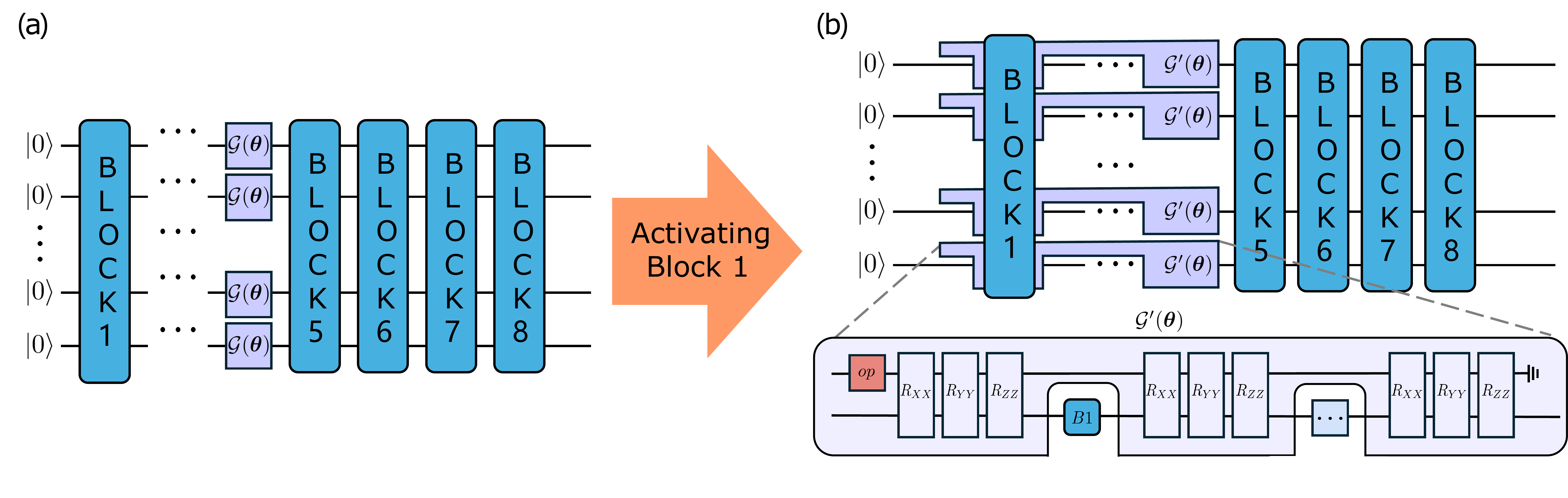}
    \caption{(a) Construction of MPQC, where a gadget layer is inserted in the middle of the original PQC. (b) Strategy for activating the parameters in the first block. Gates denoted by ``$\cdots$'' correspond to blocks 2, 3, and 4, while $B1$ represents the gates in the first block.}
    \label{fig:2d_mpqc_actmpqc}
\end{figure}

To demonstrate that MPQC outperforms the original PQC, we perform numerical simulations using the 2D ansatz with different circuit depths. Specifically, we train a family of PQCs with the number of blocks ranging from 1 to 8. As in the previous section, all circuit parameters are initialized independently from the uniform distribution $[0,2\pi)$, and 10 independent training runs with different random seeds are performed for each setting. Optimization is carried out using the Adam optimizer with a learning rate of $0.01$ for $3000$ iterations for all PQCs.

For MPQC, we first optimize the circuit shown in \cref{fig:2d_mpqc_actmpqc}(a) for $2000$ iterations. We then further minimize the loss by activating the parameters as in \cref{fig:2d_mpqc_actmpqc}(b) and continuing the optimization for an additional $1000$ iterations. The newly introduced parameters are initialized to zero so that the second-stage optimization starts from the state obtained in the first stage. We emphasize that the activation strategy is not optimized in this example, and further performance improvements may still be possible.

\end{document}